\documentclass[lettersize,onecolumn]{IEEEtran}
\usepackage{amsmath,amsfonts}
\usepackage{algorithmic}
\usepackage{algorithm}
\usepackage{array}
\usepackage{caption}
\usepackage{textcomp}

\usepackage[colorlinks,linkcolor=blue,anchorcolor=blue,citecolor=blue]{hyperref}
\usepackage{stfloats}
\usepackage{url}
\usepackage{verbatim}
\usepackage{graphicx}
\usepackage{cite}
\usepackage{makecell}
\usepackage{amssymb}
\usepackage[table]{xcolor}
\usepackage{threeparttable}
\usepackage{diagbox}
\usepackage{multirow}
\usepackage{makecell}
\usepackage{amsthm}
\usepackage{bm}
\theoremstyle{plain}

\newtheorem{theorem}{Theorem}
\newtheorem{lemma}[theorem]{Lemma}

\newtheorem{definition}[theorem]{Definition}
\newtheorem{corollary}[theorem]{Corollary}

\theoremstyle{definition}
\newtheorem{example}{Example}
\newtheorem{construction}{Construction}
\newtheorem{remark}{Remark}

\begin{document}

\title{Optimal-Access Cooperative MSR Codes: Parity-Check Matrix Construction And a Unified Transformation}

\author{ Yaqian Zhang, Jingke Xu, Ya-Feng Liu
\thanks{Yaqian Zhang and Ya-Feng Liu are with Ministry of Education Key Laboratory of Mathematics and Information Networks, School of Mathematical Sciences,
Beijing University of Posts and Telecommunications, Beijing 100876, China (e-mail: zhangyq@bupt.edu.cn, yafengliu@bupt.edu.cn). }
\thanks{Jingke Xu is with School of Information Science and Engineering, Shandong Agricultural University, Tai'an 271018, China (e-mail: xujingke@sdau.edu.cn). }
}

\maketitle

\begin{abstract}
 Cooperative MSR codes are a kind of storage codes which enable optimal-bandwidth repair of any $h\geq2$ node erasures in a cooperative way, while retaining the minimum storage as an $[n,k]$ MDS code. Each code coordinate (node) is assumed to store an array of $\ell$ symbols, where $\ell$ is termed as sub-packetization.
To address the disk IO (input/output) capability, a cooperative MSR code is said to have optimal-access property, if during node repair, the amount of data accessed at each helper node meets a lower bound on this quantity.
Large sub-packetization tends to induce high complexity and large IO cost in practice.

In this paper, we focus on reducing the sub-packetization level of optimal-access cooperative MSR codes.
We propose new constructions of optimal-access cooperative MSR codes through two methods.
At first, we propose a direct explicit construction by designing its parity-check matrix. Such parity-check matrix is built by repeatedly employing two crucial parity-check matrices of two basic MDS array codes as building blocks.
Secondly, we propose a generic transformation framework. Starting from an arbitrary $[n+d-k,d]$ MDS scalar code, one can derive a final cooperative MSR code by systematically applying two basic transformations in this framework.
Both approaches yield $(n,k,\ell=\delta^m)$ optimal-access cooperative MSR codes with $\delta=d-k+h$ and $m=\binom{n}{h}-\lfloor\frac{n}{\delta}\rfloor(\binom{\delta}{h}-1)$.
The derived codes can repair any $h~(2\le h\le n-k)$ erasures using $d~(k\le d \le n-h)$ helper nodes.
Compared with the state of the art (with $\ell=\delta^{\binom{n}{h}}$), the derived codes can reduce the sub-packetization $\ell$ by a fraction of $1/\delta^{\lfloor\frac{n}{\delta}\rfloor(\binom{\delta}{h}-1)}$, where $\delta=d-k+h$.
Moreover, we also show that some previous code structures of optimal-access cooperative MSR codes (Zhang, Zhang \& Wang CL'2020) and optimal-access MSR codes with $h=1$ (Ye \& Barg TIT'2017, and Li, Tang \& Tian TIT'2018) are included as special cases of our transformation construction.
At last, we note that all of the constructions are built over a finite field of linear size $\geq n+d-k$.

\end{abstract}

\begin{IEEEkeywords}
Distributed storage, cooperative MSR codes, repair bandwidth, optimal-access, sub-packetization.
\end{IEEEkeywords}

\section{Introduction}\label{introduction}
In large-scale distributed storage systems (DSS), data is stored across many storage nodes where node failures may occur frequently.
To protect data from node failures, erasure codes are extensively used in DSS.
Typically, A file containing $k$ data blocks is encoded to $n$ blocks using an erasure code with each encoded block stored in one storage node.
The system requires that any $k$ nodes can reconstruct the original file.
Besides, if a node fails, that is, the data stored at that node is erased. Then, the sustaining system requires that the failed node should be repaired by downloading data from $d~(\leq n-1)$ surviving nodes ({\it called helper nodes}).
Two important metrics for the node repair efficiency are the total amount of data downloaded ({\it called repair bandwidth}) and the volume of data accessed at the helper nodes, where the former indicates the network usage and the latter characterizes the disk I/O cost.
A central issue in DSS is how to repair failed nodes with high repair efficiency.
In \cite{Dimakis2011}, Dimakis et al. gave a tradeoff between the storage overhead and repair bandwidth, where codes with parameters lying on this tradeoff curve are called {\it regenerating codes}.
One extreme point of regenerating codes are minimum storage regenerating (MSR) codes, which achieve minimum storage overhead and have been extensively studied in the literature \cite{Ramchandran2011,Kumar2011,Vardy2016,Sasidharan2015,Ye2016,Ye2016sub-,JLi-2018}.

MSR codes can only deal with single node failures.  In some scenarios, multiple node failures are quite common in DSS. For example, in Total Recall \cite{total-recall} a repair process is triggered only after the total number of failed
nodes has reached a predefined threshold.
To repair multiple node failures simultaneously, cooperative regenerating codes are defined in \cite{Hu2010} for repairing $h\geq2$ nodes through a cooperative repair model, wherein cooperative MSR codes have  attracted much attention due to their minimum storage cost.

In the cooperative repair model, when $h$ nodes fail, $h$ newcomers for repairing the $h$ failed nodes independently download data from $d$ helper nodes and then exchange data among themselves.
Specifically, suppose node $i$ stores a vector $\bm{c}_i\in F^{\ell}$ for $i\in[n]$, where $F$ is a finite field. Let $\mathcal{H}\subseteq[n]$ with $|\mathcal{H}|=h$ be the set of failed nodes.
For each $i\in\mathcal{H}$, let $\mathcal{R}_i\subseteq[n]\setminus\mathcal{H}$ with $|\mathcal{R}_i|=d$ be the set of helper nodes connected by node $i$. Denote by $\mathcal{R}=(\mathcal{R}_i)_{i\in\mathcal{H}}$ the $h$ helper node sets. Then the cooperative repair process includes the following two phases:
\begin{itemize}
\item \textbf{Download phase.} For each $i\in\mathcal{H}$ and $j\in\mathcal{R}_i$, node $i$ downloads $\beta_1$ symbols from helper node $j$ by accessing $\alpha_{i,j}^{(\mathcal{H},\mathcal{R})}$ coordinates of node $j$'s storage data $\bm{c}_j=(c_{j,0},\ldots,c_{j,\ell-1})$.
\item \textbf{Collaboration phase.} For each $i\in\mathcal{H}$ and $i'\in\mathcal{H}\setminus\{i\}$, node $i$ downloads $\beta_2$ symbols from node $i'$.
\end{itemize}
Then each node $i\in\mathcal{H}$ should be able to recover its erased data $\bm{c}_i$ using the data downloaded and exchanged in both phases.
In particular, 
if the cooperative repair is linear, then the repair process only involves in  linear operations.
That is, in the download phase, there exist $hd$ repair matrices $S_{i,j}^{(\mathcal{H},\mathcal{R})}\in F^{\beta_1\times \ell}$, $i\in\mathcal{H}$, $j\in\mathcal{R}_i$ with ${\rm rank}(S_{i,j}^{(\mathcal{H},\mathcal{R})})=\beta_1$.
For each $i\in\mathcal{H}$ and $j\in\mathcal{R}_i$, node $i$ downloads the $\beta_1$ symbols $S_{i,j}^{(\mathcal{H},\mathcal{R})}\bm{c}_j^{\top}$ from each helper node $j\in\mathcal{R}_i$.
And $S_{i,j}^{(\mathcal{H},\mathcal{R})}\bm{c}_j^{\top}$ only depends on $\alpha_{i,j}^{(\mathcal{H},\mathcal{R})}$ symbols of $\bm{c}_j$.
In the collaboration phase, there exist $h(h-1)$ repair matrices $T_{i,i'}^{(\mathcal{H},\mathcal{R})}\in F^{\beta_2\times(d\beta_1)}$, $i,i'\in\mathcal{H}$, $i\neq i'$ with ${\rm rank}(T_{i,i'}^{(\mathcal{H},\mathcal{R})})=\beta_2$.
For each $i\in\mathcal{H}$ and $i'\in\mathcal{H}\setminus\{i\}$, node $i$ downloads the $\beta_2$ symbols $T_{i,i'}^{(\mathcal{H},\mathcal{R})}(\bm{c}_jS_{i',j}^{(\mathcal{H},\mathcal{R})\top}:j\in\mathcal{R}_{i'})^{\top}$ from node $i'$.
In order to recover $\bm{c}_i$, $i\in\mathcal{H}$, there exist $h$ reconstruction matrices $W_i^{(\mathcal{H},\mathcal{R})}\in F^{\ell\times(d\beta_1+(h-1)\beta_2)}$, $i\in\mathcal{H}$, s.t. $W_i^{(\mathcal{H},\mathcal{R})}(\bm{c}_jS_{i,j}^{(\mathcal{H},\mathcal{R})\top}: j\in\mathcal{R}_i,~ 
(\bm{c}_jS_{i',j}^{(\mathcal{H},\mathcal{R})\top})_{j\in\mathcal{R}_{i'}}T_{i,i'}^{(\mathcal{H},\mathcal{R})\top}: i'\in\mathcal{H}\setminus\{i\})^{\top}=\bm{c}_i^{\top}$ for all $i\in\mathcal{H}$.
During the cooperative repair model, the total repair bandwidth is defined as $\gamma=h(d\beta_1+(h-1)\beta_2)$ symbols, and the total amount of data accessed is $\gamma_a=\sum_{i\in\mathcal{H}}\sum_{j\in\mathcal{R}_i}\alpha_{i,j}^{(\mathcal{H},\mathcal{R})}$ symbols.

Indeed, cooperative MSR (also MSR) codes belong to a subclass of MDS codes, known as {MDS array codes} \cite{Blaum1998}.
An $(n,k,\ell)$ MDS array code over a finite field $F$ is formed by a set of vectors $(\bm{c}_1,\ldots,\bm{c}_n)$, where each $\bm{c}_i\in F^{\ell}$ $i\in[n]$ is a row vector of length $\ell$.
$\ell$ is called the sub-packetization level.
It satisfies that any $k$ coordinates $\bm{c}_i$ can be seen as information coordinates and can reconstruct the whole codeword (termed as MDS property).
Each coordinate $\bm{c}_i\in F^{\ell}$ is stored in one storage node $i$ for $i\in[n]$.
For MDS array codes, it is shown in \cite{Hu2010,Ye2018} that the repair bandwidth and the amount of accessed data for cooperatively repairing $h$ nodes using $d$ helper nodes are respectively lower bounded by
\begin{equation}\label{bw-bound}
\gamma\geq\frac{(d+h-1)hl}{d-k+h}, \quad \gamma_a\geq\frac{dhl}{d-k+h}.
\end{equation}
If an $(n,k,\ell)$ MDS array code can cooperatively repair any $h$ of the $n$ nodes with $\gamma$ achieving (\ref{bw-bound}) with equality,
then the MDS array code is exactly a cooperative MSR code. Moreover, when both $\gamma$ and $\gamma_a$ meet (\ref{bw-bound}) with equality, the code is said to have optimal-access property and called an optimal-access cooperative MSR code.
In this paper, we focus on the construction of optimal-access cooperative MSR codes.

\subsection{Previous works}
In the literature, constructions of cooperative MSR codes used to restrict to limited parameters\cite{Shum2013,Scouarnec2012,Shum2016}.
Untill the work \cite{Ye2018}, Ye and Barg gave a construction with general parameters, while the sub-packetization level is extraordinarily large.
Then Zhang et al. \cite{Zhang2020} presented the first nontrivial construction of cooperative MSR codes possessing optimal-access property and relatively small sub-packetization.
Lately, scalar cooperative MSR codes and binary cooperative MSR codes are respectively derived in \cite{Zhang2020-scalar} and \cite{Li2025}.
The works in \cite{Ye2020,Liu2023,Zhang2025} are devoted to reduce the sub-packetization level of cooperative MSR codes.
However, except the work \cite{Shum2013,Zhang2020,Li2025}, all the previous constructions do not have the optimal-access property. Here the construction of \cite{Shum2013} is trivial due to $d=k$.
In the optimal-access cooperative MSR codes of \cite{Zhang2020} and \cite{Li2025}, the sub-packetization $\ell=(d-k+h)^{\binom{n}{h}}$ which is still large due to practical consideration.
Thus, how to further reduce the sub-packetization of optimal-access cooperative MSR codes remains an open problem.
We summarize the previous results in Table \ref{tab-comparison}.

\subsection{Our contribution}
In this paper, we focus on reducing the sub-packetization level of optimal-access cooperative MSR codes.
We present explicit constructions of optimal-access cooperative MSR codes with sub-packetization $\ell=\delta^{\binom{n}{h}-\lfloor\frac{n}{\delta}\rfloor(\binom{\delta}{h}-1)}$, which reduces $\ell$ by a fraction of $1/\delta^{\lfloor\frac{n}{\delta}\rfloor(\binom{\delta}{h}-1)}$ compared with that of \cite{Zhang2020,Li2025}, where $\delta=d-k+h$.
The comparison on parameters is illustrated in Table \ref{tab-comparison}.

\begin{table}[!ht]
		\begin{center}
			\caption{Comparison with previous results of cooperative MSR code constructions.}
			\label{tab-comparison}
\begin{tabular}{| c | c | c | c |c|c|}
\hline
Ref. & $n,k$ & repair parameters $h, d$ & sub-packetization $\ell$ & field size & optimal-access  \\ \hline
\cite{Shum2013} & any $(n,k)$ & $d=k$ & $d-k+h$ & $\geq n$ & Yes \\ \hline
\cite{Scouarnec2012} & $(n,k=2)$ & $d=n-h$ & $d-k+h$ & $\geq n$ & No\\ \hline
\cite{Shum2016} & $(n=2k,k)$ & $d=n-h$ & $d-k+h$ & $\geq (d-k+1)n$ & No\\ \hline
\cite{Ye2018} & any $(n,k)$ & all possible $h,d$ & $((d-k)^{h-1}(d-k+h))^{\binom{n}{h}}$ &  $\geq (d-k+1)n$ & No \\
\hline
\cite{Zhang2020} & any $(n,k)$ & all possible $h,d$ & $(d-k+h)^{\binom{n}{h}}$ & $\geq n+d-k$ & Yes \\
\hline
\cite{Zhang2020-scalar} & $(n\geq 2k-1,k)$ & $d\geq\max\{2k-1-h,k\}$ & $d-k+h$ & $\geq (d-k+1)n$ & No\\ \hline
\cite{Li2025} & any $(n,k)$ & all possible $h,d$ & $(d-k+h)^{\binom{n}{h}}$ & $\geq 2$ & Yes \\ \hline
\cite{Ye2020} &  any $(n,k)$ & all possible $h,d$ & $(d-k+h)(d-k+1)^n$ & $\geq(d-k+1)n$ & No \\
\hline
\multirow{2}{*}{\cite{Liu2023}} &\multirow{2}{*}{any $(n,k)$} & $(h+1)\mid 2^n$, $d=k+1$ & $2^n$ & \multirow{2}{*}{$\geq(d-k+1)n$} & \multirow{2}{*}{No} \\
\cline{3-4} & & $(h+1)=(2t+1)2^m$, $d=k+1$ & $(2t+1)2^n$ & &  \\
\hline
\cite{Zhang2025} & any $(n,k)$ & all possible $h,d$ & $(d-k+h)(d-k+1)^{\lceil\frac{n}{2}\rceil}$ & $\geq(d-k+1)n+1$ & No \\
\hline
This work& any $(n,k)$ & all possible $h,d$ & \makecell[c]{$\delta^{\binom{n}{h}-\lfloor\frac{n}{\delta}\rfloor(\binom{\delta}{h}-1)}$, \\ where $\delta=d-k+h$ }& $\geq n+d-k$ & Yes \\
\hline
\end{tabular}
		\end{center}
Note: For the repair parameters $h, d$, all possible $h,d$ means $2\leq h\leq n-k$ and $k\leq d\leq n-h$.
	\end{table}

In our techniques, we construct codes through parity-check matrix construction, as well as a unified transformation framework.
Both methods give cooperative MSR codes with same parameters.
More specifically,

\begin{itemize}
\item[(1)]
In the parity-check matrix perspective,
we give an explicit construction of optimal-access cooperative MSR codes by designing its parity-check matrix.
Such parity-check matrix is built from two crucial basic code structures $\mathcal{C}_{\mathrm{I}}$ and $\mathcal{C}_{\mathrm{II}}$ (see Section \ref{sec0-building-block}).
$\mathcal{C}_{\mathrm{I}}$ is designed for cooperatively repairing one specific erasure pattern of $h$ erasures, and $\mathcal{C}_{\mathrm{II}}$ is for repairing $\binom{d-k+h}{h}$ specific erasure patterns. Both $\mathcal{C}_{\mathrm{I}}$ and $\mathcal{C}_{\mathrm{II}}$ are MDS array codes with sub-packetization $\ell=d-k+h$.
Then, by extending the code structure $\mathcal{C}_{\mathrm{I}}$ for $\binom{n}{h}-\lfloor\frac{n}{\delta}\rfloor\binom{\delta}{h}$  times and $\mathcal{C}_{\mathrm{II}}$ for $\lfloor\frac{n}{\delta}\rfloor$ times, where $\delta=d-k+h$, we finally derive the optimal-access cooperative MSR code for repairing any $h$ node erasures (all $\binom{n}{h}$ erasure patterns).
It is worth noting that this construction generalizes our previous  work \cite{Zhang2026} at ISIT'2026 for the case $h=2$ and $d=n-2$.

\item[(2)]
In the transformation perspective,
we present a generic transformation framework for building optimal-access cooperative MSR codes.
That is, from an arbitrary $[n+d-k,d]$ MDS scalar code, one can directly obtain an cooperative MSR code by applying the generic transformation.
To this end, we firstly design two basic transformations $\mathcal{T}_1$ and $\mathcal{T}_2$.
From an MDS scalar code, each time we apply the transformation $\mathcal{T}_1$ (w.r.t. $\mathcal{T}_2$), the newly obtained code can repair one (w.r.t. $\binom{d-k+h}{h}$) more erasure patterns with the cost of extending the sub-packetization to $d-k+h$ times.
Then, by compositing $\mathcal{T}_1$ for $\binom{n}{h}-\lfloor\frac{n}{\delta}\rfloor\binom{\delta}{h}$ times and $\mathcal{T}_2$ for $\lfloor\frac{n}{\delta}\rfloor$ times where $\delta=d-k+h$, the generic transformation framework is derived.

\item[(3)]
We also establish a connection between the transformation framework and the parity-check matrix construction.
We show that the parity-check matrix code can be obtained by the generic transformation from a Reed-Solomon-type MDS scalar code.
Moreover, we also give connections between our construction and some previous code structures of cooperative MSR codes \cite{Zhang2020} and MSR codes with $h=1$ \cite{Ye2016sub-,JLi-2018}.
We show that our construction includes those previous code structures as special cases.
The details are given in Subsection \ref{comparison}.
\end{itemize}

Indeed, in our construction, the key to reducing the sub-packetization is to partition all the $\binom{n}{h}$ $h$-node erasure patterns into $\lfloor\frac{n}{\delta}\rfloor$ intra-groups of  erasure patterns $\mathcal{P}_u$, $1\leq u\leq \lfloor\frac{n}{\delta}\rfloor$ (i.e., each intra-group $\mathcal{P}_u$ contains $\binom{\delta}{h}$ erasure patterns) and $\binom{n}{h}-\lfloor\frac{n}{\delta}\rfloor\binom{\delta}{h}$ inter-group patterns $\mathcal{P}_0$ (i.e., the remaining erasure patterns).
Then, based on the basic codes $\mathcal{C}_{\mathrm{I}}$, $\mathcal{C}_{\mathrm{II}}$ and basic transformations $\mathcal{T}_1$, $\mathcal{T}_2$,
every time we extend the code dimension to $\delta$ times using  the structure $\mathcal{C}_{\mathrm{I}}$ (w.r.t. $\mathcal{T}_1$), one more erasure pattern in $\mathcal{P}_0$ can be repaired.
And every time we extend the code dimension  to $\delta$ times  using the structure $\mathcal{C}_{\mathrm{II}}$ (w.r.t. $\mathcal{T}_2$), another $\binom{\delta}{h}$ erasure patterns in $\mathcal{P}_u$ can be repaired.
Thus, the final code with $\ell=\delta^m$ is derived through sub-packetization extension of $m$ times where $m=\binom{n}{h}-\lfloor\frac{n}{\delta}\rfloor(\binom{\delta}{h}-1)$.
And the reduction on sub-packetization mainly benefits from the parallel repair of the $\binom{\delta}{h}$ erasure patterns in $\mathcal{C}_{\mathrm{II}}$ and $\mathcal{T}_2$, compared with \cite{Zhang2020,Li2025}.

\subsection{Organization}
The rest of the paper is organized as follows.
Section \ref{sec0-building-block} designs two types of MDS array code building blocks $\mathcal{C}_{\mathrm{I}}$ and $\mathcal{C}_{\mathrm{II}}$.
Section \ref{general-h} presents the general construction of  optimal-access cooperative MSR codes through parity-check matrix construction.
Then from the transformation perspective,
Section \ref{transform-1} and Section \ref{transform-2} present  the basic transformations $\mathcal{T}_1$ and $\mathcal{T}_2$, respectively.
Based on composition of $\mathcal{T}_1$ and $\mathcal{T}_2$, Section \ref{generic-transform} derives the generic transformation for building optimal-access cooperative MSR codes.
At last,
Section \ref{conclusion} gives some discussion and concludes the paper.

\section{Two types of MDS array code building blocks}\label{sec0-building-block}
\subsection{Notations}
Throughout this paper, we use $[n]$ to denote the set of integers $\{1,2,...,n\}$ for a positive integer $n$, and denote $[i,j]=\{i,i+1,...,j\}$ for two integers $i\leq j$.
Let $F$ denote a finite field.
 Given an $(n,k,\ell)$ MDS array code $\mathcal{C}$ over $F$, for each codeword $\bm{c}\in\mathcal{C}$, we write $\bm{c}=(\bm c_1,...,\bm c_n)$ where $\bm c_i=(c_{i,0},c_{i,1},\ldots,c_{i,\ell-1})\in F^{\ell}$ for $i\in[n]$. Each coordinate $\bm{c}_i$ is called a node.
Note that the bold letters, suc as $\bm{c}, \bm{c}_i$, etc. always denote row vectors. $\bm{c}^{\top}$ denotes the transpose of $\bm{c}$.
Let $I_{\ell}$ represent the identity matrix with order $\ell$.

We will define an $(n,k,\ell)$ MDS array code $\mathcal{C}$ over $F$ by giving its parity-check matrix $H$.
Specifically, write
\begin{equation}\label{def1}
H=\begin{pmatrix}
H_{1,1} &H_{1,2}& \cdots &H_{1,n}\\
H_{2,1} &H_{2,2}& \cdots &H_{2,n}\\
\vdots&\vdots&\ddots&\vdots \\
H_{r,1} &H_{r,2}& \cdots &H_{r,n}
\end{pmatrix}
\end{equation}
where $r=n-k$ and $H_{t,i}$ is an $\ell\times \ell$ matrix over $F$ for $t\in[r]$ and $i\in[n]$.
That is, $\mathcal{C}$ can be defined as $\mathcal{C}=\{(\bm c_1,...,\bm c_n)\in(F^\ell)^n:~H\cdot(\bm c_1,...,\bm c_n)^{\top}=\bm{0}\}$.
The MDS property of $\mathcal{C}$ indicates that any $r$  of the $n$ column blocks of $H$ form an invertible $r\ell\times r\ell$ matrix, equivalently, any
$k$ nodes are able to recover the whole codeword.
In the following, we always denote $r=n-k$.

\subsection{Two types of MDS array code building blocks}\label{base-code}

We present two types of MDS array codes $\mathcal{C}_{\mathrm{I}}$ and $\mathcal{C}_{\mathrm{II}}$ which later serve as building blocks for constructing optimal-access cooperative MSR codes with  $2\leq h\leq n-k$ and $k\leq d\leq n-h$.
The two codes are $(n,k,\ell=d-k+h)$ MDS array codes with cooperative repair of specific $h$ erasures. More precisely,
The first code can cooperatively repair the $h$ nodes $\{1,2,\ldots,h\}$ using $d$ helper nodes, and the second code can cooperatively repair any $h$ nodes in the set $\{1,2,\ldots,d-k+h\}$, i.e., in total $\binom{d-k+h}{h}$ erasure patterns, using $d$ helper nodes.
Next we define the two code constructions in Construction \ref{construction-1} and Construction \ref{construction-2} respectively, by giving their parity-check matrices.

\begin{construction}[Type-I MDS array code  $\mathcal{C}_{\mathrm{I}}$]\label{construction-1}
{\it
Let $F$ be a finite field with $|F|\geq n+d-k$, and $\lambda_1,\lambda_2,\ldots,\lambda_n,$ $\gamma_1,\ldots,\gamma_{d-k}$ be $n+d-k$ distinct elements in $F$.
The Type-I $(n,k,\ell=d-k+h)$ MDS array code $\mathcal{C}_{\mathrm{I}}$ is defined by the parity-check matrix $H$ with the form in (\ref{def1}), where for $t\in[r]$, $H_{t,j}=\lambda_{j}^{t-1}I_{\ell}$ for $j\in[h+1,n]$ and $(H_{t,1},H_{t,2},\ldots,H_{t,h})$ is defined as:
{\setlength{\arraycolsep}{2pt}
\begin{equation}\label{pc-block1}
\resizebox{0.9\hsize}{!}{$
\left(\begin{array}{ccccccc|ccccccc|ccc|ccccccc}
\lambda_1^{t-1} &  & & & \gamma_1^{t-1} &\cdots & \gamma_{d-k}^{t-1} & 
\lambda_2^{t-1} &  & & &   &  &  & 
 & & &  
 \lambda_h^{t-1} &  & & & & & \\ 
&  \lambda_1^{t-1}  & & & &  &   & 
&  \lambda_2^{t-1}  & & &   \gamma_1^{t-1} &  \cdots&  \gamma_{d-k}^{t-1} & 
 & & &  
&  \lambda_h^{t-1}  & & & & & \\  
&  & \ddots& & &  & & 
&  & \ddots & &   &  &  & 
 & & &  
 &  & \ddots & & & & \\ 
 &  & &  \lambda_1^{t-1}& && & 
&  & & \lambda_2^{t-1} &   &  &  & 
 &\cdots& &  
 &  & &  \lambda_h^{t-1}& \gamma_1^{t-1} &\cdots  & \gamma_{d-k}^{t-1} \\ 
  &  & &  & \lambda_1^{t-1}&& & 
&  & & &  \lambda_2^{t-1}  &  &  & 
 & & &  
 &  & & & \lambda_h^{t-1}&  &  \\ 
   &  & &  &  &\ddots& &
&  & & &   &   \ddots&  &
 & & &
 &  & & && \ddots&  \\
    &  & &  &  && \lambda_1^{t-1} &
&  & & &   &  &   \lambda_2^{t-1}&
 & & &
 &  & & &&  &  \lambda_h^{t-1} \\
\end{array}\right),
$}
\end{equation}
}
where the empty positions in (\ref{pc-block1}) represent zeros.
}
\end{construction}

Next, we illustrate the MDS property and cooperative repair property of $\mathcal{C}_{\mathrm{I}}$ defined in Construction \ref{construction-1} as follows.

\begin{itemize}
\item[1)] {\it MDS property.}

The MDS property of the code $\mathcal{C}_{\mathrm{I}}$ is straightforward. First, by the last $d-k$ rows of the matrix $(H_{t,1},\ldots,H_{t,n}), t\in[r]$, one can see that the punctured code of $\mathcal{C}_{\mathrm{I}}$ by deleting the first $h$ symbols $c_{i,0},\ldots, c_{i,h-1}$ from each coordinate $\bm{c}_i$ for $i\in[n]$ (i.e., $\{(c_{i,h},\ldots,c_{i,\ell-1})_{1\leq i\leq n}: (c_{i,0},\ldots,c_{i,\ell-1})_{1\leq i\leq n}\in \mathcal{C}_{\mathrm{I}}\}$) forms an $(n,k,d-k)$ MDS array code. Then, according to the first $h$ rows of $(H_{t,1},\ldots,H_{t,n}), t\in[r]$ and substituting the punctured code into it, one can obtain the MDS property of $\mathcal{C}_{\mathrm{I}}$.

\item[2)]  {\it Cooperative repair of $\{1,2,\ldots,h\}$. }

Suppose the $h$ nodes $\{1,2,\ldots,h\}$ are erased. Let $\mathcal{R}_i$, $i\in[h]$ be the set of $d$ helper nodes connected by node $i$.
For each $i\in[h]$, by the $i$-th row of $(H_{t,1},\ldots,H_{t,n}), t\in[r]$, one can obtain the following parity-check equations
$$
\sum_{j\in[n]}\lambda_{j}^{t-1}c_{j,i-1}+\gamma_1^{t-1}c_{i,h}+\cdots+\gamma_{d-k}^{t-1}c_{i,d-k+h-1}=0, ~~~~t\in[r].
$$
This implies that $(c_{1,i-1}, \ldots,c_{n,i-1}, c_{i,h},\ldots,c_{i,d-k+h-1})$ constitutes an $[n+d-k, d]$ generalized Reed-Solomon code (GRS) codeword.
Then in the download phase, for $i\in[h]$, node $i$  downloads the symbol $c_{p,i-1}$ from each helper node $p\in\mathcal{R}_i$, and thereby obtains the data $\{c_{j,i-1}: j\in[h]\}\cup\{c_{i,h}, \dots, c_{i,d-k+h-1}\}$.

In the collaboration phase, for each $i\in[h]$ and $i'\in[h]\setminus\{i\}$, node $i'$ transmits the symbol $c_{i,i'-1}$ to node $i$.
Thus the $h$ nodes can be repaired.

\end{itemize}

\begin{construction}[Type-II MDS array code  $\mathcal{C}_{\mathrm{II}}$]\label{construction-2}
{\it
Let $F$ be a finite field with $|F|>n$, and $\lambda_1,\lambda_2,\ldots,\lambda_n,\tau\in F$ such that $\lambda_i, i\in[n]$ are all distinct and $\tau\neq 0, 1$.
The Type-II $(n,k,\ell=d-k+h)$ MDS array code $\mathcal{C}_{\mathrm{II}}$ is defined by the parity-check matrix $H$ with the form in (\ref{def1}), where for $t\in[r]$, $H_{t,j}=\lambda_{j}^{t-1}I_{\ell}$ for $j\in[d-k+h+1,n]$ and $(H_{t,1},\dots,H_{t,d-k+h})=$
{\setlength{\arraycolsep}{3pt}
\begin{equation}\label{pc-block2}
\resizebox{0.9\hsize}{!}{$
\begin{aligned}
&\left(\begin{array}{ccccc|ccccc|c|ccccc}
\lambda_1^{t-1} & \lambda_2^{t-1} & \lambda_3^{t-1} &\ldots & \lambda_{d-k+h}^{t-1} & \lambda_2^{t-1} &  &  && & &
\lambda_{d-k+h}^{t-1} &  & & &  \\
 & \lambda_1^{t-1} &  && & \tau\lambda_1^{t-1} & \lambda_2^{t-1} & \lambda_3^{t-1} &\ldots & \lambda_{d-k+h}^{t-1} & &
 & \lambda_{d-k+h}^{t-1} & & &  \\
 & &  \lambda_1^{t-1} && &  &  & \lambda_2^{t-1} &  & & \cdots
 & & & \lambda_{d-k+h}^{t-1} & &  \\
 & &  & \ddots & &  &  &  & \ddots &  &
 & & & & \ddots & \\
 & &  &  & \lambda_1^{t-1} &  &  &  & & \lambda_2^{t-1} &
 & \tau\lambda_1^{t-1} & \tau\lambda_2^{t-1} & \tau\lambda_3^{t-1} & \ldots & \lambda_{d-k+h}^{t-1}\\
\end{array}\right),
\end{aligned}
$}
\end{equation}
}
where the empty positions in (\ref{pc-block2}) represent zeros.
And note that in (\ref{pc-block2}) the element $\tau$ only appears in the $(i,1),\ldots,(i,i-1)$-th entries of $H_{t,i}$ for $i\in[d-k+h]$.
}
\end{construction}
Next we prove the MDS property and cooperative repair property of $\mathcal{C}_{\mathrm{II}}$ defined in Construction \ref{construction-2}.
\begin{itemize}

\item[1)] {\it MDS property.}

It suffices to show that every choice of $r$ column blocks of $H$ forms an invertible matrix, denoted by $H(i_1,i_2,\ldots,i_r)$ for  $1\leq i_1<i_2< \cdots<i_r\leq n$. That is, we  prove for any $\bm{x}\in(F^{d-k+h})^r$, $H(i_1,i_2,\ldots,i_r)\cdot \bm{x}^{\top}=\bm{0}$ always implies $\bm{x}=\bm{0}$.
Denote $\bm{x}=(\bm{x}_1,\ldots,\bm{x}_r)$ and $\bm{x}_i=(x_{i,1},\ldots,x_{i,d-k+h})$ for $i\in[r]$, and  suppose $i_1,\ldots,i_g\in[d-k+h]$ and $i_{g+1},\ldots,i_r\in[d-k+h+1,n]$ for some $0\leq g\leq d-k+h$.

For each $a\in[d-k+h]\setminus\{i_1,\ldots,i_g\}$, according to the $a$-th row of $(H_{t,i_1},\ldots,H_{t,i_r})\bm{x}^{\top}=\bm{0}$, $t\in[r]$, one can obtain that
$
\sum_{j=1}^{r}\lambda_{i_j}^{t-1}x_{j,a}=0, t\in[r].
$
This implies $x_{j,a}=0$ for all $j\in[r], a\in[d-k+h]\setminus\{i_1,\ldots,i_g\}$.

Now consider symbols $x_{j,i_s}$, $j\in[r],s\in[g]$. For each $s\in[g]$, by the $i_s$-th row of $(H_{t,i_1},\ldots,H_{t,i_r})\bm{x}^{\top}=\bm{0}$, $t\in[r]$ and noticing that $x_{s,a}=0$ for all $ a\in[d-k+h]\setminus\{i_1,\ldots,i_g\}$, then one can obtain the parity-check equations in (\ref{eq-1-0}) as follows: for $t\in[r]$,
\begin{equation}\label{eq-1-0}
\begin{aligned}
\sum_{1\leq j<s}\lambda_{i_{j}}^{t-1}(x_{j,i_s}\!+\!\tau x_{s,i_{j}})\!+\!\sum_{s<j\leq g}\lambda_{i_{j}}^{t-1}(x_{j,i_s}\!+\!x_{s,i_{j}})
+\lambda_{i_{s}}^{t-1}x_{s,i_s}+\sum_{j=g+1}^{r}\lambda_{i_{j}}^{t-1}x_{j,i_s} =0.
\end{aligned}
\end{equation}
Thus, the following set of $r$ symbols:
\begin{equation}\label{eq-1-0-0}
\{x_{s,i_s}\}\cup\{x_{j,i_s}+\tau x_{s,i_{j}}\}_{1\leq j<s}\cup\{x_{j,i_s}+x_{s,i_{j}}\}_{s<j\leq g}\cup\{x_{j,i_s}\}_{j\in[g+1,r]}
\end{equation}
are solved to be zeros by (\ref{eq-1-0}).
That is, for each $s\in[g]$, one can directly compute $x_{s,i_s}=0$ and $x_{j,i_s}=0,j\in[g+1,r]$, which implies that ${\bm x}_j={\bf 0}$ for all $j\in[g+1,r]$.
As for the symbol sums, by considering equations labeled by $i_{s}$ and $i_{j}$ where $s,j\in[g]$ and $s\neq j$, one has
\begin{equation}\label{eq-1-1}
\begin{cases}
x_{j,i_s}+\tau x_{s,i_{j}}=0 \\
x_{s,i_{j}}+ x_{j,i_{s}}=0
\end{cases}
\mathrm{if}~j<s;~
\begin{cases}
x_{s,i_j}+ x_{j,i_{s}}=0 \\
x_{s,i_{j}}+\tau x_{j,i_s}=0
\end{cases}
\mathrm{if}~j>s.
\end{equation}
Since $\tau\neq0,1$ and $x_{s,i_s}=0$, it has that $x_{s,i_{j}}=0$ for all $s,j\in [g]$. Hence, ${\bm x}_s={\bf 0}$ for all $s\in[g]$.
This completes the proof.

\item[2)] {\it Cooperative repair of any $h$ nodes in $[d-k+h]$.}

W.L.O.G., suppose the $h$ nodes $\{1,2,\ldots,h\}$ are erased.
For $i\in[h]$, let $\mathcal{R}_i$ with $([d-k+h]\setminus[h])\subseteq\mathcal{R}_i$ be the set of $d$ helper nodes connected by node $i$.
For each $i\in[h]$, according to the $i$-th row of the parity-check equations $(H_{t,1},\ldots,H_{t,n})\bm{c}^{\top}=\bm{0}$ for $t\in[r]$, one obtains that for $t\in[r]$,
\[
\sum_{j=1}^{i-1}\lambda_{j}^{t-1}(c_{j,i-1}+\tau c_{i,j-1})+\lambda_i^{t-1}c_{i,i-1}+\sum_{j=i+1}^{d-k+h}\lambda_{j}^{t-1}(c_{j,i-1}+c_{i,j-1})+\sum_{j=d-k+h+1}^{n}\lambda_j^{t-1}c_{j,i-1}=0.
\]
That is, the following vector
\begin{equation}\label{type-2-repair}
(c_{1,i-1}+\tau c_{i,0},\ldots, c_{i-1,i-1}+\tau c_{i,i-2}, c_{i,i-1}, c_{i+1,i-1}+c_{i,i} \ldots,c_{d-k+h,i-1}+c_{i,d-k+h-1},c_{d-k+h+1,i-1},\ldots,c_{n,i-1})
\end{equation}
forms an $[n,k]$ GRS codeword.
Then, in the download phase, for $i\in[h]$, node $i$ downloads the symbols $c_{j,i-1}$ from each  helper node $j\in\mathcal{R}_i$.
Using the $k$ downloaded symbols $\{c_{j,i-1}: j\in\mathcal{R}_i\setminus([d-k+h]\setminus[h])\}$, node $i$ can reconstruct the whole codeword in \eqref{type-2-repair}.
Moreover, since $([d-k+h]\setminus[h])\subseteq\mathcal{R}_i$ and $\{c_{j,i-1}: j\in[d-k+h]\setminus[h]$ are known, then node $i$ can further recover the following symbols:
\[
\{c_{j,i-1}+\tau c_{i,j-1}\}_{1\leq j\leq i-1}
\cup\{c_{j,i-1}+c_{i,j-1}\}_{i+1\leq j\leq h}
\cup\{c_{i,j-1}\}_{j\in\{i\}\cup[h+1,d-k+h]}.
\]

In the collaboration phase, for each $i\in[h]$ and $i'\in[h]\setminus\{i\}$, node $i'$ transmits the symbol $c_{i,i'-1}+\tau c_{i',i-1}$ to node $i$ if $i<i'$, and transmits the symbol $c_{i,i'-1}+c_{i',i-1}$ to node $i$ if $i>i'$.
Then node $i$ can solve out the unknown  symbols $\{c_{i,j-1}: j\in[h]\setminus{i}\}$ since $\tau\neq 0,1$.
Thus the $h$ nodes can be repaired.
\end{itemize}

In the next section, based on the two codes $\mathcal{C}_{\mathrm{I}}$ and $\mathcal{C}_{\mathrm{II}}$, we construct an optimal-access cooperative MSR code with $h$ erasures and $d$ helper nodes.


\section{Optimal-access cooperative MSR codes with $2\leq h\leq n-k$ and $k\leq d\leq n-h$}\label{general-h}

In this section, let $2\leq h\leq n-k$ and $k\leq d\leq n-h$.
We present an optimal-access cooperative MSR code with $h$ erasures and $d$ helper nodes.
The code is an $(n,k,\ell=\delta^{m})$ MDS array code where $\delta=d-k+h$ and $m=\binom{n}{h}-\lfloor\frac{n}{\delta}\rfloor(\binom{\delta}{h}-1)$, which is constructed by stacking up of the two MDS array code building blocks $\mathcal{C}_{\mathrm{I}}$ and $\mathcal{C}_{\mathrm{II}}$ in Section \ref{sec0-building-block} for several times.
We will define the code by designing its parity-check matrix, and before that, some notations and definitions are needed.

\subsection{Notations and definitions}\label{notation-h}

\begin{itemize}
\item
Let $2\leq h\leq n-k$ and $k\leq d\leq n-h$.
Denote $\delta=d-k+h$ and $m=\binom{n}{h}-\lfloor\frac{n}{\delta}\rfloor(\binom{\delta}{h}-1)$, then $\ell=\delta^m$.
For each $a\in[0,\ell-1]$, $a$ can be uniquely represented as $a=\sum_{i=1}^{m}a_{i}\delta^{i-1}$, where $a_i\in[0,\delta-1]$ for $i\in[m]$, then write $a=(a_1,a_2,\ldots, a_m)$ for simplicity.
For some $i\in[m]$ and $v\in[0,\delta-1]$, denote $a(i,v)=(a_1,\ldots,a_{i-1},v,a_{i+1},\ldots,a_m)$.
\item
The $n$ storage nodes are indexed from $1$ to $n$. We give a partition of the $n$ nodes in $[n]$.
That is, the $n$ nodes are partitioned into $\lceil\frac{n}{\delta}\rceil$ groups, where for $i\in[\lfloor\frac{n}{\delta}\rfloor]$, group $i$ contains $\delta$ consecutive nodes $[(i-1)\delta+1, i\delta]$, and the last group (if have) contains nodes in $[\delta\lfloor\frac{n}{\delta}\rfloor+1,n]$.
Besides, for every node $j\in[\delta\lfloor\frac{n}{\delta}\rfloor]$, we write $j=(u,v)$ to indicate that node $j$ is the $(v+1)$-th node in the $u$-th group, where $u\in[\lfloor\frac{n}{\delta}\rfloor]$ and $v\in[0,\delta-1]$ are the unique integers satisfying $j=(u-1)\delta+v+1$.

\item
Let $\mathcal{P}=\{(j_1,j_2,\ldots,j_h): 1\leq j_1<j_2<\cdots< j_h\leq n\}$ represent the set of all $\binom{n}{h}$ $h$-tuples ($h$-node erasure patterns).
Based on the partition of $[n]$,
for $i\in[\lfloor\frac{n}{\delta}\rfloor]$, define $\mathcal{P}_i$ to be the set of $\binom{\delta}{h}$ $h$-tuples within group $i$, i.e.,
$\mathcal{P}_i=\{(j_1,j_2,\ldots,j_h): (i-1)\delta+1\leq j_1<\cdots<j_h\leq i\delta\}$ and $|\mathcal{P}_i|=\binom{\delta}{h}$.
Moreover, $\mathcal{P}_i$'s, $i\in[\lfloor\frac{n}{\delta}\rfloor]$ are disjoint subsets of $\mathcal{P}$. Define $\mathcal{P}_0=\mathcal{P}\setminus(\cup_{i\in[\lfloor\frac{n}{\delta}\rfloor]}\mathcal{P}_i)$, then $|\mathcal{P}_0|=\binom{n}{h}-\lfloor\frac{n}{\delta}\rfloor\binom{\delta}{h}=m-\lfloor\frac{n}{\delta}\rfloor$.
We call an $h$-node erasure pattern $\mathcal{H}$ to be an intra-group erasure pattern  if $\mathcal{H}\in\mathcal{P}_i$ for  some $i\in[\lfloor\frac{n}{\delta}\rfloor]$, and it is called an inter-group erasure pattern if $\mathcal{H}\in\mathcal{P}_0$.

\item
Define $\pi$ to be a surjective map from $\mathcal{P}$ to the set $[m]$ satisfying the following conditions.
\begin{itemize}
\item[(1)] For $i\in[\lfloor\frac{n}{\delta}\rfloor]$, $\pi$ maps $h$-tuples in $\mathcal{P}_i$ to integer $i$. That is, $\pi(j_1,j_2,\ldots,j_h)=i$ if $(j_1,j_2,\ldots,j_h)\in\mathcal{P}_i$ for $i=1,2,\ldots,\lfloor\frac{n}{\delta}\rfloor$.
\item[(2)] $\pi$ maps $h$-tuples in $\mathcal{P}_0$ to integers in $[\lfloor\frac{n}{\delta}\rfloor+1,m]$ and it is a one to one mapping. This can be done since $|\mathcal{P}_0|=m-\lfloor\frac{n}{\delta}\rfloor$.
\end{itemize}

\item
For each $j\in[n]$, define $\Omega_{j,0},\ldots,\Omega_{j,h-1}$ to be the following $h$ subsets of $[\lfloor\frac{n}{\delta}\rfloor+1,m]$, i.e., for $e\in[0,h-1]$, define\\
$$
\Omega_{j,e}=\{\pi(j_0,\ldots, j_{e}, \ldots, j_{h-1}):~j_{e}=j~ \mathrm{and}~(j_0,\ldots, j_{e}, \ldots, j_{h-1})\in\mathcal{P}_0\}.
$$
\end{itemize}

We give an example to illustrate the notations and definitions.
\begin{example}\label{ex1}
Let $n=7, k=2, d=4, h=3$. Then $\delta=5$, $m=26$ and $\ell=5^{26}$.
Every integer $a\in[0,\ell-1]$ is represented by a vector $(a_1,a_2,\ldots,a_{26})$ with $a_i\in[0,4]$, $i\in[26]$, where $a=\sum_{i=1}^{26}a_i5^{i-1}$.
The $7$ nodes are partitioned into two groups with group $1$ containing $\delta=5$ nodes $\{1,2,\ldots,5\}$ and group $2$ containing the remaining two nodes $\{6,7\}$.
For each node $j\in[5]$, we write $j=(1,j-1)$ for simplicity.

Denote by $\mathcal{P}$ the set of all $\binom{7}{3}=35$ triples (3-node erasure patterns) of nodes in $\{1,\ldots,7\}$.
Denote by $\mathcal{P}_1$ the subset of $\mathcal{P}$ containing all $\binom{5}{3}=10$ triples of nodes in $\{1,\ldots,5\}$.
And denote $\mathcal{P}_0=\mathcal{P}\setminus\mathcal{P}_1$ to be the set of the remaining 25 triples.
We give a surjective map $\pi$ from $\mathcal{P}$ to the set $[m]=[26]$ satisfying (1) and (2), displayed in Table \ref{map-pi}.
 \begin{table}[ht] 
  \renewcommand{\arraystretch}{1.4}
	\begin{center}
\caption{The surjective map $\pi$ on $\mathcal{P}=\mathcal{P}_0\cup\mathcal{P}_1$}
\label{map-pi}
\setlength{\tabcolsep}{1mm}{
\begin{tabular}{c|ccccc}
\hline
\multirow{2}{*}{$\pi$ on $\mathcal{P}_1$}& $\pi(1,2,3)\!=\!1$& $\pi(1,2,4)\!=\!1$ & $\pi(1,2,5)\!=\!1$ & $\pi(1,3,4)\!=\!1$ & $\pi(1,3,5)\!=\!1$
\\
& $\pi(1,4,5)\!=\!1$ & $\pi(2,3,4)\!=\!1$ &$\pi(2,3,5)\!=\!1$ & $\pi(2,4,5)\!=\!1$ & $\pi(3,4,5)\!=\!1$
\\ [0.8ex] \hline
\multirow{5}{*}{$\pi$ on $\mathcal{P}_0$}&$\pi(1,2,6)\!=\!2$& $\pi(1,4,6)\!=\!7$  &$\pi(2,3,6)\!=\!12$ & $\pi(2,5,6)\!=\!17$ & $\pi(3,5,6)\!=\!22$
\\
 &$\pi(1,2,7)\!=\!3$ & $\pi(1,4,7)\!=\!8$ & $\pi(2,3,7)\!=\!13$ & $\pi(2,5,7)\!=\!18$ & $\pi(3,5,7)\!=\!23$
 \\
 &$\pi(1,3,6)\!=\!4$  &$\pi(1,5,6)\!=\!9$ &$\pi(2,4,6)\!=\!14$ &$\pi(3,4,6)\!=\!19$ &$\pi(4,5,6)\!=\!24$
\\
 &$\pi(1,3,7)\!=\!5$  &$\pi(1,5,7)\!=\!10$ &$\pi(2,4,7)\!=\!15$ &$\pi(3,4,7)\!=\!20$ &$\pi(4,5,7)\!=\!25$
 \\
  &$\pi(1,6,7)\!=\!6$  &$\pi(2,6,7)\!=\!11$ &$\pi(3,6,7)\!=\!16$ &$\pi(4,6,7)\!=\!21$ &$\pi(5,6,7)\!=\!26$
\\ \hline
\end{tabular}}
\end{center}
\end{table}

For each node $j\in[7]$, we define $3$ subsets $\Omega_{j,0}$, $\Omega_{j,1}$ and $\Omega_{j,2}$ of the set  $\pi(\mathcal{P}_0)=[2,26]$.
$\Omega_{j,0}$ contains integers in $[2,26]$ whose preimage under $\pi$ has the form $(j,j_1,j_2)\in\mathcal{P}_0$ where $j<j_1<j_2\leq n$.
$\Omega_{j,1}$ contains integers in $[2,26]$ whose preimage under $\pi$ has the form $(j_0,j,j_2)\in\mathcal{P}_0$ where $1\leq j_0<j<j_2\leq n$.
$\Omega_{j,2}$ contains integers in $[2,26]$ whose preimage under $\pi$ has the form $(j_0,j_1,j)\in\mathcal{P}_0$ where $1\leq j_0<j_1<j$.
Take $j=4$ for example, it has
\[\begin{aligned}
&\Omega_{4,0}=\{21, 24, 25\},\\
&\Omega_{4,1}=\{7, 8, 14, 15, 19, 20\},\\
&\Omega_{4,2}=\emptyset.
\end{aligned}\]
\end{example}

In the following subsection, we give the general code construction.

\subsection{Code construction}\label{sec-construction-h}
We give the optima-access cooperative MSR code construction with $2\leq h\leq n-k$ and $k\leq d\leq n-h$.
Actually, recall that in Section \ref{sec0-building-block},
$\mathcal{C}_{\mathrm{I}}$ can cooperatively repair one erasure pattern $\{1,2,\ldots,h\}$, and $\mathcal{C}_{\mathrm{II}}$ can cooperatively repair $\binom{\delta}{h}$ erasure patterns, i.e., any $h$ nodes in $[\delta]$.
Both $\mathcal{C}_{\mathrm{I}}$ and $\mathcal{C}_{\mathrm{II}}$ have sub-packetization $\delta$.
In order to construct a general code  with cooperative repair of any $h$ erasures, i.e., $\binom{n}{h}$ erasure patterns, we extend the sub-packetization to $m$ dimensions, i.e., $\ell=\delta^m=\delta^{\binom{n}{h}-\lfloor\frac{n}{\delta}\rfloor(\binom{\delta}{h}-1)}$.
Then each $a\in[0,\ell-1]$ can be written as $a=(a_1,\ldots,a_m)$.
According to the map $\pi$ in Subsection \ref{notation-h}, each coordinate $u\in[\lfloor \frac{n}{\delta}\rfloor]$ is used to repair the $\binom{\delta}{h}$ intra-group erasure patterns in $\mathcal{P}_u$ (as in $\mathcal{C}_{\mathrm{II}}$), while each of the remaining $m-\lfloor \frac{n}{\delta}\rfloor$ coordinates repairs one inter-group erasure pattern in $\mathcal{P}_0$ (as in $\mathcal{C}_{\mathrm{I}}$).

\begin{construction}\label{construction-h}
{\it
Let $F$ be a finite field with $|F|\geq n+d-k$. Let $\lambda_1,\lambda_2,\ldots,\lambda_n,\gamma_1,\gamma_2,\ldots,\gamma_{d-k}$ be distinct elements in $F$ and $\tau\in F\setminus\{0,1\}$.
Denote $\delta=d-k+h$ and $m=\binom{n}{h}-\lfloor\frac{n}{\delta}\rfloor(\binom{\delta}{h}-1)$.
The $(n,k,\ell=\delta^m)$ cooperative MSR code $\mathcal{C}$ is defined by the parity-check matrix $H$ with the form in (\ref{def1}), where for $t\in[r]$ and $j\in[n]$, $H_{t,j}$'s are defined in Algorithm \ref{alg-0} (we index the rows and columns of $H_{t,j}$ by integers from $0$ to $\ell-1$, and denote by $H_{t,j}(a,b)$ the entry in $a$-th row and $b$-th column).
}

\begin{algorithm}[H]
\caption{Defining $H_{t,j}$, $t\in[r], j\in[n]$.}\label{alg1}
\begin{algorithmic}[1]\label{alg-0}
\REQUIRE The basic parameters $n,k,d,h$ and parameters $t, j$ with $t\in[r]$, $j\in[n]$, and parameter $\ell=\delta^m$. Distinct elements $\lambda_1,\ldots,\lambda_n,\gamma_1,\ldots,\gamma_{d-k}$ in $F$ and an element $\tau\in F\setminus\{0,1\}$.
\ENSURE The desired matrix $H_{t,j}$.
\STATE
Initialize $H_{t,j}$ to be a diagonal matrix: $H_{t,j}=\lambda_{j}^{t-1}I_{\ell}$, where $I_{\ell}$ represents an identity matrix of order $\ell$.
\STATE
Recall the definition of $\Omega_{j,e}$'s, $e\in[0,h-1]$ in Subsection \ref{notation-h}.
\FOR{$a\in[0,\ell-1], e\in[0,h-1], u\in\Omega_{j,e}$}
\IF {$a_{u}=e$}
\STATE
Set $H_{t,j}(a,a(u,w))=\gamma_{w-h+1}^{t-1}$ for all $w\in[h,\delta-1]$.
\ENDIF
\ENDFOR
\IF {$1\leq j\leq \delta\lfloor\frac{n}{\delta}\rfloor$}
\STATE
Denote $j=(u-1)\delta+v+1$ for some $u\in[\lfloor\frac{n}{\delta}\rfloor]$ and $v\in[0,\delta-1]$.
\FOR{$a\in[0,\ell-1]$}
\IF {$a_{u}=v$}
\STATE
Set $H_{t,j}(a,a(u,w))=\tau\lambda_{(u-1)\delta+w+1}^{t-1}$ for all $0\leq w<v$;
\STATE
Set $H_{t,j}(a,a(u,w))=\lambda_{(u-1)\delta+w+1}^{t-1}$ for all $v< w\leq \delta-1$.
\ENDIF
\ENDFOR
\ENDIF
\RETURN
The matrix $H_{t,j}$.
\end{algorithmic}
\end{algorithm}

\end{construction}

\begin{remark}\label{remark}
We construct each $H_{t,j}$, $t\in[r], j\in[n]$ in Algorithm \ref{alg-0} from the diagonal matrix $\lambda_j^{t-1}I_{\ell}$ by successively adding some non-diagonal non-zero entries.
That is, as illustrated in line $3$-$7$ of Algorithm \ref{alg-0}, consider each row indexed by $a\in[0,\ell-1]$.
Write $a=(a_1,a_2,\ldots,a_m)$. We explain how to add non-diagonal non-zero entries in the $a$-th row of $H_{t,j}$.
Let $\Omega_{j,0}, \Omega_{j,1}, \ldots, \Omega_{j,h-1}$ be the $h$ subsets defined in Subsection \ref{notation-h}.
For each $e\in[0,h-1]$ and for each $u\in\Omega_{j,e}$, if $a_u=e$, then we add $\gamma_{w-h+1}^{t-1}$ at the $a$-th row and $a(u,w)$-th column of $H_{t,j}$ for all $w\in[h,\delta-1]$.
These $\gamma_{w-h+1}^{t-1}$'s are added in a similar manner as in matrix (\ref{pc-block1}), which are referred to as {\it type-I non-diagonal entries}.
Besides,
in line $8$-$16$ of Algorithm \ref{alg-0}, we add some non-diagonal non-zero entries in $H_{t,j}$ specially for nodes $j\in[\delta\lfloor\frac{n}{\delta}\rfloor]$.
When $j\in[\delta\lfloor\frac{n}{\delta}\rfloor]$, write $j=(u,v)$ for unique $u\in[\lfloor\frac{n}{\delta}\rfloor]$ and $v\in[0,\delta-1]$.
Consider each row indexed by $a=(a_1,a_2,\ldots,a_m)\in[0,\ell-1]$.
If $a_{u}=v$, we add $\tau\lambda_{(u,w)}^{t-1}$ at the $a$-th row and $a(u,w)$-th column of $H_{t,j}$  for $0\leq w<v$, and add $\lambda_{(u,w)}^{t-1}$ at the $a$-th row and $a(u,w)$-th column of $H_{t,j}$ for $v< w\leq \delta-1$.
These $\tau\lambda_{(u,w)}^{t-1}$'s and $\lambda_{(u,w)}^{t-1}$'s are added following a similar manner as in matrix (\ref{pc-block2}), which are referred to as {\it type-II non-diagonal entries}.

Actually, these non-diagonal non-zero entries are used to execute node repair. 
When repairing an erasure pattern $(j_0,\ldots,j_{h-1})$, for an intra-group pattern (belonging to some $\mathcal{P}_u$, $u\in[\lfloor \frac{n}{\delta}\rfloor]$), we set $j_i=(u-1)\delta+v_i+1$ with $v_i\in[0,\delta-1]$ and use the equations labeled by $\{a:a_u=v_i\}$ to repair node $j_i$  in the download phase; for an inter-group pattern (belonging to $\mathcal{P}_0$), we set $\rho=\pi(j_0,\ldots,j_{h-1})$ and use the equations labeled by $\{a:a_\rho=i\}$ to repair node $j_i$, $i\in[0,h-1]$  in the download phase.

\end{remark}

For a better understanding of the construction, we give an illustrating example.

\begin{example}\label{ex2}
Let $n=7, k=2, d=4, h=3$. Then $\delta=5$, $m=26$ and $\ell=5^{26}$.
Note that in \cite{Zhang2020} for the same parameters, it has $\ell=5^{35}$.
Then the cooperative MSR code $\mathcal{C}$ in Construction \ref{construction-h} is a $(7,2,5^{26})$ MDS array code over $F$, where $F$ is a finite field with $|F|\geq 9$. Choose $\lambda_1,\lambda_2,\ldots,\lambda_7,\gamma_1,\gamma_2$ to be $9$ distinct elements in $F$ and choose $\tau\in F\setminus\{0,1\}$.
Every integer $a\in[0,\ell-1]$ is represented by a vector $(a_1,a_2,\ldots,a_{26})$ with $a_i\in[0,4]$, $i\in[26]$.
The sets $\mathcal{P}$, $\mathcal{P}_i$'s and map $\pi$ defined in Subsection \ref{notation-h} are explicitly given in Example \ref{ex1}, and it has $\mathcal{P}=\mathcal{P}_0\cup\mathcal{P}_1$ with $|\mathcal{P}_0|=10$ and $|\mathcal{P}_1|=25$.
The surjective map $\pi$ is displayed in Table \ref{map-pi}.

Take $j=4$ for example, and we construct $H_{t,4}$ for $t\in[r]$ by Algorithm \ref{alg-0}.
Recall the definition of $\Omega_{4,0}, \Omega_{4,1}, \Omega_{4,2}$ in Subsection \ref{notation-h} which are explicitly given in Example \ref{ex1}.
Note that $j=4\leq \delta\lfloor\frac{n}{\delta}\rfloor=5$ and we can write $4=(u-1)\delta+v+1$ with $u=1$ and $v=3$.
According to Algorithm \ref{alg-0}, we construct $H_{t,4}$ through the following steps:
\begin{itemize}
\item Begin with the diagonal matrix $\lambda_4^{t-1}I_{5^{26}}$;
\item Add type-I non-diagonal entries: \\
Since $\Omega_{4,0}=\{21, 24, 25\}$, for $a\in[0,5^{26}-1]$ with $a_{21}=0$ (resp. $a_{24}=0$, $a_{25}=0$), set the $(a,a(21,3))$-th (resp. $(a,a(24,3))$-th, $(a,a(25,3))$-th) entry to be $\gamma_{1}^{t-1}$;
set the $(a,a(21,4))$-th (resp. $(a,a(24,4))$-th, $(a,a(25,4))$-th) entry to be $\gamma_{2}^{t-1}$.\\
Since $\Omega_{4,1}=\{7, 8, 14, 15, 19, 20\}$, for $a\in[0,5^{26}-1]$ with $a_{7}=1$ (resp. $a_{8}=1$,  $a_{14}=1$, $a_{15}=1$, $a_{19}=1$, $a_{20}=1$), set the $(a,a(7,3))$-th (resp. $(a,a(8,3))$-th, $(a,a(14,3))$-th $(a,a(15,3))$-th $(a,a(19,3))$-th $(a,a(20,3))$-th) entry to be $\gamma_{1}^{t-1}$;
set the $(a,a(7,4))$-th (resp. $(a,a(8,4))$-th, $(a,a(14,4))$-th $(a,a(15,4))$-th $(a,a(19,4))$-th $(a,a(20,4))$-th) entry to be $\gamma_{2}^{t-1}$.
\item Add type-II non-diagonal entries: \\
For $a\in[0,5^{26}-1]$ with $a_1=3$, set the $(a,a(1,0))$-th entry to be $\tau\lambda_{1}^{t-1}$;
set the $(a,a(1,1))$-th entry to be $\tau\lambda_{2}^{t-1}$;
set the $(a,a(1,2))$-th entry to be $\tau\lambda_{3}^{t-1}$;
and set the $(a,a(1,4))$-th entry to be $\lambda_{5}^{t-1}$.
\end{itemize}

Note that for each $a\in[0,\ell-1]$, the $a$-th row of $H_{t,4}$ may have several non-diagonal non-zero entries, and these non-diagonal entries added at each step all lie in different columns.
\end{example}

In the following, we give the MDS property and optimal-access property of the code in Construction \ref{construction-h} in Subsection \ref{mds-h} and Subsection \ref{repair-h}, respectively.

\subsection{MDS property}\label{mds-h}
Before proving the MDS property of the code in Construction \ref{construction-h}, we firstly give some necessary definitions and some explanation about the structure of the parity-check matrix.
At first, in Definition \ref{partition}, we define two partitions of the index set $[0,\ell-1]$, which play important roles in the proof of both MDS property and optimal-access property of $\mathcal{C}$.

\begin{definition}\label{partition}
Define two partitions of $[0,\ell-1]$ as follows:
\begin{itemize}
\item[(1)] {\it The first partition of $[0,\ell-1]$:}
For every integer $a\in[0,\ell-1]$, define
$
w_{\rm{suf}}(a)=|\{u\in[\lfloor\frac{n}{\delta}\rfloor+1,m]: a_u\in[0,h-1]\}|,
$
which indicates the number of digits in $(a_{\lfloor\frac{n}{\delta}\rfloor+1}, \ldots, a_m)$ that belong to $[0,h-1]$.
For $0\leq s\leq m-\lfloor\frac{n}{\delta}\rfloor$, define $\mathcal{L}_s=\{a\in[0,\ell-1]: w_{\rm{suf}}(a)=s\}$.
Then $\mathcal{L}_0, \mathcal{L}_1,\ldots,\mathcal{L}_{m-\lfloor\frac{n}{\delta}\rfloor}$ form a partition of $[0,\ell-1]$.

\item[(2)] {\it The second partition of $[0,\ell-1]$:}
Given $I\subseteq[\delta\lfloor\frac{n}{\delta}\rfloor]$, then for each $i\in I$, $i$ can be written as $i=(u,v)$ for some $u\in[\lfloor\frac{n}{\delta}\rfloor]$ and $v\in[0, \delta-1]$. \\
For every integer $a\in[0,\ell-1]$, define $w_{I}(a)=|\{u\in [\lfloor\frac{n}{\delta}\rfloor]:~(u, a_{u})\in I\}|$, which indicates the number of coordinates in $(a_1,\ldots, a_{\lfloor\frac{n}{\delta}\rfloor})$ such that $(u, a_{u})\in I$.
For each $0\leq b\leq \lfloor\frac{n}{\delta}\rfloor$, define $\Lambda_b=\{a\in[0,\ell-1]: w_{I}(a)=b\}$.
Then $\Lambda_0,\Lambda_1,\ldots,\Lambda_{\lfloor\frac{n}{\delta}\rfloor}$ w.r.t. $I$ form a partition of $[0,\ell-1]$. \\
Moreover, for each $0\leq s\leq m-\lfloor\frac{n}{\delta}\rfloor$, it has  that $\mathcal{L}_s$ defined in (1) has a partition $\mathcal{L}_s\cap \Lambda_b$ for $b=0,1,\ldots, \lfloor\frac{n}{\delta}\rfloor$.
\end{itemize}
\end{definition}

Note that the second partition of $[0,\ell-1]$ in Definition \ref{partition} is related to a given subset $I\subseteq[\delta\lfloor\frac{n}{\delta}\rfloor]$.
Next, we give a lemma to illustrate some relations between $I$ and the partition sets $\Lambda_b$'s.

\begin{lemma}\label{partition-I}
Let $I\subseteq[\delta\lfloor\frac{n}{\delta}\rfloor]$ and recall the node partition in Subsection \ref{notation-h}.
Then $I$ contains exactly $b$ integral node groups for some $b\in[0,\lfloor\frac{n}{\delta}\rfloor]$ if and only if $\Lambda_0=\cdots=\Lambda_{b-1}=\emptyset$ and $\Lambda_{b}\neq\emptyset$.
\end{lemma}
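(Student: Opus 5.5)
The plan is to compute $\min_a w_I(a)$ directly. Since $w_I$ is an integer-valued function on $[0,\ell-1]$, the condition ``$\Lambda_0=\cdots=\Lambda_{b-1}=\emptyset$ and $\Lambda_b\neq\emptyset$'' says exactly that $\min_{a\in[0,\ell-1]} w_I(a)=b$. So it suffices to show that this minimum equals the number $b_I$ of node groups entirely contained in $I$, i.e. the number of $u\in[\lfloor\frac{n}{\delta}\rfloor]$ with $\{(u,v):v\in[0,\delta-1]\}\subseteq I$. Both directions of the ``if and only if'' then follow at once, because the minimum is a well-defined single value.

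The key observation is that $w_I(a)$ splits as a sum over groups:
\[
w_I(a)=\sum_{u\in[\lfloor\frac{n}{\delta}\rfloor]}\mathbf{1}\big[(u,a_u)\in I\big].
\]
Here the $u$-th summand depends only on the digit $a_u\in[0,\delta-1]$. The digits $a_1,\ldots,a_{\lfloor\frac{n}{\delta}\rfloor}$ can be chosen independently, because $\lfloor\frac{n}{\delta}\rfloor\le m$ and every digit vector in $[0,\delta-1]^m$ corresponds to some $a\in[0,\ell-1]$. Hence the minimum of $w_I$ is the sum of the minima of the individual summands. For a fixed $u$, the summand can be made $0$ exactly when some $v\in[0,\delta-1]$ has $(u,v)\notin I$, that is, when group $u$ is not integrally contained in $I$. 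Otherwise the summand is $1$ for every choice of $a_u$.

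It follows that $w_I(a)\ge b_I$ for every $a$. This lower bound is attained by the following choice of digits:
\begin{itemize}
\item for each group $u$ not contained in $I$, set $a_u=v$ for some $v$ with $(u,v)\notin I$;
\item set all other digits arbitrarily.
\end{itemize}
Therefore $\Lambda_0=\cdots=\Lambda_{b_I-1}=\emptyset$ and $\Lambda_{b_I}\neq\emptyset$. Conversely, if $\Lambda_0=\cdots=\Lambda_{b-1}=\emptyset$ and $\Lambda_b\neq\emptyset$, then $b=\min w_I=b_I$.

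No step is a real obstacle. The only points needing care are checking that the digits are independent (which follows from the base-$\delta$ representation of $[0,\ell-1]$) and the bookkeeping that identifies $(u,v)$ with node $(u-1)\delta+v+1$ from Subsection \ref{notation-h}.
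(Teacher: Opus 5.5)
Your proof is correct and rests on the same two facts as the paper's proof. First, every group integrally contained in $I$ contributes to $w_I(a)$ for every $a$. Second, choosing $a_u$ with $(u,a_u)\notin I$ in each non-integral group attains that lower bound. The only difference is in the converse: the paper proves it separately by a one-digit modification $a''=a'(u_i,v)$ together with an exclusion of $b'>b$, whereas you get it at once from the uniqueness of $\min_a w_I(a)$, which is a cleaner packaging of the same argument.
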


\begin{proof}
Suppose $I$ contains exactly $b$ integral node groups, denoted by group $u_1,\ldots,u_b$, then $\cup_{z\in[b]}[(u_z-1)\delta+1, u_z\delta]\subseteq I$.
That is, $\{(u_z,v): z\in[b], v\in[0,\delta-1]\}\subseteq I$.
This implies for each $a\in[0,\ell-1]$, it has $(u_z,a_{u_z})\in I$ for all $z\in[b]$.
Thus, $\Lambda_0=\cdots=\Lambda_{b-1}=\emptyset$.
Moreover, for each $u\in[\lfloor\frac{n}{\delta}\rfloor]\setminus\{u_1,\ldots,u_b\}$, it has $\{(u,v): v\in[0,\delta-1]\}\nsubseteq I$, i.e., there exists some $v_u\in[0,\delta-1]$ s.t. $(u,v_u)\notin I$.
Choose $a\in[0,\ell-1]$ with $a_u=v_u$ for all $u\in[\lfloor\frac{n}{\delta}\rfloor]\setminus\{u_1,\ldots,u_b\}$.
Then it has $a\in\Lambda_b$, thus $\Lambda_{b}\neq\emptyset$.

Now suppose $\Lambda_0=\cdots=\Lambda_{b-1}=\emptyset$ and $\Lambda_{b}\neq\emptyset$.
Then it has $\forall a\in[0,\ell-1]$, $w_{I}(a)\geq b$, and  there exists  $a'\in[0,\ell-1]$ with $w_{I}(a')=b$.
Denote $\{u\in[\lfloor\frac{n}{\delta}\rfloor]:(u,a'_u)\in I\}=\{u_1,u_2,\cdots,u_{b}\}$. Next we claim that $\{(u_z,v): z\in[b], v\in[0,\delta-1]\}\subseteq I$, i.e., $I$ contains $b$ integral node groups. Otherwise, there exists
$(u_i,v)\notin I$ for some $i\in[b]$ and $v\neq a'_{u_i}$. Let $a''=a'(u_i,v)$, then $w_I(a'')=b-1$, i.e., $a''\in\Lambda_{b-1}\neq \emptyset$,  a contradiction.
Moreover, if $I$ contains $b'>b$ integral node groups, then it has $\Lambda_b=\emptyset$, a contradiction.
Thus $I$ contains exactly $b$ integral node groups.
\end{proof}

According to Lemma \ref{partition-I}, given $I\subseteq[\delta\lfloor\frac{n}{\delta}\rfloor]$, then $[0,\ell-1]$ has a partition: $\Lambda_b,\Lambda_{b+1},\ldots,\Lambda_{\lfloor\frac{n}{\delta}\rfloor}$ w.r.t. $I$, where $b$ is the least number in $[0,\lfloor\frac{n}{\delta}\rfloor]$ s.t. $\Lambda_b\neq\emptyset$.
Next, Let $I\subseteq[\delta\lfloor\frac{n}{\delta}\rfloor]$.
For each $u\in[\lfloor\frac{n}{\delta}\rfloor]$, define $V_u=\{v\in[0,\delta-1]:~(u,v)\in I\}$.
By Lemma \ref{partition-I}, one can directly obtain the following corollary.

\begin{corollary}\label{partition-I-2}
Given $I\subseteq[\delta\lfloor\frac{n}{\delta}\rfloor]$, for each $u\in[\lfloor\frac{n}{\delta}\rfloor]$, define $V_u=\{v\in[0,\delta-1]:~(u,v)\in I\}$. Then $I=\cup_{u\in[\lfloor\frac{n}{\delta}\rfloor]}\{(u,v):~v\in V_u\}$. Moreover,
\begin{itemize}
\item
If $\Lambda_0\neq\emptyset$, then $V_u\subsetneqq[0,\delta-1]$ for all $u\in[\lfloor\frac{n}{\delta}\rfloor]$.
\item If $\Lambda_0=\cdots=\Lambda_{b-1}=\emptyset$ and $\Lambda_{b}\neq\emptyset$, then there exist exactly $b$ integers $\{u_1,\ldots,u_b\}\subseteq[\lfloor\frac{n}{\delta}\rfloor]$ s.t.
$V_u=[0,\delta-1]$ for all $u\in\{u_1,\ldots,u_b\}$, and $V_u\subsetneqq[0,\delta-1]$ for all $u\in[\lfloor\frac{n}{\delta}\rfloor]\setminus\{u_1,\ldots,u_b\}$.
\end{itemize}
\end{corollary}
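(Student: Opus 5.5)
The plan is to derive Corollary \ref{partition-I-2} almost entirely from Lemma \ref{partition-I}, together with the observation that $V_u=[0,\delta-1]$ holds exactly when $I$ contains the whole group $u$.

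\emph{The decomposition.} Since $I\subseteq[\delta\lfloor\frac{n}{\delta}\rfloor]$, every $i\in I$ has a unique representation $i=(u,v)$ with $u\in[\lfloor\frac{n}{\delta}\rfloor]$ and $v\in[0,\delta-1]$. By definition this $v$ lies in $V_u$, so $i\in\{(u,v): v\in V_u\}$. Conversely, $(u,v)$ with $v\in V_u$ belongs to $I$ by the definition of $V_u$. Hence $I=\cup_{u}\{(u,v): v\in V_u\}$, and the union is disjoint.

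\emph{Translating integral groups.} Group $u$ is the node set $\{(u,v): v\in[0,\delta-1]\}=[(u-1)\delta+1,u\delta]$. Therefore $I$ contains group $u$ integrally if and only if $V_u=[0,\delta-1]$. So the number of integral node groups in $I$ equals the number of $u$ with $V_u=[0,\delta-1]$. For every other $u$, $V_u$ is a proper subset of $[0,\delta-1]$, since it is always a subset.

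\emph{The two bullets.} Suppose $\Lambda_0=\cdots=\Lambda_{b-1}=\emptyset$ and $\Lambda_b\neq\emptyset$. The "if" direction of Lemma \ref{partition-I} is an equivalence, so $I$ contains exactly $b$ integral groups; call them $u_1,\ldots,u_b$. By the previous step, $V_u=[0,\delta-1]$ for exactly these $u$, and $V_u\subsetneqq[0,\delta-1]$ for all others. This is the second bullet.

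The first bullet is the case $b=0$. There the hypothesis reduces to $\Lambda_0\neq\emptyset$ with no preceding conditions, so $I$ contains no integral group and every $V_u$ is proper. Alternatively, one can argue directly: if $a\in\Lambda_0$, then $(u,a_u)\notin I$ for every $u$, so $a_u\notin V_u$ and each $V_u$ is proper.

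\emph{Main obstacle.} There is essentially none. The only care needed is to apply the equivalence in Lemma \ref{partition-I} in the correct direction, and to include the degenerate case $b=0$ consistently.
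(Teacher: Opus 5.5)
Your proposal is correct and takes the same route as the paper, which simply states that the corollary follows directly from Lemma \ref{partition-I}; your observation that $I$ contains group $u$ integrally exactly when $V_u=[0,\delta-1]$ is the translation needed, the first bullet is the case $b=0$, and your direct argument via some $a\in\Lambda_0$ is also valid. One wording fix: what you use is the implication in Lemma \ref{partition-I} from the $\Lambda$-condition to ``exactly $b$ integral groups,'' not that this direction ``is an equivalence.''
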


In the following two lemmas, we characterize the structure of the parity-check matrix of $\mathcal{C}$ defined in Construction \ref{construction-h}.
We give the explicit positions of non-zero elements in the $a$-th row of $(H_{t,1},\ldots,H_{t,n})$ for $t\in[r]$, according to Algorithm \ref{alg-0}. This will be frequently used later in the proof of both MDS property and optimal-access property of $\mathcal{C}$.
Before that,
for simplicity, we define a function $f(v,v')$ where $v\neq v'$ s.t. $f(v,v')=\tau$ if $v>v'$ and $f(v,v')=1$ if $v<v'$.

\begin{lemma}\label{row-s}
Denote by $(H_{t,1},H_{t,2},\ldots,H_{t,n})\in F^{\ell\times n\ell}$ the $t$-th row block of the parity-check matrix $H$ defined in Construction \ref{construction-h}, for $t\in[r]$.
Given some $s\in[0,m-\lfloor\frac{n}{\delta}\rfloor]$, let $a\in\mathcal{L}_s$ and
denote $\{u\in[\lfloor\frac{n}{\delta}\rfloor+1,m]: a_u\in[0,h-1]\}=\{u_1,u_2,\ldots,u_s\}$.
Then, the non-zero entries in the $a$-th row of $(H_{t,1},\ldots,H_{t,n})$ are given as follows:
\begin{itemize}
\item (Diagonal entries):
$\lambda_{j}^{t-1}$ in the $a$-th column of $H_{t,j}$ for $j\in[n]$ (We also write $\lambda_{(u,v)}^{t-1}$ for $j=(u,v)\in[\delta\lfloor\frac{n}{\delta}\rfloor]$).
\item (Type-I non-diagonal entries):
for each $z\in[s]$, the symbols $\gamma_{w-h+1}^{t-1}$, $w\in[h,\delta-1]$ with each  locating in the $a(u_z,w)$-th column of $H_{t,j}$ where $j\in[n]$ satisfying $u_z\in\Omega_{j,a_{u_z}}$.

\item (Type-II non-diagonal entries):
for each $u\in[\lfloor\frac{n}{\delta}\rfloor]$, the symbols $f(a_u,w)\lambda_{(u,w)}^{t-1}$, $w\in[0,\delta-1]\setminus\{a_u\}$ with each locating in the $a(u,w)$-th column of $H_{t,(u,a_u)}$.
\end{itemize}
\end{lemma}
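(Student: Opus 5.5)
The proof will be a direct reading of Algorithm \ref{alg-0}, row by row. I fix $t\in[r]$ and a row index $a\in\mathcal{L}_s$, and for each $j\in[n]$ I list which entries of the $a$-th row of $H_{t,j}$ the algorithm sets. There are three sources: the initialization (line 1), the type-I loop (lines 3--7), and the type-II loop (lines 8--16).

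\textbf{Initialization.} This contributes exactly $\lambda_j^{t-1}$ at column $a$ of $H_{t,j}$, which gives the diagonal entries.

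\textbf{Type-I loop.} The $a$-th row receives an entry in $H_{t,j}$ precisely when there are $e\in[0,h-1]$ and $u\in\Omega_{j,e}$ with $a_u=e$. Since $\Omega_{j,e}\subseteq[\lfloor\frac{n}{\delta}\rfloor+1,m]$, this forces $u\in\{u_1,\ldots,u_s\}$ and $e=a_u$. Conversely, for each $z\in[s]$ the set of nodes $j$ with $u_z\in\Omega_{j,a_{u_z}}$ is determined as follows. Because $\pi$ restricted to $\mathcal{P}_0$ is a bijection onto $[\lfloor\frac{n}{\delta}\rfloor+1,m]$, the element $u_z$ has a unique preimage $(j_0,\ldots,j_{h-1})\in\mathcal{P}_0$. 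Hence $u_z\in\Omega_{j,e}$ iff $j=j_e$, so there is exactly one such $j$, namely $j_{a_{u_z}}$. The entries placed are $\gamma_{w-h+1}^{t-1}$ at column $a(u_z,w)$ for $w\in[h,\delta-1]$.

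\textbf{Type-II loop.} For $j=(u,v)$, row $a$ is touched iff $a_u=v$, that is, iff $j=(u,a_u)$. So for each $u\in[\lfloor\frac{n}{\delta}\rfloor]$ exactly one node receives entries. These are $\tau\lambda_{(u,w)}^{t-1}$ at column $a(u,w)$ for $w<a_u$ and $\lambda_{(u,w)}^{t-1}$ for $w>a_u$, which is precisely $f(a_u,w)\lambda_{(u,w)}^{t-1}$.

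\textbf{Main obstacle: no overwriting or cancellation.} The step needing care is to confirm that these assignments never overwrite one another or land on the diagonal, so that the listed values really are the nonzero entries. The argument proceeds as follows.
\begin{itemize}
\item A type-I entry sits in column $a(u_z,w)$ with $w\ge h>a_{u_z}$, so it differs from $a$.
\item A type-II entry sits in column $a(u,w)$ with $w\ne a_u$, so it also differs from $a$.
\item Two off-diagonal columns $a(u,w)$ and $a(u',w')$ coincide only if $u=u'$ and $w=w'$.
\item Type-I positions use $u>\lfloor\frac{n}{\delta}\rfloor$ and type-II positions use $u\le\lfloor\frac{n}{\delta}\rfloor$, so the two types are disjoint.
\item Within a single type, distinct $(u,w)$ give distinct columns.
\end{itemize}
Finally, every assigned value is nonzero: $\lambda$'s and $\gamma$'s are powers of elements with the convention $x^0=1$, the $\lambda_j,\gamma_i$ are distinct, and $\tau\ne0$. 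A possible zero $\lambda_j$ with $t>1$ is the only caveat; I would note that the statement lists these as the positions of the potentially nonzero entries, with all other entries zero.
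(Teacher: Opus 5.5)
Your proposal is correct and takes the same approach as the paper, which simply states that the lemma follows by reading off Algorithm~\ref{alg-0}. Your breakdown into the three assignment sources and your no-overwriting check (the columns $a(u,w)$ with $w\neq a_u$ are pairwise distinct and all differ from $a$) supply the details the paper leaves implicit; your zero-value caveat also covers a possibly zero $\gamma_i$ when $t>1$, so the lemma is best read as describing the support of the row.
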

\begin{proof}
Lemma \ref{row-s} can be easily verified according to Algorithm \ref{alg-0}.
\end{proof}

\begin{lemma}\label{row-s-b}
Let $\{i_1,i_2,\ldots,i_r\}\subseteq[n]$ and $I=\{i_1,i_2,\ldots,i_r\}\cap[\delta\lfloor\frac{n}{\delta}\rfloor]$.
Denote by $(H_{t,i_1},H_{t,i_2},\ldots,H_{t,i_r})\in F^{\ell\times r\ell}$ the sub-matrix of $(H_{t,j})_{j\in[n]}$ restricted to the $r$ nodes $\{i_1,i_2,\ldots,i_r\}$.
Given some $s\in[0,m-\lfloor\frac{n}{\delta}\rfloor]$ and $b\in[0,\lfloor\frac{n}{\delta}\rfloor]$, and let $a\in\mathcal{L}_s\cap\Lambda_b$.
Denote $\{u\in[\lfloor\frac{n}{\delta}\rfloor+1,m]: a_u\in[0,h-1]\}=\{\sigma_1,\sigma_2,\ldots,\sigma_s\}$,
and denote $\{u\in[\lfloor\frac{n}{\delta}\rfloor]: (u,a_u)\in I\}=\{u_1,u_2,\ldots, u_b\}$.
Then, the non-zero entries in the $a$-th row of $(H_{t,i_1},H_{t,i_2},\ldots,H_{t,i_r})$ are given as follows:
\begin{itemize}
\item (Diagonal entries):
$\lambda_{i_j}^{t-1}$ in the $a$-th column of $H_{t,i_j}$ for $j\in[r]$.
\item (Type-I non-diagonal entries):
for each $z\in[s]$, the symbols $\gamma_{w-h+1}^{t-1}$, $w\in[h,\delta-1]$ with each  locating in the $a(\sigma_z,w)$-th column of $H_{t,i_j}$ where $j\in[r]$ satisfying $\sigma_z\in\Omega_{i_j,a_{\sigma_z}}$.

\item (Type-II non-diagonal entries):
for each $u\in\{u_1,u_2,\ldots, u_b\}$, the symbols $f(a_u,w)\lambda_{(u,w)}^{t-1}$, $w\in[0,\delta-1]\setminus\{a_u\}$ with each locating in the $a(u,w)$-th column of $H_{t,(u,a_u)}$.
\end{itemize}
\end{lemma}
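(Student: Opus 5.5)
The statement is a restriction of Lemma \ref{row-s} to the column blocks indexed by $i_1,\ldots,i_r$. I would therefore take the list of non-zero entries of the $a$-th row of the full row block $(H_{t,1},\ldots,H_{t,n})$ given there, and keep exactly those entries that fall in column blocks $H_{t,j}$ with $j\in\{i_1,\ldots,i_r\}$. No new computation with Algorithm \ref{alg-0} is needed. All that remains is to check, for each of the three types of entries, which ones survive the restriction.

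First, the diagonal entries. By Lemma \ref{row-s}, each $H_{t,j}$ contributes $\lambda_j^{t-1}$ in its $a$-th column. Restricting to $j\in\{i_1,\ldots,i_r\}$ gives the first bullet immediately.

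Second, the type-I entries. Since $a\in\mathcal{L}_s$, the set $\{u\in[\lfloor\frac{n}{\delta}\rfloor+1,m]:a_u\in[0,h-1]\}$ has exactly $s$ elements, namely $\sigma_1,\ldots,\sigma_s$. Lemma \ref{row-s} places, for each $z\in[s]$, the entries $\gamma_{w-h+1}^{t-1}$ in column $a(\sigma_z,w)$ of every $H_{t,j}$ with $\sigma_z\in\Omega_{j,a_{\sigma_z}}$. Intersecting with $\{i_1,\ldots,i_r\}$ gives exactly the second bullet. The condition on $j$ is simply inherited, so nothing extra needs to be shown.

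Third, the type-II entries, which is the only step with any content. In the full row, for each $u\in[\lfloor\frac{n}{\delta}\rfloor]$ these entries lie in the single block $H_{t,(u,a_u)}$. That block survives the restriction iff node $(u,a_u)$ belongs to $\{i_1,\ldots,i_r\}$. Because $(u,a_u)\in[\delta\lfloor\frac{n}{\delta}\rfloor]$, this is equivalent to $(u,a_u)\in I$. By the definition of $w_I$ in Definition \ref{partition}, $a\in\Lambda_b$ means there are exactly $b$ such $u$, and these are $u_1,\ldots,u_b$ by the definition of that set in the statement. So the surviving type-II entries are exactly those listed in the third bullet.

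The main obstacle, such as it is, is bookkeeping. One must make sure distinct $u$ give distinct blocks $(u,a_u)$, so that no surviving entry is double-counted or missed; this holds because different groups are disjoint. One must also note that entries of different types never cancel. This is not actually required for listing positions, since Lemma \ref{row-s} already asserts the listed entries are the non-zero ones and restriction preserves values.
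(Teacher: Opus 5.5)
Your proposal is correct and is essentially the paper's own argument: the paper just says the lemma is verified directly from Algorithm~\ref{alg-0}. Your version restricts Lemma~\ref{row-s} (itself a direct reading of that algorithm) to the column blocks $i_1,\ldots,i_r$, and its only substantive observation is that the block $H_{t,(u,a_u)}$ survives iff $(u,a_u)\in I$, i.e.\ iff $u\in\{u_1,\ldots,u_b\}$ because $a\in\Lambda_b$; this is the same verification made explicit.
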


\begin{proof}
Lemma \ref{row-s-b} can be easily verified according to Algorithm \ref{alg-0}.
\end{proof}

Next, based on Lemma \ref{row-s-b}, in Lemma \ref{equation-s-b} we further characterize the simplified form of the following linear equation system $(H_{t,i_1},H_{t,i_2},\ldots,H_{t,i_r})\cdot\bm{x}^{\top}=\bm{0}$, where $\bm{x}=(\bm{x}_{i_1},\bm{x}_{i_2},\ldots,\bm{x}_{i_r})\in F^{r\ell}$ with $\bm{x}_{i_j}=(x_{i_j,0},\ldots,x_{i_j,\ell-1})\in F^{\ell}$, $j\in[r]$, which will be used later in the proof of MDS property of $\mathcal{C}$.
For convenience, we define a function $\chi(a,u,j)$ where $a\in[0,\ell-1]$, $u\in[m]$ and $j\in[n]$ s.t. $\chi(a,u,j)=1$ if there exists some $e\in[0,h-1]$ s.t. $a_{u}=e$ and $u\in\Omega_{j,e}$, otherwise, $\chi(a,u,j)=0$.

\begin{lemma}\label{equation-s-b}
Let $\{i_1,i_2,\ldots,i_r\}\subseteq[n]$ and $I=\{i_1,i_2,\ldots,i_r\}\cap[\delta\lfloor\frac{n}{\delta}\rfloor]$.
For each $u\in[\lfloor\frac{n}{\delta}\rfloor]$, define $V_u=\{v\in[0,\delta-1]: (u,v)\in I\}$.
Let $(H_{t,i_1},H_{t,i_2},\ldots,H_{t,i_r})\in F^{\ell\times r\ell}$ represent the sub-matrix of $(H_{t,j})_{j\in[n]}$ restricted to the $r$ nodes $\{i_1,i_2,\ldots,i_r\}$ for $t\in[r]$.
Let $s\in[0,m-\lfloor\frac{n}{\delta}\rfloor]$ and $b\in[0,\lfloor\frac{n}{\delta}\rfloor]$. For every $a\in\mathcal{L}_s\cap\Lambda_b$, denote $\{u\in[\lfloor\frac{n}{\delta}\rfloor+1,m]: a_u\in[0,h-1]\}=\{\sigma_1,\sigma_2,\ldots,\sigma_s\}$, and denote $\{u\in[\lfloor\frac{n}{\delta}\rfloor]: (u,a_u)\in I\}=\{u_1,u_2,\ldots, u_b\}$.
Then, the $a$-th row of the linear equations $(H_{t,i_1},H_{t,i_2},\ldots,H_{t,i_r})\cdot\bm{x}^{\top}=\bm{0}$, $t\in[r]$, has the following form:
\begin{equation}\label{equation-form}
\begin{aligned}
&\sum_{u\in\{u_1,\ldots,u_b\}}\bigg\{
\underbrace{\lambda_{(u,a_{u})}^{t-1}x_{(u,a_{u}), a}
+\sum_{ v\in V_u\setminus\{a_{u}\}}\lambda_{(u,v)}^{t-1}\big(f(a_{u},v)x_{(u,a_u), a(u,v)}+x_{(u,v), a}\big)}_{(1)}
+
\underbrace{\sum_{v\in[0,\delta-1]\setminus V_u} \lambda_{(u,v)}^{t-1}f(a_{u},v)x_{(u,a_u), a(u,v)}}_{(2)}
  \bigg\}   \\
+&
\underbrace{\sum_{\substack{ j\in\{i_1,\ldots,i_r\}  \\ j\notin\{(u_z,v): z\in[b], v\in V_{u_z}\}}}\lambda_j^{t-1}x_{j,a} }_{(3)}
 +
\underbrace{
\sum_{w\in[h,\delta-1]}\gamma_{w-h+1}^{t-1}(\sum_{\substack{j\in\{i_1,\ldots,i_r\},\\\sigma\in\{\sigma_1,\ldots,\sigma_s\}}}\chi(a,\sigma,j)x_{j,a(\sigma,w)})}_{(4)}
=0,   ~~~~~~~~~~~~t\in[r],
\end{aligned}
\end{equation}
which can be rewritten as the following matrix form:
\begin{equation}\label{equation-form-matrix}
\begin{aligned}
\left(\begin{array}{ccc|ccc|ccc}
1 & \cdots & 1 & \cdots & 1 &\cdots& 1 & \cdots & 1
\\
\lambda_{i_1} & \cdots & \lambda_{i_r} & \cdots &\lambda_{(u,v)}& \cdots &\gamma_1 & \cdots & \gamma_{d-k}
\\
\vdots & \vdots  & \vdots  &\vdots & \vdots  & \vdots      & \vdots & \vdots & \vdots
       \\
\lambda_{i_1}^{r-1} & \cdots & \lambda_{i_r}^{r-1}
& \cdots &\lambda_{(u,v)}^{r-1}& \cdots
&\gamma_1^{r-1} & \cdots & \gamma_{d-k}^{r-1}
\end{array}\right)
\end{aligned}
\begin{pmatrix}
x_{i_1,a}^*\\
\vdots\\
x_{i_r,a}^*\\  \hline
\vdots\\
f(a_{u},v)x_{(u,a_u), a(u,v)}\\
\vdots\\ \hline
\sum_{\substack{j\in\{i_1,\ldots,i_r\},\\\sigma\in\{\sigma_1,\ldots,\sigma_s\}}}\chi(a,\sigma,j)x_{j,a(\sigma,h)})\\
\vdots\\
\sum_{\substack{j\in\{i_1,\ldots,i_r\},\\\sigma\in\{\sigma_1,\ldots,\sigma_s\}}}\chi(a,\sigma,j)x_{j,a(\sigma,\delta-1)})
\end{pmatrix}=\bm{0},
\end{equation}
where for $i_j\in\{i_1,\ldots,i_r\}\setminus\cup_{z\in[b]}\{(u_z,v): v\in V_{u_z}\setminus\{a_{u_z}\}\}$, $x_{i_j,a}^*=x_{i_j,a}$;
and for $i_j=(u,v)\in\cup_{z\in[b]}\{(u_z,v): v\in V_{u_z}\setminus\{a_{u_z}\}\}$, $x_{i_j,a}^*=f(a_{u},v)x_{(u,a_u), a(u,v)}+x_{(u,v), a}$.
Moreover, it is worth noting that in formula \eqref{equation-form}, the sum of part $(1)$ on $u\in\{u_1,\ldots,u_b\}$ and part $(3)$  contribute to the first block of \eqref{equation-form-matrix}, sum of part $(2)$ on $u\in\{u_1,\ldots,u_b\}$ contributes to the second  block of \eqref{equation-form-matrix}, and part $(4)$ contributes to the third block of \eqref{equation-form-matrix}.
\end{lemma}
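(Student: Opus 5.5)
The plan is to start from Lemma \ref{row-s-b}, which lists every non-zero entry in the $a$-th row of $(H_{t,i_1},\ldots,H_{t,i_r})$. The $a$-th equation is then just the inner product of that row with $\bm{x}^{\top}$. The work is to regroup its terms by which column block of \eqref{equation-form-matrix} they feed into. I will treat the three kinds of entries in turn, always using the notation $\{\sigma_1,\ldots,\sigma_s\}$ and $\{u_1,\ldots,u_b\}$ attached to $a\in\mathcal{L}_s\cap\Lambda_b$.

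\textbf{Diagonal entries.} These contribute $\sum_{j\in[r]}\lambda_{i_j}^{t-1}x_{i_j,a}$. I split this sum according to whether $i_j$ lies in $\{(u_z,v):z\in[b],v\in V_{u_z}\}$. The terms with $i_j$ outside this set are exactly part (3). The terms with $i_j=(u_z,v)$ for $v\in V_{u_z}$ are set aside and merged into part (1).

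\textbf{Type-II non-diagonal entries.} By Lemma \ref{row-s-b}, such entries in a selected column block occur only in $H_{t,(u,a_u)}$ with $u\in\{u_1,\ldots,u_b\}$. Here the condition $(u,a_u)\in I$ guarantees that this block is among the chosen columns. Each such block contributes $\sum_{w\neq a_u}f(a_u,w)\lambda_{(u,w)}^{t-1}x_{(u,a_u),a(u,w)}$.
\begin{itemize}
\item For $w\in V_u\setminus\{a_u\}$, the coefficient $\lambda_{(u,w)}^{t-1}$ coincides with the diagonal coefficient of the selected node $(u,w)$. Combining the two gives the grouped symbol $f(a_u,w)x_{(u,a_u),a(u,w)}+x_{(u,w),a}$. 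Together with the diagonal term of $(u,a_u)$ itself, this yields part (1).
\item For $w\notin V_u$, the node $(u,w)$ is not selected, so $\lambda_{(u,w)}$ is a "new" evaluation point. These terms form part (2).
\end{itemize}

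\textbf{Type-I entries.} For each $\sigma_z$, Lemma \ref{row-s-b} gives coefficient $\gamma_{w-h+1}^{t-1}$ in column $a(\sigma_z,w)$ of $H_{t,i_j}$, precisely when $\chi(a,\sigma_z,i_j)=1$. Summing over $j$ and $z$ and collecting by $w$ gives part (4).

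\textbf{Matrix form.} The equation now reads $\sum_p\mu_p^{t-1}y_p=0$ for $t\in[r]$, where the $\mu_p$ are $\lambda_{i_1},\ldots,\lambda_{i_r}$, then the $\lambda_{(u,v)}$ with $v\notin V_u$, then $\gamma_1,\ldots,\gamma_{d-k}$. This is exactly \eqref{equation-form-matrix}, with the $x^*_{i_j,a}$ as defined.

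The main obstacle is bookkeeping. First, I must verify that the regrouping double-counts nothing. This needs two checks: the type-II columns $a(u,w)$ for distinct $u$ (and distinct $w$) are distinct, and a given $i_j=(u,v)$ is paired with at most one $u$, namely its own group. Second, I must confirm that the evaluation points in the second block are distinct from $\lambda_{i_1},\ldots,\lambda_{i_r}$; this holds because $v\notin V_u$ means $(u,v)\notin I$. I expect both checks to follow directly from the definitions of $V_u$ and $\Lambda_b$.
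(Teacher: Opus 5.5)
Your proposal is correct and follows essentially the same route as the paper. The paper likewise reads off the nonzero entries of the $a$-th row from Lemma \ref{row-s-b} and combines like terms: the type-II entries (split by $v\in V_u$ versus $v\notin V_u$) together with the matching diagonal entries give parts (1)--(2), the remaining diagonal entries give part (3), and the type-I entries give part (4). Your bookkeeping is more explicit than the paper's one-line justification; the distinctness of the evaluation points is not needed for this rewriting itself, only for the later MDS argument.
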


\begin{proof}
According to Lemma \ref{row-s-b} and by combining like terms and removing zeros of the $a$-th row of $(H_{t,i_1},\ldots,H_{t,i_r})\bm{x}^{\top}=\bm{0}$, $t\in[r]$, one can easily obtain the simplified equations illustrated  in \eqref{equation-form}.
Here part $(1)$ and $(2)$ of \eqref{equation-form} comes from the type-II non-diagonal non-zero entries and some diagonal non-zero entries, $(3)$ comes from the diagonal non-zero entries, and $(4)$ comes from the type-I non-diagonal non-zero entries.
Moreover, note that $I=\cup_{u\in[\lfloor\frac{n}{\delta}\rfloor]}\{(u,v):~v\in V_u\}$ and $\{(u_z,v): z\in[b], v\in V_{u_z}\}\subseteq I\subseteq\{i_1,\ldots,i_r\}$, then the formula \eqref{equation-form} can be further reformulated to the matrix form illustrated  in \eqref{equation-form-matrix}.
\end{proof}

Based on the above lemmas, now, we are able to give the MDS property of $\mathcal{C}$.

\begin{theorem}\label{thm-mds-h}
The code $\mathcal{C}$ in Construction \ref{construction-h} has MDS property.
\end{theorem}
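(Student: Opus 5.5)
We must show that for every choice of $r$ nodes $\{i_1,\ldots,i_r\}\subseteq[n]$ the block matrix $(H_{t,i_j})_{t\in[r],j\in[r]}$ is invertible. Equivalently, $(H_{t,i_1},\ldots,H_{t,i_r})\bm{x}^{\top}=\bm{0}$ for all $t\in[r]$ must force $\bm{x}=\bm{0}$. Set $I=\{i_1,\ldots,i_r\}\cap[\delta\lfloor\frac{n}{\delta}\rfloor]$. We then run a nested induction over the two partitions of Definition~\ref{partition}: an outer induction on $s$, where $\mathcal{L}_s$ is processed in \emph{decreasing} order of $s$, and an inner induction on $b$ within each $\mathcal{L}_s$, where $\Lambda_b$ is processed in \emph{increasing} order of $b$. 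By Lemma~\ref{partition-I} and Corollary~\ref{partition-I-2}, the relevant values of $b$ start at the least $b_0$ with $\Lambda_{b_0}\neq\emptyset$.

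The key tool is Lemma~\ref{equation-s-b}. For every $a\in\mathcal{L}_s\cap\Lambda_b$, the $r$ equations labeled $a$ form the system \eqref{equation-form-matrix}. Its coefficient matrix is a Vandermonde-type matrix in distinct evaluation points, namely the $\lambda_{i_j}$, some further $\lambda_{(u,v)}$ with $v\notin V_u$, and $\gamma_1,\ldots,\gamma_{d-k}$. Hence any $r$ of its columns are independent.

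\textbf{Outer step in $s$.} Fix $a\in\mathcal{L}_s$. Every part-(4) unknown $x_{j,a(\sigma,w)}$ with $w\ge h$ has index $a(\sigma,w)$, and this index has one fewer suffix digit in $[0,h-1]$, so it lies in $\mathcal{L}_{s-1}$. That looks like the wrong direction for an induction, so we order $s$ increasingly instead. Taking $s=0$ as the base, the block (4) is absent. Part (2) involves $x_{(u,a_u),a(u,v)}$ with $v\notin V_u$; the index $a(u,v)$ has $w_I$ decreased by one, so it lies in $\Lambda_{b-1}$ with the same $s$, since only the first $\lfloor n/\delta\rfloor$ digits change. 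Ordering $b$ increasingly, these part-(2) unknowns are already known to be zero when we treat $\Lambda_b$. At $b=b_0$, Corollary~\ref{partition-I-2} shows that $V_u=[0,\delta-1]$ for every $u\in\{u_1,\ldots,u_{b_0}\}$, so part (2) is empty. Thus, once earlier layers vanish, each system reduces to $r$ unknowns $x^*_{i_j,a}$ against $r$ Vandermonde columns, and all of them vanish.

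\textbf{Inner step.} From $x^*_{i_j,a}=0$ we still have to recover the individual $x$'s. First, $x_{(u,a_u),a}=0$ and $x_{j,a}=0$ for the other nodes directly. Second, the sums $f(a_u,v)x_{(u,a_u),a(u,v)}+x_{(u,v),a}$ vanish. Here we pair the equation at $a$ with the one at $a'=a(u,v)$. The point $a'$ lies in the same $\mathcal{L}_s\cap\Lambda_b$, because $v\in V_u$. Its equation contains $f(v,a_u)x_{(u,v),a}+x_{(u,a_u),a'}$. Since $f(a_u,v)f(v,a_u)$ equals either $\tau\cdot 1$ or $1\cdot\tau$, i.e.\ $\tau\neq 1$, we get the $2\times 2$ system \eqref{eq-1-1}, which forces both unknowns to be zero, exactly as in the MDS proof of $\mathcal{C}_{\mathrm{II}}$. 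This handles the whole layer $\mathcal{L}_s\cap\Lambda_b$ simultaneously.

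\textbf{Passing to larger $s$.} For $s>0$, the block-(4) unknowns are sums of $x_{j,a(\sigma,w)}$, and these indices lie in $\mathcal{L}_{s-1}$, which is already zero by induction. So block (4) also drops out. The argument is then the same as for $s=0$, except that we must verify that the remaining coefficient columns are still distinct Vandermonde columns.

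\textbf{Main obstacle.} The hardest part will be this bookkeeping. We must confirm the claims about where the referenced indices lie: that $a(\sigma,w)$ with $w\in[h,\delta-1]$ lands in $\mathcal{L}_{s-1}$, that it does not interfere with the $b$-partition, and that the pairing in the inner step stays inside $\mathcal{L}_s\cap\Lambda_b$. We must also check that the type-I and type-II entries never hit the same column; this is the remark in Example~\ref{ex2} that the added entries lie in different columns. I would first carefully verify these index relations using Lemma~\ref{row-s-b}, and only then run the induction.
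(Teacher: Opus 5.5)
Your argument is correct and is essentially the paper's proof: a nested induction over increasing $s$, where block (4) vanishes because $a(\sigma,w)\in\mathcal{L}_{s-1}$, and over increasing $b\ge b_0$, where block (2) vanishes because $a(u,v)\in\mathcal{L}_s\cap\Lambda_{b-1}$; this is followed by the $r\times r$ Vandermonde step and the pairing of rows $a$ and $a(u,v)$ for $v\in V_u\setminus\{a_u\}$, whose determinant is $\tau-1\neq0$. Delete the opening claim that $\mathcal{L}_s$ is processed in decreasing order of $s$, since, as you noticed yourself, only the increasing order works.
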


\begin{proof}
It suffices to prove any $r$ column blocks of $H$, denoted by $H(i_1,\ldots,i_r)$, where $1\leq i_1<\cdots<i_r\leq n$, forms an invertible $r\ell\times r\ell$ matrix. Equivalently, we prove that for any $\bm{x}=(\bm{x}_{i_1},\bm{x}_{i_2},\ldots,\bm{x}_{i_r})\in F^{r\ell}$ with $\bm{x}_{i_j}=(x_{i_j,0},\ldots,x_{i_j,\ell-1})\in F^{\ell}$, $j\in[r]$,  it has that $H(i_1,\ldots,i_r)\cdot\bm{x}^{\top}=\bm{0}$ always implies $\bm{x}=\bm{0}$, i.e., $\{x_{i_j,a}\}_{j\in[r], a\in[0,\ell-1]}$ are all zeros.

Let $I=\{i_1,i_2,\ldots,i_r\}\cap[\delta\lfloor\frac{n}{\delta}\rfloor]$ and $I'=\{i_1,i_2,\ldots,i_r\}\setminus I$, then $I'\subseteq[\delta\lfloor\frac{n}{\delta}\rfloor+1,n]$.
Since $|[\delta\lfloor\frac{n}{\delta}\rfloor+1,n]|\leq\delta-1<r$, then $|I'|<r$ and $I\neq\emptyset$.
Recall the two partitions of $[0,\ell-1]$ in Definition \ref{partition}.
It has that $\mathcal{L}_s$, $s\in[0,m-\lfloor\frac{n}{\delta}\rfloor]$ form a partition of $[0,\ell-1]$. Moreover, for each $s\in[0,m-\lfloor\frac{n}{\delta}\rfloor]$, the sets $\mathcal{L}_s\cap \Lambda_b$, $b=b_0,b_0+1,\ldots, \lfloor\frac{n}{\delta}\rfloor$ w.r.t. $I$ form a partition of $\mathcal{L}_s$, where $b_0$ is the least number in $[0,\lfloor\frac{n}{\delta}\rfloor]$ s.t. $\Lambda_{b_0}\neq\emptyset$.
For simplicity, according to Lemma \ref{partition-I}, denote the $b_0$ integral node groups contained in $I$ as $\{u_1,u_2,\ldots, u_{b_0}\}$, where when $b_0=0$, the set $\{u_1,u_2,\ldots, u_{b_0}\}=\emptyset$.
Next,
we prove by induction on $s$ that for each $s\in[0,m-\lfloor\frac{n}{\delta}\rfloor]$, it has $\{x_{i_j,a}\}_{j\in[r],a\in\mathcal{L}_s}$ are all zeros. More precisely, for each $s$, we prove by induction on $b$ that for each $b\in[b_0,\lfloor\frac{n}{\delta}\rfloor]$, it has $\{x_{i_j,a}\}_{j\in[r],a\in\mathcal{L}_s\cap\Lambda_b}$ are all zeros.

In the following, we begin with the case $s=0$ and prove by induction on $b$ that for each  $b\in[b_0,\lfloor\frac{n}{\delta}\rfloor]$, it has $\{x_{i_j,a}\}_{j\in[r],a\in\mathcal{L}_0\cap\Lambda_{b}}$ are all zeros.
At first, let $b=b_0$, and we prove $\{x_{i_j,a}\}_{j\in[r],a\in\mathcal{L}_0\cap\Lambda_{b_0}}$ are all zeros.

For every integer $a\in\mathcal{L}_0\cap\Lambda_{b_0}$, it has $\{u\in[\lfloor\frac{n}{\delta}\rfloor+1,m]: a_u\in[0,h-1]\}=\emptyset$ by the definition of $\mathcal{L}_{0}$.
Also, $\{u\in[\lfloor\frac{n}{\delta}\rfloor]: (u, a_u)\in I\}=\{u_1,u_2,\ldots, u_{b_0}\}$ if $b_0>0$, and $\{u\in[\lfloor\frac{n}{\delta}\rfloor]: (u, a_u)\in I\}=\emptyset$ if $b_0=0$, by the definition of $\Lambda_{b_0}$.
Consider the $a$-th row of the equations $(H_{t,i_1},\ldots,H_{t,i_r})\bm{x}^{\top}=\bm{0}$, $t\in[r]$. According to Lemma \ref{equation-s-b}, one can obtain $r$ simplified equations with form in \eqref{equation-form}, and the corresponding matrix form in \eqref{equation-form-matrix}.
\begin{itemize}
\item If $b_0=0$, then the $r$ obtained simplified equations in \eqref{equation-form} reduce to the form $\sum_{ j\in\{i_1,\ldots,i_r\}}\lambda_j^{t-1}x_{j,a}=0$, $t\in[r]$, which implies that $x_{i_j,a}$, $j\in[r]$ are all zeros.
Thus, $\{x_{i_j,a}\}_{j\in[r], a\in\mathcal{L}_0\cap\Lambda_{b_0}}$ are all zeros.

\item If $b_0>0$, then the $r$ obtained simplified equations in \eqref{equation-form} do not have part $(4)$ since $a\in \mathcal{L}_0$, and do not have part $(2)$ since for $u\in\{u_1,\ldots,u_{b_0}\}$, it has $V_u=[0,\delta-1]$ by Corollary \ref{partition-I-2}.
Then using the $r$ simplified equations, one can further compute the following $r$ unknowns in \eqref{mds-case1-eq2} to be zeros, where we note that $f(a_{u},v)=\tau$ if $a_{u}>v$ and $f(a_{u},v)=1$ if $a_{u}<v$.
\begin{equation}\label{mds-case1-eq2}
\begin{aligned}
&\{x_{(u,a_{u}), a}: u\in\{u_1,\ldots,u_{b_0}\}\} \\
\cup&\{\tau x_{(u,a_{u}), a(u,v)}+x_{(u,v), a}:~v\in[0,a_{u}-1], u\in\{u_1,\ldots,u_{b_0}\}\}  \\
\cup&\{x_{(u,a_{u}), a(u,v)}+x_{(u,v), a}:~v\in[a_{u}+1, \delta-1], u\in\{u_1,\ldots,u_{b_0}\}\}   \\
\cup&\{x_{j,a}: j\in\{i_1,\ldots,i_r\}\setminus\{(u,v): u\in\{u_1,\ldots,u_{b_0}\}, v\in[0,\delta-1]\}\}.
\end{aligned}
\end{equation}
According to \eqref{mds-case1-eq2}, we can obtain some independent zero symbols and some zero symbol sums, similar to \eqref{eq-1-0-0}.
Next, we continue to solve out the independent symbols from the symbol sums in \eqref{mds-case1-eq2} by considering two different $a$'s, just as in \eqref{eq-1-1}.
To this end, fix some $u\in\{u_1,\ldots,u_{b_0}\}$ and for every $v\in[0,\delta-1]\setminus\{a_{u}\}$, consider the $a$-th and $a'=a(u, v)$-th row of equations $(H_{t,i_1},\ldots,H_{t,i_r})\bm{x}^{\top}=\bm{0}$, $t\in[r]$. Then we can obtain the zero symbol sums as in \eqref{mds-case1-eq2} corresponding to $a$ and $a'=a(u, v)$ respectively, similar to \eqref{eq-1-1}, from which we are able to solve out that $x_{(u,a_{u}), a(u,v)}=x_{(u,v), a}=0$.
When $u$ runs over $\{u_1,\ldots,u_{b_0}\}$, one can finally solve out that $x_{i_j,a}$, $j\in[r]$ are all zeros.
Thus $\{x_{i_j,a}\}_{j\in[r], a\in\mathcal{L}_0\cap\Lambda_{b_0}}$ are all zeros.
\end{itemize}

Suppose for all $b_0\leq b'<b$, we have proved $\{x_{i_j,a}\}_{j\in[r], a\in\mathcal{L}_0\cap\Lambda_{b'}}$ are zeros. Next we prove the case $b'=b$ that $\{x_{i_j,a}\}_{j\in[r], a\in\mathcal{L}_0\cap\Lambda_{b}}$ are zeros.
For every $a\in\mathcal{L}_0\cap\Lambda_{b}$, it has $\{u\in[\lfloor\frac{n}{\delta}\rfloor+1,m]: a_u\in[0,h-1]\}=\emptyset$ by the definition of $\mathcal{L}_{0}$. Besides, denote $\{u\in[\lfloor\frac{n}{\delta}\rfloor]: (u, a_u)\in I\}=\{u_1,u_2,\ldots, u_{b}\}$.
According to Corollary \ref{partition-I-2}, if $b_0=0$, then $V_u\subsetneqq[0,\delta-1]$ for all $u\in\{u_1,\ldots, u_{b}\}$.
If $b_0>0$, then $V_u=[0,\delta-1]$ for all $u\in\{u_1,\ldots,u_{b_0}\}$, and $V_u\subsetneqq[0,\delta-1]$ for all $u\in\{u_{b_0+1},\ldots,u_b\}$.
Now consider the $a$-th row of the equations $(H_{t,i_1},\ldots,H_{t,i_r})\bm{x}^{\top}=\bm{0}$, $t\in[r]$. By Lemma \ref{equation-s-b}, one can obtain $r$ simplified equations with form in \eqref{equation-form}, and the corresponding matrix form in \eqref{equation-form-matrix}, where part $(4)$ of \eqref{equation-form} does not exist since $a\in\mathcal{L}_0$.
Next, we claim that part $(2)$ of the $r$ obtained simplified equations in \eqref{equation-form} are actually zeros.
This is because that for each $u\in\{u_1,\ldots,u_b\}$ and $v\in[0,\delta-1]\setminus V_u$, it has $a(u,v)\in\mathcal{L}_0\cap\Lambda_{b-1}$. By the hypothesis, the data $x_{(u,a_u),a(u,v)}$'s in part $(2)$ of \eqref{equation-form} are all zeros.
Thus part $(2)$ of \eqref{equation-form} can be removed, and one can further solve out the following $r$ unknowns to be zeros:
\begin{equation}\label{mds-case1-eq3}
\begin{aligned}
&\{x_{(u,a_{u}), a}: u\in\{u_1,\ldots,u_{b}\}\} \\
\cup&\{\tau x_{(u,a_{u}), a(u,v)}+x_{(u,v), a}:~u\in\{u_1,\ldots,u_{b}\}, v\in V_u\cap[0,a_{u}-1]\}  \\
\cup&\{x_{(u,a_{u}), a(u,v)}+x_{(u,v), a}:~u\in\{u_1,\ldots,u_{b}\}, v\in V_u\cap[a_{u}+1, \delta-1]\}   \\
\cup&\{x_{j,a}: j\in\{i_1,\ldots,i_r\}\setminus\{(u,v): u\in\{u_1,\ldots,u_{b}\}, v\in V_u\}\}.
\end{aligned}
\end{equation}
As before, fix some $u\in\{u_1,\ldots, u_{b}\}$ and for every $v\in V_u\setminus\{a_{u}\}$, consider the $a$-th and $a'=a(u, v)$-th row of equations $(H_{t,i_1},\ldots,H_{t,i_r})\bm{x}^{\top}=\bm{0}$, $t\in[r]$. One can further compute that $x_{(u,a_{u}), a(u,v)}=x_{(u,v), a}=0$.
When $u$ runs over $\{u_1,\ldots,u_{b}\}$, one can finally solve out that $x_{i_j,a}$, $j\in[r]$ are all zeros.
Thus $\{x_{i_j,a}\}_{j\in[r], a\in\mathcal{L}_0\cap\Lambda_{b}}$ are all zeros.

Therefore, we have proved the base case that $\{x_{i_j,a}\}_{j\in[r], a\in\mathcal{L}_0}$ are all zeros.

Now suppose for all $0\leq s'<s$, we have proved that $\{x_{i_j,a}\}_{j\in[r],a\in\mathcal{L}_{s'}}$ are all zeros. Next we prove $\{x_{i_j,a}\}_{j\in[r],a\in\mathcal{L}_{s}}$ are all zeros.
Note that
$\{x_{i_j,a}\}_{j\in[r],a\in\mathcal{L}_s}=\cup_{b\in[b_0,\lfloor\frac{n}{\delta}\rfloor]}\{x_{i_j,a}\}_{j\in[r],a\in\mathcal{L}_s\cap\Lambda_b}$,
we prove it by induction on $b$ that for each $b\in[b_0,\lfloor\frac{n}{\delta}\rfloor]$, the symbols $\{x_{i_j,a}\}_{j\in[r],a\in\mathcal{L}_s\cap\Lambda_b}$ are zeros.

At first, consider the case $b=b_0$, and we prove $\{x_{i_j,a}\}_{j\in[r],a\in\mathcal{L}_s\cap\Lambda_{b_0}}$ are all zeros.
For each $a\in\mathcal{L}_s\cap\Lambda_{b_0}$, denote $\{u\in[\lfloor\frac{n}{\delta}\rfloor+1,m]: a_u\in[0,h-1]\}=\{\sigma_1,\sigma_2,\ldots,\sigma_s\}$ for simplicity.
Denote $\{u\in[\lfloor\frac{n}{\delta}\rfloor]: (u, a_u)\in I\}=\{u_1,u_2,\ldots, u_{b_0}\}$ if $b_0>0$, and $\{u\in[\lfloor\frac{n}{\delta}\rfloor]: (u, a_u)\in I\}=\emptyset$ if $b_0=0$.
Consider the $a$-th row of the equations $(H_{t,i_1},\ldots,H_{t,i_r})\bm{x}^{\top}=\bm{0}$, $t\in[r]$. According to Lemma \ref{equation-s-b}, one can obtain $r$ simplified equations with form in \eqref{equation-form}, and the corresponding matrix form in \eqref{equation-form-matrix}.
Next, we claim that part $(4)$ in the $r$ obtained simplified equations in \eqref{equation-form} are all zeros. Since for $\sigma\in\{\sigma_1,\ldots,\sigma_s\}$ and $w\in[h,\delta-1]$, it has $a(\sigma,w)\in\mathcal{L}_{s-1}$. Thus by the hypothesis $x_{j,a(\sigma,w)}$ are zeros for all $ j\in\{i_1,\ldots,i_r\}$, $\sigma\in\{\sigma_1,\ldots,\sigma_s\}$ and $w\in[h,\delta-1]$.
Therefore, the $r$ obtained simplified equations in \eqref{equation-form} can be reformulated by removing part $(4)$.
Then, in a similar way as in the case of $s=0$ and $b=b_0$, one can finally compute that $\{x_{i_j,a}\}_{j\in[r],a\in\mathcal{L}_s\cap\Lambda_{b_0}}$ are all zeros.

Suppose for all $b_0\leq b'<b$, we have proved $\{x_{i_j,a}\}_{j\in[r],a\in\mathcal{L}_s\cap\Lambda_{b'}}$ are zeros.
Next we prove $\{x_{i_j,a}\}_{j\in[r],a\in\mathcal{L}_s\cap\Lambda_{b}}$ are all zeros.
For each $a\in\mathcal{L}_s\cap\Lambda_{b}$, denote $\{u\in[\lfloor\frac{n}{\delta}\rfloor+1,m]: a_u\in[0,h-1]\}=\{\sigma_1,\sigma_2,\ldots,\sigma_s\}$.
Besides, denote $\{u\in[\lfloor\frac{n}{\delta}\rfloor]: (u, a_u)\in I\}=\{u_1,u_2,\ldots, u_{b}\}$.
By Corollary \ref{partition-I-2}, if $b_0=0$, then $V_u\subsetneqq[0,\delta-1]$ for all $u\in\{u_1,\ldots, u_{b}\}$.
If $b_0>0$, then $V_u=[0,\delta-1]$ for all $u\in\{u_1,\ldots,u_{b_0}\}$, and $V_u\subsetneqq[0,\delta-1]$ for all $u\in\{u_{b_0+1},\ldots,u_b\}$.
Consider the $a$-th row of the equations $(H_{t,i_1},\ldots,H_{t,i_r})\bm{x}^{\top}=\bm{0}$, $t\in[r]$. According to Lemma \ref{equation-s-b}, one has $r$ simplified equations with form in \eqref{equation-form} for $t\in[r]$, where part $(4)$ are actually zeros by the hypothesis since $a(\sigma,w)\in\mathcal{L}_{s-1}$ for all $\sigma\in\{\sigma_1,\ldots,\sigma_s\}$ and $w\in[h,\delta-1]$.
Thus, in a similar way as in the case of $s=0$ and $b'=b$, one can finally compute that $\{x_{i_j,a}\}_{j\in[r],a\in\mathcal{L}_s\cap\Lambda_{b}}$ are all zeros.

Therefore, $\{x_{i_j,a}: j\in[r], a\in[0,\ell-1]\}$ are all zeros.
Thus the code $\mathcal{C}$ satisfies MDS property.
\end{proof}

\subsection{Optimal-access property}\label{repair-h}

In this subsection,  we show the optimal-access property of $\mathcal{C}$ in Construction \ref{construction-h}.
For each $u\in[m]$ and $v\in[0,\delta-1]$, define $A(u,v)=\{a\in[0,\ell-1]: a_u=v\}$. Recall the definition of $\mathcal{P}_u$'s and map $\pi$ in Subsection \ref{notation-h}. For an intra-group  erasure pattern $\mathcal{H}=(i_0,i_1,\ldots, i_{h-1})\in\mathcal{P}_u$ with some $u\in[\lfloor\frac{n}{\delta}\rfloor]$, then $\pi(\mathcal{H})=u$. Write $i_j=(u-1)\delta+v_j+1$ with $v_j\in[0,\delta-1]$ for $j\in[0,h-1]$. Then we use parity-check equations with rows labeled by $a\in A(u,v_j)$ for repair of node $i_j$, for $j\in[0,h-1]$.
 For an inter-group erasure pattern $\mathcal{H}=(i_0,i_1,\ldots, i_{h-1})\in\mathcal{P}_0$, denote $\rho=\pi(\mathcal{H})$, we use parity-check equations with rows labeled by $a\in A(\rho,j)$ for repair of node $i_j$ for $j\in[0,h-1]$.
In the following, we illustrate the precise repair process of the two kinds of erasure patterns in Theorem \ref{thm1-repair-h} and Theorem \ref{thm2-repair-h}, respectively.

\begin{theorem}\label{thm1-repair-h}
Suppose the erased nodes $(i_0,i_1,\ldots, i_{h-1})\in\mathcal{P}_{u^*}$ for some $u^*\in[\lfloor\frac{n}{\delta}\rfloor]$. Write $i_j=(u^*-1)\delta+v_j+1$ with $v_j\in[0,\delta-1]$ for $j\in[0,h-1]$.
Denote $\mathcal{H}=\{i_0,i_1,\ldots, i_{h-1}\}$ and for $i\in\mathcal{H}$, let $\mathcal{R}_i\subseteq[n]\setminus \mathcal{H}$ with $|\mathcal{R}_i|=d$ satisfying $([(u^*-1)\delta+1,u^*\delta]\setminus \mathcal{H})\subseteq \mathcal{R}_i$ be the set of $d$ helper nodes connected by node $i$.
Then the $h$ nodes can be repaired through the following two phases.

 \begin{itemize}
\item (Download phase)
For $j\in[0,h-1]$, node $i_j$ downloads $\{c_{p,a}: a\in A(u^*,v_j)\}$ from each helper node $p\in \mathcal{R}_{i_j}$.
\item (Collaboration phase)
For each $j\in[0,h-1]$ and $j'\in[0,h-1]\setminus\{j\}$, node $i_j$ recursively computes and transmits data $\{c_{i_j,a(u^*,v_{j'})}+c_{i_{j'},a}: a\in A(u^*,v_j)\}$ to node $i_{j'}$ if $j'>j$, and transmits data $\{\tau c_{i_j,a(u^*,v_{j'})}+c_{i_{j'},a}: a\in A(u^*,v_j)\}$ to node $i_{j'}$ if $j'<j$.
\end{itemize}
\end{theorem}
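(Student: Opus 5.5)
We will follow the repair argument for $\mathcal{C}_{\mathrm{II}}$ in Section~\ref{sec0-building-block}, now applied row by row. Fix $j\in[0,h-1]$ and $a\in A(u^*,v_j)$. We first write out the $a$-th row of $(H_{t,1},\ldots,H_{t,n})\bm{c}^{\top}=\bm{0}$, $t\in[r]$, using Lemma~\ref{row-s}. The type-II entries attached to coordinate $u^*$ come from node $(u^*,a_{u^*})=i_j$. They contribute $f(v_j,w)\lambda_{(u^*,w)}^{t-1}c_{i_j,a(u^*,w)}$ for $w\neq v_j$. Grouping these with the diagonal terms, the equation takes the form
\[
\sum_{w=0}^{\delta-1}\lambda_{(u^*,w)}^{t-1}\tilde c_{(u^*,w),a}+\sum_{p\notin \text{group }u^*}\lambda_p^{t-1}\tilde c_{p,a}+\sum_{w=h}^{\delta-1}\gamma_{w-h+1}^{t-1}g_w(a)=0,\qquad t\in[r].
\]
Here $\tilde c_{i_j,a}=c_{i_j,a}$ and $\tilde c_{(u^*,w),a}=c_{(u^*,w),a}+f(v_j,w)c_{i_j,a(u^*,w)}$ for $w\neq v_j$. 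The remaining terms come from other coordinates: type-II entries for groups $u\ne u^*$, and type-I entries for coordinates $\sigma>\lfloor n/\delta\rfloor$. These appear in $\tilde c_{p,a}$ (with $p=(u,a_u)$) and in the $\gamma$-part.

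The key point is that none of these extra terms involves a row index $a'$ with $a'_{u^*}\neq v_j$. Every non-diagonal entry from another coordinate $u\neq u^*$ or from $\sigma$ sits in column $a(u,w)$ or $a(\sigma,w)$, and such an index still lies in $A(u^*,v_j)$. So every symbol appearing is either of the form $c_{p,a'}$ with $a'\in A(u^*,v_j)$, or is one of the cross terms $c_{i_j,a(u^*,w)}$ coupled to $c_{(u^*,w),a}$.

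The difficulty is that the $\gamma$-terms and the cross-coordinate type-II terms in these equations mix several rows of $A(u^*,v_j)$. Node $i_j$ therefore cannot simply read off an $[n+d-k,d]$ GRS codeword row by row. We would handle this as follows. Restrict the code to the rows and columns indexed by $A(u^*,v_j)$, i.e., take the subsystem of equations labelled by $A(u^*,v_j)$ in the unknowns $\{c_{p,a}:a\in A(u^*,v_j)\}$ and the aggregated symbols $\tilde c$. Its coefficient structure is, up to relabelling, the same as a code of the form in Construction~\ref{construction-h} with the coordinate $u^*$ removed and block entries $\lambda^{t-1}$ in place. This gives an $(n,k,\delta^{m-1})$ MDS array code in the symbols $(\tilde c_{p,\cdot})_{p\in[n]}$, by the argument of Theorem~\ref{thm-mds-h} (the same induction on $\mathcal{L}_s\cap\Lambda_b$ applies).

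With that in hand, the download phase proceeds as follows. For $p\in\mathcal{R}_{i_j}$ outside group $u^*$, the downloaded data $\{c_{p,a}:a\in A(u^*,v_j)\}$ equals $\tilde c_{p,\cdot}$. For $p$ in group $u^*$ outside $\mathcal{H}$, node $i_j$ has downloaded $c_{p,a}$, and $\tilde c_{p,a}-c_{p,a}=f(v_j,w)c_{i_j,a(u^*,w)}$ is an unknown. We would therefore instead count these nodes among the $d-k+h$ "$\gamma$-like" extra positions, exactly as in the proof of repair for $\mathcal{C}_{\mathrm{II}}$. The $k$ helpers outside $([(u^*-1)\delta+1,u^*\delta]\setminus\mathcal{H})$ determine the subcode codeword, hence all $\tilde c_{p,a}$. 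Subtracting the known $c_{p,a}$ for $p$ in group $u^*$ minus $\mathcal{H}$ recovers $c_{i_j,a(u^*,w)}$ for $w\in[0,\delta-1]\setminus\{v_0,\ldots,v_{h-1}\}$. This also yields $c_{i_j,a}$ and the mixed sums $c_{i_{j'},a}+f(v_j,v_{j'})c_{i_j,a(u^*,v_{j'})}$ for $j'\neq j$.

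In the collaboration phase these mixed sums are exactly what node $i_j$ transmits to $i_{j'}$; the factor $f$ gives $\tau$ when $j>j'$, matching the statement. Node $i_{j'}$ then holds, for $a'=a(u^*,v_{j'})\in A(u^*,v_{j'})$, both $c_{i_{j'},a}+f(v_j,v_{j'})c_{i_j,a'}$ (received) and $c_{i_j,a'}+f(v_{j'},v_j)c_{i_{j'},a}$ (computed itself). This is a $2\times2$ system with determinant $1-\tau\neq0$, so it solves for $c_{i_{j'},a}$. Together with the previous step, node $i_{j'}$ then knows all of $c_{i_{j'},\cdot}$.

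Bandwidth and access both equal $d\ell/\delta$ per node in the download phase, and $(h-1)\ell/\delta$ in the collaboration phase, meeting (\ref{bw-bound}). The main obstacle is the restricted-subsystem MDS claim. We would prove it by reusing the induction of Theorem~\ref{thm-mds-h} verbatim on $A(u^*,v_j)\cong[0,\delta^{m-1}-1]$.
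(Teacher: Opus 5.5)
There is a genuine gap. It sits in your central step: the claim that, restricted to the rows in $A(u^*,v_j)$, node $i_j$ can decode all aggregated symbols $\tilde c_{p,a}$ from its downloaded data alone, so that a single collaboration round at the end suffices.

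These rows do not form a closed system in the $\tilde c$'s. The type-I part $g_w(a)$ consists of \emph{raw} symbols $c_{J,a(\sigma,w)}$, where $J$ is the $a_\sigma$-th node of the inter-group pattern $\pi^{-1}(\sigma)$. When $d>k$, so that type-I entries exist, some of these patterns contain another failed node $i_{j'}$: take $i_{j'}$ together with any node outside group $u^*$, and such nodes exist since $\delta\le n-k<n$. The corresponding rows then contain $\gamma_{w-h+1}^{t-1}c_{i_{j'},a(\sigma,w)}$. Node $i_j$'s own decoding only ever yields the mixed symbol $c_{i_{j'},a(\sigma,w)}+f(v_j,v_{j'})c_{i_j,a(\sigma,w)(u^*,v_{j'})}$, and separating the raw value requires the partner sum sent by $i_{j'}$.

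This is a real obstruction, not an expository gap. Take $n=4$, $k=1$, $d=2$, $h=2$ (so $\delta=3$, $m=4$), choose $\pi(2,4)=3$, and let $\mathcal{H}=\{1,2\}$. For $b=(0,2,2,2)\in\mathcal{L}_0\cap A(1,0)$ and $a=b(3,0)$, the $t=1$ equation of row $a$ is $c_{1,a}+(c_{2,a}+c_{1,a(1,1)})+(c_{3,a}+c_{1,a(1,2)})+c_{4,a}+c_{2,b}=0$.

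Since the repair in the statement is valid, node $1$ recovers $\bm{c}_1$ from exactly $54$ downloaded and $27$ received symbols. Because $k=1$, $\bm{c}_1$ is an information set of size $81$, so these $81$ functionals are linearly independent on the code. Row $b$ has no type-I term, so $c_{2,b}+c_{1,b(1,1)}$ is a function of the download. On the other hand, $\tau c_{2,b}+c_{1,b(1,1)}$ is a received symbol. Hence $c_{2,b}$ is not a function of node $1$'s download, and therefore neither are all of $\tilde c_{1,a},\tilde c_{2,a},\tilde c_{3,a}$. The ``$(n,k,\delta^{m-1})$ MDS subcode in the $\tilde c$'s'' does not exist, and the induction of Theorem~\ref{thm-mds-h} has nothing to apply to.

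What is missing is the staging that the word ``recursively'' in the statement refers to (Lemma~\ref{stage-lem}, Appendix~\ref{appendix-1}). The paper inducts on $s$. For $a\in\mathcal{L}_s\cap A(u^*,v_j)$, every index $a(\sigma_z,w)$ in the type-I part lies in $\mathcal{L}_{s-1}\cap A(u^*,v_j)$. After the stage-$(s-1)$ exchange, node $i_j$ knows the raw symbols of \emph{all} $n$ nodes at those indices. Your $2\times 2$ solve with determinant $1-\tau$ returns both $c_{i_j,a(u^*,v_{j'})}$ and the raw $c_{i_{j'},a}$, which is exactly what the next stage needs. Moving the type-I part to the right-hand side, each row becomes an $[n,k]$ GRS codeword in the $c^*_{p,a}$ (your $\tilde c_{p,a}$). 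It is decoded from the $k$ helpers in $\tilde{\mathcal{R}}_{i_j}=\mathcal{R}_{i_j}\setminus([(u^*-1)\delta+1,u^*\delta]\setminus\mathcal{H})$, and only then are the stage-$s$ sums exchanged.

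You should also fix a smaller error. For a helper $p=(u,w)$ in another full group $u\neq u^*$ with $w\neq a_u$, the downloaded $c_{p,a}$ is not $\tilde c_{p,a}$. They differ by $f(a_u,w)c_{(u,a_u),a(u,w)}$, and $(u,a_u)$ need not be a helper. The paper obtains these $c^*_{p,a}$ by a downward induction over the layers $\Lambda_q$ defined by $I=\tilde{\mathcal{R}}_{i_j}\cap[\delta\lfloor n/\delta\rfloor]$ (Lemma~\ref{lem-recover1-step1}). It uses Lemma~\ref{lem-recover1-step2} at layer $q+1$ to supply the missing $c_{(u,a_u),a(u,w)}$.

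With these changes, the rest of your argument agrees with the paper: the extraction of $c_{i_j,a(u^*,w)}$ through the group-$u^*$ helpers, the collaboration step, and the bandwidth count.
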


\begin{theorem}\label{thm2-repair-h}
Suppose the erased nodes $(i_0,i_1,\ldots, i_{h-1})\in\mathcal{P}_0$, denote $\rho=\pi(i_0,i_1,\ldots, i_{h-1})$.
Let $\mathcal{H}=\{i_0,i_1,\ldots, i_{h-1}\}$ and for $i\in\mathcal{H}$, let $\mathcal{R}_i\subseteq[n]\setminus \mathcal{H}$ with $|\mathcal{R}_i|=d$ be the set of $d$ helper nodes connected by node $i$.
Then the $h$ nodes can be repaired through the following two phases.
 \begin{itemize}
\item (Download phase)
For $j\in[0,h-1]$, node $i_j$ downloads $\{c_{p,a}: a\in A(\rho,j)\}$ from each helper node $p\in \mathcal{R}_{i_j}$,
and can recover the data
\[\{c_{i_j,a}: a\in\cup_{v\in[h,\delta-1]\cup\{j\}}A(\rho,v)
\}
\cup\{c_{i_{j'},a}: ~j'\in[0,h-1]\setminus\{j\}, ~a\in A(\rho, j)\}.
\]

\item (Collaboration phase)
For each $j\in[0,h-1]$ and  $j'\in[0,h-1]\setminus\{j\}$, node $i_j$ transmits $\{c_{i_{j'},a}: a\in A(\rho,j)\}$ to node $i_{j'}$.
\end{itemize}
\end{theorem}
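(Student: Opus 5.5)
The plan is to mimic the repair of $\mathcal{C}_{\mathrm{I}}$ coordinate-wise along digit $\rho$. First I will pin down, for each $j\in[0,h-1]$ and each row $a\in A(\rho,j)$, which type-I entries indexed by $\rho$ the $a$-th row of $(H_{t,1},\ldots,H_{t,n})$ contains. Since $\rho=\pi(i_0,\ldots,i_{h-1})$ and $\pi$ is injective on $\mathcal{P}_0$, we have $\rho\in\Omega_{p,e}$ if and only if $p=i_e$. Hence for $a_\rho=j$ the only node carrying type-I entries in digit $\rho$ is $i_j$, and those entries are $\gamma_{w-h+1}^{t-1}$ in columns $a(\rho,w)$, $w\in[h,\delta-1]$. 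Because $\rho>\lfloor\frac{n}{\delta}\rfloor$, the type-II entries only change digits $u\le\lfloor\frac{n}{\delta}\rfloor$, and so do the type-I entries in the other digits $\sigma\neq\rho$. None of them alters digit $\rho$.

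Next I will fix all digits other than $\rho$ and group rows accordingly. For $b\in[0,\ell-1]$ with its $\rho$-th digit removed, consider the set $\{a: a_\sigma=b_\sigma,\ \sigma\neq\rho\}$. The equations with $a\in A(\rho,j)$ involve only symbols $c_{p,a'}$ with $a'_\rho=j$ (all nodes $p$), together with $c_{i_j,a(\rho,w)}$ for $w\ge h$. The cleanest route is to form, for fixed $j$, the $\ell/\delta$-dimensional vectors $\bm{y}_p=(c_{p,a})_{a\in A(\rho,j)}$ for $p\in[n]$, and $\bm{z}_w=(c_{i_j,a(\rho,w)})_{a\in A(\rho,j)}$. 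The restriction of the parity-check equations to rows $A(\rho,j)$ then reads $\sum_{p}\widetilde H_{t,p}\bm{y}_p^{\top}+\sum_{w}\gamma_{w-h+1}^{t-1}\bm{z}_w^{\top}=\bm{0}$, $t\in[r]$. Here $\widetilde H_{t,p}$ is the submatrix of $H_{t,p}$ on rows and columns $A(\rho,j)$, which is exactly the parity-check block of a code of the same construction in which digit $\rho$ has been deleted (with $\lambda_p^{t-1}$ on the diagonal).

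The key step is to show that this is the parity-check matrix of an $(n+d-k,\,d,\,\ell/\delta)$ MDS array code, whose extra $d-k$ nodes $\bm{z}_w$ have the scalar matrices $\gamma_{w-h+1}^{t-1}I$. I expect to prove this by rerunning the induction in the proof of Theorem \ref{thm-mds-h} verbatim on the reduced index set. The $\gamma$-nodes contribute only diagonal entries, and the induction over $\mathcal{L}_s\cap\Lambda_b$ never uses digit $\rho$. The Vandermonde matrix in \eqref{equation-form-matrix} now has $r+(d-k)$ distinct columns among the $n+d-k$ distinct evaluation points. For $r$ erasures among $n+d-k$ nodes, i.e., $d$ known nodes, the block-triangular elimination still applies. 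This is the main obstacle: one must check that type-I entries from other digits $\sigma$ and type-II entries remain lower-triangular with respect to the partition once the $\gamma$-columns are treated as extra nodes. I will argue this by observing that those entries are identical to the ones in the original $H$.

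Granting this MDS property, node $i_j$ knows $\bm{y}_p$ for the $d$ helpers $p\in\mathcal{R}_{i_j}$ (these are exactly the downloaded $\{c_{p,a}:a\in A(\rho,j)\}$), so it recovers all of $\bm{y}_{i_{j'}}$ for $j'\in[0,h-1]$ and all $\bm{z}_w$. These are precisely the claimed data $\{c_{i_j,a}:a\in A(\rho,v),\,v\in[h,\delta-1]\cup\{j\}\}$ and $\{c_{i_{j'},a}:a\in A(\rho,j)\}$. In the collaboration phase, node $i_{j'}$ receives $\{c_{i_{j'},a}:a\in A(\rho,j)\}$ from every $i_j$, $j\neq j'$, which completes its content over $A(\rho,j)$ for $j\in[0,h-1]$. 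Together with its own recovered parts over $A(\rho,j')$ and $A(\rho,w)$, $w\ge h$, this covers all of $[0,\ell-1]$. Finally, the bandwidth count is $d\ell/\delta+(h-1)\ell/\delta$ per node, and only $\ell/\delta$ symbols are accessed per helper, which meets \eqref{bw-bound}.
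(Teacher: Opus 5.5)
Your proposal is correct, but it organizes the argument differently from the paper.

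\textbf{The paper's route.} The paper (Lemma \ref{thm2-lem}, Appendix \ref{appendix-2}) never isolates a reduced code. It decodes row by row, using two nested inductions:
\begin{itemize}
\item an outer induction on $s$ over $\mathcal{L}_s\cap A(\rho,0)$, with $s\ge 1$;
\item inside each stage, a \emph{downward} induction on $q$ over the sets $\Lambda_q$, taken with respect to the \emph{helper} set $I=\mathcal{R}_{i_0}\cap[\delta\lfloor\frac{n}{\delta}\rfloor]$.
\end{itemize}
In each row $a$, the type-I terms of digits $\sigma\neq\rho$ are known from stage $s-1$ and are moved to the right-hand side. Any $d$ of the quantities $c^*_{1,a},\ldots,c^*_{n,a},c_{i_0,a(\rho,h)},\ldots,c_{i_0,a(\rho,\delta-1)}$ then determine the rest, by the $[n+d-k,d]$ GRS structure. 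The $d$ helper coordinates $c^*_{i,a}$ are assembled from downloaded data together with data already un-starred at level $q+1$ (Lemma \ref{thm2repair-lem1}), and the row is then decoded.

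\textbf{Your route.} You observe that the rows $A(\rho,j)$ form a closed subsystem. You identify it as the parity-check matrix of an $(n+d-k,d,\ell/\delta)$ array code: digit $\rho$ deleted, plus $d-k$ scalar $\gamma$-nodes. You then prove its MDS property by the kernel argument of Theorem \ref{thm-mds-h}, i.e., an \emph{upward} induction on $b$ with $\Lambda_b$ taken with respect to the chosen (unknown) blocks.

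This does go through. Once parts (2) and (4) vanish by the inductive hypotheses, the coefficient of $\gamma_{w-h+1}^{t-1}$ in row $a$ is exactly the $\gamma$-node unknown $x_{w,a}$ (or zero if that $\gamma$-node is not chosen). What remains is an $r\times r$ Vandermonde system in distinct evaluation points. You should state it that way, rather than with the muddled ``$r+(d-k)$ distinct columns'' remark.

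\textbf{Comparison.} Your route is more modular:
\begin{itemize}
\item it reuses the existing MDS proof instead of a fresh decoding induction;
\item it makes transparent that each $A(\rho,j)$-slice is literally the $\mathcal{T}_1$ picture, namely an $(n+d-k,d)$ MDS codeword whose last $d-k$ coordinates are pieces of node $i_j$;
\item it handles every choice of $d$ helpers at once.
\end{itemize}
The price is that you prove invertibility for all choices of $r$ column blocks rather than the single erasure pattern needed. You also obtain uniqueness rather than an explicit decoding order, which is what the paper's nested induction supplies.

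One wording slip: type-I entries in digits $\sigma\neq\rho$ change digit $\sigma>\lfloor\frac{n}{\delta}\rfloor$, not digits $\le\lfloor\frac{n}{\delta}\rfloor$. Your conclusion that digit $\rho$ is never altered is nonetheless correct.
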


Before giving the proofs of the two theorems, note that for $j\in[0,h-1]$, node $i_j$ should be able to recover the erased data $\bm{c}_{i_j}=(c_{i_j,0}, \ldots, c_{i_j,\ell-1})$.
Recall Definition \ref{partition} that $\mathcal{L}_0, \mathcal{L}_1,\ldots,\mathcal{L}_{m-\lfloor\frac{n}{\delta}\rfloor}$ form a partition of $[0,\ell-1]$.
We will prove that for each $s=0,1,\ldots,m-\lfloor\frac{n}{\delta}\rfloor$, node $i_j$ can recover the data $\{c_{i_j,a}: a\in\mathcal{L}_s\}$ by induction on $s$.

In the following, we firstly give a lemma to characterize the simplified form of the following parity-check equations $(H_{t,1},H_{t,2},\ldots,H_{t,n})\cdot\bm{c}^{\top}=\bm{0}$, $t\in[r]$, where the codeword $\bm{c}=(\bm{c}_{1},\bm{c}_{2},\ldots,\bm{c}_{n})\in F^{n\ell}$ with $\bm{c}_{i}=(c_{i,0},\ldots,c_{i,\ell-1})\in F^{\ell}$, $i\in[n]$, which will be used in the proof of the optima-access property of $\mathcal{C}$.

\begin{lemma}\label{equation-s}
Let $(H_{t,1},H_{t,2},\ldots,H_{t,n})\in F^{\ell\times n\ell}$ represents the $t$-th row block of $H$ defined in Construction \ref{construction-h}, for $t\in[r]$.
Let $s\in[0,m-\lfloor\frac{n}{\delta}\rfloor]$. For every $a\in\mathcal{L}_s$, denote $\{u\in[\lfloor\frac{n}{\delta}\rfloor+1,m]: a_u\in[0,h-1]\}=\{\sigma_1,\sigma_2,\ldots,\sigma_s\}$.
Then, the $a$-th row of the parity-check equations $(H_{t,1},H_{t,2},\ldots,H_{t,n})\cdot\bm{c}^{\top}=\bm{0}$, $t\in[r]$, has the following form:
\begin{equation}\label{equation-form-s}
\begin{aligned}
&\underbrace{\sum_{j\in\{(u,a_u): u\in[\lfloor\frac{n}{\delta}\rfloor]\}\cup[\delta\lfloor\frac{n}{\delta}\rfloor+1,n]}\lambda_j^{t-1}c_{j,a}}_{(1)}+
\underbrace{\sum_{u\in[\lfloor\frac{n}{\delta}\rfloor]}\sum_{v\in[0,\delta-1]\setminus\{a_{u}\}}\lambda_{(u,v)}^{t-1} \big(c_{(u,a_u),a(u,v)}f(a_{u},v)+c_{(u,v),a}\big)}_{(2)}         \\
 +&
\underbrace{\sum_{v\in[h,\delta-1]}\gamma_{v-h+1}^{t-1}(\sum_{j\in[n]}\sum_{z\in[s]}\chi(a,\sigma_{z},j)c_{j,a(\sigma_{z},v)})}_{(3)}
 =0, ~~~~~~~~~~~~~~~~~~t\in[r],
\end{aligned}
\end{equation}
which can be rewritten as the following matrix form:
\begin{equation}\label{equation-form-matrix-s}
\begin{aligned}
\left(\begin{array}{cccc|ccccccc}
1 &1& \cdots & 1 & 1  &\cdots & 1
\\
\lambda_{1} &\lambda_{2} & \cdots & \lambda_{n}  &\gamma_1 & \cdots & \gamma_{d-k}
\\
\vdots & \vdots  & \vdots  &\vdots & \vdots   & \vdots & \vdots
       \\
\lambda_{1}^{r-1}&\lambda_{2}^{r-1} & \cdots & \lambda_{n}^{r-1}
&\gamma_1^{r-1} & \cdots & \gamma_{d-k}^{r-1}
\end{array}\right)
\end{aligned}
\begin{pmatrix}
c_{1,a}^*\\
\vdots\\
c_{n,a}^*\\  \hline
\sum_{j\in[n]}\sum_{z\in[s]}\chi(a,\sigma_{z},j)c_{j,a(\sigma_{z},h)})\\
\vdots\\
\sum_{j\in[n]}\sum_{z\in[s]}\chi(a,\sigma_{z},j)c_{j,a(\sigma_{z},\delta-1)})
\end{pmatrix}=\bm{0},
\end{equation}
where for $j\in\{(u,a_u): u\in[\lfloor\frac{n}{\delta}\rfloor]\}\cup[\delta\lfloor\frac{n}{\delta}\rfloor+1,n]$, $c_{j,a}^*=c_{j,a}$;
and for $j=(u,v)$ with $u\in[\lfloor\frac{n}{\delta}\rfloor]$ and  $v\in[0,\delta-1]\setminus\{a_{u}\}$, $c_{j,a}^*=c_{j,a}+f(a_{u},v)c_{(u,a_u), a(u,v)}$.
Moreover, it is worth noting that in formula \eqref{equation-form-s}, part $(1)$ and part $(2)$ contribute to the first block of \eqref{equation-form-matrix-s}, and part $(3)$ contributes to the second block of \eqref{equation-form-matrix-s}.
\end{lemma}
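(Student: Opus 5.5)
The plan is to read off the $a$-th row of $(H_{t,1},\ldots,H_{t,n})\bm{c}^{\top}=\bm{0}$ directly from Lemma \ref{row-s}, applied with the full node set $[n]$ rather than a chosen subset. First, fix $a\in\mathcal{L}_s$ with $\{u\in[\lfloor\frac{n}{\delta}\rfloor+1,m]: a_u\in[0,h-1]\}=\{\sigma_1,\ldots,\sigma_s\}$. By Lemma \ref{row-s}, the $a$-th row splits into three families of non-zero entries, and I will expand the product against $\bm{c}^{\top}$ family by family.

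The diagonal entries contribute $\sum_{j\in[n]}\lambda_j^{t-1}c_{j,a}$. I will split the index set $[n]$ into three pieces: the nodes $(u,a_u)$ for $u\in[\lfloor\frac{n}{\delta}\rfloor]$; the tail nodes $[\delta\lfloor\frac{n}{\delta}\rfloor+1,n]$; and the remaining grouped nodes $(u,v)$ with $v\neq a_u$. The first two pieces give exactly part $(1)$.

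The type-II entries live only in the blocks $H_{t,(u,a_u)}$. For each $u\in[\lfloor\frac{n}{\delta}\rfloor]$ they contribute $\sum_{v\neq a_u} f(a_u,v)\lambda_{(u,v)}^{t-1}c_{(u,a_u),a(u,v)}$. These terms share the coefficient $\lambda_{(u,v)}^{t-1}$ with the diagonal term $\lambda_{(u,v)}^{t-1}c_{(u,v),a}$ from the third piece above. Grouping each such pair produces part $(2)$ and the combined symbol $c^*_{j,a}=c_{j,a}+f(a_u,v)c_{(u,a_u),a(u,v)}$.

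The type-I entries are handled next. For each $z\in[s]$ and each $w\in[h,\delta-1]$, the entry $\gamma_{w-h+1}^{t-1}$ sits in column $a(\sigma_z,w)$ of every block $H_{t,j}$ with $\sigma_z\in\Omega_{j,a_{\sigma_z}}$. This condition is exactly $\chi(a,\sigma_z,j)=1$. Collecting the terms by $w$ gives part $(3)$, which finishes the derivation of \eqref{equation-form-s}.

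For the matrix form \eqref{equation-form-matrix-s}, I will observe that each node $j\in[n]$ appears with coefficient $\lambda_j^{t-1}$ exactly once, multiplying $c^*_{j,a}$, and each $\gamma_{w-h+1}$ appears once, multiplying the bracketed sum. Stacking $t=1,\ldots,r$ then yields the Vandermonde system as displayed.

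The step I expect to be the main obstacle is checking that no column is double counted, so that the grouping is legitimate. Three facts need verifying:
\begin{itemize}
\item Since $\sigma_z>\lfloor\frac{n}{\delta}\rfloor\geq u$, the columns $a(\sigma_z,w)$ with $w\geq h>a_{\sigma_z}$ differ from $a$ and from every $a(u,v)$.
\item The columns $a(\sigma_z,w)$ differ from one another for distinct pairs $(z,w)$.
\item The column $a(u,v)$ of block $(u,a_u)$ is not the diagonal column $a$ of that block.
\end{itemize}
I will check all three directly from Algorithm \ref{alg-0}.
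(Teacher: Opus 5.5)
Your proposal is correct and follows essentially the same route as the paper. The paper's proof is one line: read off the $a$-th row via Lemma~\ref{row-s} and combine like terms, namely pairing each type-II entry of block $(u,a_u)$ at column $a(u,v)$ with the diagonal entry of block $(u,v)$, and collecting the type-I entries by $w$. Your explicit check that the entries set by Algorithm~\ref{alg-0} never collide within a single block is a detail the paper leaves implicit, and it holds as you describe.
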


\begin{proof}
Lemma \ref{equation-s} can be easily proved by combining like terms and removing zeros of the $a$-th row of $(H_{t,1},\ldots,H_{t,n})\bm{c}^{\top}=\bm{0}$, $t\in[r]$, according to Lemma \ref{row-s}.\end{proof}

Now we come to the proof of Theorem \ref{thm1-repair-h}. The repair process includes $m-\lfloor\frac{n}{\delta}\rfloor+1$ stages, and in each stage $s=0,1,\ldots,m-\lfloor\frac{n}{\delta}\rfloor$, we prove node $i_j$ can recover the data $\{c_{i_j,a}: a\in\mathcal{L}_s\}$ using the downloaded data and collaborated data in the $s$-th stage. We illustrate this in Lemma \ref{stage-lem}. For the sake of fluency, we put the proof of Lemma \ref{stage-lem} in Appendix \ref{appendix-1}.

\begin{lemma}\label{stage-lem}
In every stage $s\in[0,m-\lfloor\frac{n}{\delta}\rfloor]$, for $j\in[0,h-1]$, node $i_j$ can recover the data $\{c_{i_j,a}: a\in\mathcal{L}_s\}$, by using the downloaded data and the following collaborated data:
\[
\{f(v_{j'},v_j)c_{i_{j'},a(u^*,v_{j})}+c_{i_{j},a}: ~a\in A(u^*,v_{j'})\cap\mathcal{L}_{s}\}
\]
from node $i_{j'}$ for $j'\in[0,h-1]\setminus\{j\}$ at stage $s$.
\end{lemma}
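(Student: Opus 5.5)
We argue by induction on $s$, following the scheme of the MDS proof (Theorem~\ref{thm-mds-h}). The induction hypothesis is that at the start of stage $s$ every node $i_j$ already knows $\{c_{i_j,a}: a\in\mathcal{L}_{s'}\}$ for all $s'<s$. Fix $j\in[0,h-1]$ and a row index $a\in A(u^*,v_j)\cap\mathcal{L}_s$. We then look at the $a$-th row of the parity-check equations in the simplified form \eqref{equation-form-matrix-s} of Lemma~\ref{equation-s}.

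First we dispose of part $(3)$. Every term there has the form $c_{p,a(\sigma_z,w)}$ with $w\in[h,\delta-1]$, and $a(\sigma_z,w)\in\mathcal{L}_{s-1}$. If $p$ is a helper node, this symbol is not in general among the downloads, since only the rows $A(u^*,v_j)$ are downloaded. However, $u^*\le\lfloor n/\delta\rfloor<\sigma_z$, so $a(\sigma_z,w)$ still has $u^*$-th digit $v_j$, and therefore $c_{p,a(\sigma_z,w)}$ \emph{is} downloaded from $p\in\mathcal{R}_{i_j}$. If instead $p\in\mathcal{H}$, the symbol is known by the induction hypothesis for $p=i_j$. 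For $p=i_{j'}$ with $j'\ne j$, it is known by induction provided the nodes exchange their recovered symbols of earlier stages. The more careful route, which is the one I would take, uses $\chi$: the third block depends only on the combinations $\sum_{p}\chi(a,\sigma_z,p)c_{p,a(\sigma_z,w)}$. I would rather fold these terms into the unknown coordinates of the $[n+d-k,d]$ GRS word than try to compute them.

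So the main step is to view \eqref{equation-form-matrix-s} as saying that the vector $(c^*_{1,a},\dots,c^*_{n,a},\text{(third block)})$ lies in an $[n+d-k,d]$ GRS code, and to count its known coordinates.
\begin{itemize}
\item For helpers $p\notin[(u^*-1)\delta+1,u^*\delta]$, we have $c^*_{p,a}=c_{p,a}$ (or, for a group $u\neq u^*$, a sum of symbols $c_{p,a}$ and $c_{(u,a_u),a(u,v)}$, all of which lie in rows with $u^*$-digit $v_j$). These are downloaded whenever the nodes involved are helpers.
\item For helpers $p=(u^*,v)$ with $v\notin\{v_0,\dots,v_{h-1}\}$ (these belong to $\mathcal{R}_{i_j}$ by assumption), $c^*_{p,a}=c_{p,a}+f(v_j,v)c_{i_j,a(u^*,v)}$. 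Since $c_{p,a}$ is downloaded, this coordinate is treated as unknown.
\end{itemize}
The genuinely unknown coordinates are therefore: $c_{i_j,a}$; the $\delta-h=d-k$ coordinates $c^*_{(u^*,v),a}$ with $v\notin\{v_{j'}\}$; the $h-1$ mixed sums $c^*_{i_{j'},a}=c_{i_{j'},a}+f(v_j,v_{j'})c_{i_j,a(u^*,v_{j'})}$; the $n-d-h$ non-helper coordinates; and the $d-k$ third-block coordinates. That is $1+(d-k)+(h-1)+(n-d-h)+(d-k)=n+d-k-d$ unknowns, i.e.\ exactly $r+(d-k)$ erasures in a code of length $n+d-k$ and dimension $d$. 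Hence all of them can be solved for. In particular node $i_j$ obtains $c_{i_j,a}$ and the values $c_{i_j,a(u^*,v)}$ for $v\notin\{v_{j'}\}$ (since $c_{p,a}$ is known), as well as the mixed sums.

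Finally, the collaboration phase covers the remaining entries. Node $i_{j'}$ has computed, in its own rows $a'\in A(u^*,v_{j'})\cap\mathcal{L}_s$, the sum $f(v_{j'},v_j)c_{i_{j'},a'(u^*,v_j)}+c_{i_j,a'}$, and sends it to $i_j$. Setting $a'=a(u^*,v_{j'})$ and pairing this with $i_j$'s own sum $c_{i_{j'},a}+f(v_j,v_{j'})c_{i_j,a'}$ gives a $2\times2$ system with determinant $1-\tau\ne0$, exactly as in \eqref{eq-1-1}. Solving it yields $c_{i_j,a(u^*,v_{j'})}$. Ranging over $a\in A(u^*,v_j)\cap\mathcal{L}_s$ and the three kinds of digit $v$ covers all of $\mathcal{L}_s$, because changing the digit $u^*$ preserves membership in $\mathcal{L}_s$.

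The step I expect to be the main obstacle is the first one: making sure that the part-$(3)$ terms at erased nodes and the group-$u\ne u^*$ terms involving erased nodes are either absorbed into the GRS unknowns or are known from earlier stages, so that the erasure count stays at most $r+d-k$. The difficulty is that erased nodes may appear simultaneously in the first and third blocks.
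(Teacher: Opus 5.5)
Your erasure count is where the argument breaks. The sum $1+(d-k)+(h-1)+(n-d-h)+(d-k)$ equals $n+d-2k=r+(d-k)$, not $n+d-k-d=r$. An $[n+d-k,d]$ MDS code corrects at most $(n+d-k)-d=r$ erasures. So for $d>k$ and $s\geq 1$, the $r$ equations of row $a$ do not determine your unknowns. For example, with $n=5$, $k=2$, $d=3$, $h=2$ and $\mathcal{H}=\{1,2\}$, the coordinates at nodes $1,2$, at the local helper $3$, and the single $\gamma$-coordinate give four unknowns against three equations. The $d-k$ local-helper coordinates $c^*_{(u^*,v),a}$ already use up the slack that the $\gamma$-block would need. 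Hence part $(3)$ of \eqref{equation-form-s} cannot be folded into the unknowns; it has to be \emph{known}. That needs the stronger induction hypothesis of Lemma~\ref{cor-lem}: after stage $s-1$, node $i_j$ knows $c_{p,a'}$ for \emph{all} $p\in[n]$ and all $a'\in\mathcal{L}_{s-1}\cap A(u^*,v_j)$, not only its own symbols. This hypothesis propagates with no extra communication:
helper symbols are downloaded;
non-helper symbols are peeled off the decoded values $c^*_{p,a'}$ (Lemma~\ref{lem-recover1-step2});
for $p=i_{j'}$, the $2\times 2$ system of your collaboration step (determinant $1-\tau$) returns $c_{i_{j'},a'}$ together with $c_{i_j,a'(u^*,v_{j'})}$ --- you solve it but keep only the latter.
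Since $a(\sigma_z,w)\in\mathcal{L}_{s-1}\cap A(u^*,v_j)$, part $(3)$ then becomes a known constant. The first block is a translate of an $[n,k]$ GRS word with exactly $r$ unknown coordinates, and the $k$ coordinates indexed by the non-local helpers $\tilde{\mathcal{R}}_{i_j}=\mathcal{R}_{i_j}\setminus[(u^*-1)\delta+1,u^*\delta]$ suffice.

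Even with that fixed, you have not shown that those $k$ coordinates are computable from the downloads. Take $p=(u,v)\in\tilde{\mathcal{R}}_{i_j}$ with $u\neq u^*$ and $(u,a_u)\notin\mathcal{R}_{i_j}$. Then $c^*_{p,a}=c_{p,a}+f(a_u,v)c_{(u,a_u),a(u,v)}$ involves a symbol that was never downloaded. The paper handles this by decoding the rows of $\mathcal{L}_s\cap A(u^*,v_j)$ in the order $\Lambda_{|U_I|},\Lambda_{|U_I|-1},\ldots,\Lambda_0$, taken with respect to $I=\tilde{\mathcal{R}}_{i_j}\cap[\delta\lfloor\frac{n}{\delta}\rfloor]$. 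The row $a(u,v)$ lies one class higher. Once it is decoded, its coordinate $c^*_{(u,a_u),a(u,v)}=c_{(u,a_u),a(u,v)}+f(v,a_u)c_{(u,v),a}$ yields the missing symbol, because $c_{(u,v),a}$ is downloaded (Lemmas~\ref{lem-recover1-step1} and~\ref{lem-recover1-step2}). Your closing $2\times 2$ collaboration step is correct and matches the paper. So is your remark that changing the $u^*$-th digit preserves membership in $\mathcal{L}_s$.
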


According to Lemma \ref{stage-lem}, till the last stage $s=m-\lfloor\frac{n}{\delta}\rfloor$, all the $h$ failed nodes can be repaired, and Theorem \ref{thm1-repair-h} can be proved.
Moreover, since $[0,\ell-1]=\cup_{s\in[0,m-\lfloor\frac{n}{\delta}\rfloor]}\mathcal{L}_s$, then  the total amount of data communicated among the $h$ failed nodes is $\frac{h(h-1)\ell}{d-k+h}$ symbols. The amount of data downloaded and accessed in the download phase are both $\frac{dh\ell}{d-k+h}$ symbols, achieving the cut-set bound in \eqref{bw-bound}, thus it satisfies the optimal-access property.

As for Theorem \ref{thm2-repair-h}, since $[0,\ell-1]=\cup_{v\in[0,\delta-1]}A(\rho,v)$, then if the download phase is proved, the collaboration phase and node recovery will be straightforward.
Next we prove the download phase of Theorem \ref{thm2-repair-h}, which is illustrated in Lemma \ref{thm2-lem}.

\begin{lemma}\label{thm2-lem}
For every $s\in[m-\lfloor\frac{n}{\delta}\rfloor]$, node $i_j$, $j\in[0,h-1]$ using the downloaded data can recover the following data:
\[
\begin{aligned}
&\{c_{i_j,a}: a\in\cup_{v\in[h,\delta-1]\cup\{j\}}A(\rho,v)
\}
\cup\{c_{i_{j'},a}: ~j'\in[0,h-1]\setminus\{j\}, ~a\in A(\rho, j)\}       \\
=&\cup_{a\in A(\rho,j)}\{c_{i_0,a}, c_{i_1,a}, \ldots, c_{i_{h-1},a}, c_{i_j,a(\rho,h)}, \ldots, c_{i_j,a(\rho,\delta-1)}\}.
\end{aligned}
\]
\end{lemma}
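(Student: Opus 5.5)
The plan is to prove a stronger claim by induction on the layer index $s$, following the same nested induction (first on $s$, then on $b$) as in the proof of Theorem \ref{thm-mds-h}. The strengthened claim is that node $i_j$ can recover \emph{every} symbol $c_{p,a}$ with $p\in[n]$ and $a\in A(\rho,j)\cap\mathcal{L}_s$, together with $c_{i_j,a(\rho,w)}$ for $w\in[h,\delta-1]$. This contains the data in Lemma \ref{thm2-lem}.

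\textbf{Why the slice is closed.} All equations used are the rows $a\in A(\rho,j)$ of the system in Lemma \ref{equation-s}. For such $a$ we have $a_\rho=j\in[0,h-1]$ with $\rho\ge\lfloor\frac{n}{\delta}\rfloor+1$, so $\rho\in\{\sigma_1,\ldots,\sigma_s\}$ and $s\ge1$. Type-II couplings only change a coordinate $u\le\lfloor\frac{n}{\delta}\rfloor<\rho$, so $a(u,v)$ stays in $A(\rho,j)\cap\mathcal{L}_s$. Type-I couplings with $\sigma_z\neq\rho$ move to $a(\sigma_z,w)\in A(\rho,j)\cap\mathcal{L}_{s-1}$. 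Hence the subsystem indexed by the slice $A(\rho,j)$ is closed up to terms from lower layers.

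\textbf{The $\gamma$-block at row $a$.} By the definition of $\Omega_{p,e}$ and the injectivity of $\pi$ on $\mathcal{P}_0$, we have $\rho\in\Omega_{p,j}$ if and only if $p=i_j$. So in part $(3)$ of \eqref{equation-form-s}, the $\sigma_z=\rho$ term is exactly $c_{i_j,a(\rho,w)}$ for each $w\in[h,\delta-1]$. The remaining terms with $\sigma_z\neq\rho$ involve only symbols $c_{p,a'}$ with $a'\in A(\rho,j)\cap\mathcal{L}_{s-1}$. These are known by the induction hypothesis; in the base layer no such terms occur. After subtracting them, the $\gamma$-block entries become precisely the new unknowns $c_{i_j,a(\rho,w)}$.

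\textbf{Counting unknowns per row.} For each $a$ in the slice, the known quantities are the $d$ downloaded symbols $c_{p,a}$, $p\in\mathcal{R}_{i_j}$. The unknowns are the $n-d$ symbols $c_{p,a}$ for $p\notin\mathcal{R}_{i_j}$ and the $d-k$ symbols $c_{i_j,a(\rho,w)}$, which is $r$ unknowns per row against $r$ equations.

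\textbf{Solving the slice.} The coefficient matrix is the Vandermonde-type matrix of \eqref{equation-form-matrix-s}, so columns indexed by $\{\lambda_p\}_{p\notin\mathcal{R}_{i_j}}\cup\{\gamma_w\}$ are invertible. I would then rerun the argument of Theorem \ref{thm-mds-h} with the ``erased'' set $\{i_1,\ldots,i_r\}$ replaced by $[n]\setminus\mathcal{R}_{i_j}$. Take $I=([n]\setminus\mathcal{R}_{i_j})\cap[\delta\lfloor\frac{n}{\delta}\rfloor]$ and use the partition $\Lambda_b$ restricted to the slice.
\begin{itemize}
\item For a coupling term $f(a_u,v)c_{(u,a_u),a(u,v)}$ with $(u,v)$ a helper, the companion $c_{(u,a_u),a(u,v)}$ either is downloaded or lies in $\Lambda_{b-1}$. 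In both cases it is known by the inner induction on $b$ (Lemma \ref{partition-I}, Corollary \ref{partition-I-2}), so it can be moved to the known side.
\item When both $(u,a_u)$ and $(u,v)$ are unknown, solving row $a$ yields the sum $x_{(u,v),a}+f(a_u,v)x_{(u,a_u),a(u,v)}$. Pairing this with row $a(u,v)$, which again lies in the slice, gives a $2\times2$ system as in \eqref{eq-1-1}. That system is invertible because $\tau\neq0,1$.
\end{itemize}

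\textbf{Main obstacle.} The step I expect to be hardest is exactly this pairing inside the slice. One must check that the inner induction on $b$ is compatible with the restriction to $A(\rho,j)$, and that the unknowns $c_{i_j,a(\rho,w)}$, which sit outside the slice, never appear in another slice's Type-II pairing in a way that breaks the bookkeeping. They only enter through the $\gamma$-columns, so I expect this to go through.

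Once the claim holds for all $s$, node $i_j$ holds $c_{i_{j'},a}$ for all $j'$ and $a\in A(\rho,j)$, and also $c_{i_j,a(\rho,w)}$. This is the stated set.
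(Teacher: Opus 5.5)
Your proposal is correct and takes essentially the same approach as the paper's proof in Appendix~\ref{appendix-2}. That proof also inducts on the layer $s$ inside the slice $A(\rho,j)$, decodes the $[n+d-k,d]$ GRS row system whose $\gamma$-coordinates are exactly $c_{i_j,a(\rho,w)}$ once lower-layer terms are subtracted, and un-stars non-helper symbols by pairing rows $a$ and $a(u,v)$ using $\tau\neq 1$. Your upward induction on $b$ with $I$ taken as the non-helper set coincides with the paper's downward induction on $q$ with $I=\mathcal{R}_{i_0}\cap[\delta\lfloor\frac{n}{\delta}\rfloor]$, because $w_I(a)+w_{I'}(a)=\lfloor\frac{n}{\delta}\rfloor$ for the complementary set $I'=[\delta\lfloor\frac{n}{\delta}\rfloor]\setminus I$.
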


\begin{proof}
The proof is given in Appendix \ref{appendix-2}.
\end{proof}
According to Lemma \ref{thm2-lem}, the proof of Theorem \ref{thm2-repair-h} is straightforward.

In this section, we give a construction of optimal-access cooperative MSR codes from the perspective of designing its parity-check matrix.
In the following sections, from another perspective of transformation, we give a unified framework for building optimal-access cooperative MSR codes.
That is, we present a generic transformation converting an arbitrary $[n+d-k,d]$ MDS scalar code to an optimal-access cooperative MSR code with $2\leq h\leq n-k$ and $k\leq d\leq n-h$.
The generic transformation is composited from two basic transformations $\mathcal{T}_1$ and $\mathcal{T} _2$. Both $\mathcal{T}_1$ and $\mathcal{T} _2$ can convert an MDS scalar/array code to a new MDS array code with cooperative repair of specific $h\geq2$ node erasures.
More specifically,
given any $h$ nodes $\mathcal{H}\subseteq[n]$ with $|\mathcal{H}|=h$,
transformation $\mathcal{T}_1$ can convert an $(n+d-k,d)$ MDS scalar/array code to a new $(n+d-k,d)$ MDS array code with cooperative repair of the $h$ nodes in $\mathcal{H}$.
Given any $d-k+h$ nodes $\mathcal{F}\subseteq[n]$ with $|\mathcal{F}|=d-k+h$,
transformation $\mathcal{T}_2$ can convert an $(n,k)$ MDS scalar/array code to a new $(n,k)$ MDS array code with cooperative repair of any $h$ nodes in $\mathcal{F}$.
Indeed, the two basic transformations $\mathcal{T}_1$ and $\mathcal{T} _2$ can be seen as generalized from the two MDS array code building blocks $\mathcal{C}_{\mathrm{I}}$ and $\mathcal{C}_{\mathrm{II}}$ in Section \ref{sec0-building-block}, respectively.
Hence they serve as building blocks for establishing the generic transformation for constructing cooperative MSR codes.

Next, we present the basic transformations $\mathcal{T}_1$, $\mathcal{T}_2$ and the generic transformation $\mathcal{T}$ in Section \ref{transform-1}, Section \ref{transform-2}, and Section \ref{generic-transform}.
We also show that some previous works, such as \cite{Ye2016sub-,JLi-2018, Zhang2020} are included as special cases of our transformation construction.
This will be explained later in Subsection \ref{comparison}.

\section{The basic transformation $\mathcal{T}_1$}\label{transform-1}
In the following, let $2\leq h\leq n-k$ and $k\leq d\leq n-h$.
Recall the linear cooperative repair model introduced in Section \ref{introduction}, if the repair matrices $S_{i,j}^{(\mathcal{H},\mathcal{R})}$'s w.r.t. the failed node set $\mathcal{H}$ and helper node sets $\mathcal{R}$ are independent of the helper node choice $j$ and $\mathcal{R}$, then we call the repair strategy to be {\it helpers-independent}.
In this work, all the constructions are linear and have helpers-independent repair strategy.
Then for convenience, we will write $S_{i,j}^{(\mathcal{H},\mathcal{R})}=S_{i,\mathcal{H}}$ in the following of this paper.
It is worth noting that almost all of the previous constructions of cooperative MSR codes have helpers-independent repair strategy.

\subsection{The transformation $\mathcal{T}_1$}

Let $F$ be a finite field with $|F|\geq n+d-k$, and $\mathcal{C}_0$ be an $(n+d-k,d,\ell_0\geq1)$ MDS scalar/array code over $F$.
Then for any given $\mathcal{H}\subseteq[n]$ with $|\mathcal{H}|=h$,
the transformation $\mathcal{T}_1$ can convert $\mathcal{C}_0$ to a new $(n+d-k,d,\ell=(d-k+h)\ell_0)$ MDS array code $\mathcal{C}_{\mathcal{T}_1}$ with optimal cooperative repair of $h$ nodes in $\mathcal{H}$ using $2d-k$ helper nodes.
We call the $h$ nodes in $\mathcal{H}$ as goal nodes and suppose $\mathcal{H}=\{1,2,\ldots,h\}$ without loss of generality.

\noindent
\rule{\linewidth}{1pt}
\vspace{-0.3em}
\noindent
\textbf{The transformation $\mathcal{T}_1$}

\vspace{-0.3em}
\noindent
\rule{\linewidth}{0.5pt}

Let $F$ be a finite field with $|F|\geq n+d-k$, and $\mathcal{C}_0$ be an $(n+d-k,d,\ell_0\geq1)$ MDS scalar/array code over $F$.

\begin{itemize}
\item \textbf{Step 1. Generate $d-k$ instance codewords of $\mathcal{C}_0$.} \\
Generate $d-k$ instance codewords of $\mathcal{C}_0$, denoted as $(\bm{f}_{1}^{(i)}, \bm{f}_{2}^{(i)}, \ldots, \bm{f}_{n+d-k}^{(i)})\in\mathcal{C}_0$ for $i\in[h+1, d-k+h]$.

\item \textbf{Step 2. Generate $h$ instance codewords of $\mathcal{C}_0$.} \\
Generate $h$ instance codewords of $\mathcal{C}_0$, denoted as $(\bm{f}_{1}^{(j)}, \bm{f}_{2}^{(j)}, \ldots, \bm{f}_{n}^{(j)}, \bm{f}_{n+1}^{(j)}+\bm{f}_j^{(h+1)},\ldots, \bm{f}_{n+d-k}^{(j)}+\bm{f}_j^{(d-k+h)})\in\mathcal{C}_0$ for $j\in[h]$,
where for each $j\in[h]$ and $i\in[h+1, d-k+h]$, the vector symbols $\bm{f}_j^{(i)}$'s have been generated at the $j$-th coordinate of the $i$-th instance codeword of $\mathcal{C}_0$ generated in Step 1.

\item \textbf{Step 3. Construct a new codeword array from the $d-k+h$ instance codewords of $\mathcal{C}_0$.} \\
Construct a new codeword array using  the $d-k+h$ instance codewords of $\mathcal{C}_0$, which is illustrated in Table \ref{tab2}.
As illustrated in Table \ref{tab2}, each node $j\in[n+d-k]$ stores $d-k+h$ vector symbols $(\bm{f}_{j}^{(1)}, \bm{f}_{j}^{(2)}, \ldots, \bm{f}_{j}^{(d-k+h)})$ with a total length of $\ell=(d-k+h)\ell_0$.
Then, all the new codeword arrays form a new code $\mathcal{C}_{\mathcal{T}_1}$.

\end{itemize}

\vspace{-0.3em}
\noindent
\rule{1\linewidth}{1pt}

 \begin{table}[ht]
 \renewcommand{\arraystretch}{2.2}
	\centering
	\begin{threeparttable}
\caption{Stored data of $n$ storage nodes in the new codeword array of $\mathcal{C}_{\mathcal{T}_1}$}
\label{tab2}
\setlength{\tabcolsep}{1mm}{
\begin{tabular}{c|c|c|c|c|c|c|c|c|c}
\hline
\diagbox{Instances}{Nodes} &\textbf{Goal Node 1} &\textbf{Goal Node 2} & \bm{$\cdots$} &\bf{Goal Node \bm{$h$}} & Node $h+1$ & $\cdots$ & Node $n$ & $\cdots$ & Node $n+d-k$
\\   \hline
1 & \cellcolor{gray!25}$\bm{f}_1^{(1)}$ & \cellcolor{gray!25}$\bm{f}_2^{(1)}$ & \cellcolor{gray!25}$\cdots$ & \cellcolor{gray!25}$\bm{f}_h^{(1)}$ & \cellcolor{gray!25}$\bm{f}_{h+1}^{(1)}$ & \cellcolor{gray!25}$\cdots$ & \cellcolor{gray!25}$\bm{f}_{n}^{(1)}$ & \cellcolor{gray!25}$\cdots$  & \cellcolor{gray!25}$\bm{f}_{n+d-k}^{(1)}$
\\    \hline
2 & $\bm{f}_1^{(2)}$ & $\bm{f}_2^{(2)}$ & $\cdots$ & $\bm{f}_h^{(2)}$ & $\bm{f}_{h+1}^{(2)}$ & $\cdots$ & $\bm{f}_{n}^{(2)}$  & $\cdots$  & $\bm{f}_{n+d-k}^{(2)}$
\\  \hline
$\vdots$ & $\vdots$ & $\vdots$ & $\vdots$ & $\vdots$ & $\vdots$ & $\vdots$ & $\vdots$ & $\vdots$  &  $\vdots$
\\  \hline
$h$ & $\bm{f}_1^{(h)}$ & $\bm{f}_2^{(h)}$ & $\cdots$ & $\bm{f}_h^{(h)}$ & $\bm{f}_{h+1}^{(h)}$ & $\cdots$ & $\bm{f}_{n}^{(h)}$ & $\cdots$  & $\bm{f}_{n+d-k}^{(h)}$
\\  \hline
$h+1$ & \cellcolor{gray!25}$\bm{f}_1^{(h+1)}$ & $\bm{f}_2^{(h+1)}$ & $\cdots$ & $\bm{f}_h^{(h+1)}$ & $\bm{f}_{h+1}^{(h+1)}$ & $\cdots$ & $\bm{f}_{n}^{(h+1)}$  & $\cdots$  & $\bm{f}_{n+d-k}^{(h+1)}$
\\  \hline
$\vdots$ & \cellcolor{gray!25}$\vdots$ & $\vdots$ & $\vdots$ & $\vdots$ & $\vdots$ & $\vdots$ & $\vdots$    & $\vdots$  & $\vdots$
\\  \hline
$d-k+h$ & \cellcolor{gray!25}$\bm{f}_1^{(d-k+h)}$ & $\bm{f}_2^{(d-k+h)}$ & $\cdots$ & $\bm{f}_h^{(d-k+h)}$ & $\bm{f}_{h+1}^{(d-k+h)}$ & $\cdots$ & $\bm{f}_{n}^{(d-k+h)}$   & $\cdots$  & $\bm{f}_{n+d-k}^{(d-k+h)}$
\\  \hline
\end{tabular}}
\begin{tablenotes}
\item[*] Note that the stored data in each node is exhibited as a column array of $d-k+h$ vector symbols each of length $\ell_0$. In the context for simplicity, we also write it as a long row of length $(d-k+h)\ell_0$ if there is no ambiguity.
\end{tablenotes}
\end{threeparttable}
\end{table}

\begin{remark}
Generally, if the $h$ goal nodes $\mathcal{H}=\{i_1,i_2,\ldots,i_h\}\subseteq[n]$, then in Step 2,
the $h$ instance codewords of $\mathcal{C}_0$ are generated as $(\bm{f}_{1}^{(j)}, \bm{f}_{2}^{(j)}, \ldots, \bm{f}_{n}^{(j)}, \bm{f}_{n+1}^{(j)}+\bm{f}_{i_j}^{(h+1)}, \ldots, \bm{f}_{n+d-k}^{(j)}+\bm{f}_{i_j}^{(d-k+h)})\in\mathcal{C}_0$ for $j\in[h]$,
where for each $j\in[h]$ and $s\in[h+1, d-k+h]$, the vector symbols $\bm{f}_{i_j}^{(s)}$'s have been generated at the $i_j$-th coordinate of the $s$-th instance codeword of $\mathcal{C}_0$ in Step 1.
\end{remark}
In the following subsections, we illustrate that the code $\mathcal{C}_{\mathcal{T}_1}$ obtained from transformation $\mathcal{T}_1$  has MDS property and optimal-access property for repairing goal nodes $\mathcal{H}=\{1,2,\ldots,h\}$.

\subsection{MDS property}
\begin{theorem}\label{t1-mds}
The code $\mathcal{C}_{\mathcal{T}_1}$ obtained from transformation $\mathcal{T}_1$ is an $(n+d-k,d,\ell=(d-k+h)\ell_0)$ MDS array code.
\end{theorem}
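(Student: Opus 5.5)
To establish the MDS property of $\mathcal{C}_{\mathcal{T}_1}$, I will show that any $d$ of the $n+d-k$ nodes determine the whole codeword array. Equivalently, if the data of some $d$ nodes is zero, then everything is zero. Linearity is immediate: each row of Table \ref{tab2} is obtained by linear constraints from codewords of $\mathcal{C}_0$, so $\mathcal{C}_{\mathcal{T}_1}$ is a linear subspace of $(F^{\ell})^{n+d-k}$ with $\ell=(d-k+h)\ell_0$. I will also check the dimension. The free data consists of $d-k$ instance codewords of $\mathcal{C}_0$ in Step 1 and $h$ codewords of $\mathcal{C}_0$ in Step 2 (each shifted by known vectors), giving $(d-k+h)d\ell_0=d\ell$ symbols over $F$, which is what an $(n+d-k,d,\ell)$ MDS array code requires. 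So it suffices to prove that any $d$ nodes recover everything.

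Fix a set $\mathcal{D}\subseteq[n+d-k]$ of $d$ nodes whose contents are known. I will decode in two rounds, following the order of generation. \textbf{First round (rows $h+1,\ldots,d-k+h$):} for each $i\in[h+1,d-k+h]$, row $i$ of Table \ref{tab2} is exactly a codeword $(\bm{f}_1^{(i)},\ldots,\bm{f}_{n+d-k}^{(i)})\in\mathcal{C}_0$. Its restriction to $\mathcal{D}$ is known, so by the MDS property of $\mathcal{C}_0$ the whole row is recovered. In particular, all symbols $\bm{f}_j^{(i)}$ with $j\in[h]$, $i\in[h+1,d-k+h]$ become known.

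\textbf{Second round (rows $1,\ldots,h$):} for $j\in[h]$, the stored row $j$ is $(\bm{f}_1^{(j)},\ldots,\bm{f}_{n+d-k}^{(j)})$. The vector $(\bm{f}_1^{(j)},\ldots,\bm{f}_n^{(j)},\bm{f}_{n+1}^{(j)}+\bm{f}_j^{(h+1)},\ldots,\bm{f}_{n+d-k}^{(j)}+\bm{f}_j^{(d-k+h)})$ is a codeword of $\mathcal{C}_0$. After the first round, the shifts $\bm{f}_j^{(h+1)},\ldots,\bm{f}_j^{(d-k+h)}$ are known. Adding them to the known entries of row $j$ in positions $\mathcal{D}\cap[n+1,n+d-k]$ yields $d$ known coordinates of this $\mathcal{C}_0$-codeword. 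The MDS property of $\mathcal{C}_0$ then recovers it entirely, and subtracting the shifts gives all of row $j$. Since all rows are recovered, every node is recovered. For the general goal set (the Remark), the same argument applies with $\bm{f}_{i_j}^{(s)}$ in place of $\bm{f}_j^{(s)}$.

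The only subtle point is the ordering dependence: Step 2 references symbols produced in Step 1. I must make sure no circularity arises, that is, that the shift vectors are determined purely by the first-round rows. This holds because the shifts involve only instances $h+1,\ldots,d-k+h$, which are plain $\mathcal{C}_0$-codewords. I therefore expect no real obstacle beyond stating this ordering carefully and verifying the dimension count, or equivalently noting that the map from free data to codewords is injective and the decoding map is its inverse.
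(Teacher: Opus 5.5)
Your proposal is correct and follows the paper's proof essentially step for step. It makes the same count of $(d-k)d\ell_0+hd\ell_0=d\ell$ information symbols. It then decodes rows $h+1,\ldots,d-k+h$ first as plain $\mathcal{C}_0$-codewords, and uses the recovered shifts $\bm{f}_j^{(s)}$ to decode the shifted rows $1,\ldots,h$ from any $d$ nodes.
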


\begin{proof}
We firstly claim that there are in total $d\ell=(d-k+h)d\ell_0$ information symbols in $\mathcal{C}_{\mathcal{T}_1}$.
According to Step 1, since $(\bm{f}_{1}^{(i)}, \bm{f}_{2}^{(i)}, \ldots, \bm{f}_{n+d-k}^{(i)})$ for $i\in[h+1, d-k+h]$ are $d-k$ codewords in $\mathcal{C}_0$. Then there are in total $(d-k)d\ell_0$ information symbols.
Moreover, since $(\bm{f}_{1}^{(j)}, \bm{f}_{2}^{(j)}, \ldots, \bm{f}_{n}^{(j)}, \bm{f}_{n+1}^{(j)}+\bm{f}_j^{(h+1)},\ldots, \bm{f}_{n+d-k}^{(j)}+\bm{f}_j^{(d-k+h)})$ for $j\in[h]$ are $h$ codewords in $\mathcal{C}_0$, where $\bm{f}_j^{(i)}$ for $j\in[h]$, $i\in[h+1, d-k+h]$ have been known, then there are additionally $hd\ell_0$ information symbols.
Note different choice of $d-k+h$ instances of $\mathcal{C}_0$ gives different codewords of $\mathcal{C}_{\mathcal{T}_1}$.
Thus in $\mathcal{C}_{\mathcal{T}_1}$, there are in total $(d-k)d\ell_0+hd\ell_0=(d-k+h)d\ell_0$ information symbols.

Now we prove the MDS property of $\mathcal{C}_{\mathcal{T}_1}$, that is,  any $d$ nodes in $\mathcal{C}_{\mathcal{T}_1}$ can reconstruct the whole codeword. Suppose the $d$ nodes $\{i_1,i_2,\ldots,i_d\}\subseteq[n+d-k]$ are connected by the data center.
Since for each $s\in[h+1,d-k+h]$, the vector $(\bm{f}_{1}^{(s)}, \bm{f}_{2}^{(s)}, \ldots, \bm{f}_{n+d-k}^{(s)})$ is a codeword of the $(n+d-k,d,\ell_0)$ MDS array code $\mathcal{C}_0$, then $\{\bm{f}_{i_1}^{(s)},\ldots,\bm{f}_{i_d}^{(s)}\}$ are able to recover $\bm{f}_{i}^{(s)}$ for all $i\in[n+d-k]$.
Moreover, for each $a\in[h]$, the vector $(\bm{f}_{1}^{(a)}, \bm{f}_{2}^{(a)}, \ldots, \bm{f}_{n}^{(a)}, \bm{f}_{n+1}^{(a)}+\bm{f}_a^{(h+1)}, \ldots, \bm{f}_{n+d-k}^{(a)}+\bm{f}_a^{(d-k+h)})$ forms an $(n+d-k,d,\ell_0)$ MDS array codeword, where $\bm{f}_a^{(s)}$, $a\in[h]$, $s\in[h+1,d-k+h]$ have been recovered.
Then for each $a\in[h]$, the $d$ vector symbols $\{\bm{f}_{i_1}^{(a)},\ldots,\bm{f}_{i_d}^{(a)}\}$ along with the recovered data are able to compute $\bm{f}_{i}^{(a)}$ for all $i\in[n+d-k]$.
Thus the whole codeword can be reconstructed and $\mathcal{C}_{\mathcal{T}_1}$ is an $(n+d-k,d,\ell=(d-k+h)\ell_0)$ MDS array code.
\end{proof}

\subsection{Optimal-access property of $\mathcal{H}$}\label{sub-t1-repair}
In the code $\mathcal{C}_{\mathcal{T}_1}$, we call a set of $2d-k$ helper nodes $D\subseteq[n+d-k]$ to be {\it constrained helper nodes}, if the last $d-k$ nodes, i.e., node $n+1,\ldots, n+d-k$ are contained in $D$. That is, $D=D'\cup[n+1,n+d-k]$ where $D'$ is a $d$-subset of $[n]$.
The optimal-access property of $\mathcal{H}$ in $\mathcal{C}_{\mathcal{T}_1}$ is ensured by using any $2d-k$ constrained helper nodes.
Next, in Theorem \ref{t1-repair1} we give the optimal-access  property of the $h$ goal nodes in $\mathcal{H}$.
We also characterize the corresponding repair matrices for repairing $\mathcal{H}$ and give an inherent property in Corollary \ref{t1-repair1-cor}.
Based on Corollary \ref{t1-repair1-cor}, in Definition \ref{def-property1} and Theorem \ref{t1-repair2}, we summarize the needed property such that the repair property of any other $h$ nodes $\mathcal{H}'\subseteq[n]$ can be retained after applying transformation $\mathcal{T}_1$.

\begin{theorem}\label{t1-repair1}
The code $\mathcal{C}_{\mathcal{T}_1}$ can cooperatively repair the $h$ goal nodes $\mathcal{H}=\{1,2,\ldots,h\}$ with optimal access using any $2d-k$ constrained helper nodes.
 \end{theorem}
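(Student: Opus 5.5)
We mimic the repair of $\mathcal{C}_{\mathrm{I}}$ in Section~\ref{sec0-building-block}, with the role of the GRS code there played by the MDS code $\mathcal{C}_0$. Fix constrained helper sets $\mathcal{R}_j=D_j'\cup[n+1,n+d-k]$ for $j\in[h]$, where $D_j'\subseteq[n]\setminus\mathcal{H}$ has size $d$. The repair matrix $S_{j,\mathcal{H}}$ for goal node $j$ will select the $j$-th instance: node $j$ downloads $\bm{f}_p^{(j)}$ from every $p\in\mathcal{R}_j$. This accesses exactly $\ell_0=\ell/(d-k+h)$ symbols per helper, which is the access bound in (\ref{bw-bound}) per helper per failed node.

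\textbf{Download phase.} Node $j$ receives $\bm{f}_p^{(j)}$ for the $d$ nodes $p\in D_j'$. It also receives $\bm{f}_p^{(j)}$ for $p\in[n+1,n+d-k]$, which by Step~2 are the stored contents of the $j$-th instance codeword minus the corrections $\bm{f}_j^{(i)}$. The point is that the $j$-th instance codeword of $\mathcal{C}_0$ is
\[
(\bm{f}_1^{(j)},\ldots,\bm{f}_n^{(j)},\bm{f}_{n+1}^{(j)}+\bm{f}_j^{(h+1)},\ldots,\bm{f}_{n+d-k}^{(j)}+\bm{f}_j^{(d-k+h)}).
\]
Its coordinates indexed by $D_j'$ are known, and these are $d$ coordinates of an $(n+d-k,d)$ MDS code. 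By the MDS property of $\mathcal{C}_0$, node $j$ therefore recovers the whole instance codeword. This yields $\bm{f}_i^{(j)}$ for all $i\in[h]$, and, subtracting the downloaded $\bm{f}_{n+s-h}^{(j)}$ from the last $d-k$ coordinates, it yields $\bm{f}_j^{(s)}$ for all $s\in[h+1,d-k+h]$. The coordinates $\bm{f}_j^{(j)}$ and the lower block $\bm{f}_j^{(h+1)},\ldots,\bm{f}_j^{(d-k+h)}$ of node $j$ are thus recovered, together with $\bm{f}_{j'}^{(j)}$ for every other goal node $j'$.

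\textbf{Collaboration phase.} Each goal node $j'\ne j$ sends $\bm{f}_j^{(j')}$ to node $j$; it already holds this vector after its own download phase. So $\beta_2=\ell_0$, and node $j$ now has $\bm{f}_j^{(1)},\ldots,\bm{f}_j^{(h)}$ in addition to the lower block, i.e., the whole content of node $j$.

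\textbf{Counting.} The repair bandwidth is $h(d\ell_0+(h-1)\ell_0)=h(d+h-1)\ell/(d-k+h)$ and the accessed data is $hd\ell_0=hd\ell/(d-k+h)$. Both meet (\ref{bw-bound}) with equality, where (\ref{bw-bound}) is evaluated for the code $\mathcal{C}_{\mathcal{T}_1}$ with dimension $d$ and helper number $2d-k$, whose difference $(2d-k)-d=d-k$ is the same as in the paper's parameters.

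\textbf{Main obstacle.} The delicate step is the download phase. We must check that the downloaded last $d-k$ symbols are the plain $\bm{f}^{(j)}_{n+\cdot}$ stored at nodes $n+1,\ldots,n+d-k$, so that node $j$ can separate them from $\bm{f}_j^{(s)}$ after reconstructing the instance codeword. We must also check that $D_j'$ avoids $\mathcal{H}$, so that exactly $d$ genuine coordinates are available for decoding. Both follow directly from the requirement $D_j'\subseteq[n]\setminus\mathcal{H}$ and from the layout in Table~\ref{tab2}.
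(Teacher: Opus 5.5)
Your argument is correct and is essentially the paper's proof. Node $j$ downloads the $j$-th instance $\bm{f}_p^{(j)}$ from all $2d-k$ constrained helpers and decodes the $j$-th instance codeword of $\mathcal{C}_0$ from the $d$ coordinates in $D_j'$. It then subtracts the plain downloads from nodes $n+1,\ldots,n+d-k$ to obtain $\bm{f}_j^{(h+1)},\ldots,\bm{f}_j^{(d-k+h)}$, and finally receives $\bm{f}_j^{(j')}$ from each other goal node.

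One correction to your counting. Node $j$ downloads $\ell_0$ symbols from all $2d-k$ helpers, so the true totals are $h(2d-k+h-1)\ell_0$ for bandwidth and $h(2d-k)\ell_0$ for access. These are exactly the bounds in \eqref{bw-bound} with $k,d$ replaced by $d,2d-k$, so optimal access does hold. Your figures $h(d+h-1)\ell_0$ and $hd\ell_0$, which are also the ones the paper writes, leave out the downloads from nodes $n+1,\ldots,n+d-k$. They are the bounds for the original parameters $(n,k,d)$, which apply only after shortening as in Lemma \ref{shorten-code}. Your claim that they meet the bound ``with helper number $2d-k$'' is therefore false whenever $d>k$.
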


\begin{proof}
Suppose the $h$ goal nodes $\{1,2,\ldots,h\}$ are failed. For each $i\in[h]$, let $\mathcal{R}_i\subseteq[n+d-k]\setminus[h]$ with $|\mathcal{R}_i|=2d-k$ and $[n+1,n+d-k]\subseteq\mathcal{R}_i$ be the set of $2d-k$ constrained helper nodes connected by node $i$.
In the download phase, each node $i\in[h]$ downloads the $i$-th vector symbol $\bm{f}_{j}^{(i)}$ from the helper node $j\in\mathcal{R}_i$.
Observe that for each $i\in[h]$, it has
$(\bm{f}_{1}^{(i)},\ldots,\bm{f}_{n}^{(i)},\bm{f}_{n+1}^{(i)}+\bm{f}_{i}^{(h+1)},\ldots,\bm{f}_{n+d-k}^{(i)}+\bm{f}_{i}^{(d-k+h)})$ forms a codeword of the $(n+d-k,d,\ell_0)$ MDS array code $\mathcal{C}_0$.
Then each node $i\in[h]$ using the $d$ downloaded vector symbols $\{\bm{f}_{j}^{(i)}: j\in\mathcal{R}_i\setminus[n+1,n+d-k]\}$ can recover the whole codeword, hence the data
\begin{equation}\label{t1-repair1-eq1}
\{\bm{f}_{j}^{(i)}: j\in[n+d-k]\}\cup\{\bm{f}_{i}^{(a)}: a\in[h+1,d-k+h]\},
\end{equation}
since $\bm{f}_{j}^{(i)}, j\in[n+1,n+d-k]$ are downloaded data.

In the collaboration phase, for each $i\in[h]$ and $i'\in[h]\setminus\{i\}$, node $i'$ transmits the vector symbol $\bm{f}_{i}^{(i')}$ to node $i$.
Thus for $i\in[h]$, the erased data $\{\bm{f}_{i}^{(a)}: a\in[d-k+h]\}$ at node $i$ can be recovered.
The total repair bandwidth is $(d+h-1)h\ell_0=\frac{(d+h-1)h\ell}{d-k+h}$ symbols in $F$,
and the amount of accessed data at the helper nodes is $dh\ell_0=\frac{dh\ell}{d-k+h}$ symbols in $F$, both achieving the cut-set bound in \eqref{bw-bound}.
Thus it satisfies optimal-access property.
\end{proof}

According to the proof of Theorem \ref{t1-repair1}, one can find the following property:

\begin{corollary}\label{t1-repair1-cor}
In the repair of $\mathcal{H}=[h]$, let $S_{i,\mathcal{H}}$ be the corresponding repair matrix for repairing node $i\in\mathcal{H}$, let $\mathcal{R}_i\subseteq[n+d-k]\setminus\mathcal{H}$ with $|\mathcal{R}_i|=2d-k$ satisfying $[n+1,n+d-k]\subseteq\mathcal{R}_i$ be the set of constrained helper nodes connected by node $i$.
Then for $i\in\mathcal{H}$, it has that
 $\{S_{i,\mathcal{H}}\bm{c}_j^{\top}: j\in \mathcal{R}_i\}$ are able to recover $\{S_{i,\mathcal{H}}\bm{c}_j^{\top}: j\in[n+d-k]\}$, where $(\bm{c}_1,\ldots,\bm{c}_{n+d-k})$ represents a codeword of $\mathcal{C}_{\mathcal{T}_1}$.
 \end{corollary}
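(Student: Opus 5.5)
The plan is to read the claim off the proof of Theorem \ref{t1-repair1}, after making the repair matrix explicit. Write each node's content as the column array $\bm{c}_j=(\bm{f}_j^{(1)},\ldots,\bm{f}_j^{(d-k+h)})$ from Table \ref{tab2}. In the download phase, node $i\in\mathcal{H}$ downloads only the $i$-th vector symbol from each helper. Hence the repair matrix is the $\ell_0\times\ell$ block-selection matrix
$$S_{i,\mathcal{H}}=(\bm{0},\ldots,\bm{0},I_{\ell_0},\bm{0},\ldots,\bm{0}),$$
with $I_{\ell_0}$ in the $i$-th block, so that $S_{i,\mathcal{H}}\bm{c}_j^{\top}=(\bm{f}_j^{(i)})^{\top}$ for every $j\in[n+d-k]$. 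This matrix depends neither on $j$ nor on $\mathcal{R}_i$, which is consistent with the helpers-independent convention. The statement therefore reduces to showing that $\{\bm{f}_j^{(i)}: j\in\mathcal{R}_i\}$ determines $\{\bm{f}_j^{(i)}: j\in[n+d-k]\}$.

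To prove this, I use Step 2 of $\mathcal{T}_1$. The vector
$$(\bm{f}_1^{(i)},\ldots,\bm{f}_n^{(i)},\bm{f}_{n+1}^{(i)}+\bm{f}_i^{(h+1)},\ldots,\bm{f}_{n+d-k}^{(i)}+\bm{f}_i^{(d-k+h)})$$
is a codeword of the $(n+d-k,d,\ell_0)$ MDS array code $\mathcal{C}_0$. Write $\mathcal{R}_i=D'\cup[n+1,n+d-k]$ with $D'\subseteq[n]\setminus\mathcal{H}$ and $|D'|=d$. The coordinates of this codeword indexed by $D'$ are exactly the downloaded symbols $\bm{f}_j^{(i)}$, $j\in D'$. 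By the MDS property of $\mathcal{C}_0$ (any $d$ coordinates determine the codeword), these $d$ symbols recover the entire codeword. In particular they give $\bm{f}_j^{(i)}$ for all $j\in[n]$, including the erased coordinates $j\in\mathcal{H}$.

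The last $d-k$ coordinates need one more step, since the recovered codeword only gives the sums $\bm{f}_j^{(i)}+\bm{f}_i^{(j-n+h)}$ for $j\in[n+1,n+d-k]$. The symbols $\bm{f}_j^{(i)}$ for $j\in[n+1,n+d-k]$ themselves were downloaded, because the constrained helper set contains $[n+1,n+d-k]$. So every $S_{i,\mathcal{H}}\bm{c}_j^{\top}$ is available. This is exactly the data set \eqref{t1-repair1-eq1} restricted to its first part.

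There is no real obstacle; the argument is an immediate consequence of the proof of Theorem \ref{t1-repair1}. The only point needing care is the coordinates $n+1,\ldots,n+d-k$, where the MDS reconstruction yields sums rather than the symbols $\bm{f}_j^{(i)}$. This is precisely why the constrained-helper assumption $[n+1,n+d-k]\subseteq\mathcal{R}_i$ is needed: with it those symbols are downloaded directly instead of decoded.
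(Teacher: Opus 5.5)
Your proposal is correct and takes the same route as the paper. The paper derives this corollary directly from the proof of Theorem \ref{t1-repair1}: node $i$ downloads $\bm{f}_j^{(i)}$ from each helper, decodes the Step-2 codeword of $\mathcal{C}_0$ from the $d$ helpers in $[n]$, and reads $\bm{f}_j^{(i)}$ for $j\in[n+1,n+d-k]$ directly off the downloaded data. Your explicit block-selection form of $S_{i,\mathcal{H}}$ is what the paper leaves implicit here; it writes the analogous matrix out in Corollary \ref{t2-repair1-cor}.
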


Actually, the property in Corollary \ref{t1-repair1-cor} is summarized from the cooperative repair process of the goal nodes $\mathcal{H}$ as illustrated in Theorem \ref{t1-repair1}.
In the following, we take out the two properties in  Theorem \ref{t1-repair1} and Corollary \ref{t1-repair1-cor} and give a definition for the case of repairing a general $h$-subset $\mathcal{\bar{H}}\subseteq[n]$, which will be used later.

\begin{definition}\label{def-property1}
Let $\bar{\mathcal{C}}$ be an $(n+d-k,d)$ MDS scalar/array code, and let $\bar{\mathcal{H}}\subseteq[n]$ with $|\bar{\mathcal{H}}|=h$. Define (P1) and (P2) to be the two properties of $\bar{\mathcal{C}}$ if $\bar{\mathcal{C}}$ has. That is,
\begin{itemize}
\item[(P1)] $\bar{\mathcal{C}}$ enables optimal-access  cooperative repair of the $h$ nodes $\bar{\mathcal{H}}$ using  any $2d-k$ constrained helper nodes.
\item[(P2)]
In the repair of $\bar{\mathcal{H}}$, denote $S_{i,\bar{\mathcal{H}}}$ to be the corresponding repair matrix for repairing node $i\in\bar{\mathcal{H}}$, let $\mathcal{R}_i\subseteq[n]\setminus\bar{\mathcal{H}}$ with $|\mathcal{R}_i|=2d-k$ satisfying $[n+1,n+d-k]\subseteq\mathcal{R}_i$ be the set of constrained helper nodes connected by node $i$.
Then for $i\in\bar{\mathcal{H}}$, it has that
 $\{S_{i,\bar{\mathcal{H}}}\bm{c}_j^{\top}: j\in \mathcal{R}_i\}$ are able to recover $\{S_{i,\bar{\mathcal{H}}}\bm{c}_j^{\top}: j\in[n+d-k]\}$, where $(\bm{c}_1,\ldots,\bm{c}_{n+d-k})$ represents a codeword of $\bar{\mathcal{C}}$.
\end{itemize}
\end{definition}

Next, we are able to give Theorem \ref{t1-repair2}.

\begin{theorem}\label{t1-repair2}
Let $\mathcal{H}'\subseteq[n]$ with $|\mathcal{H}'|=h$ and $\mathcal{H}'\neq\mathcal{H}$, where $\mathcal{H}$ is the goal node set. If $\mathcal{C}_0$ satisfies the two properties (P1) and (P2) in Definition \ref{def-property1} for repairing $\mathcal{H}'$, then $\mathcal{C}_{\mathcal{T}_1}$ retains the two properties (P1) and (P2) for repairing $\mathcal{H}'$.
\end{theorem}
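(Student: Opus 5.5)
The plan is to reuse the repair scheme of $\mathcal{C}_0$ for $\mathcal{H}'$ instance by instance. Let $S'_{i,\mathcal{H}'}$, $i\in\mathcal{H}'$, be the repair matrices of $\mathcal{C}_0$ for $\mathcal{H}'$, of size $\frac{\ell_0}{d-k+h}\times\ell_0$. For $\mathcal{C}_{\mathcal{T}_1}$ we define $S_{i,\mathcal{H}'}=I_{d-k+h}\otimes S'_{i,\mathcal{H}'}$. That is, node $i$ applies $S'_{i,\mathcal{H}'}$ to each of the $d-k+h$ vector symbols $\bm f_j^{(1)},\ldots,\bm f_j^{(d-k+h)}$ stored at a helper $j$. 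The number of rows is then $(d-k+h)\ell_0/(d-k+h)=\ell/(d-k+h)$. Access at each helper is $d-k+h$ times the access for one instance, so it is still optimal.

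The key step is to turn every row of Table \ref{tab2} into a genuine $\mathcal{C}_0$-codeword whose repair data at nodes of $\mathcal{H}'$ are known. We handle the rows in order.
\begin{itemize}
\item \emph{Rows $s\in[h+1,d-k+h]$.} These are honest codewords of $\mathcal{C}_0$. Apply (P2) of $\mathcal{C}_0$: from $\{S'_{i,\mathcal{H}'}\bm f_j^{(s)}{}^{\top}: j\in\mathcal{R}_i\}$ node $i$ recovers $S'_{i,\mathcal{H}'}\bm f_j^{(s)}{}^{\top}$ for all $j\in[n+d-k]$.
\item \emph{Rows $a\in[h]$.} The true codeword is $(\bm f_1^{(a)},\ldots,\bm f_n^{(a)},\bm f_{n+1}^{(a)}+\bm f_a^{(h+1)},\ldots,\bm f_{n+d-k}^{(a)}+\bm f_a^{(d-k+h)})$. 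Its coordinates in $[n]$ are the stored ones. Its last $d-k$ coordinates differ from the stored ones by $\bm f_a^{(s)}$, whose projections $S'_{i,\mathcal{H}'}\bm f_a^{(s)}{}^{\top}$ were already obtained in the previous step. This uses $a\in[n]$, and it holds even if $a\in\mathcal{H}'$, because (P2) recovers projections at all coordinates. Hence node $i$ knows the projections of this true codeword on every helper in $\mathcal{R}_i$, including the constrained ones.
\item \emph{Applying (P2) to rows $a\in[h]$.} We then apply (P2) again to get all of its projections, and subtract the $\bm f_a^{(s)}$ terms back off. This yields (P2) for $\mathcal{C}_{\mathcal{T}_1}$.
\end{itemize}

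For (P1) we run, in each instance, the cooperative repair of $\mathcal{C}_0$ for $\mathcal{H}'$. We do the instances $s\in[h+1,d-k+h]$ first. For $a\in[h]$, the collaboration phase of $\mathcal{C}_0$ and the reconstruction matrices operate on the corrected (true-codeword) data. For the downloaded part these corrections are computable, as shown above. For the data exchanged between failed nodes, they are combinations of the newcomer's downloads, which the newcomer can correct itself before sending. After reconstruction, node $i$ recovers $\bm f_i^{(a)}$ directly, since $i\in\mathcal{H}'\subseteq[n]$ and its coordinate is unmodified.

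The bandwidth is $d-k+h$ copies of the optimal amount, so $\gamma=(d+h-1)h\ell/(d-k+h)$ and $\gamma_a=dh\ell/(d-k+h)$, both meeting \eqref{bw-bound}. The main obstacle is the dependency order between instances. The correction for rows $a\in[h]$ needs the row-$s$ projections at the goal nodes $a\in\mathcal{H}$, which may not be helpers. This is precisely why (P2), rather than (P1) alone, is needed, and why the instances $s\in[h+1,d-k+h]$ must be processed first.
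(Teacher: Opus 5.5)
Your proposal is correct and follows the paper's proof essentially step for step. Both use the tensor-product repair matrices $I_{d-k+h}\otimes S_{i,\mathcal{H}'}$. Both first apply (P2) to the instances $s\in[h+1,d-k+h]$, which yields the projections of $\bm{f}_a^{(s)}$ needed to correct the last $d-k$ coordinates of the instances $a\in[h]$. Both then apply (P1) and (P2) to those corrected instances; your remark that each newcomer corrects its own downloads before collaborating is a detail the paper leaves implicit.

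One small bookkeeping point: for the length-$(n+d-k)$ code with $2d-k$ constrained helpers, the cut-set values have $2d-k$ in place of $d$. Your formulas follow the paper's own convention, which effectively counts only the non-imaginary helpers. This does not affect the argument, since what matters is that the repair is $d-k+h$ copies of an optimal one.
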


\begin{proof}
Suppose $\mathcal{C}_0$ satisfies the two properties (P1) and (P2) in Definition \ref{def-property1} for repairing $\mathcal{H}'$. Based on the repair property of $\mathcal{H}'$ and the corresponding repair matrices $S_{i,\mathcal{H}'}, i\in\mathcal{H}'$ in $\mathcal{C}_0$, we firstly define the repair matrices $\tilde{S}_{i,\mathcal{H}'}, i\in\mathcal{H}'$ in $\mathcal{C}_{\mathcal{T}_1}$.
Denote $I_{d-k+h}$ to be the identity matrix with order $d-k+h$.
For $i\in\mathcal{H}'$, define
$\tilde{S}_{i,\mathcal{H}'}=I_{d-k+h}\otimes S_{i,\mathcal{H}'}$, where ``$\otimes$'' represents the tensor product.
Then $\tilde{S}_{i,\mathcal{H}'}\in F^{\frac{\ell}{d-k+h}\times\ell}$ where $\ell=(d-k+h)\ell_0$.

Now we prove $\mathcal{C}_{\mathcal{T}_1}$ retains the property (P1) for repairing $\mathcal{H}'$.
Recall the transformation $\mathcal{T}_1$ and Table \ref{tab2}, for each $j\in[n+d-k]$, node $j$ of $\mathcal{C}_{\mathcal{T}_1}$ stores
$\bm{\tilde{c}}_j=(\bm{f}_j^{(1)},\bm{f}_j^{(2)},\ldots,\bm{f}_j^{(d-k+h)})\in F^{\ell}.$
Suppose $h$ nodes $\mathcal{H}'$ in $\mathcal{C}_{\mathcal{T}_1}$ are failed, i.e., $\bm{\tilde{c}}_i=(\bm{f}_i^{(1)},\bm{f}_i^{(2)},\ldots,\bm{f}_i^{(d-k+h)})$, $i\in\mathcal{H}'$ are erased.
For each $i\in\mathcal{H}'$, let $\mathcal{R}_i\subseteq[n+d-k]\setminus\mathcal{H}'$ with $|\mathcal{R}_i|=2d-k$ satisfying $[n+1,n+d-k]\subseteq\mathcal{R}_i$ be a set of $2d-k$ constrained  helper nodes connected by node $i$.

In the download phase, for $i\in\mathcal{H}'$, node $i$ downloads $\tilde{S}_{i,\mathcal{H}'}\bm{\tilde{c}}_j^{\top}$ from each helper node $j\in\mathcal{R}_i$, where
\begin{equation}\label{t1-repair-eq1}
\tilde{S}_{i,\mathcal{H}'}\bm{\tilde{c}}_j^{\top}
=\begin{pmatrix}
S_{i,\mathcal{H}'}\bm{f}_j^{(1)\top} \\
S_{i,\mathcal{H}'}\bm{f}_j^{(2)\top} \\
\vdots \\
S_{i,\mathcal{H}'}\bm{f}_j^{(d-k+h)\top}
\end{pmatrix}.
\end{equation}
Consider each $s\in[h+1,d-k+h]$, it has $(\bm{f}_1^{(s)},\ldots,\bm{f}_{n+d-k}^{(s)})$ forms a codeword of $\mathcal{C}_0$.
Then by property (P1) of $\mathcal{C}_0$ for repairing $\mathcal{H}'$, for each $s\in[h+1,d-k+h]$, node $i\in\mathcal{H}'$ can recover the erased data $\bm{f}_i^{(s)}$ through the cooperative repair process of $\mathcal{H}'$ in $\mathcal{C}_0$.
Meanwhile, by property (P2) of $\mathcal{C}_0$, the downloaded data of node $i$ are able to recover the data
\begin{equation}\label{t1-repair-eq2}
\{S_{i,\mathcal{H}}\bm{f}_j^{(s)\top}: j\in[n+d-k], s\in[h+1,d-k+h]\}.
\end{equation}
Then, consider each $s\in[h]$, it has $(\bm{f}_1^{(s)},\ldots,\bm{f}_n^{(s)},\bm{f}_{n+1}^{(s)}+\bm{f}_{s}^{(h+1)},\ldots, \bm{f}_{n+d-k}^{(s)}+\bm{f}_{s}^{(d-k+h)})$ forms a codeword of $\mathcal{C}_0$.
Note that for each $s\in[h]$, node $i\in\mathcal{H}'$ has recovered $S_{i,\mathcal{H}}\bm{f}_{s}^{(h+1)\top}, \ldots, S_{i,\mathcal{H}}\bm{f}_{s}^{(d-k+h)\top}$ from its downloaded data according to \eqref{t1-repair-eq2}.
Then for each $s\in[h]$, node $i$ using its downloaded data can further compute
\begin{equation}\label{t1-repair-eq3}
\{S_{i,\mathcal{H}'}\bm{f}_j^{(s)\top}: j\in\mathcal{R}_i\setminus[n+1,n+d-k]\}
\cup\{
S_{i,\mathcal{H}'}(\bm{f}_{n+1}^{(s)}+\bm{f}_{s}^{(h+1)})^{\top},\ldots, S_{i,\mathcal{H}'}(\bm{f}_{n+d-k}^{(s)}+\bm{f}_{s}^{(d-k+h)})^{\top}
\},
\end{equation}
where we note that $[n+1,n+d-k]\subseteq\mathcal{R}_i$.
The computed data in \eqref{t1-repair-eq3} can be seen as the helper data of node $i$ in the $s$-th codeword of $\mathcal{C}_0$.
Then, by property (P1) of $\mathcal{C}_0$, for each $s\in[h]$, through a cooperative repair process of $\mathcal{H}'$ in $\mathcal{C}_0$, node $i$ can recover the erased data $\bm{f}_{i}^{(s)}$ for $i\in\mathcal{H}'$.

Therefore, node $i$ can recover $\{\bm{f}_{i}^{(s)}: s\in[d-k+h]\}$ for $i\in\mathcal{H}'$, and the $h$ nodes are cooperatively repaired.
Besides, by property (P2) of $\mathcal{C}_0$, for each $s\in[h]$, node $i\in\mathcal{H}'$ using the known data in \eqref{t1-repair-eq3} can recover the following data
\[
\{S_{i,\mathcal{H}'}\bm{f}_j^{(s)\top}: j\in[n]\}
\cup\{
S_{i,\mathcal{H}'}(\bm{f}_{n+1}^{(s)}+\bm{f}_{s}^{(h+1)})^{\top},\ldots, S_{i,\mathcal{H}'}(\bm{f}_{n+d-k}^{(s)}+\bm{f}_{s}^{(d-k+h)})^{\top}
\}.
\]
Thus, for $i\in\mathcal{H}'$, node $i$ using the downloaded data can recover the data
\begin{equation}\label{t1-repair-eq4}
\{S_{i,\mathcal{H}'}\bm{f}_j^{(s)\top}: j\in[n+d-k], s\in[h]\}
\end{equation}
by observing that $[n+1,n+d-k]\subseteq\mathcal{R}_i$.
Moreover, since $\tilde{S}_{i,\mathcal{H}'}=I_{d-k+h}\otimes S_{i,\mathcal{H}'}$ for $i\in\mathcal{H}'$, then $\mathcal{C}_{\mathcal{T}_1}$ retains the optimal-access property of $\mathcal{C}_0$ for repairing $\mathcal{H}'$.

In the following, we prove $\mathcal{C}_{\mathcal{T}_1}$ satisfies property (P2) for repairing $\mathcal{H}'$.
Recall the transformation $\mathcal{T}_1$ and Table \ref{tab2}, we know that for $i\in\mathcal{H}'$ and $j\in[n+d-k]$, the data $\tilde{S}_{i,\mathcal{H}'}\bm{\tilde{c}}_j^{\top}$ has the form in \eqref{t1-repair-eq1}.
According to the proof of property (P1) of $\mathcal{C}_{\mathcal{T}_1}$, for $i\in\mathcal{H}'$, node $i$ using the downloaded data $\{\tilde{S}_{i,\mathcal{H}'}\bm{\tilde{c}}_j^{\top}: j\in\mathcal{R}_i\}$ can recover the data in \eqref{t1-repair-eq2} and \eqref{t1-repair-eq4}, which are exactly the data $\{\tilde{S}_{i,\mathcal{H}'}\bm{\tilde{c}}_j^{\top}: j\in[n+d-k]\}$.

This completes the proof.
\end{proof}

\subsection{A shortened code}
It is worth noting that the $(n+d-k,d,\ell=(d-k+h)\ell_0)$ MDS array code $\mathcal{C}_{\mathcal{T}_1}$ derived from transformation $\mathcal{T}_1$ can be further shortened to obtain an $(n,k,\ell=(d-k+h)\ell_0)$ MDS array code for cooperatively repairing the $h$ goal nodes $\mathcal{H}$ by using any $d$ helper nodes, as illustrated in the following lemma.

\begin{lemma}\label{shorten-code}
Let $\mathcal{C}_{\mathcal{T}_1}$ be an $(n+d-k,d,\ell=(d-k+h)\ell_0)$ MDS array code derived from transformation $\mathcal{T}_1$.
Let $\mathcal{C}'_{\mathcal{T}_1}$ be the shortened code by shortening $\mathcal{C}_{\mathcal{T}_1}$ at the last $d-k$ nodes, i.e. node $n+1,\ldots,n+d-k$.
Then, $\mathcal{C}'_{\mathcal{T}_1}$ is an $(n,k,\ell=(d-k+h)\ell_0)$ MDS array code with the following repair properties:
(1) $\mathcal{C}'_{\mathcal{T}_1}$ enables optimal-access cooperative repair of the $h$ goal nodes $\mathcal{H}$ by using any $d$ helper nodes.
(2) $\mathcal{C}'_{\mathcal{T}_1}$ retains the optimal-access property of any $h$ nodes $\mathcal{H'}\neq\mathcal{H}$ by using any $d$ helper nodes if the original code $\mathcal{C}_0$ satisfies (P1) and (P2) in Definition \ref{def-property1} for repairing $\mathcal{H'}$.
\end{lemma}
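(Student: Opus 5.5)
The plan is to realize $\mathcal{C}'_{\mathcal{T}_1}$ concretely as $\{(\tilde{\bm c}_1,\ldots,\tilde{\bm c}_n): (\tilde{\bm c}_1,\ldots,\tilde{\bm c}_n,\bm 0,\ldots,\bm 0)\in\mathcal{C}_{\mathcal{T}_1}\}$, i.e., the subcode with the last $d-k$ nodes equal to zero, punctured at those nodes. Then we check three things in order: the MDS property, repair of $\mathcal{H}$, and retention of repair for $\mathcal{H}'$.

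\textbf{MDS property.} By Theorem \ref{t1-mds}, any $d$ nodes of $\mathcal{C}_{\mathcal{T}_1}$ determine the codeword, so the $d-k$ zero nodes together with any $k$ nodes among $[n]$ determine the codeword. Hence any $k$ nodes of $\mathcal{C}'_{\mathcal{T}_1}$ are information nodes. For the dimension, take the $d-k$ nodes $[n+1,n+d-k]$ together with any $k$ further nodes. These form an information set of $\mathcal{C}_{\mathcal{T}_1}$, so their contents can be prescribed freely. Setting the first $d-k$ of them to zero leaves exactly $k\ell$ free symbols. Therefore $\mathcal{C}'_{\mathcal{T}_1}$ is an $(n,k,\ell)$ MDS array code.

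\textbf{Repair of $\mathcal{H}$ and of $\mathcal{H}'$.} For either erasure pattern, a newcomer $i$ connects to $d$ helpers $\mathcal{R}'_i\subseteq[n]\setminus\mathcal{H}$ (respectively $[n]\setminus\mathcal{H}'$) in $\mathcal{C}'_{\mathcal{T}_1}$. We set $\mathcal{R}_i=\mathcal{R}'_i\cup[n+1,n+d-k]$, which is a set of $2d-k$ constrained helpers in $\mathcal{C}_{\mathcal{T}_1}$. Since nodes $n+1,\ldots,n+d-k$ are identically zero in the shortened code, their contributions $S_{i,\cdot}\bm c_j^{\top}$ are known to be zero for free. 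They need neither be downloaded nor accessed. We then run exactly the repair of Theorem \ref{t1-repair1} for $\mathcal{H}$, or the repair of Theorem \ref{t1-repair2} for $\mathcal{H}'$ using (P1) and (P2) of $\mathcal{C}_0$. Only the $d$ real helpers are downloaded from and accessed, each contributing $\ell/(d-k+h)$ symbols. The collaboration traffic is unchanged, at $\ell_0=\ell/(d-k+h)$ symbols per ordered pair. This gives total bandwidth $(d+h-1)h\ell/(d-k+h)$ and total access $dh\ell/(d-k+h)$ for the parameters $(n,k,d,h)$, since $\delta=d-k+h$ is the same. Both meet \eqref{bw-bound}.

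\textbf{Main obstacle.} The delicate point is that the bound \eqref{bw-bound} for $\mathcal{C}_{\mathcal{T}_1}$ as an $(n+d-k,d)$ code with $2d-k$ helpers also has denominator $(2d-k)-d+h=d-k+h$. Per-helper optimality therefore matches the $(n,k)$ code with $d$ helpers, and we must verify that the removed helpers carry no hidden cost. This holds because, in the proofs of Theorems \ref{t1-repair1} and \ref{t1-repair2}, the symbols from nodes $[n+1,n+d-k]$ enter only as known quantities that are subtracted off, namely $\bm f_j^{(i)}$ for $j\in[n+1,n+d-k]$ and the corresponding $S_{i,\mathcal{H}'}$-projections. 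Here all of them vanish. The repair matrices in the shortened code are simply the restrictions of those in $\mathcal{C}_{\mathcal{T}_1}$, which already achieve the access bound.
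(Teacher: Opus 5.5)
Your proposal is correct and follows essentially the same route as the paper. You realize $\mathcal{C}'_{\mathcal{T}_1}$ as the zero-tail subcode punctured at $[n+1,n+d-k]$, obtain the MDS property by adjoining the $d-k$ imaginary all-zero nodes to any $k$ nodes, and run the repairs of Theorems \ref{t1-repair1} and \ref{t1-repair2} with the constrained helpers $\mathcal{R}'_i\cup[n+1,n+d-k]$, where the imaginary helpers contribute zeros at no cost; your explicit dimension count and your check that $(2d-k)-d+h=d-k+h$ are details the paper leaves implicit, but they do not change the argument.
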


\begin{proof}
We firstly define the code $\mathcal{C}'_{\mathcal{T}_1}$. Consider all the codewords of $\mathcal{C}_{\mathcal{T}_1}$ that have zeros in the last $d-k$ nodes, and then puncture these codewords in the last $d-k$ nodes, it gives the code $\mathcal{C}'_{\mathcal{T}_1}$.
According to the MDS property of $\mathcal{C}_{\mathcal{T}_1}$, the last $d-k$ nodes can be seen as message coordinates, hence $\mathcal{C}'_{\mathcal{T}_1}$ contains $k\ell$ different codewords.
In the data reconstruction, any $k$ nodes in $\mathcal{C}'_{\mathcal{T}_1}$ plus $d-k$ imaginary nodes in $[n+1,n+d-k]$ that store all zeros correspond to $d$ nodes in $\mathcal{C}_{\mathcal{T}_1}$ which uniquely determines the whole codeword.
Therefore, any $k$ nodes in $\mathcal{C}'_{\mathcal{T}_1}$ can reconstruct the whole codeword and $\mathcal{C}'_{\mathcal{T}_1}$ is an $(n,k,\ell=(d-k+h)\ell_0)$ MDS array code.
Moreover, the cooperative repair of $h$ failed nodes in
$\mathcal{C}'_{\mathcal{T}_1}$ with each connecting to $d$ helper nodes can be done as the
cooperative repair of the $h$ failed nodes in $\mathcal{C}_{\mathcal{T}_1}$ with each connecting to $2d-k$ constrained helper nodes including
the $d$ helper nodes and $d-k$ imaginary nodes in $[n+1,n+d-k]$ that store all zeros. Thus the proof is completed.
\end{proof}

\section{The basic transformation $\mathcal{T}_2$}\label{transform-2}
We present the basic transformation $\mathcal{T}_2$ which converts an $(n,k)$ MDS scalar/array code to a new $(n,k)$ MDS array code for cooperatively repairing any $h$ nodes in a predefined set $\mathcal{F}\subseteq[n]$ with $|\mathcal{F}|=d-k+h$.
The transformation $\mathcal{T}_2$ generalizes a previous work in \cite{JLi-2018} for constructing optimal-access MSR codes with single erasures.

\subsection{The transformation $\mathcal{T}_2$}
Let $F$ be a finite field with $|F|>n$, and $\mathcal{C}_0$ be an $(n,k,\ell_0\geq1)$ MDS scalar/array code over $F$.
Then for any given $\mathcal{F}\subseteq[n]$ with $|\mathcal{F}|=d-k+h$,
the transformation $\mathcal{T}_2$ can convert $\mathcal{C}_0$ to an $(n,k,\ell=(d-k+h)\ell_0)$ MDS array code $\mathcal{C}_{\mathcal{T}_2}$ with cooperative repair of any $h$ nodes in $\mathcal{F}$ using $d$ helper nodes.
That is, $\mathcal{C}_{\mathcal{T}_2}$ actually enables cooperative repair of in total $\binom{d-k+h}{h}$ erasure patterns in $\mathcal{F}$.
We call the $d-k+h$ nodes in $\mathcal{F}$ as goal nodes and suppose $\mathcal{F}=\{1,2,\ldots,d-k+h\}$ without loss of generality.

\noindent
\rule{\linewidth}{1pt}
\vspace{-0.3em}
\noindent
\textbf{The transformation $\mathcal{T}_2$}

\vspace{-0.3em}
\noindent
\rule{\linewidth}{0.5pt}

Let $F$ be a finite field with $|F|>n$, and $\mathcal{C}_0$ be an $(n,k,\ell_0\geq1)$ MDS scalar/array code over $F$.
For $1\leq i<j\leq d-k+h$ and $a\in[4]$, choose $\eta_{i,j}^{(a)}$ to be nonzero elements in $F$ satisfying that $\eta_{i,j}^{(1)}\eta_{i,j}^{(4)}\neq \eta_{i,j}^{(2)}\eta_{i,j}^{(3)}$.

\begin{itemize}
\item \textbf{Step 1. Generate $d-k+h$ instance codewords of $\mathcal{C}_0$.} \\
Generate $d-k+h$ instance codewords of $\mathcal{C}_0$, i.e., $(\bm{f}_{1}^{(s)}, \bm{f}_{2}^{(s)}, \ldots, \bm{f}_{n}^{(s)})\in\mathcal{C}_0$ for $s\in[d-k+h]$.

\item \textbf{Step 2. Space sharing the $d-k+h$ instance codewords of $\mathcal{C}_0$ to obtain an intermediate code $\mathcal{C}_1$.} \\
Space sharing the $d-k+h$ instance codewords of $\mathcal{C}_0$ generated in Step 1 to obtain an $(n,k,\ell=(d-k+h)\ell_0)$ MDS array code $\mathcal{C}_1$.
That is, for $j\in[n]$, node $j$ of $\mathcal{C}_1$ stores $(\bm{f}_{j}^{(1)},\bm{f}_{j}^{(2)},\ldots,\bm{f}_{j}^{(d-k+h)})\in F^{\ell}$. For simplicity, every codeword of $\mathcal{C}_1$ can be exhibited as an array of size $(d-k+h)\times n$ with each node column $j\in[n]$ storing $d-k+h$ vector symbols $\bm{f}_{j}^{(s)}$, $s\in[d-k+h]$ each of length $\ell_0$.

\item \textbf{Step 3. New codeword array construction.} \\
Restructure the codeword array of $\mathcal{C}_1$ in Step 2 at the first $d-k+h$ node coordinates.
More precisely,
we construct a new codeword array by the following manner:
\begin{itemize}
\item For $1\leq i<j\leq d-k+h$, set the $(i,j)$-th entry of the array to be $\eta_{i,j}^{(1)}\bm{f}_{j}^{(i)}+\eta_{i,j}^{(2)}\bm{f}_{i}^{(j)}$, and set the $(j,i)$-th entry to be $\eta_{i,j}^{(3)}\bm{f}_{j}^{(i)}+\eta_{i,j}^{(4)}\bm{f}_{i}^{(j)}$.
\item The other entries of the array remain the same as that in $\mathcal{C}_1$.
\end{itemize}
The new codeword array is illustrated in Table \ref{tab3}. Then all the new codeword arrays form the code $\mathcal{C}_{\mathcal{T}_2}$.

\end{itemize}

\vspace{-0.3em}
\noindent
\rule{1\linewidth}{1pt}
 \begin{table}[ht]
 \renewcommand{\arraystretch}{1.8}
	\centering
	\begin{threeparttable}
\caption{Stored data of $n$ storage nodes in the new codeword array of $\mathcal{C}_{\mathcal{T}_2}$}
\label{tab3}
\setlength{\tabcolsep}{0.2mm}{
\begin{tabular}{c|c|c|c|c|c|c|c|c}
\hline
\diagbox{Instances}{Nodes} &\textbf{Goal Node 1} &\textbf{Goal Node 2} &\textbf{Goal Node 3} & \textbf{$\cdots$} &\textbf{Goal Node} \bm{$d-k+h$}
& Node $d-k+h+1$ & $\cdots$ & Node $n$
\\   \hline
1 & $\bm{f}_1^{(1)}$ & \cellcolor{gray!45}$\eta_{1,2}^{(1)}\bm{f}_2^{(1)}+\eta_{1,2}^{(2)}\bm{f}_1^{(2)}$ & \cellcolor{gray!30}$\eta_{1,3}^{(1)}\bm{f}_3^{(1)}+\eta_{1,3}^{(2)}\bm{f}_1^{(3)}$ & $\cdots$ &
\cellcolor{gray!15}\makecell[c]{$\eta_{1,d-k+h}^{(1)}\bm{f}_{d-k+h}^{(1)}$ \\
 $+\eta_{1,d-k+h}^{(2)}\bm{f}_1^{(d-k+h)}$ }
 & $\bm{f}_{d-k+h+1}^{(1)}$ &  $\cdots$ & $\bm{f}_n^{(1)}$     \\    \hline
2 & \cellcolor{gray!45}$\eta_{1,2}^{(3)}\bm{f}_2^{(1)}+\eta_{1,2}^{(4)}\bm{f}_1^{(2)}$ &  $\bm{f}_2^{(2)}$ & $\eta_{2,3}^{(1)}\bm{f}_3^{(2)}+\eta_{2,3}^{(2)}\bm{f}_2^{(3)}$ &  $\cdots$ &
\makecell[c]{$\eta_{2,d-k+h}^{(1)}\bm{f}_{d-k+h}^{(2)}$ \\
$+\eta_{2,d-k+h}^{(2)}\bm{f}_2^{(d-k+h)}$ }
 & $\bm{f}_{d-k+h+1}^{(2)}$ &  $\cdots$&  $\bm{f}_n^{(2)} $
\\  \hline
$3$ & \cellcolor{gray!30}$\eta_{1,3}^{(3)}\bm{f}_3^{(1)}+\eta_{1,3}^{(4)}\bm{f}_1^{(3)}$ &  $\eta_{2,3}^{(3)}\bm{f}_3^{(2)}+\eta_{2,3}^{(4)}\bm{f}_2^{(3)}$  & $\bm{f}_3^{(3)}$  & $\cdots$ &
\makecell[c]{$\eta_{3,d-k+h}^{(1)}\bm{f}_{d-k+h}^{(3)}$ \\
$+\eta_{3,d-k+h}^{(2)}\bm{f}_3^{(d-k+h)}$ }
 & $\bm{f}_{d-k+h+1}^{(3)}$ & $\cdots$&  $\bm{f}_n^{(3)} $
\\  \hline
$\vdots$ &  $\vdots$ & $\vdots$ &  $\vdots$ &  $\ddots$ & $\vdots$
&$\cdots$ &  $\vdots$ & $\cdots$
\\  \hline
$d-k+h$ &
\cellcolor{gray!15}\makecell[c]{$\eta_{1,d-k+h}^{(3)}\bm{f}_{d-k+h}^{(1)}$ \\
$+\eta_{1,d-k+h}^{(4)}\bm{f}_1^{(d-k+h)}$}
 & \makecell[c]{$\eta_{2,d-k+h}^{(3)}\bm{f}_{d-k+h}^{(2)}$ \\
 $+\eta_{2,d-k+h}^{(4)}\bm{f}_2^{(d-k+h)}$}
 & \makecell[c]{$\eta_{3,d-k+h}^{(3)}\bm{f}_{d-k+h}^{(3)}$ \\
 $+\eta_{3,d-k+h}^{(4)}\bm{f}_3^{(d-k+h)}$}
  & $\cdots$ & $\bm{f}_{d-k+h}^{(d-k+h)}$
   & $\bm{f}_{d-k+h+1}^{(d-k+h)}$ &  $\cdots$&  $\bm{f}_n^{(d-k+h)}$
\\  \hline
\end{tabular}}
\begin{tablenotes}
\item[*] Note that in the table, the stored data in each node is exhibited as a column array of $d-k+h$ vector symbols of length $\ell_0$. In the context for simplicity, we write it as a long row of length $(d-k+h)\ell_0$.
\end{tablenotes}
\end{threeparttable}
\end{table}

\subsection{MDS property}

\begin{theorem}\label{t2-mds}
The code $\mathcal{C}_{\mathcal{T}_2}$ obtained from transformation $\mathcal{T}_2$ is an $(n,k,\ell=(d-k+h)\ell_0)$ MDS array code.
\end{theorem}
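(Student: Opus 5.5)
We will argue as in the proof of Theorem \ref{t1-mds}: first count the information symbols, then show that any $k$ nodes of $\mathcal{C}_{\mathcal{T}_2}$ determine the whole codeword.

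\textbf{Counting.} The map from the $d-k+h$ instance codewords of $\mathcal{C}_0$ to the new array is linear. On each unordered pair of positions $\{(i,j),(j,i)\}$ with $1\le i<j\le d-k+h$, it acts on the pair $(\bm{f}_j^{(i)},\bm{f}_i^{(j)})$ through the $2\times2$ matrix $\begin{pmatrix}\eta_{i,j}^{(1)}&\eta_{i,j}^{(2)}\\ \eta_{i,j}^{(3)}&\eta_{i,j}^{(4)}\end{pmatrix}$. This matrix is invertible because $\eta_{i,j}^{(1)}\eta_{i,j}^{(4)}\neq\eta_{i,j}^{(2)}\eta_{i,j}^{(3)}$. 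All other entries are copied unchanged. Hence the map is a bijection between $\mathcal{C}_1$ and $\mathcal{C}_{\mathcal{T}_2}$. Since $\mathcal{C}_1$ consists of $d-k+h$ independent codewords of $\mathcal{C}_0$, both codes have dimension $(d-k+h)k\ell_0=k\ell$.

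\textbf{Reconstruction.} Fix a set $K$ of $k$ connected nodes. We must recover all $\bm{f}_p^{(s)}$. For each instance $s$ it suffices to learn $\bm{f}_p^{(s)}$ for all $p\in K$, because instance $s$ is a codeword of the $(n,k)$ MDS code $\mathcal{C}_0$.

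The only obstruction is a cell $(s,p)$ with $p\in K\cap[d-k+h]$, $s\in[d-k+h]$ and $s\neq p$. Such a cell stores a mixture of $\bm{f}_p^{(s)}$ and its partner $\bm{f}_s^{(p)}$. There are two cases.
\begin{itemize}
\item If $s\in K$ as well, both cells $(s,p)$ and $(p,s)$ are available. Inverting the $2\times2$ matrix gives both $\bm{f}_p^{(s)}$ and $\bm{f}_s^{(p)}$ directly.
\item If $s\notin K$, the partner $\bm{f}_s^{(p)}$ belongs to instance $p$ at a node outside $K$.
\end{itemize}

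To resolve the second case we proceed as in the inductive MDS proof of $\mathcal{C}_{\mathrm{II}}$ in Section \ref{sec0-building-block}, handling the instances in a suitable order. First we treat the instances $s\notin K\cap[d-k+h]$, that is, the instances $s$ with $s\notin K$.

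Consider such an instance $s$ and a node $p\in K\cap[d-k+h]$. The available symbol has the form $\eta\bm{f}_p^{(s)}+\eta'\bm{f}_s^{(p)}$, and the unknown $\bm{f}_s^{(p)}$ lives in instance $p$ with $p\in K$. This couples instance $s$ with the instances $p\in K\cap[d-k+h]$. So the reconstruction does not decouple instance by instance.

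The cleanest route, which we will commit to, is a linear-algebra argument. Form the linear system consisting of the parity-check equations of all $d-k+h$ instances of $\mathcal{C}_0$, with all entries at nodes in $K$ treated as known linear combinations. We show that the kernel is trivial: assume the stored data at all nodes of $K$ is zero, and deduce that every $\bm{f}$ vanishes.

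\textbf{Kernel argument.} Suppose the stored data on $K$ is zero.
\begin{itemize}
\item For $s,p\in K$ with $s\neq p$: by the invertibility of the $2\times2$ matrix, $\bm{f}_p^{(s)}=\bm{f}_s^{(p)}=0$.
\item For $s\in K$ and diagonal cells $(s,s)$: these are unchanged, so $\bm{f}_s^{(s)}=0$.
\item For instances $s\notin K$: cells at $p\in K\setminus[d-k+h]$ give $\bm{f}_p^{(s)}=0$. At $p\in K\cap[d-k+h]$ we only get $\bm{f}_p^{(s)}=-(\eta'/\eta)\bm{f}_s^{(p)}$.
\item For instances $p\in K$: nodes $q\in K$ with $q\notin[d-k+h]$ give $\bm{f}_q^{(p)}=0$, and nodes $q\in K\cap[d-k+h]$ with $q\ne p$ give zero by the first bullet. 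The remaining cells of instance $p$ inside $K$ are exactly the diagonal cell, which is zero. Hence instance $p$ vanishes on $K$, so by the MDS property of $\mathcal{C}_0$ it vanishes entirely. In particular $\bm{f}_s^{(p)}=0$ for all $s$.
\end{itemize}
Feeding this back into the third bullet gives $\bm{f}_p^{(s)}=0$ for $p\in K\cap[d-k+h]$. Thus every instance $s\notin K$ vanishes on all of $K$, hence vanishes entirely by the MDS property of $\mathcal{C}_0$.

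So the two-stage order (first the instances indexed by $K\cap[d-k+h]$, then the rest) resolves the coupling. The map from $\mathcal{C}_{\mathcal{T}_2}$ to the data on $K$ is injective; combined with the dimension count $k\ell$, this gives the MDS property.

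The main obstacle is exactly this cross-instance coupling. The ordering above resolves it, and it uses only that the $\eta$'s are nonzero, together with the determinant condition.
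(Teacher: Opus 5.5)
Your proof is correct and follows the paper's argument: your kernel computation is the homogeneous version of the paper's explicit decoding, in the same order. That order is: invert the $2\times2$ blocks for pairs inside $K\cap[d-k+h]$; recover the instances indexed by $K\cap[d-k+h]$ via the MDS property of $\mathcal{C}_0$; solve the remaining mixed cells using the nonzero $\eta$'s; finally recover the other instances. Your bijection argument for the dimension is a bit more careful than the paper's ``easy to see.'' Delete the exploratory sentence claiming you first treat the instances $s\notin K$, since it contradicts the order you actually use.
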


\begin{proof}
It is easy to see that there are in total $k(d-k+h)\ell_0=k\ell$ information symbols in $\mathcal{C}_{\mathcal{T}_2}$.
Next we prove that any $k$ nodes in $\mathcal{C}_{\mathcal{T}_2}$ can reconstruct the whole codeword.
Suppose the $k$ connected nodes are $U=U_1\cup U_2$ with $U_1\subseteq[d-k+h]$, $U_2\subseteq[d-k+h+1,n]$ and $|U_1\cup U_2|=k$.
Next we illustrate that using the stored data at the $k$ connected nodes, one can recover all the data $\bm{f}_{i}^{(s)}$, $i\in[n]$, $s\in[d-k+h]$.

At first, for any $i,j\in U_1$ with $i<j$, using the $i$-th data symbol in node $j$, i.e., $\eta_{i,j}^{(1)}\bm{f}_{j}^{(i)}+\eta_{i,j}^{(2)}\bm{f}_{i}^{(j)}$, and the $j$-th data symbol in node $i$, i.e., $\eta_{i,j}^{(3)}\bm{f}_{j}^{(i)}+\eta_{i,j}^{(4)}\bm{f}_{i}^{(j)}$, one can compute $\bm{f}_{j}^{(i)}$ and $\bm{f}_{i}^{(j)}$ since $\eta_{i,j}^{(1)}\eta_{i,j}^{(4)}\neq \eta_{i,j}^{(2)}\eta_{i,j}^{(3)}$.
That is, one can recover the data $\bm{f}_{i}^{(j)}$ for $i,j\in U_1$ and $i\neq j$.
Besides, $\bm{f}_{i}^{(i)}$ for $i\in U_1$ are known, thus one can obtain the data $\{\bm{f}_{i}^{(j)}: i,j\in U_1\}$.

Note that for each $s\in[d-k+h]$, $(\bm{f}_{1}^{(s)},\bm{f}_{2}^{(s)},\ldots,\bm{f}_{n}^{(s)})$ forms a codeword of the $(n,k,\ell_0)$ MDS code $\mathcal{C}_0$.
Then for each $s\in U_1\subseteq[d-k+h]$, using the known data $\bm{f}_{i}^{(s)}, i\in U_2$ and the previously recovered data $\bm{f}_{i}^{(s)}, i\in U_1$, in total $k$ vector symbols, one can  recover $\{\bm{f}_{i}^{(s)}: i\in[n]\}$.
Thus, the data $\{\bm{f}_{i}^{(s)}: i\in[n], s\in U_1\}$ can be reconstructed.

Moreover, for each $i\in U_1$ and $s\in [d-k+h]\setminus U_1$, the $s$-th data symbol in the connected node $i$ stores $\eta_{s,i}^{(1)}\bm{f}_{i}^{(s)}+\eta_{s,i}^{(2)}\bm{f}_{s}^{(i)}$ if $s<i$, and stores $\eta_{i,s}^{(3)}\bm{f}_{s}^{(i)}+\eta_{i,s}^{(4)}\bm{f}_{i}^{(s)}$ if $s>i$.
Since $\bm{f}_{s}^{(i)}$ has been previously recovered and the coefficient $\eta_{s,i}^{(1)}$(or $\eta_{i,s}^{(4)}$) is nonzero, then one can compute the data $\bm{f}_{i}^{(s)}$.
That is, all the vector symbols $\{\bm{f}_{i}^{(s)}: i\in U_1, s\in [d-k+h]\setminus U_1\}$ can be obtained.

Also, for each $s\in [d-k+h]\setminus U_1$, note $(\bm{f}_{1}^{(s)},\bm{f}_{2}^{(s)},\ldots,\bm{f}_{n}^{(s)})$ is a codeword of $\mathcal{C}_0$. Then for each $s\in [d-k+h]\setminus U_1$, using the known data $\{\bm{f}_{i}^{(s)}$, $i\in U_2\}$ at the connected nodes and the previously recovered data $\{\bm{f}_{i}^{(s)}: i\in U_1\}$, one can recover $\{\bm{f}_{i}^{(s)}: i\in[n]\}$.
Thus $\{\bm{f}_{i}^{(s)}: i\in[n], s\in [d-k+h]\setminus U_1\}$ can be reconstructed.

Therefore, all the symbols $\{\bm{f}_{i}^{(s)}: i\in[n], s\in [d-k+h]\}$ are recovered and the whole codeword can be reconstructed according to the codeword array in Table \ref{tab3}.
Thus $\mathcal{C}_{\mathcal{T}_2}$ satisfies the MDS property.
\end{proof}

\subsection{Optimal-access property of any $h$ nodes in the goal node set $\mathcal{F}$}
Given the goal node set $\mathcal{F}\subseteq[n]$ with $|\mathcal{F}|=d-k+h$, when repairing any $h$ nodes $\mathcal{H}\subseteq\mathcal{F}$, we call a set of $d$ helper nodes $D\subseteq[n]\setminus\mathcal{H}$ with $|D|=d$ to {\it have local property} if the $d-k$ surviving goal nodes in $\mathcal{F}\setminus\mathcal{H}$ are contained in $D$, i.e., $(\mathcal{F}\setminus\mathcal{H})\subseteq D$.
That is, $D=D'\cup(\mathcal{F}\setminus\mathcal{H})$ where $D'$ is a $k$-subset of $[n]\setminus\mathcal{F}$.
Next, in Theorem \ref{t2-repair1}, we give the optimal-access property of any $h$ goal nodes $\mathcal{H}\subseteq\mathcal{F}$ by using any $d$ helper nodes with local property.
In Corollary \ref{t2-repair1-cor}, we summarize an inherent property relating the repair matrices for repairing $\mathcal{H}$.
In Definition \ref{def-property2} and Theorem \ref{t2-repair2}, we characterize the needed property such that the optimal-access property of any $h$ nodes $\mathcal{H}'\subseteq[n]\setminus\mathcal{F}$ can be retained after applying transformation $\mathcal{T}_2$.

\begin{theorem}\label{t2-repair1}
The code $\mathcal{C}_{\mathcal{T}_2}$ can cooperatively repair any $h$ goal nodes $\mathcal{H}\subseteq\mathcal{F}=\{1,2,\ldots,d-k+h\}$ with optimal access by using any $d$ helper nodes with local property.
\end{theorem}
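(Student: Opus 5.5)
Without loss of generality I take the failed set to be $\mathcal{H}=[h]\subseteq\mathcal{F}=[d-k+h]$; any other $h$-subset of $\mathcal{F}$ is handled identically after relabelling, since the construction in Table \ref{tab3} is symmetric in the goal nodes up to swapping the roles of $(\eta^{(1)},\eta^{(2)})$ and $(\eta^{(3)},\eta^{(4)})$. For each $i\in[h]$ let $\mathcal{R}_i=D_i'\cup([d-k+h]\setminus[h])$ with $D_i'$ a $k$-subset of $[n]\setminus\mathcal{F}$. The plan mirrors the repair of $\mathcal{C}_{\mathrm{II}}$ in Section \ref{sec0-building-block}. The repair matrix $S_{i,\mathcal{H}}$ selects the $i$-th vector symbol (row $i$ of the column array, length $\ell_0$). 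In the download phase, node $i$ downloads the $i$-th entry from every $j\in\mathcal{R}_i$, so $d\ell_0=d\ell/(d-k+h)$ symbols are both accessed and downloaded.

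\textbf{Step 1: node $i$ recovers the $i$-th instance codeword.} From $j\in D_i'$ node $i$ gets $\bm f_j^{(i)}$ directly. From $j\in[h+1,d-k+h]$ it receives the mixed symbol $\eta\bm f_j^{(i)}+\eta'\bm f_i^{(j)}$, where the coefficients are $(\eta^{(1)}_{i,j},\eta^{(2)}_{i,j})$ because $i<j$. These mixed symbols are not usable yet, so node $i$ first uses the $k$ symbols $\{\bm f_j^{(i)}:j\in D_i'\}$ and the MDS property of $\mathcal{C}_0$ to decode the whole codeword $(\bm f_1^{(i)},\dots,\bm f_n^{(i)})$. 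In particular it learns $\bm f_i^{(i)}$, every $\bm f_{j}^{(i)}$ with $j\in[h]\setminus\{i\}$, and $\bm f_j^{(i)}$ for $j\in[h+1,d-k+h]$. Subtracting $\eta^{(1)}_{i,j}\bm f_j^{(i)}$ from the mixed symbol and dividing by the nonzero $\eta^{(2)}_{i,j}$ then gives $\bm f_i^{(j)}$ for all $j\in[h+1,d-k+h]$.

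\textbf{Step 2: collaboration.} The remaining unknown entries of node $i$ are those in rows $j\in[h]\setminus\{i\}$. These entries have the form $\eta\bm f_i^{(j)}+\eta'\bm f_j^{(i)}$, with coefficients depending on the order of $i$ and $j$. Node $i'\in[h]\setminus\{i\}$ knows $\bm f_{i}^{(i')}$, since it decoded the $i'$-th instance codeword in Step 1. It therefore sends node $i$ the single vector symbol $\bm f_i^{(i')}$ of length $\ell_0$; alternatively it may send the mixed symbol stored at node $i$ in row $i'$, which it can compute from $\bm f_i^{(i')}$ and $\bm f_{i'}^{(i)}$. Node $i$ already knows $\bm f_{i'}^{(i)}$, so either way it reconstructs its row-$i'$ entry. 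Together with the rows in $[h+1,d-k+h]\setminus\{i\}$, which it reassembles from $\bm f_i^{(j)}$ and $\bm f_j^{(i)}$, and rows $j>d-k+h$, which do not exist, node $i$ recovers all $d-k+h$ entries of its column.

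\textbf{Step 3: counting.} The total bandwidth is $h(d+h-1)\ell_0$ and the total access is $hd\ell_0$; both meet \eqref{bw-bound} with $\ell=(d-k+h)\ell_0$.

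The main obstacle is the correct bookkeeping of which $\eta$ coefficients appear in the $(i,j)$ and $(j,i)$ entries for $i<j$ versus $i>j$, and confirming that only the nonzeroness of a single $\eta$ is needed in Step 1. The determinant condition $\eta^{(1)}\eta^{(4)}\ne\eta^{(2)}\eta^{(3)}$ is not needed here; it is used only for the MDS property. I would verify this bookkeeping case by case against Table \ref{tab3}.
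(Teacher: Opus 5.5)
Your proposal is correct and is essentially the paper's proof: each failed node downloads its own row from every helper, decodes its instance codeword of $\mathcal{C}_0$ from the $k$ helpers outside $\mathcal{F}$, and peels $\bm{f}_i^{(p)}$ off the mixed symbols of the $d-k$ local helpers using only $\eta\neq 0$; node $i'$ then sends $\bm{f}_i^{(i')}$ in the collaboration phase (the paper simply treats a general $\mathcal{H}=\{i_1,\ldots,i_h\}$ directly rather than relabelling). The one thing to delete is your ``alternatively'' in Step 2, which is wrong: node $i'$ cannot form the mixed symbol stored at node $i$ in row $i'$, because $\bm{f}_{i'}^{(i)}$ is a coordinate of the $i$-th instance codeword and is one of node $i'$'s own unknowns before collaboration.
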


\begin{proof}
Let $\mathcal{H}=\{i_1,i_2,\ldots,i_h\}\subseteq[d-k+h]$ with $|\mathcal{H}|=h$ be the failed node set.
For each $i\in\mathcal{H}$, let $\mathcal{R}_i=\bar{\mathcal{R}}_i\cup([d-k+h]\setminus\mathcal{H})$ with $\bar{\mathcal{R}}_i\subseteq[n]\setminus[d-k+h]$ and $|\bar{\mathcal{R}}_i|=k$ be the set of $d$ helper nodes connected by node $i$.

In the download phase, for each $j\in[h]$, node $i_j$ downloads the $i_j$-th data symbol from each helper node $p\in\mathcal{R}_{i_j}$.
That is, node $i_j$ downloads the following data:
\begin{equation}\label{t2-repair1-eq1}
\{\bm{f}_p^{(i_j)}: p\in\bar{\mathcal{R}}_{i_j}\}
\cup\{\eta_{i_j,p}^{(1)}\bm{f}_p^{(i_j)}+\eta_{i_j,p}^{(2)}\bm{f}_{i_j}^{(p)}: p\in[d-k+h]\setminus\mathcal{H}, p>i_j\}
\cup\{\eta_{p,i_j}^{(3)}\bm{f}_{i_j}^{(p)}+\eta_{p,i_j}^{(4)}\bm{f}_p^{(i_j)}: p\in[d-k+h]\setminus\mathcal{H}, p<i_j\}.
\end{equation}
For each $j\in[h]$, since $(\bm{f}_{1}^{(i_j)},\bm{f}_{2}^{(i_j)},\ldots,\bm{f}_{n}^{(i_j)})$ forms a codeword of the $(n,k,\ell_0)$ MDS code $\mathcal{C}_0$, then the $k$ downloaded vector symbols
$\{\bm{f}_p^{(i_j)}: p\in\bar{\mathcal{R}}_{i_j}\}$
are able to recover $\{\bm{f}_p^{(i_j)}: p\in[n]\}$.
Furthermore, using the downloaded vector symbol sums in \eqref{t2-repair1-eq1} and the previously recovered data, node $i_j$ can extract and compute the independent vector symbols
$\{\bm{f}_{i_j}^{(p)}: p\in[d-k+h]\setminus\mathcal{H}\}$.
Thus node $i_j$ can compute and recover its $d-k+1$ erased data symbols stored at the $s$-th entry for $s\in\{i_j\}\cup([d-k+h]\setminus\mathcal{H})$, illustrated as follows:
 $$
 \{\bm{f}_{i_j}^{(i_j)}\}
 \cup
 \{\eta_{s,i_j}^{(1)}\bm{f}_{i_j}^{(s)}+\eta_{s,i_j}^{(2)}\bm{f}_{s}^{(i_j)}: s\in[d-k+h]\setminus\mathcal{H}, s<i_j\}
\cup\{\eta_{i_j,s}^{(3)}\bm{f}_{s}^{(i_j)}+\eta_{i_j,s}^{(4)}\bm{f}_{i_j}^{(s)}: s\in[d-k+h]\setminus\mathcal{H}, s>i_j\}.
$$

In the collaboration phase, for $j,j'\in[h]$ with $j\neq j'$, node $i_{j'}$ transmits the recovered data $\bm{f}_{i_j}^{(i_{j'})}$ to node $i_j$.
Then for each $j\in[h]$, node $i_j$ using the received data $\{\bm{f}_{i_j}^{(i_{j'})}: j'\in[h]\setminus\{j\}\}$ and the previously recovered data $\{\bm{f}_{i_{j'}}^{(i_j)}: j'\in[h]\setminus\{j\}\}$, can compute the remaining $h-1$ erased data symbols stored at the $s$-th entry for $s\in\mathcal{H}\setminus\{i_j\}$, illustrated as follows:
$$
 \{\eta_{s,i_j}^{(1)}\bm{f}_{i_j}^{(s)}+\eta_{s,i_j}^{(2)}\bm{f}_{s}^{(i_j)}: s\in\mathcal{H}, s<i_j\}
\cup\{\eta_{i_j,s}^{(3)}\bm{f}_{s}^{(i_j)}+\eta_{i_j,s}^{(4)}\bm{f}_{i_j}^{(s)}: s\in\mathcal{H}, s>i_j\}.
$$
Thus the $h$ nodes in $\mathcal{H}$ can be cooperatively repaired.
The total amount of data downloaded and communicated is $(d+h-1)h\ell_0=\frac{(d+h-1)h\ell}{d-k+h}$ symbols in $F$, and the amount of the accessed data at the helper nodes is $dh\ell_0=\frac{dh\ell}{d-k+h}$ symbols in $F$, both achieving the cut-set bound in \eqref{bw-bound}. Then it satisfies optimal-access property.
\end{proof}

\begin{corollary}\label{t2-repair1-cor}
In the repair of $\mathcal{H}$, let $S_{i,\mathcal{H}}$ be the corresponding repair matrix for repairing node $i\in\mathcal{H}$, let $\mathcal{R}_i\subseteq[n]\setminus\mathcal{H}$ with $|\mathcal{R}_i|=d$ satisfying $(\mathcal{F}\setminus\mathcal{H})\subseteq\mathcal{R}_i$ be the set of helper nodes connected by node $i$.
Denote by $(\bm{c}_1,\ldots,\bm{c}_{n})$ the codewords of $\mathcal{C}_{\mathcal{T}_2}$.
Then for each $i\in\mathcal{H}$, it has that
\begin{itemize}
\item[(1)] the downloaded data $\{S_{i,\mathcal{H}}\bm{c}_j^{\top}: j\in \mathcal{R}_i\}$ are able to recover $\{S_{i,\mathcal{H}}\bm{c}_j^{\top}: j\in[n]\setminus\mathcal{H}\}$.
\item[(2)] the downloaded data and the collaborated data at node $i$ can recover $\{S_{i,\mathcal{H}}\bm{c}_j^{\top}: j\in[n]\}$.
 \end{itemize}
\end{corollary}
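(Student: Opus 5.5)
The plan is to read both items off the repair procedure in the proof of Theorem \ref{t2-repair1}. First I fix the repair matrix explicitly. For $i=i_j\in\mathcal{H}$, $S_{i,\mathcal{H}}$ selects the $i$-th vector symbol (of length $\ell_0$) from a node's content, that is, $S_{i,\mathcal{H}}=\bm{e}_i\otimes I_{\ell_0}$ with $\bm{e}_i$ the $i$-th unit row vector of length $d-k+h$. Consequently, for each node $p$, $S_{i,\mathcal{H}}\bm{c}_p^{\top}$ is exactly the $(i,p)$-th entry of the array in Table \ref{tab3}. It is $\bm{f}_p^{(i)}$ when $p\notin[d-k+h]$ or $p=i$. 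It is $\eta_{i,p}^{(1)}\bm{f}_p^{(i)}+\eta_{i,p}^{(2)}\bm{f}_i^{(p)}$ when $p\in[d-k+h]$ and $p>i$. It is $\eta_{p,i}^{(3)}\bm{f}_i^{(p)}+\eta_{p,i}^{(4)}\bm{f}_p^{(i)}$ when $p\in[d-k+h]$ and $p<i$.

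For item (1), I would argue as follows.
\begin{itemize}
\item The $k$ symbols $\{\bm{f}_p^{(i)}:p\in\bar{\mathcal{R}}_i\}$ are downloaded. Since $(\bm{f}_1^{(i)},\ldots,\bm{f}_n^{(i)})\in\mathcal{C}_0$ and $\mathcal{C}_0$ is MDS, they determine $\bm{f}_p^{(i)}$ for every $p\in[n]$.
\item For $p\in[n]\setminus[d-k+h]$, this already gives $S_{i,\mathcal{H}}\bm{c}_p^{\top}=\bm{f}_p^{(i)}$.
\item For $p\in[d-k+h]\setminus\mathcal{H}=\mathcal{F}\setminus\mathcal{H}$, node $p$ lies in $\mathcal{R}_i$ by the local property, so $S_{i,\mathcal{H}}\bm{c}_p^{\top}$ was downloaded directly.
\end{itemize}
Together these cover $[n]\setminus\mathcal{H}$.

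For item (2), I would use the full repair of Theorem \ref{t2-repair1}. Node $i$ recovers its whole content $\bm{c}_i$, which includes the entry $S_{i,\mathcal{H}}\bm{c}_i^{\top}=\bm{f}_i^{(i)}$. It remains to handle $p\in\mathcal{H}\setminus\{i\}$, where $S_{i,\mathcal{H}}\bm{c}_p^{\top}$ is a combination of $\bm{f}_p^{(i)}$ and $\bm{f}_i^{(p)}$.
\begin{itemize}
\item Node $i$ obtained $\bm{f}_p^{(i)}$ from decoding the $i$-th instance codeword in the first step.
\item Node $i$ received $\bm{f}_i^{(p)}$ from node $p$ in the collaboration phase.
\end{itemize}
Hence it can form the required linear combination with the known coefficients $\eta^{(\cdot)}_{\cdot,\cdot}$. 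Combining this with item (1) yields all of $\{S_{i,\mathcal{H}}\bm{c}_j^{\top}:j\in[n]\}$.

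The only delicate point is bookkeeping: matching the orientation of the $\eta$-coefficients (the cases $p<i$ versus $p>i$) with the table entries. Unlike the MDS proof, no invertibility condition is needed here, because both constituent symbols are individually known. So I expect no real obstacle, beyond checking that every node index falls into one of the cases above.
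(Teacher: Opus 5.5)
Your proposal is correct and follows essentially the same route as the paper: identify $S_{i,\mathcal{H}}$ as the selector of the $i$-th length-$\ell_0$ block, read off $S_{i,\mathcal{H}}\bm{c}_p^{\top}$ case by case from Table~\ref{tab3}, obtain item (1) from decoding the $i$-th instance of $\mathcal{C}_0$ together with the symbols downloaded directly from $\mathcal{F}\setminus\mathcal{H}$, and obtain item (2) by combining the decoded $\bm{f}_p^{(i)}$ with the received $\bm{f}_i^{(p)}$. The only cosmetic difference is where $\bm{f}_i^{(i)}$ comes from: you take it from the completed repair of $\bm{c}_i$, whereas the paper takes it directly from the decoded codeword $(\bm{f}_1^{(i)},\ldots,\bm{f}_n^{(i)})$; both are valid.
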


\begin{proof}
Let $\mathcal{H}=\{i_1,i_2,\ldots,i_h\}\subseteq\mathcal{F}$ where $\mathcal{F}=[d-k+h]$. According to the proof of Theorem \ref{t2-repair1}, it is easy to see for each $j\in[h]$, the repair matrix of node $i_j$ is
\[
S_{i_j,\mathcal{H}}=\begin{pmatrix}
\bm{0}_{\ell/(d-k+h)} & \cdots & \bm{I}_{\ell/(d-k+h)}  &  \cdots    & \bm{0}_{\ell/(d-k+h)}
\end{pmatrix}\in F^{\frac{\ell}{d-k+h}\times\ell},
\]
where $\bm{0}_{\ell/(d-k+h)}$ and $\bm{I}_{\ell/(d-k+h)}$ represent all-zero matrix and identity matrix of order $\ell/(d-k+h)$.
And for each $j\in[h]$ and $p\in[n]$, it has that
\begin{equation}\label{t2-repair1-eq2}
S_{i_j,\mathcal{H}}\bm{c}_p^{\top}=
\begin{cases}
\bm{f}_{p}^{(i_j)\top}, &{\rm{for}}~p\in([n]\setminus[d-k+h])\cup\{i_j\},\\
\eta_{i_j,p}^{(1)}\bm{f}_p^{(i_j)\top}+\eta_{i_j,p}^{(2)}\bm{f}_{i_j}^{(p)\top}, &{\rm for}~p\in[d-k+h]~{\rm and}~p>i_j,\\
\eta_{p,i_j}^{(3)}\bm{f}_{i_j}^{(p)\top}+\eta_{p,i_j}^{(4)}\bm{f}_{p}^{(i_j)\top}, &{\rm for}~p\in[d-k+h]~{\rm and}~p<i_j.
\end{cases}
\end{equation}

We firstly prove (1). Note that $[n]\setminus\mathcal{H}=([n]\setminus\mathcal{F})\cup(\mathcal{F}\setminus\mathcal{H})$ and $(\mathcal{F}\setminus\mathcal{H})\subseteq\mathcal{R}_{i_j}$ for all $j\in[h]$, then it suffices to prove that $\{S_{i_j,\mathcal{H}}\bm{c}_p^{\top}: p\in \mathcal{R}_{i_j}\}$ are able to recover $\{S_{i_j,\mathcal{H}}\bm{c}_p^{\top}: p\in[n]\setminus\mathcal{F}\}=\{\bm{f}_{p}^{(i_j)\top}: p\in[n]\setminus[d-k+h]\}$.
This is proved in the download phase of the proof of Theorem \ref{t2-repair1}.

Then we prove (2).
According to (1), it remains for us to prove that for each $j\in[h]$, node $i_j$ using the downloaded data and collaborated data can recover $\{S_{i_j,\mathcal{H}}\bm{c}_p^{\top}: p\in\mathcal{H}\}$.
Recall that in the download phase of the proof of Theorem \ref{t2-repair1}, node $i_j$ using the downloaded data can recover $\{\bm{f}_p^{(i_j)}: p\in[n]\}$.
This implies that node $i_j$ can recover $S_{i_j,\mathcal{H}}\bm{c}_{i_j}^{\top}$ since $S_{i_j,\mathcal{H}}\bm{c}_{i_j}^{\top}=\bm{f}_{i_j}^{(i_j)\top}$ by \eqref{t2-repair1-eq2}.
Besides,
in the collaboration phase of the proof of Theorem \ref{t2-repair1}, for each $j\in[h]$, node $i_j$ receives the data $\{\bm{f}_{i_j}^{(i_{j'})}: j'\in[h]\setminus\{j\}\}$ from the remaining $h-1$ failed nodes.
Thus for $j'\in[h]\setminus\{j\}$, node $i_j$ using the recovered data $\bm{f}_{i_{j'}}^{(i_j)}$ in the download phase and received data $\bm{f}_{i_j}^{(i_{j'})}$ in the collaboration phase, can compute $S_{i_j,\mathcal{H}}\bm{c}_{i_{j'}}^{\top}$ according to \eqref{t2-repair1-eq2}.
This completes the proof.
\end{proof}

In the following, as in Subsection \ref{sub-t1-repair}, we summarize the properties in Theorem \ref{t2-repair1} and Corollary \ref{t2-repair1-cor} to give a definition for the case of repairing other $h$-failed node set $\bar{\mathcal{H}}$, which is useful in the following sections.

\begin{definition}\label{def-property2}
Let $\bar{\mathcal{C}}$ be an $(n,k)$ MDS scalar/array code, and let $\bar{\mathcal{F}}\subseteq[n]$ with $|\bar{\mathcal{F}}|=d-k+h$, and $\bar{\mathcal{H}}\subseteq\bar{\mathcal{F}}$ with $|\bar{\mathcal{H}}|=h$. Define (P3), (P4), and (P5) to be the following properties of $\bar{\mathcal{C}}$ if $\bar{\mathcal{C}}$ has. That is,
\begin{itemize}
\item[(P3)] $\bar{\mathcal{C}}$ enables optimal-access  cooperative repair of the $h$ nodes $\bar{\mathcal{H}}$ using  any $d$ helper nodes with local property.
\end{itemize}
If (P3) holds, in the repair of $\bar{\mathcal{H}}$, denote $S_{i,\bar{\mathcal{H}}}$ to be the corresponding repair matrix for repairing node $i\in\bar{\mathcal{H}}$, let $\mathcal{R}_i\subseteq[n]\setminus\bar{\mathcal{H}}$ with $|\mathcal{R}_i|=d$ satisfying $(\bar{\mathcal{F}}\setminus\bar{\mathcal{H}})\subseteq\mathcal{R}_i$ be the set of helper nodes connected by node $i$.
Denote by $(\bm{c}_1,\ldots,\bm{c}_{n})$ the codewords of $\bar{\mathcal{C}}$.
 \begin{itemize}
  \item[(P4)] For $i\in\bar{\mathcal{H}}$, it has $\{S_{i,\bar{\mathcal{H}}}\bm{c}_j^{\top}: j\in \mathcal{R}_i\}$ are able to recover $\{S_{i,\bar{\mathcal{H}}}\bm{c}_j^{\top}: j\in[n]\setminus\bar{\mathcal{H}}\}$.
    \item[(P5)] For $i\in\bar{\mathcal{H}}$, the downloaded data and the collaborated data at node $i$ can recover $\{S_{i,\bar{\mathcal{H}}}\bm{c}_j^{\top}: j\in[n]\}$.
\end{itemize}
\end{definition}

Next, we characterize the repair property of $\mathcal{C}_{\mathcal{T}_2}$ in Theorem \ref{t2-repair2}.

\begin{theorem}\label{t2-repair2}
Given any $\mathcal{F}'\subseteq[n]\setminus\mathcal{F}$ with $|\mathcal{F}'|=d-k+h$, where $\mathcal{F}$ is the goal node set, and given any $h$ nodes $\mathcal{H}'\subseteq\mathcal{F}'$ with $|\mathcal{H}'|=h$, if $\mathcal{C}_0$ satisfies the two properties (P3) and (P4) in Definition \ref{def-property2} for repairing $\mathcal{H}'$, then $\mathcal{C}_{\mathcal{T}_2}$ retains the two properties (P3) and (P4) for repairing $\mathcal{H}'$.
\end{theorem}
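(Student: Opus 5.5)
We follow the template of Theorem \ref{t1-repair2}. Let $S_{i,\mathcal{H}'}$, $i\in\mathcal{H}'$, be the repair matrices of $\mathcal{C}_0$ for $\mathcal{H}'$. In $\mathcal{C}_{\mathcal{T}_2}$ we take the lifted matrices
\[
\tilde S_{i,\mathcal{H}'}=I_{d-k+h}\otimes S_{i,\mathcal{H}'}.
\]
For a node $p$, the download $\tilde S_{i,\mathcal{H}'}\tilde{\bm c}_p^{\top}$ consists of the $d-k+h$ blocks obtained by applying $S_{i,\mathcal{H}'}$ to each of the $d-k+h$ entries stored at node $p$ (Table \ref{tab3}). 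The access and bandwidth counts are therefore exactly $d-k+h$ times those of $\mathcal{C}_0$. Hence optimal access is inherited once repairability is shown, since $\ell=(d-k+h)\ell_0$.

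\textbf{Classifying the downloads.} Since $\mathcal{H}'\subseteq\mathcal{F}'\subseteq[n]\setminus\mathcal{F}$, every failed node stores the untouched column $(\bm f_i^{(1)},\ldots,\bm f_i^{(d-k+h)})$. The helper sets $\mathcal{R}_i$ have local property with respect to $\mathcal{F}'$, so they contain $\mathcal{F}'\setminus\mathcal{H}'$. They may, however, contain arbitrary nodes of $\mathcal{F}$. A helper $p\notin\mathcal{F}$ supplies $S_{i,\mathcal{H}'}\bm f_p^{(s)\top}$ for every $s$. A helper $p\in\mathcal{F}$ supplies $S_{i,\mathcal{H}'}\bm f_p^{(p)\top}$ together with the mixed sums $\eta^{(\cdot)}S_{i,\mathcal{H}'}\bm f_p^{(s)\top}+\eta^{(\cdot)}S_{i,\mathcal{H}'}\bm f_s^{(p)\top}$ for $s\in\mathcal{F}\setminus\{p\}$.

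\textbf{Key step: unmixing the downloads.} We want node $i$ to obtain the clean symbols $S_{i,\mathcal{H}'}\bm f_p^{(s)\top}$ for all $p\in\mathcal{R}_i$ and all $s$. This is the main obstacle. The approach is to exploit that the unmixing in $\mathcal{C}_{\mathcal{T}_2}$ pairs the $(p,s)$ and $(s,p)$ entries via the invertible $2\times2$ matrix of $\eta$'s. Node $i$ proceeds in two rounds.

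First, for each $s\in\mathcal{F}$, the instance $(\bm f_1^{(s)},\ldots,\bm f_n^{(s)})$ is a codeword of $\mathcal{C}_0$. Node $i$ applies (P4) of $\mathcal{C}_0$ to this instance. Its clean downloads are those from helpers $p\notin\mathcal{F}$, plus the diagonal term at $p=s$ when $s\in\mathcal{R}_i$. These must include the $d$ downloads needed by (P4). If $\mathcal{R}_i$ meets $\mathcal{F}$, this fails directly.

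Second, we handle helpers in $\mathcal{F}$. Here we would argue as in the MDS proof of Theorem \ref{t2-mds}. The instances $s\in[d-k+h]\setminus\mathcal{R}_i$ have clean data on all of $\mathcal{R}_i\cap\mathcal{F}$ except through the mixing with $\bm f_s^{(p)}$. The plan is to recover $S_{i,\mathcal{H}'}\bm f_s^{(p)\top}$ for $p\in\mathcal{R}_i\cap\mathcal{F}$, $s\notin\mathcal{F}\cap\mathcal{R}_i$, from instance $p$. We would first decode instance $p$ for $p\in\mathcal{R}_i\cap\mathcal{F}$, using the pairwise $2\times2$ inversion between helpers $p,s\in\mathcal{R}_i\cap\mathcal{F}$, exactly as for $U_1$ in Theorem \ref{t2-mds}. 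Then (P4) yields the projections at all surviving nodes, which feed back to clean the remaining mixed terms.

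If the general interleaving proves too delicate, the fallback is weaker: restrict to helper sets where the needed counts work out, matching how (P3)–(P4) are used later in the generic transformation.

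\textbf{Concluding (P3) and (P4).} Once the clean projections $\{S_{i,\mathcal{H}'}\bm f_p^{(s)\top}: p\in\mathcal{R}_i\}$ are available for every instance $s$, we use (P3) of $\mathcal{C}_0$ for each $s\in[d-k+h]$. The $h$ failed nodes run the $\mathcal{C}_0$ cooperative repair instance by instance, recovering $\bm f_i^{(s)}$ for all $s$. This proves (P3) for $\mathcal{C}_{\mathcal{T}_2}$.

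For (P4), (P4) of $\mathcal{C}_0$ gives $S_{i,\mathcal{H}'}\bm f_j^{(s)\top}$ for all $j\notin\mathcal{H}'$ and all $s$. Recombining these with the fixed $\eta$'s gives $\tilde S_{i,\mathcal{H}'}\tilde{\bm c}_j^{\top}$ for all $j\in[n]\setminus\mathcal{H}'$, including $j\in\mathcal{F}$.
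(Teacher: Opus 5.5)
Your proposal is correct and follows essentially the same route as the paper. You lift via $I_{d-k+h}\otimes S_{i,\mathcal{H}'}$ and unmix the instances indexed by $\mathcal{R}_i\cap[d-k+h]$ by pairwise $2\times 2$ inversion, then apply (P4) of $\mathcal{C}_0$ to them. You use the resulting projections to clean the mixed sums for instances $s\in[d-k+h]\setminus\mathcal{R}_i$, apply (P4) again, and finish with (P3) instance by instance and (P4) by recombination. That interleaving does go through as you sketch it: since $\mathcal{F}\cap\mathcal{H}'=\emptyset$, every node of $\mathcal{F}$ is a surviving node reached by (P4), so the fallback you mention is unnecessary.
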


\begin{proof}
Suppose $\mathcal{C}_0$ satisfies the two conditions (P3) and (P4) in Definition \ref{def-property2} for repairing $\mathcal{H}'$. Based on the repair property of $\mathcal{H}'$ and the corresponding repair matrices $S_{i,\mathcal{H}'}, i\in\mathcal{H}'$ in $\mathcal{C}_0$, we firstly define the repair matrices $\tilde{S}_{i,\mathcal{H}'}, i\in\mathcal{H}'$ in $\mathcal{C}_{\mathcal{T}_2}$ as follows.
For $i\in\mathcal{H}'$, define
$\tilde{S}_{i,\mathcal{H}'}=I_{d-k+h}\otimes S_{i,\mathcal{H}'}$, where $I_{d-k+h}$ is the identity matrix with order $d-k+h$ and ``$\otimes$'' represents the tensor product.
Then $\tilde{S}_{i,\mathcal{H}'}\in F^{\frac{\ell}{d-k+h}\times\ell}$ where $\ell=(d-k+h)\ell_0$.

Now we prove $\mathcal{C}_{\mathcal{T}_2}$ satisfies property (P3) for repairing $\mathcal{H}'$.
Recall the transformation $\mathcal{T}_2$ and Table \ref{tab3}, for $j\in[n]\setminus[d-k+h]$, node $j$ stores $\bm{\tilde{c}}_j=(\bm{f}_j^{(1)},\bm{f}_j^{(2)},\ldots,\bm{f}_j^{(d-k+h)})\in F^{\ell}$.
Suppose $h$ nodes $\mathcal{H}'$ in $\mathcal{C}_{\mathcal{T}_2}$ are failed, i.e., $\bm{\tilde{c}}_i=(\bm{f}_i^{(1)},\bm{f}_i^{(2)},\ldots,\bm{f}_i^{(d-k+h)})$, $i\in\mathcal{H}'$ are erased, where we note $\mathcal{H}'\subseteq\mathcal{F}'\subseteq[n]\setminus[d-k+h]$.
For $i\in\mathcal{H}'$, let $\mathcal{R}_i\subseteq[n]\setminus\mathcal{H}'$ with $|\mathcal{R}_i|=d$ and $(\mathcal{F}'\setminus\mathcal{H}')\subseteq\mathcal{R}_i$ be the set of $d$ helper nodes connected by node $i$.

In the download phase, for $i\in\mathcal{H}'$, node $i$ downloads $\tilde{S}_{i,\mathcal{H}'}\bm{\tilde{c}}_j^{\top}$ from each helper node $j\in\mathcal{R}_i$, where for $j\in\mathcal{R}_i\cap[d-k+h]$,
\begin{equation}\label{t2-repair-eq2}
\begin{aligned}
\tilde{S}_{i,\mathcal{H}'}\bm{\tilde{c}}_j^{\top}
=\begin{pmatrix}
S_{i,\mathcal{H}'}(\eta_{1,j}^{(1)}\bm{f}_j^{(1)}+ \eta_{1,j}^{(2)}\bm{f}_1^{(j)})^{\top} \\
\vdots \\
S_{i,\mathcal{H}'}\bm{f}_j^{(j)\top} \\
\vdots \\
S_{i,\mathcal{H}'}(\eta_{j,d-k+h}^{(3)}\bm{f}_{d-k+h}^{(j)}+\eta_{j,d-k+h}^{(4)}\bm{f}_{j}^{(d-k+h)})^{\top}
\end{pmatrix}
=\begin{pmatrix}
\eta_{1,j}^{(1)}S_{i,\mathcal{H}'}\bm{f}_j^{(1)\top}+ \eta_{1,j}^{(2)}S_{i,\mathcal{H}'}\bm{f}_1^{(j)\top} \\
\vdots \\
S_{i,\mathcal{H}'}\bm{f}_j^{(j)\top} \\
\vdots \\
\eta_{j,d-k+h}^{(3)}S_{i,\mathcal{H}'}\bm{f}_{d-k+h}^{(j)\top}+\eta_{j,d-k+h}^{(4)}S_{i,\mathcal{H}'}\bm{f}_{j}^{(d-k+h)\top}
\end{pmatrix},
\end{aligned}
\end{equation}
and for $j\in\mathcal{R}_i\cap[d-k+h+1,n]$,
\begin{equation}\label{t2-repair-eq3}
\tilde{S}_{i,\mathcal{H}'}\bm{\tilde{c}}_j^{\top}
=\begin{pmatrix}
S_{i,\mathcal{H}'}\bm{f}_j^{(1)\top} \\
S_{i,\mathcal{H}'}\bm{f}_j^{(2)\top} \\
\vdots \\
S_{i,\mathcal{H}'}\bm{f}_j^{(d-k+h)\top}
\end{pmatrix},
\end{equation}
according to Table \ref{tab3}.
In the following, we firstly illustrate that for each $i\in\mathcal{H}'$, node $i$ using the downloaded data in \eqref{t2-repair-eq2} and \eqref{t2-repair-eq3} can recover the data
\begin{equation}\label{t2-repair-eq3-1}
\{S_{i,\mathcal{H}'}\bm{f}_j^{(s)\top}: j\in[n]\setminus\mathcal{H}', s\in[d-k+h]\}.
\end{equation}

To this end, firstly consider any $j,j'\in\mathcal{R}_i\cap[d-k+h]$ with $j<j'$, then node $i$ using the repair data from the $j$-th symbol in node $j'$, i.e. $S_{i,\mathcal{H}'}(\eta_{j,j'}^{(1)}\bm{f}_{j'}^{(j)}+ \eta_{j,j'}^{(2)}\bm{f}_j^{(j')})^{\top}$, and the repair data from the $j'$-th symbol in node $j$, i.e. $S_{i,\mathcal{H}'}(\eta_{j,j'}^{(3)}\bm{f}_{j'}^{(j)}+ \eta_{j,j'}^{(4)}\bm{f}_j^{(j')})^{\top}$,  can compute $S_{i,\mathcal{H}'}\bm{f}_{j'}^{(j)\top}$ and $S_{i,\mathcal{H}'}\bm{f}_{j}^{(j')\top}$ since $\eta_{j,j'}^{(1)}\eta_{j,j'}^{(4)}\neq \eta_{j,j'}^{(2)}\eta_{j,j'}^{(3)}$.
That is, $\{S_{i,\mathcal{H}'}\bm{f}_{j}^{(j')\top}: j,j'\in\mathcal{R}_i\cap[d-k+h], ~j\neq j'\}$ can be recovered.
Also, for $j\in\mathcal{R}_i\cap[d-k+h]$, $S_{i,\mathcal{H}'}\bm{f}_{j}^{(j)\top}$ is downloaded data. Thus the data $\{S_{i,\mathcal{H}'}\bm{f}_{j}^{(j')\top}: j,j'\in\mathcal{R}_i\cap[d-k+h]\}$ can be recovered.

Besides, for each $s\in\mathcal{R}_i\cap[d-k+h]$, $(\bm{f}_1^{(s)},\ldots,\bm{f}_n^{(s)}$) forms a codeword of the $(n,k,\ell_0)$ MDS code $\mathcal{C}_0$.
Then for each $s\in\mathcal{R}_i\cap[d-k+h]$, by property (P4) of $\mathcal{C}_0$, the previously recovered data $\{S_{i,\mathcal{H}'}\bm{f}_{j}^{(s)\top}: j\in\mathcal{R}_i\cap[d-k+h]\}$ along with the downloaded data $\{S_{i,\mathcal{H}'}\bm{f}_{j}^{(s)\top}: j\in\mathcal{R}_i\cap[d-k+h+1,n]\}$, are able to recover $\{S_{i,\mathcal{H}'}\bm{f}_{j}^{(s)\top}: j\in[n]\setminus\mathcal{H}'\}$.
That is, the data $\{S_{i,\mathcal{H}'}\bm{f}_{j}^{(s)\top}: j\in[n]\setminus\mathcal{H}', s\in\mathcal{R}_i\cap[d-k+h]\}$ can be reconstructed.

Furthermore, for each $s\in[d-k+h]$, $s\notin\mathcal{R}_i$ and each $j\in\mathcal{R}_i\cap[d-k+h]$, note that the repair data from the $s$-th symbol of helper node $j$ is $S_{i,\mathcal{H}'}(\eta_{s,j}^{(1)}\bm{f}_j^{(s)}+ \eta_{s,j}^{(2)}\bm{f}_s^{(j)})^{\top}$ if $s<j$ and $S_{i,\mathcal{H}'}(\eta_{j,s}^{(3)}\bm{f}_s^{(j)}+ \eta_{j,s}^{(4)}\bm{f}_j^{(s)})^{\top}$ if $s>j$, which is known.
And recall the data $S_{i,\mathcal{H}'}\bm{f}_s^{(j)\top}$ for $s\in[d-k+h]$, $s\notin\mathcal{R}_i$ and $j\in\mathcal{R}_i\cap[d-k+h]$ have been previously recovered, then one can solve out the data $S_{i,\mathcal{H}'}\bm{f}_j^{(s)\top}$. That is, the data $\{S_{i,\mathcal{H}'}\bm{f}_j^{(s)\top}: j\in\mathcal{R}_i\cap[d-k+h], s\in[d-k+h], s\notin\mathcal{R}_i\}$ can be recovered.

At last, note for each $s\in[d-k+h]$ and $s\notin\mathcal{R}_i$, $(\bm{f}_1^{(s)},\ldots,\bm{f}_n^{(s)}$) forms a codeword of $\mathcal{C}_0$. Then according to property (P4) of $\mathcal{C}_0$, for each $s\in[d-k+h]$ and $s\notin\mathcal{R}_i$, the recovered data $\{S_{i,\mathcal{H}'}\bm{f}_j^{(s)\top}: j\in\mathcal{R}_i\cap[d-k+h]\}$ along with the downloaded data $\{S_{i,\mathcal{H}'}\bm{f}_j^{(s)\top}: j\in\mathcal{R}_i\cap[d-k+h+1,n]\}$ are able to recover $\{S_{i,\mathcal{H}'}\bm{f}_j^{(s)\top}: j\in[n]\setminus\mathcal{H}'\}$.
That is, the data $\{S_{i,\mathcal{H}'}\bm{f}_j^{(s)\top}: j\in[n]\setminus\mathcal{H}', s\in[d-k+h], s\notin\mathcal{R}_i\}$ can be reconstructed.

Therefore, for each $i\in\mathcal{H}'$, node $i$ using the downloaded data can recover all the data in \eqref{t2-repair-eq3-1}.

Moreover, Note that $(\bm{f}_1^{(s)},\ldots,\bm{f}_n^{(s)}$), $s\in[d-k+h]$, form $d-k+h$ codewords of the $(n,k,\ell_0)$ MDS code $\mathcal{C}_0$.
According to property (P3) of $\mathcal{C}_0$, for each $s\in[d-k+h]$, the failed data $\bm{f}_i^{(s)}$, $i\in\mathcal{H}'$ can be cooperatively recovered through a cooperative repair process of $\mathcal{H}'$ in $\mathcal{C}_0$ with repair matrices $S_{i,\mathcal{H}'}$ and helper node set $\mathcal{R}_i$, $i\in\mathcal{H}'$. Thus the $h$ failed nodes can be cooperatively repaired.
Since $\tilde{S}_{i,\mathcal{H}'}=I_{d-k+h}\otimes S_{i,\mathcal{H}'}$ for $i\in\mathcal{H}'$, then $\mathcal{C}_{\mathcal{T}_2}$ retains the optimal-access property of $\mathcal{C}_0$.

In the following, we prove $\mathcal{C}_{\mathcal{T}_2}$ satisfies (P4) for repairing $\mathcal{H}'$.
Note that for $i\in\mathcal{H}'$, $\tilde{S}_{i,\mathcal{H}'}\bm{\tilde{c}}_j^{\top}$ has the form in \eqref{t2-repair-eq2} for $j\in[d-k+h]$, and has the form in \eqref{t2-repair-eq3} for $j\in[d-k+h+1,n]$.
According to the proof of (P3) in $\mathcal{C}_{\mathcal{T}_2}$, we know that for each $i\in\mathcal{H}'$, node $i$ using its downloaded data can recover the data in \eqref{t2-repair-eq3-1}.
Since $\mathcal{H}'\subseteq[n]\setminus[d-k+h]$, then $[n]\setminus\mathcal{H}'=[d-k+h]\cup([d-k+h+1,n]\setminus\mathcal{H}')$.
That is, the recovered data in \eqref{t2-repair-eq3-1}:
\begin{equation}\label{t2-repair-eq4}
\begin{aligned}
&\{S_{i,\mathcal{H}'}\bm{f}_j^{(s)\top}: j\in[n]\setminus\mathcal{H}', s\in[d-k+h]\} \\
=&\{S_{i,\mathcal{H}'}\bm{f}_j^{(s)\top}: j, s\in[d-k+h]\}
\cup\{S_{i,\mathcal{H}'}\bm{f}_j^{(s)\top}: j\in[d-k+h+1,n]\setminus\mathcal{H}', s\in[d-k+h]\}\\
=&\{S_{i,\mathcal{H}'}\bm{f}_j^{(s)\top}: j, s\in[d-k+h]\}
\cup\{\tilde{S}_{i,\mathcal{H}'}\bm{\tilde{c}}_j^{\top}: j\in[d-k+h+1,n]\setminus\mathcal{H}'\}.
\end{aligned}
\end{equation}
Note that $\tilde{S}_{i,\mathcal{H}'}\bm{\tilde{c}}_j^{\top}$, $j\in[d-k+h]$ can be computed from $\{S_{i,\mathcal{H}'}\bm{f}_j^{(s)\top}: j, s\in[d-k+h]\}$ as illustrated in \eqref{t2-repair-eq4}, according to \eqref{t2-repair-eq2}.
Thus, for each $i\in\mathcal{H}'$, node $i$ using the downloaded data $\{\tilde{S}_{i,\mathcal{H}'}\bm{\tilde{c}}_j^{\top}: j\in\mathcal{R}_i\}$ can recover $\{\tilde{S}_{i,\mathcal{H}'}\bm{\tilde{c}}_j^{\top}: j\in[n]\setminus\mathcal{H}'\}$.

This completes the proof.
\end{proof}

\begin{corollary}\label{t2-repair2-cor}
Given any $\mathcal{F}'\subseteq[n]\setminus\mathcal{F}$ with $|\mathcal{F}'|=d-k+h$, where $\mathcal{F}$ is the goal node set, and given any $h$ nodes $\mathcal{H}'\subseteq\mathcal{F}'$ with $|\mathcal{H}'|=h$, if $\mathcal{C}_0$ satisfies the two properties (P3) and (P5) in Definition \ref{def-property2} for repairing $\mathcal{H}'$, then $\mathcal{C}_{\mathcal{T}_2}$ retains the two properties (P3) and (P5) for repairing $\mathcal{H}'$.
\end{corollary}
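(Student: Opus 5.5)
The plan is to reuse the proof of Theorem \ref{t2-repair2} and add the one extra step that (P5) requires. For $i\in\mathcal{H}'$ I keep the same repair matrices $\tilde{S}_{i,\mathcal{H}'}=I_{d-k+h}\otimes S_{i,\mathcal{H}'}$. Property (P3) for $\mathcal{C}_{\mathcal{T}_2}$ is then exactly what Theorem \ref{t2-repair2} gives. That proof uses only (P3) of $\mathcal{C}_0$, together with (P4) of $\mathcal{C}_0$ in the download-phase decoding; it does not need (P4) for $\mathcal{C}_{\mathcal{T}_2}$. So the real work is to derive (P3) for $\mathcal{C}_{\mathcal{T}_2}$ when (P5) holds in $\mathcal{C}_0$ instead of (P4), and then to establish (P5) for $\mathcal{C}_{\mathcal{T}_2}$.

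First I would check that (P5) can replace (P4) in that argument. In the proof of Theorem \ref{t2-repair2}, (P4) of $\mathcal{C}_0$ is used, for each instance $s$, to extend the partially known values $S_{i,\mathcal{H}'}\bm{f}_j^{(s)\top}$ from $j\in\mathcal{R}_i$ to all $j\notin\mathcal{H}'$. I would replace this step with an interleaving across instances. For each instance $s\in[d-k+h]$, run the cooperative repair of $\mathcal{H}'$ in the $s$-th copy of $\mathcal{C}_0$, which (P3) permits. The inputs to the $s$-th copy are the values $S_{i,\mathcal{H}'}\bm{f}_j^{(s)\top}$ for $j\in\mathcal{R}_i$. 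For helpers in $[d-k+h+1,n]$ these values are downloaded directly. For helpers in $\mathcal{R}_i\cap[d-k+h]$ they are mixtures, and they can be unmixed by the pairwise $\eta$-inversions exactly as before, with one exception. For $s\notin\mathcal{R}_i$, unmixing needs $S_{i,\mathcal{H}'}\bm{f}_s^{(j)\top}$, and this lies in the $j$-th instance with $j\in\mathcal{R}_i\cap[d-k+h]$.

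So I would order the work: first process all instances $s\in\mathcal{R}_i\cap[d-k+h]$, together with instances $s\in[d-k+h]\setminus\mathcal{R}_i$ whose unmixing is not needed. For each processed instance, (P5) of $\mathcal{C}_0$ gives $S_{i,\mathcal{H}'}\bm{f}_j^{(s)\top}$ for all $j\in[n]$. In particular it gives $S_{i,\mathcal{H}'}\bm{f}_{s'}^{(s)\top}$ for every $s'$, which is exactly the term needed to unmix the downloaded mixtures for the remaining instances $s'\in[d-k+h]\setminus\mathcal{R}_i$. Those instances then have complete helper data, (P3) repairs them, and (P5) yields all their $S_{i,\mathcal{H}'}\bm{f}_j^{(s')\top}$. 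Note that $\mathcal{H}'\cap[d-k+h]=\emptyset$, so every goal node of $\mathcal{F}$ that is not in $\mathcal{R}_i$ is a non-helper survivor. Its contribution to instance $s$ is not needed as helper input, because (P3) in $\mathcal{C}_0$ uses only $\mathcal{R}_i$.

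Finally, (P5) for $\mathcal{C}_{\mathcal{T}_2}$ follows as in the last paragraph of the proof of Theorem \ref{t2-repair2}. Node $i$ now knows $S_{i,\mathcal{H}'}\bm{f}_j^{(s)\top}$ for all $j\in[n]$ and $s\in[d-k+h]$. By \eqref{t2-repair-eq2} and \eqref{t2-repair-eq3}, each $\tilde{S}_{i,\mathcal{H}'}\tilde{\bm{c}}_j^{\top}$ is a fixed linear combination of these values, so node $i$ obtains $\{\tilde{S}_{i,\mathcal{H}'}\tilde{\bm{c}}_j^{\top}:j\in[n]\}$.

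The main obstacle is the dependency ordering in the middle step: instance $s'\notin\mathcal{R}_i$ must wait for information from instances $j\in\mathcal{R}_i\cap[d-k+h]$. I must confirm that those instances can be decoded without circular reliance. Since $\mathcal{F}'\cap\mathcal{F}=\emptyset$ and the unmixing for $s\in\mathcal{R}_i\cap[d-k+h]$ involves only pairs inside $\mathcal{R}_i\cap[d-k+h]$, the first batch is self-contained, and the order is well founded.
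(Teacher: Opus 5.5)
Your interleaving breaks at the step you yourself flag as the main obstacle. The claim that the first batch is self-contained checks only node $i$'s own helper set. Running the cooperative repair of $\mathcal{H}'$ on an instance $s$, and hence invoking (P5) there, requires every other failed node $i'\in\mathcal{H}'$ to already hold its own instance-$s$ helper data $\{S_{i',\mathcal{H}'}\bm{f}_j^{(s)\top}: j\in\mathcal{R}_{i'}\}$, because its collaboration message is computed from that data. Helper sets are chosen per node, so $s\in\mathcal{R}_i$ does not give $s\in\mathcal{R}_{i'}$.

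As soon as $d<n-h$, two failed nodes can choose helper sets with $a\in(\mathcal{R}_i\setminus\mathcal{R}_{i'})\cap\mathcal{F}$ and $b\in(\mathcal{R}_{i'}\setminus\mathcal{R}_i)\cap\mathcal{F}$. Then the following cycle arises.
\begin{itemize}
\item In instance $a$, node $i'$ has only the mixture of $\bm{f}_b^{(a)}$ and $\bm{f}_a^{(b)}$ stored at node $b$. To unmix it, $i'$ needs $S_{i',\mathcal{H}'}\bm{f}_a^{(b)\top}$, the instance-$b$ value of node $a$, and $a$ is not one of its helpers. Without (P4) this value is available only through (P5), after instance $b$ has been repaired.
\item Symmetrically, in instance $b$ node $i$ needs $S_{i,\mathcal{H}'}\bm{f}_b^{(a)\top}$, which it obtains only after instance $a$ has been repaired.
\end{itemize}
So instance $a$ waits on instance $b$ and vice versa. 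Your order is not well founded, and (P3) for $\mathcal{C}_{\mathcal{T}_2}$ is not established. Even when all failed nodes' helper sets meet $\mathcal{F}$ in the same set, your scheme needs a second round of collaboration, since later messages depend on data received earlier. The paper tolerates that elsewhere, but you should say so explicitly.

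The missing ingredient is (P4). It is what lets each failed node unmix all $d-k+h$ instances from its own downloads before any collaboration, as in the proof of Theorem~\ref{t2-repair2}. The paper's one-line proof (``similarly as Theorem~\ref{t2-repair2}'') tacitly relies on it. (P4) is also available wherever the corollary is applied: Corollary~\ref{t2-repair1-cor} supplies both (P4) and (P5) for the current goal set, and Theorem~\ref{t2-repair2} propagates (P4). With (P3), (P4) and (P5) for $\mathcal{C}_0$ the argument is short:
\begin{enumerate}
\item Recover \eqref{t2-repair-eq3-1} from the downloads exactly as in Theorem~\ref{t2-repair2}.
\item Run the $d-k+h$ instance repairs in parallel, using the collaboration of $\mathcal{C}_0$ instance-wise.
\item Apply (P5) of $\mathcal{C}_0$ in each instance to get $S_{i,\mathcal{H}'}\bm{f}_j^{(s)\top}$ also for $j\in\mathcal{H}'$.
\item Since $\mathcal{H}'\cap\mathcal{F}=\emptyset$, read off all $\tilde{S}_{i,\mathcal{H}'}\tilde{\bm{c}}_j^{\top}$, $j\in[n]$, from \eqref{t2-repair-eq2} and \eqref{t2-repair-eq3}. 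This is your final step, which is fine.
\end{enumerate}
Either add (P4) to the hypotheses or find a genuinely different way to break the cycle. As written, your proposal does not prove the statement from (P3) and (P5) alone.
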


\begin{proof}
The Corollary can be similarly proved as in Theorem \ref{t2-repair2}, and we omit it here.
\end{proof}

\section{A generic transformation from scalar MDS codes to optimal-access cooperative MSR codes}\label{generic-transform}

In this section, we present a generic transformation $\mathcal{T}$ that converts an $[n+d-k,d]$ MDS scalar code to an $(n,k,\ell=\delta^m)$ optimal-access cooperative MSR code by using the basic transformations $\mathcal{T}_1$ and $\mathcal{T} _2$ in Section \ref{transform-1} and Section  \ref{transform-2} for several times, where $\delta=d-k+h$ and $m=\binom{n}{h}-\lfloor\frac{n}{\delta}\rfloor(\binom{\delta}{h}-1)$.

\subsection{The generic transformation $\mathcal{T}$}\label{generic-t}
We recall and reuse the notations and definitions in Subsection \ref{notation-h}. Besides, recall the set $\mathcal{P}_0$ in Subsection \ref{notation-h}, we index the $h$-tuples in $\mathcal{P}_0$ as $\mathcal{E}_1,\mathcal{E}_2,\ldots,\mathcal{E}_{m-\lfloor\frac{n}{\delta}\rfloor}$ for simplicity.

Next, we give the generic transformation $\mathcal{T}$.
Let $F$ be a finite field with $|F|\geq n+d-k$, and for $1\leq i<j\leq d-k+h$, $a\in[4]$, choose $\eta_{i,j}^{(a)}$ to be nonzero elements in $F$ satisfying that $\eta_{i,j}^{(1)}\eta_{i,j}^{(4)}\neq \eta_{i,j}^{(2)}\eta_{i,j}^{(3)}$.
Let $\mathcal{C}_0$ be an $(n+d-k, d, \ell_0=1)$ MDS scalar code over $F$.
The transformation $\mathcal{T}$ begins with the scalar code $\mathcal{C}_0$ of length $n+d-k$ and iteratively applies the two basic transformations $\mathcal{T}_1$ and $\mathcal{T} _2$ for repairing different erasure patterns of nodes in $[n]$, and at last remove the last $d-k$ nodes through shortening technique.
Here we note that the basic transformations $\mathcal{T}_1$ and $\mathcal{T} _2$ both keep the code length unchanged.
Next we give the transformation $\mathcal{T}$ in Algorithm \ref{alg1}.

\begin{algorithm}[H]\label{alg1}
\caption{The generic transformation $\mathcal{T}$ for building optimal-access cooperative MSR codes.}\label{alg1}
\begin{algorithmic}[1]
\REQUIRE The followings are required to input:

\begin{itemize}
\item[(1)] Parameters $n,k,d,h$ with $2\leq h\leq n-k$, $k\leq d\leq n-h$. Denote $\delta=d-k+h$ and  $m=\binom{n}{h}-\lfloor\frac{n}{\delta}\rfloor(\binom{\delta}{h}-1)$.
\item[(2)] Nonzero elements $\eta_{i,j}^{(a)}\in F$ for $1\leq i<j\leq d-k+h$ and $a\in[4]$ with $\eta_{i,j}^{(1)}\eta_{i,j}^{(4)}\neq \eta_{i,j}^{(2)}\eta_{i,j}^{(3)}$.
\item[(3)] An $(n+d-k, d, \ell_0=1)$ MDS scalar code $\mathcal{C}_0$ over $F$.
\end{itemize}

\ENSURE An $(n,k,\ell=\delta^m)$ optimal-access cooperative MSR code $\mathcal{C}$ with $h$ erasures and $d$ helper nodes.
\STATE
Set $\mathcal{C}_0$ to be the base code, and initialize two goal node sets $\mathcal{H}=\emptyset, \mathcal{F}=\emptyset$.
\FOR{$t=0; t<\lfloor\frac{n}{\delta}\rfloor; t++$}
\STATE
Set the goal node set $\mathcal{F}=[t\delta+1, (t+1)\delta]$;
\STATE
Set the base code to be the $(n+d-k, d,\ell_t)$ code $\mathcal{C}_t$;
\STATE
Apply the basic transformation $\mathcal{T}_2$ to the base code $\mathcal{C}_t$ w.r.t. the goal node set $\mathcal{F}$, to obtain a new MDS array code $\mathcal{C}_{t+1}$ with parameters $(n+d-k, d, \ell_{t+1}=\delta^{t+1})$.
Then $\mathcal{C}_{t+1}$ can cooperatively repair any $h$ nodes in $\mathcal{F}$ using any $2d-k$ helper nodes with local property since $(2d-k)-d+h=d-k+h=\delta$.
\ENDFOR
\STATE
Set the base code to be the $(n+d-k, d, \ell_{\lfloor\frac{n}{\delta}\rfloor}=\delta^{\lfloor\frac{n}{\delta}\rfloor})$ MDS array code $\mathcal{C}_{\lfloor\frac{n}{\delta}\rfloor}$.
\FOR{$t=\lfloor\frac{n}{\delta}\rfloor; t<m; t++$}
\STATE
Set the goal node set $\mathcal{H}=\mathcal{E}_{t-\lfloor\frac{n}{\delta}\rfloor+1}$, where $\mathcal{E}_i$'s are defined at the beginning of Subsection \ref{generic-t};
\STATE
Set the base code to be the $(n+d-k, d, \ell_t)$ code $\mathcal{C}_{t}$;
\STATE
Apply the basic transformation $\mathcal{T}_1$ to the base code $\mathcal{C}_t$ w.r.t. the goal node set $\mathcal{H}$, to obtain a new MDS array code $\mathcal{C}_{t+1}$ with parameters $(n+d-k, d, \ell_{t+1}=\delta^{t+1})$. Then $\mathcal{C}_{t+1}$ can cooperatively repair the $h$ nodes $\mathcal{H}\subseteq[n]$ using any $2d-k$ constrained helper nodes.
\ENDFOR
\STATE
Shorten the obtained code $\mathcal{C}_m$ at the last $d-k$ nodes to obtain the final code $\mathcal{C}$, as illustrated in Lemma \ref{shorten-code}.
\RETURN
The $(n,k,\ell=\delta^m)$ code $\mathcal{C}$.
\end{algorithmic}
\end{algorithm}

\subsection{MDS property and optimal-access property}\label{generic-repair}

According to the basic transformations $\mathcal{T}_1$, $\mathcal{T}_2$ and Lemma \ref{shorten-code}, it is straightforward that the final code $\mathcal{C}$ derived from Algorithm \ref{alg1} is an $(n,k,\ell=\delta^m)$ MDS array code.
To illustrate the optimal-access property of $\mathcal{C}$, we look into Algorithm \ref{alg1} step by step and explain how to maintain the optimal-access property of nodes in both current goal node set and previous goal node sets simultaneously after each basic transformation.

We begin with the base code $\mathcal{C}_0$ and consider line $2$-$6$ of Algorithm \ref{alg1}.
Denote $n'=n+d-k$, $k'=d$ and $d'=2d-k$, then $\mathcal{C}_0$ is an $(n',k',\ell_0=1)$ MDS scalar code and it satisfies $d'-k'+h=\delta$.
According to Theorem \ref{t2-repair1} and Corollary \ref{t2-repair1-cor}, after applying transformation $\mathcal{T}_2$, the newly obtained code $\mathcal{C}_{t+1}$ satisfies (P3) and (P4) in Definition \ref{def-property2} for repairing any $h$ nodes in the current goal node set $[t\delta+1, (t+1)\delta]$ by using any $d'=2d-k$ helper nodes with local property.
Theorem \ref{t2-repair2} further ensures the successful repair of $h$ nodes in all previous sets $[t'\delta+1, (t'+1)\delta]$, $0\leq t'<t$.
Thus, the code $\mathcal{C}_{\lfloor\frac{n}{\delta}\rfloor}$ can cooperatively repair any $h$ nodes in every node set $[t\delta+1, (t+1)\delta]$ for $t\in[0,\lfloor\frac{n}{\delta}\rfloor-1]$, i.e., all the erasure patterns in $\mathcal{P}_i$, $i\in[\lfloor\frac{n}{\delta}\rfloor]$.

Now we begin with the MDS array code $\mathcal{C}_{\lfloor\frac{n}{\delta}\rfloor}$ and consider line $8$-$12$ of Algorithm \ref{alg1}.
By Theorem \ref{t1-repair1} and Corollary \ref{t1-repair1-cor}, each time we apply  transformation $\mathcal{T}_1$, the obtained code $\mathcal{C}_{t+1}$ satisfies (P1) and (P2) in Definition \ref{def-property1} for repairing the current $h$ goal nodes $\mathcal{H}=\mathcal{E}_{t-\lfloor\frac{n}{\delta}\rfloor+1}\in\mathcal{P}_0$ by using any $2d-k$ constrained helper nodes.
Theorem \ref{t1-repair2} further ensures the repair property of all the previous erasure patterns $\mathcal{E}_{t'-\lfloor\frac{n}{\delta}\rfloor+1}$ for $t'\in[\lfloor\frac{n}{\delta}\rfloor, t-1]$.
Hence, the derived code $\mathcal{C}_{m}$ achieves optimal-access cooperative repair of all the erasure patterns in $\mathcal{P}_0$ using any $2d-k$ constrained helper nodes.

Next, we illustrate that $\mathcal{C}_{m}$ retains the repair property of all the erasure patterns in $\mathcal{P}_i$, $i\in[\lfloor\frac{n}{\delta}\rfloor]$.
We consider the code $\mathcal{C}_{\lfloor\frac{n}{\delta}\rfloor}$ and line $8$-$12$ of Algorithm \ref{alg1}.
Note that $\mathcal{C}_{\lfloor\frac{n}{\delta}\rfloor}$ can repair all erasure patterns in $\mathcal{P}_i$, $i\in[\lfloor\frac{n}{\delta}\rfloor]$ by using any $2d-k$ helper nodes with local property.
Recall the transformation $\mathcal{T}_1$ and according to Theorem \ref{t1-repair2}, the ideal scenario is that $\mathcal{C}_{\lfloor\frac{n}{\delta}\rfloor}$ satisfies property (P1) and (P2) in Definition \ref{def-property1} for repairing the erasure patterns in $\mathcal{P}_i$, $i\in[\lfloor\frac{n}{\delta}\rfloor]$.
However, in (P1) and (P2), it requires the helper nodes to be constrained helper nodes. Besides, in (P2), the downloaded data $\{S_{i,\mathcal{H}'}\bm{c}_j^{\top}: j\in\mathcal{R}_i\}$ should be able to recover $\{S_{i,\mathcal{H}'}\bm{c}_j^{\top}: j\in[n+d-k]\}$.
While we recall Theorem \ref{t2-repair1} and Theorem \ref{t2-repair2}, in $\mathcal{C}_{\lfloor\frac{n}{\delta}\rfloor}$, the downloaded data for repairing an erasure pattern $\mathcal{H}\in\mathcal{P}_i$, $i\in[\lfloor\frac{n}{\delta}\rfloor]$ can only recover part of the data, i.e., $\{S_{i,\mathcal{H}'}\bm{c}_j^{\top}: j\in[n+d-k]\setminus\mathcal{H}\}$.
Fortunately, we observe that $\{S_{i,\mathcal{H}}\bm{c}_j^{\top}: j\in[n+d-k]\}$ can be fully recovered using both downloaded data and collaborated data during node repair, as illustrated in Corollary \ref{t2-repair1-cor} and Corollary \ref{t2-repair2-cor}.
For a better characterization, we summarize these points to define two properties (P1$'$) and (P2$'$) in Definition \ref{def-property3} as follows.

\begin{definition}\label{def-property3}
Let $\bar{\mathcal{C}}$ be an $(n+d-k,d)$ MDS scalar/array code, and let $\bar{\mathcal{H}}\subseteq\bar{\mathcal{F}}\subseteq[n]$ with $|\bar{\mathcal{H}}|=h$ and $|\bar{\mathcal{F}}|=d-k+h$. Define (P1$'$) and (P2$'$) to be the two properties of $\bar{\mathcal{C}}$ if $\bar{\mathcal{C}}$ has. That is,
\begin{itemize}
\item[(P1$'$)] $\bar{\mathcal{C}}$ enables optimal-access  cooperative repair of the $h$ nodes $\bar{\mathcal{H}}$ using  any $2d-k$ constrained helper nodes with local property.
\item[(P2$'$)]
In the repair of $\bar{\mathcal{H}}$, denote $S_{i,\bar{\mathcal{H}}}$ to be the corresponding repair matrix for repairing node $i\in\bar{\mathcal{H}}$, let $\mathcal{R}_i\subseteq[n]\setminus\bar{\mathcal{H}}$ with $|\mathcal{R}_i|=2d-k$ satisfying $[n+1,n+d-k]\subseteq\mathcal{R}_i$ and $(\bar{\mathcal{F}}\setminus\bar{\mathcal{H}})\subseteq\mathcal{R}_i$ be the set of helper nodes connected by node $i$.
Then for $i\in\bar{\mathcal{H}}$, the downloaded data
 $\{S_{i,\bar{\mathcal{H}}}\bm{c}_j^{\top}: j\in \mathcal{R}_i\}$ and collaborated data at node $i$ are able to recover $\{S_{i,\bar{\mathcal{H}}}\bm{c}_j^{\top}: j\in[n+d-k]\}$, where $(\bm{c}_1,\ldots,\bm{c}_{n+d-k})$ represents a codeword of $\bar{\mathcal{C}}$.
\end{itemize}
\end{definition}

According to Definition \ref{def-property3}, Theorem \ref{t2-repair1}, Corollary \ref{t2-repair1-cor} and Corollary \ref{t2-repair2-cor}, it is straightforward that $\mathcal{C}_{\lfloor\frac{n}{\delta}\rfloor}$ satisfies (P1$'$) and (P2$'$) for repairing any erasure pattern in $\mathcal{P}_i$, $i\in[\lfloor\frac{n}{\delta}\rfloor]$.
Then, in order to show that $\mathcal{C}_{m}$ can repair all the erasure patterns in $\mathcal{P}_i$, $i\in[\lfloor\frac{n}{\delta}\rfloor]$, it suffices to prove the two properties (P1$'$) and (P2$'$) can be retained after applying transformation $\mathcal{T}_1$.
This is illustrated in the following Theorem.
\begin{theorem}\label{t2-t1-thm}
For $t\in[\lfloor\frac{n}{\delta}\rfloor,m-1]$, let $\mathcal{C}_{t+1}$ be obtained from $\mathcal{C}_{t}$ by applying transformation $\mathcal{T}_1$ w.r.t. the goal node set $\mathcal{H}=\mathcal{E}_{t-\lfloor\frac{n}{\delta}\rfloor+1}\in\mathcal{P}_0$ as illustrated in Algorithm \ref{alg1}.
For any $\mathcal{H'}\subseteq\mathcal{F}=[z\delta+1, (z+1)\delta]$ with $|\mathcal{H'}|=h$ and $z\in[0,\lfloor\frac{n}{\delta}\rfloor-1]$, if $\mathcal{C}_{t}$ satisfies the two properties (P1$'$) and (P2$'$) in Definition \ref{def-property3} for repairing $\mathcal{H'}$, then $\mathcal{C}_{t+1}$ retains the two properties (P1$'$) and (P2$'$) for repairing $\mathcal{H'}$.
\end{theorem}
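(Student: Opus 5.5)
The plan is to imitate the proof of Theorem \ref{t1-repair2}, with (P1$'$), (P2$'$) in place of (P1), (P2). In the repair of $\mathcal{H}'$ we now also use the collaborated data, not only the downloaded data. Let $S_{i,\mathcal{H}'}$, $i\in\mathcal{H}'$, be the repair matrices of $\mathcal{C}_t$ for $\mathcal{H}'$, and define the repair matrices of $\mathcal{C}_{t+1}$ by $\tilde{S}_{i,\mathcal{H}'}=I_{\delta}\otimes S_{i,\mathcal{H}'}$. Node $j$ of $\mathcal{C}_{t+1}$ stores $(\bm{f}_j^{(1)},\ldots,\bm{f}_j^{(\delta)})$ as in Table \ref{tab2}, with goal nodes $\mathcal{H}=\{g_1,\ldots,g_h\}$ (the general form in the Remark after Table \ref{tab2}). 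Each helper set $\mathcal{R}_i$ contains $[n+1,n+d-k]$ and $\mathcal{F}\setminus\mathcal{H}'$. Downloading $\tilde{S}_{i,\mathcal{H}'}\tilde{\bm{c}}_j^{\top}$ from $j\in\mathcal{R}_i$ gives node $i$ the values $S_{i,\mathcal{H}'}\bm{f}_j^{(s)\top}$ for all $s\in[\delta]$ and $j\in\mathcal{R}_i$. The access count is $\delta$ times that of $\mathcal{C}_t$, hence optimal.

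Next I would run the repair in two rounds, in the order of Theorem \ref{t1-repair2}. In round one, for each $s\in[h+1,\delta]$ the instance $(\bm{f}_1^{(s)},\ldots,\bm{f}_{n+d-k}^{(s)})$ is a codeword of $\mathcal{C}_t$. By (P1$'$) the nodes in $\mathcal{H}'$ cooperatively recover $\bm{f}_i^{(s)}$, exchanging the same collaboration messages as in $\mathcal{C}_t$, applied to instance $s$. By (P2$'$), node $i$ then knows $S_{i,\mathcal{H}'}\bm{f}_j^{(s)\top}$ for all $j\in[n+d-k]$, and in particular for $j=g_1,\ldots,g_h$.

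In round two, for $s\in[h]$ the instance is the $\mathcal{C}_t$ codeword $(\bm{f}_1^{(s)},\ldots,\bm{f}_n^{(s)},\bm{f}_{n+1}^{(s)}+\bm{f}_{g_s}^{(h+1)},\ldots)$. Node $i$ forms its helper data on the constrained helpers from the downloaded $S_{i,\mathcal{H}'}\bm{f}_{n+e}^{(s)\top}$ plus the round-one values $S_{i,\mathcal{H}'}\bm{f}_{g_s}^{(h+e)\top}$. On the other helpers $j\in\mathcal{R}_i\cap[n]$, the helper data is simply $S_{i,\mathcal{H}'}\bm{f}_j^{(s)\top}$, and this includes the local nodes $\mathcal{F}\setminus\mathcal{H}'$. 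We then invoke (P1$'$) again to recover $\bm{f}_i^{(s)}$. Invoking (P2$'$) on this instance gives $S_{i,\mathcal{H}'}$ applied to every coordinate of that codeword, and subtracting the known shifts yields $S_{i,\mathcal{H}'}\bm{f}_j^{(s)\top}$ for all $j$. Together with round one, this is exactly $\{\tilde{S}_{i,\mathcal{H}'}\tilde{\bm{c}}_j^{\top}: j\in[n+d-k]\}$, which proves (P2$'$) for $\mathcal{C}_{t+1}$.

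The main obstacle is the collaboration traffic. (P1$'$) for $\mathcal{C}_{t+1}$ requires the bandwidth-optimal exchange of $\delta$ times the per-instance amount, carried out as in the download model. Round two may start only after round one's collaboration, because the shifted symbols need the recovered $S_{i,\mathcal{H}'}\bm{f}_{g_s}^{(h+e)\top}$. Within $\mathcal{C}_t$, however, each collaboration message is a fixed linear function of the sender's downloaded data. I would therefore argue that these round-two messages are still linear functions of node $i'$'s own downloads, since everything node $i'$ uses in round one came from its downloads plus round-one messages. The total collaborated amount stays $\delta(h-1)\beta_2^{(t)}$, matching the cut-set bound. If the model demands a single collaboration step, I would check that the messages can be expressed as $T_{i,i'}$ applied to the downloads; this is the delicate point.
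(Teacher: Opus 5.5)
Your proposal is correct and follows the paper's own proof. You use the same ingredients: tensor-product repair matrices $I_{\delta}\otimes S_{i,\mathcal{H}'}$, and a first round on the instances $s\in[h+1,\delta]$ that uses (P1$'$) and (P2$'$) to recover $S_{i,\mathcal{H}'}\bm{f}_j^{(s)\top}$ for every $j\in[n+d-k]$, including goal nodes of $\mathcal{H}$ that may lie in $\mathcal{H}'$. A second round on the shifted instances $s\in[h]$ then invokes (P1$'$) and (P2$'$) again.

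On the point you call delicate: the paper also performs two rounds of cooperation and never reduces them to a single $T_{i,i'}$ step, and its conclusion explicitly says intra-group repair needs multi-round collaboration. So the paper accepts this rather than resolves it, and you need not resolve it to match the paper's argument.
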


\begin{proof}
Suppose $\mathcal{C}_t$ satisfies the two properties (P1$'$) and (P2$'$) for repairing $\mathcal{H'}$. Based on the repair property of $\mathcal{H}'$ and the corresponding repair matrices $S_{i,\mathcal{H}'}, i\in\mathcal{H}'$ in $\mathcal{C}_t$, define the repair matrices $\tilde{S}_{i,\mathcal{H}'}, i\in\mathcal{H}'$ in $\mathcal{C}_{t+1}$ to be $\tilde{S}_{i,\mathcal{H}'}=I_{d-k+h}\otimes S_{i,\mathcal{H}'}$, where $I_{d-k+h}$ is the identity matrix with order $d-k+h$ and "$\otimes$'' represents the tensor product.
Then $\tilde{S}_{i,\mathcal{H}'}\in F^{\frac{\ell}{d-k+h}\times\ell}$ where $\ell=(d-k+h)\ell_0$.

Now we prove $\mathcal{C}_{t+1}$ retains the property (P1$'$) for repairing $\mathcal{H'}$.
Suppose $h$ nodes $\mathcal{H}'$ in $\mathcal{C}_{t+1}$ are failed, i.e., $\bm{\tilde{c}}_i=(\bm{f}_i^{(1)},\bm{f}_i^{(2)},\ldots,\bm{f}_i^{(d-k+h)})$, $i\in\mathcal{H}'$ are erased by transformation $\mathcal{T}_1$ and Table \ref{tab2}, where we substitute $\mathcal{C}_{0}$ with $\mathcal{C}_{t}$ and $\mathcal{C}_{\mathcal{T}_1}$ with $\mathcal{C}_{t+1}$.
For each $i\in\mathcal{H}'$, let $\mathcal{R}_i\subseteq[n+d-k]\setminus\mathcal{H}'$ with $|\mathcal{R}_i|=2d-k$ satisfying $[n+1,n+d-k]\subseteq\mathcal{R}_i$ and $(\mathcal{F}\setminus\mathcal{H'})\subseteq\mathcal{R}_i$ be the set of $2d-k$ helper nodes connected by node $i$.

In the download phase, for $i\in\mathcal{H}'$, node $i$ downloads $\tilde{S}_{i,\mathcal{H}'}\bm{\tilde{c}}_j^{\top}$ as illustrated in \eqref{t1-repair-eq1} from each helper node $j\in\mathcal{R}_i$.
That is, node $i$ downloads the data $\{S_{i,\mathcal{H}'}\bm{f}_j^{(s)\top}: j\in\mathcal{R}_i, s\in[d-k+h]\}$, for $i\in\mathcal{H}'$.
In the following, the node repair is done through two rounds of cooperation.
In the first round, consider each $s\in[h+1,d-k+h]$, it has $(\bm{f}_1^{(s)},\ldots,\bm{f}_{n+d-k}^{(s)})$ forms a codeword of $\mathcal{C}_t$.
Then for each $s\in[h+1,d-k+h]$, by (P1$'$) and using the cooperative repair process of $\mathcal{H}'$ in $\mathcal{C}_t$, node $i\in\mathcal{H}'$ can recover the erased data $\bm{f}_{i}^{(s)}$.
Moreover, by (P2$'$), node $i$ using the downloaded data and collaborated data can recover all the data
\begin{equation}\label{t2-t1-thm-eq1}
\{S_{i,\mathcal{H}'}\bm{f}_j^{(s)\top}: j\in[n+d-k], s\in[h+1,d-k+h]\}.
\end{equation}

In the second round, consider each $s\in[h]$, it has $(\bm{f}_1^{(s)},\ldots,\bm{f}_n^{(s)},\bm{f}_{n+1}^{(s)}+\bm{f}_{s}^{(h+1)},\ldots, \bm{f}_{n+d-k}^{(s)}+\bm{f}_{s}^{(d-k+h)})$ forms a codeword of $\mathcal{C}_0$.
Since $S_{i,\mathcal{H}'}\bm{f}_s^{(h+1)\top},\ldots,S_{i,\mathcal{H}'}\bm{f}_s^{(d-k+h)\top}$ have been recovered, then for each $s\in[h]$, node $i$ using the downloaded data and recovered data, can further compute
\[
\{S_{i,\mathcal{H}'}\bm{f}_j^{(s)\top}: j\in\mathcal{R}_i\setminus[n+1,n+d-k]\}
\cup\{
S_{i,\mathcal{H}'}(\bm{f}_{n+1}^{(s)}+\bm{f}_{s}^{(h+1)})^{\top},\ldots, S_{i,\mathcal{H}'}(\bm{f}_{n+d-k}^{(s)}+\bm{f}_{s}^{(d-k+h)})^{\top}
\}.
\]
This is exactly the helper data of node $i$ in the $s$-th codeword of $\mathcal{C}_t$.
Then, by (P1$'$) and using the cooperative repair process of $\mathcal{H}'$ in $\mathcal{C}_t$, for each $s\in[h+1,d-k+h]$, node $i\in\mathcal{H}'\subseteq[n]$ can recover the erased data $\bm{f}_{i}^{(s)}$.

Therefore, the $h$ nodes are cooperatively repaired.
Moreover, by (P2$'$), node $i\in\mathcal{H}'$ using the downloaded data and collaborated data can recover
\begin{equation}\label{t2-t1-thm-eq2}
\{S_{i,\mathcal{H}'}\bm{f}_j^{(s)\top}: j\in[n+d-k], s\in[h]\}.
\end{equation}
Since $\tilde{S}_{i,\mathcal{H}'}=I_{d-k+h}\otimes S_{i,\mathcal{H}'}$ for $i\in\mathcal{H}'$, then $\mathcal{C}_{t+1}$ retains the optimal-access property of $\mathcal{C}_t$ for repairing $\mathcal{H}'$.

Next we prove $\mathcal{C}_{t+1}$ satisfies (P2$'$) for repairing $\mathcal{H'}$.
according to the proof of (P1$'$), it is straightforward that each node $i\in\mathcal{H}'$ using the downloaded data and collaborated data can recover $\{\tilde{S}_{i,\mathcal{H}'}\bm{\tilde{c}}_j^{\top}: j\in[n+d-k]\}=\{S_{i,\mathcal{H}'}\bm{f}_j^{(s)\top}: j\in[n+d-k], s\in[d-k+h]\}$, as illustrated in \eqref{t2-t1-thm-eq1} and \eqref{t2-t1-thm-eq2}.

This completes the proof.
\end{proof}

By Theorem \ref{t2-t1-thm} and the fact that $\mathcal{C}_{\lfloor\frac{n}{\delta}\rfloor}$ satisfies (P1$'$) and (P2$'$) for repairing any erasure pattern in $\mathcal{P}_i$, $i\in[\lfloor\frac{n}{\delta}\rfloor]$, after executing line $8$-$12$ of Algorithm \ref{alg1}, the obtained code $\mathcal{C}_{m}$ can repair all the erasure patterns in $\mathcal{P}_i$, $i\in[\lfloor\frac{n}{\delta}\rfloor]$ using any $2d-k$ constrained helper nodes with local property.
At last, through a shortening technique as illustrated in Lemma \ref{shorten-code}, one can obtain the final code $\mathcal{C}$ for cooperatively repairing any $h$ node erasures using $d$ helper nodes.
The repair property of $\mathcal{C}$ is given in Corollary \ref{repair-t}.

\begin{corollary}\label{repair-t}
The returned code $\mathcal{C}$ in Algorithm \ref{alg1} enables optimal-access cooperative repair of any $h$ erasures using $d$ helper nodes. More precisely, for any $h$ nodes $\mathcal{H}\subseteq[n]$ with $|\mathcal{H}|=h$, it has
\begin{itemize}
\item if $\mathcal{H}\in\mathcal{P}_i$ for some $i\in[\lfloor\frac{n}{\delta}\rfloor]$, then the $h$ nodes $\mathcal{H}$ can be cooperatively repaired with optimal access using any $d$ helper nodes with local property.

\item if $\mathcal{H}\in\mathcal{P}_0$, then the $h$ nodes $\mathcal{H}$ can be cooperatively repaired with optimal access using any $d$ helper nodes.
\end{itemize}
\end{corollary}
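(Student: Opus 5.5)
The plan is to assemble Corollary \ref{repair-t} from three ingredients: the invariance results for the two loops of Algorithm \ref{alg1}, and the shortening argument of Lemma \ref{shorten-code}. The one genuinely new step is checking that shortening is compatible with the \emph{local-property} helper sets used for intra-group patterns.

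\textbf{Inter-group patterns.} Fix $\mathcal{H}\in\mathcal{P}_0$, say $\mathcal{H}=\mathcal{E}_q$, which is the goal set at iteration $t_q=\lfloor\frac{n}{\delta}\rfloor+q-1$ of the second loop.
\begin{itemize}
\item By Theorem \ref{t1-repair1} and Corollary \ref{t1-repair1-cor}, the code $\mathcal{C}_{t_q+1}$ satisfies (P1) and (P2) for $\mathcal{H}$ with any $2d-k$ constrained helper nodes.
\item Each later application of $\mathcal{T}_1$ uses a different goal set $\mathcal{E}_{q'}\neq\mathcal{H}$. Theorem \ref{t1-repair2} then shows that (P1) and (P2) for $\mathcal{H}$ survive every such step.
\item Hence $\mathcal{C}_m$ satisfies (P1) for $\mathcal{H}$.
\end{itemize}
Lemma \ref{shorten-code} now finishes this case. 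Given any $d$ helpers $\mathcal{R}_i\subseteq[n]\setminus\mathcal{H}$, we add the $d-k$ imaginary all-zero nodes $[n+1,n+d-k]$. This yields a constrained helper set of size $2d-k$ in $\mathcal{C}_m$. Running the repair of $\mathcal{C}_m$ on this set repairs $\mathcal{H}$ in $\mathcal{C}$. Nothing is accessed at the imaginary nodes, so the access and bandwidth are $dh\ell/\delta$ and $(d+h-1)h\ell/\delta$, which meet \eqref{bw-bound}.

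\textbf{Intra-group patterns.} Fix $\mathcal{H}\in\mathcal{P}_i$ with $\mathcal{H}\subseteq\mathcal{F}=[(i-1)\delta+1,i\delta]$.
\begin{itemize}
\item In the first loop, view the codes as $(n',k',d')=(n+d-k,d,2d-k)$ codes, so that $d'-k'+h=\delta$. The local-property helper sets in the sense of $\mathcal{T}_2$ are then $2d-k$ nodes containing $\mathcal{F}\setminus\mathcal{H}$.
\item Theorem \ref{t2-repair1} and Corollary \ref{t2-repair1-cor} give (P3), (P4) and (P5) for $\mathcal{H}$ right after its own $\mathcal{T}_2$ step.
\item The later $\mathcal{T}_2$ steps use disjoint goal sets. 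Theorem \ref{t2-repair2} and Corollary \ref{t2-repair2-cor} preserve (P3) and (P5) through them.
\item Restricting to helper sets that also contain $[n+1,n+d-k]$, we obtain (P1$'$) and (P2$'$) for $\mathcal{C}_{\lfloor n/\delta\rfloor}$.
\item Theorem \ref{t2-t1-thm}, applied inductively over $t=\lfloor\frac{n}{\delta}\rfloor,\ldots,m-1$, carries (P1$'$) and (P2$'$) to $\mathcal{C}_m$.
\end{itemize}
For the shortening step, take any $d$ helpers $\mathcal{R}_i\subseteq[n]\setminus\mathcal{H}$ with $\mathcal{F}\setminus\mathcal{H}\subseteq\mathcal{R}_i$ (the local property for $\mathcal{C}$). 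Adjoining $[n+1,n+d-k]$ gives a $2d-k$ set that is both constrained and local, which is exactly the kind of helper set (P1$'$) allows. The zero-node argument of Lemma \ref{shorten-code} then applies verbatim.

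\textbf{Main obstacle.} The delicate point is that the first loop of Algorithm \ref{alg1} works with the length-$(n+d-k)$ code. The local-property notion there must be read with $k'=d$: a $2d-k$ helper set contains $\mathcal{F}\setminus\mathcal{H}$ together with $d$ nodes outside $\mathcal{F}$. We must confirm that a set containing $[n+1,n+d-k]$ qualifies. It does, because $|[n+1,n+d-k]|=d-k\le d$ and these nodes lie outside $\mathcal{F}\subseteq[n]$. One should also check that the helper-independent repair matrices are unchanged by this restriction. Once that is verified, the case analysis above covers all $\binom{n}{h}$ patterns, since $\mathcal{P}=\mathcal{P}_0\cup\bigcup_i\mathcal{P}_i$, and the statement follows.
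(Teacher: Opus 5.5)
Your proposal is correct and follows essentially the same route as the paper's argument in Subsection~\ref{generic-repair}. Patterns in $\mathcal{P}_0$ get (P1)/(P2) from Theorems~\ref{t1-repair1} and~\ref{t1-repair2}; intra-group patterns get (P3)--(P5) from the $\mathcal{T}_2$ loop, which yields (P1$'$)/(P2$'$) for $\mathcal{C}_{\lfloor n/\delta\rfloor}$, carried to $\mathcal{C}_m$ by Theorem~\ref{t2-t1-thm}; the shortening of Lemma~\ref{shorten-code} then finishes, and your explicit checks (that a helper set containing $[n+1,n+d-k]$ and $\mathcal{F}\setminus\mathcal{H}$ is a valid local-property set in the $(n+d-k,d,2d-k)$ setting, and that the zero-node shortening also works for (P1$'$)) only spell out what the paper leaves implicit.
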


\section{Discussion and Conclusion}\label{conclusion}
\subsection{Discussion on the connections with previous works}\label{comparison}
In this subsection, we give some discussion about the connections of our codes  and previous works given by Zhang, Zhang \& Wang \cite{Zhang2020}, Ye \& Barg \cite{Ye2016sub-}, and Li, Tang \& Tian TIT'2018 \cite{JLi-2018}.
We show that the code structures of \cite{Zhang2020,Ye2016sub-,JLi-2018} are included as special cases of our transformation construction.
We also establish a connection between our transformation method and parity-check matrix construction.
We show that the parity-check matrix code can be derived from the transformation method by using an initial Reed-Solomon-type MDS scalar code.

\begin{itemize}
\item[(1)] {\it Connection with the MSR code by Li, Tang \& Tian TIT'2018 \cite{JLi-2018}:}\\
Recall the MSR code construction by Li, Tang \& Tian TIT'2018 \cite{JLi-2018}, the authors give a transformation for building optimal-access MSR codes with $h=1$ and $d=n-1$.
Actually, the transformation $\mathcal{T}_2$ in Section \ref{transform-2} is a generalization of that in \cite{JLi-2018} for any $1\leq h\leq n-k$ and $k\leq d\leq n-h$.
In the terminology of codes with helpers-independent repair matrices, the  transformation $\mathcal{T}_2$ degenerates to the transformation in \cite{JLi-2018} by setting $h=1$, $d=n-1$ and parameters $\eta_{i,j}^{(1)}=\theta_{j-1,i-1}$, $\eta_{i,j}^{(2)}=\eta_{i-1,j-1}$, $\eta_{i,j}^{(3)}=\eta_{j-1,i-1}$ and $\eta_{i,j}^{(4)}=\theta_{i-1,j-1}$ for all $i,j\in[r]$, $i<j$, where the $\theta_{i,j}$'s, $\eta_{i,j}$'s are parameters in \cite{JLi-2018}.

\item[(2)] {\it Connection with the MSR code by Ye \& Barg \cite{Ye2016sub-}:}\\
Recall the MSR code given by Ye \& Barg \cite{Ye2016sub-}.
The MSR code is defined by its parity-check matrix, which is systematically stacked up from a basic parity-check matrix structure for $\lceil\frac{n}{d-k+1}\rceil$ dimensions.
The basic parity-check matrix structure defines an $(n,k,\ell=d-k+1)$ MDS array code, which can optimally repair any single node within $r$ specific goal nodes.
W.L.O.G., suppose the $r$ goal nodes are nodes $\{1,2,\ldots,r\}$.
We summarize its basic code structure as follows.
\begin{construction}[Basic code structure of \cite{Ye2016sub-}]\label{Ye-basic}
{\it
Let $F$ be a finite field of size $|F|\geq n$. Let $\lambda_i$, $i\in[n]$ be $n$ distinct elements in $F$, and let $\gamma\in F\setminus\{0,1\}$.
The basic code structure of \cite{Ye2016sub-} is an $(n,k,\ell=d-k+1)$ MDS array code, which is defined by the parity-check matrix
$H$ with form in (\ref{def1}), where for $t\in[r]$, $H_{t,j}=\lambda_{j}^{t-1}I_{\ell}$ for $j\in[d-k+2,n]$ and $(H_{t,1},\dots,H_{t,d-k+1})=$
{\setlength{\arraycolsep}{3pt}
\begin{equation}\label{Ye-block}
\resizebox{0.9\hsize}{!}{$
\begin{aligned}
&\left(\begin{array}{ccccc|ccccc|c|ccccc}
\lambda_1^{t-1} & \lambda_2^{t-1} & \lambda_3^{t-1} &\ldots & \lambda_{d-k+1}^{t-1} & \lambda_2^{t-1} &  &  && & &
\lambda_{d-k+1}^{t-1} &  & & &  \\
 &\gamma \lambda_1^{t-1} &  && & \lambda_1^{t-1} & \lambda_2^{t-1} & \lambda_3^{t-1} &\ldots & \lambda_{d-k+1}^{t-1} & &
 & \lambda_{d-k+1}^{t-1} & & &  \\
 & &  \gamma \lambda_1^{t-1} && &  &  & \gamma \lambda_2^{t-1} &  & & \cdots
 & & & \lambda_{d-k+1}^{t-1} & &  \\
 & &  & \ddots & &  &  &  & \ddots &  &
 & & & & \ddots & \\
 & &  &  & \gamma \lambda_1^{t-1} &  &  &  & & \gamma\lambda_2^{t-1} &
 & \lambda_1^{t-1} &  \lambda_2^{t-1} &  \lambda_3^{t-1} & \ldots & \lambda_{d-k+1}^{t-1}\\
\end{array}\right),
\end{aligned}
$}
\end{equation}
}where the empty positions in (\ref{Ye-block}) represent zeros.
And $\gamma$ only appears in the $i$-th diagonal entry of the block matrix $H_{t,j}$ for all $i,j\in[d-k+1]$ and $i>j$.
}
\end{construction}

Next, we illustrate that the basic code structure of \cite{Ye2016sub-} in Construction \ref{Ye-basic} can be derived from our transformation $\mathcal{T}_2$.

For simplicity, we index the rows and columns of the $(d-k+1)\times(d-k+1)$ block matrix $H_{t,j}$ in \eqref{Ye-block} from $1$ to $d-k+1$. And represent the codewords as $\bm{c}=(\bm{c}_1,\ldots,\bm{c}_n)$, where $\bm{c}_i=(c_{i,1},\ldots,c_{i,d-k+1})$ for $i\in[n]$.
Define ${\rm Van}_{r\times n}(\lambda_1,\ldots,\lambda_n)$ to be the $r\times n$ Vandermonde matrix with the $(t,j)$-th entry being $\lambda_j^{t-1}$ for $t\in[r]$, $j\in[n]$.
Denote by $\mathcal{C}_{{\rm GRS}}$ the $[n,k]$ generalized Reed-Solomon code with parity-check matrix ${\rm Van}_{r\times n}(\lambda_1,\ldots,\lambda_n)$.
Actually, according to the parity-check equations $H\cdot\bm{c}^{\top}=\bm{0}$ of Construction \ref{Ye-basic}, it has that the following $d-k+1$ vectors in \eqref{Ye-codeword} form $d-k+1$ codewords of the GRS code $\mathcal{C}_{{\rm GRS}}$:
\begin{equation}\label{Ye-codeword}
 \begin{cases}
 (c_{1,1}, c_{2,1}+c_{1,2},\ldots, c_{d-k+1,1}+c_{1,d-k+1},c_{d-k+2,1},\ldots,c_{n,1}), \\
 (\gamma c_{1,2}+c_{2,1}, c_{2,2},\ldots, c_{d-k+1,2}+c_{2,d-k+1},c_{d-k+2,2},\ldots, c_{n,2}), \\
 ~~\vdots \\
 (\gamma c_{1,d-k+1}+c_{d-k+1,1}, \gamma c_{2,d-k+1}+c_{d-k+1,2},\ldots,c_{d-k+1,d-k+1},c_{d-k+2,d-k+1},\ldots, c_{n,d-k+1}).
 \end{cases}
\end{equation}

For $i\in[d-k+1]$, denote the $i$-th vector in \eqref{Ye-codeword} as $\bm{f}^{(i)}=(f_1^{(i)}, f_2^{(i)}, \ldots, f_n^{(i)})\in\mathcal{C}_{{\rm GRS}}$.
From \eqref{Ye-codeword}, one can solve out the following stored data
\[
 \begin{cases}
c_{i,i}=f_i^{(i)}, & {\rm for}~i\in[d-k+1], \\
 c_{i,j}=\frac{1}{\gamma-1}f_i^{(j)}-\frac{1}{\gamma-1}f_j^{(i)}, & {\rm for}~i,j\in[d-k+1], i<j, \\
 c_{j,i}=\frac{\gamma }{\gamma-1}f_j^{(i)}-\frac{1}{\gamma-1}f_i^{(j)}, & {\rm for}~i,j\in[d-k+1], i<j, \\
 c_{i,j}=f_i^{(j)},  & {\rm for}~i\in[d-k+2,n],j\in[d-k+1]. \end{cases}
\]
Recall Table \ref{tab3} of transformation $\mathcal{T}_2$ that when $h=1$, for all $1\leq i<j\leq d-k+1$, the stored data $c_{i,j}=\eta_{i,j}^{(3)}\bm{f}_{j}^{(i)}+\eta_{i,j}^{(4)}\bm{f}_{i}^{(j)}$ and $c_{j,i}=\eta_{i,j}^{(1)}\bm{f}_{j}^{(i)}+\eta_{i,j}^{(2)}\bm{f}_{i}^{(j)}$.
Thus,
 the transformation $\mathcal{T}_2$ degenerates to the basic code structure in Construction \ref{Ye-basic} by setting $\mathcal{C}_0=\mathcal{C}_{{\rm GRS}}$, $h=1$, $\eta_{i,j}^{(1)}=\frac{\gamma }{\gamma-1}$, $\eta_{i,j}^{(2)}=-\frac{1}{\gamma-1}$, $\eta_{i,j}^{(3)}=-\frac{1}{\gamma-1}$ and $\eta_{i,j}^{(4)}=\frac{1}{\gamma-1}$.

\item[(3)] {\it Connection with the cooperative MSR code by Zhang, Zhang \& Wang \cite{Zhang2020}:}\\
Recall the optimal-access cooperative MSR code given by Zhang, Zhang \& Wang \cite{Zhang2020}.
It is easy to verify that \cite{Zhang2020} is exactly the code by extending only the  basic code $\mathcal{C}_{\mathrm{I}}$ in Section \ref{sec0-building-block} for $\binom{n}{h}$ dimensions.

\item[(4)] {\it Connections of the transformation method and parity-check matrix construction:}\\
Let $\mathcal{C}_{{\rm GRS}}$ denote the $[n,k]$ generalized RS code with the Vandermonde  parity-check matrix ${\rm Van}_{r\times n}(\lambda_1,\ldots,\lambda_n)$.
Let $\mathcal{\tilde{C}}_{{\rm GRS}}$ be the $[n+d-k,d]$ generalized RS code with the Vandermonde  parity-check matrix ${\rm Van}_{r\times (n+d-k)}(\lambda_1,\ldots,\lambda_n,\gamma_1,\ldots,\gamma_{d-k})$.
Then $\mathcal{C}_{{\rm GRS}}$ is the shortened code of $\mathcal{\tilde{C}}_{{\rm GRS}}$ by shortening codewords at the last $d-k$ nodes.

{\it $(a)$ Connection between $\mathcal{C}_{\mathrm{I}}$ and $\mathcal{T}_1$}. Recall the code $\mathcal{C}_{\mathrm{I}}$ in Construction \ref{construction-1}.
According to Construction \ref{construction-1} and by the parity-check equations $H\cdot\bm{c}^{\top}=\bm{0}$ of Construction \ref{construction-1}, it has that
the following $h$ vectors in \eqref{Code1-codeword-1} are codewords of the $[n+d-k,d]$ GRS code $\mathcal{\tilde{C}}_{{\rm GRS}}$, and the $d-k$ vectors in \eqref{Code1-codeword-2} are codewords of the $[n,k]$ GRS code $\mathcal{C}_{{\rm GRS}}$.
\begin{align}
&\begin{cases}
 (c_{1,1}, c_{2,1},\ldots, c_{n,1},c_{1,h+1},\ldots,c_{1,d-k+h}) \\
  (c_{1,2}, c_{2,2},\ldots, c_{n,2},c_{2,h+1},\ldots,c_{2,d-k+h}) \\
 ~~\vdots \\
 (c_{1,h}, c_{2,h},\ldots,c_{n,h},c_{h,h+1},\ldots, c_{h,d-k+h})
 \end{cases}, \label{Code1-codeword-1} \\
&\begin{cases}
(c_{1,h+1}, c_{2,h+1},\ldots, c_{n,h+1}) \\
 ~~\vdots \\
 (c_{1,d-k+h}, c_{2,d-k+h},\ldots,c_{n,d-k+h})
 \end{cases}. \label{Code1-codeword-2}
\end{align}

Recall the transformation $\mathcal{T}_1$, and set the code $\mathcal{C}_0=\mathcal{\tilde{C}}_{{\rm GRS}}$.
By applying transformation $\mathcal{T}_1$ and then shortening the obtained code at the last $d-k$ nodes, it yields the code $\mathcal{C}_{\mathrm{I}}$ of Construction \ref{construction-1} in Section \ref{sec0-building-block}.

{\it $(b)$ Connection between $\mathcal{C}_{\mathrm{II}}$ and $\mathcal{T}_2$}.
Recall the code $\mathcal{C}_{\mathrm{II}}$ in Construction \ref{construction-2}.
Through a similar analysis as in the discussion $(2)$ of Subsection \ref{comparison}, Then the code $\mathcal{C}_{\mathrm{II}}$ in Section \ref{sec0-building-block} can be derived from transformation $\mathcal{T}_2$ by setting $\mathcal{C}_0=\mathcal{C}_{{\rm GRS}}$, $\eta_{i,j}^{(1)}=-\frac{1}{\tau-1}$, $\eta_{i,j}^{(2)}=\frac{1}{\tau-1}$, $\eta_{i,j}^{(3)}=\frac{\tau}{\tau-1}$ and $\eta_{i,j}^{(4)}=-\frac{1}{\tau-1}$, where $\tau\in F\setminus\{0,1\}$ is defined in Construction \ref{construction-2}.
\end{itemize}

\subsection{Conclusion}
We present new constructions of optimal-access cooperative MSR codes with sub-packetization $\ell=\delta^m$, where $\delta=d-k+h$ and $m=\binom{n}{h}-\lfloor\frac{n}{\delta}\rfloor(\binom{\delta}{h}-1)$, which reduces $\ell$ by a fraction of $1/\delta^{\lfloor\frac{n}{\delta}\rfloor(\binom{\delta}{h}-1)}$ compared with the state of the art.
Precisely, the first approach gives a direct explicit construction via the parity-check matrix construction.
The second approach provides a generic transformation for converting an MDS scalar code to an optimal-access cooperative MSR code.
It is worth noting that the transformation can be used to construct codes with systematic nodes repair only, and has more flexibility for code construction.
We also show that our code constructions include some previous works as special cases, such as \cite{Zhang2020,Ye2016sub-,JLi-2018}.
Despite the sub-packetization in our work is still exponential, it is possible to use this code and another large-distance code to build $\epsilon$-cooperative MSR codes with small data access and small sub-packetization as in \cite{Devi2019}.

Moreover, the codes derived in both approaches deploy different repair strategies for intra-group erasure patterns and inter-group erasure patterns.
That is, for intra-group erasure patterns, the repair process requires $d-k$ local helper nodes' participation and multi-round collaboration. While for inter-group erasure patterns, the repair proceeds with any $d$ helper nodes within one round of collaboration. Indeed, we claim that inter-group erasure patterns account for the majority.
For example, set $n=12$, $k=9$, $h=2$ and $d=10$. Then inter-group erasure patterns constitute $82\%$ and intra-group erasure patterns constitute $18\%$.
Nevertheless, a future problem is to construct codes for repairing all erasure patterns with any $d$ helper nodes and one-round collaboration.
Besides, since the sub-packetization is still large, another open problem is to establish a lower bound on the sub-packetization of cooperative MSR codes and give a matching construction.

\appendices

\section{Proof of Lemma \ref{stage-lem}}\label{appendix-1}

In order to prove Lemma \ref{stage-lem}, W.L.O.G., we prove the case $j=0$ that in every stage $s\in[0,m-\lfloor\frac{n}{\delta}\rfloor]$, node $i_0$ can recover the data $\{c_{i_0,a}: a\in\mathcal{L}_s\}$.
To this end, we will firstly prove that in the initial stage $s=0$, node $i_0$ using the downloaded data and collaborated data at stage $0$ can recover the data $\{c_{i_0,a}: a\in\mathcal{L}_0\}$. Then, for any stage $s\in[0,m-\lfloor\frac{n}{\delta}\rfloor]$, the data recovery is proved by induction on $s$.
In the following,
let $\mathcal{R}_{i_0}$ be the set of $d$ helper nodes connected by node $i_0$.
Recall that $\mathcal{R}_{i_0}$ satisfies $([(u^*-1)\delta+1,u^*\delta]\setminus\mathcal{H})\subseteq\mathcal{R}_{i_0}$. Denote $\mathcal{\tilde{R}}_{i_0}=\mathcal{R}_{i_0}\setminus([(u^*-1)\delta+1,u^*\delta]\setminus\mathcal{H})$, then $|\mathcal{\tilde{R}}_{i_0}|=k$.
Next, we need a definition, which is inherited from Definition \ref{partition}.

\begin{definition}\label{def-R}
Recall the second partition of $[0,\ell-1]$ in Definition \ref{partition}.
Define $I=\mathcal{\tilde{R}}_{i_0}\cap[\delta\lfloor\frac{n}{\delta}\rfloor]$.
Define $U_I=\{u\in[\lfloor\frac{n}{\delta}\rfloor]: \exists~v\in[0,\delta-1] ~{\rm{s.t.}}~ (u,v)\in I\}$.
And for each $u\in[\lfloor\frac{n}{\delta}\rfloor]$, define $V_u=\{v\in[0,\delta-1]: (u,v)\in I\}$.
Then $I=\cup_{u\in U_I}\{(u,v): v\in V_u\}$.
By Definition \ref{partition}, $\Lambda_0,\Lambda_1,\ldots,\Lambda_{|U_I|}$ w.r.t. $I$ form a partition of $[0,\ell-1]$.
Moreover, for each $s\in[0,m-\lfloor\frac{n}{\delta}\rfloor]$, the set $\mathcal{L}_s\cap A(u^*,v_0)$ has a partition $\mathcal{L}_s\cap A(u^*,v_0)\cap \Lambda_q$ for $q=0,1,\ldots, |U_I|$.
In particular, when $I=\emptyset$, then $U_I=\emptyset$, and the partition of $[0,\ell-1]$ degenerates to $\Lambda_0=[0,\ell-1]$.
\end{definition}

Here we note that $u^*\notin U_I$ since $\mathcal{\tilde{R}}_{i_0}\cap[(u^*-1)\delta+1,u^*\delta]=\emptyset$.

By Definition \ref{def-R}, we know that for each $s\in[0,m-\lfloor\frac{n}{\delta}\rfloor]$, the set $\mathcal{L}_s\cap A(u^*,v_0)$ has a partition $\mathcal{L}_s\cap A(u^*,v_0)\cap \Lambda_q$ for $q=0,1,\ldots, |U_I|$.
Besides, recall the definition of $c_{i,a}^*$'s in \eqref{equation-form-matrix-s} of Lemma \ref{equation-s}.
We claim that $\{c_{i,a}^*: i\in[n], a\in\mathcal{L}_s\cap A(u^*,v_0)\}$ can be computed by node $i_0$ at stage $s$, which will be proved later in Lemma \ref{lem-recover1-step1}.
Before that, we prove that for each $s\in[0,m-\lfloor\frac{n}{\delta}\rfloor]$ and  $q\in[0,|U_I|]$, if node $i_0$ knows the data $\{c_{i,a}^*: i\in[n], a\in\mathcal{L}_s\cap A(u^*,v_0)\cap\Lambda_q\}$, then $i_0$ can compute the data in \eqref{recover1-step2-eq1} and \eqref{recover1-step2-eq2} as illustrated in Lemma \ref{lem-recover1-step2}.

\begin{lemma}\label{lem-recover1-step2}
For each $s\in[0,m-\lfloor\frac{n}{\delta}\rfloor]$ and  $q\in[0,|U_I|]$, if node $i_0$ knows the data $\{c_{i,a}^*: i\in[n], a\in\mathcal{L}_s\cap A(u^*,v_0)\cap\Lambda_q\}$, then along with the downloaded data
node $i_0$ can compute the following failed data:
\begin{equation}\label{recover1-step2-eq1}
\begin{aligned}
\bigcup_{a\in\mathcal{L}_s\cap A(u^*,v_0)\cap\Lambda_q}\Big(
&
\{c_{i_0,a(u^*,v)}: v\in[0,\delta-1]\setminus\{v_1,v_{2},\ldots,v_{h-1}\}\}        \\
&\cup\{f(v_0,v) c_{i_0,a(u^*,v)}+c_{(u^*,v),a}: v\in\{v_1,v_{2},\ldots,v_{h-1}\}\}
\Big).
\end{aligned}
\end{equation}
and the following intermediate data:
\begin{equation}\label{recover1-step2-eq2}
\cup_{a\in\mathcal{L}_s\cap A(u^*,v_0)\cap\Lambda_q}
\{c_{p,a}:~p\in[n]\setminus\{i_0,i_1,\ldots, i_{h-1}\}\}.
\end{equation}
\end{lemma}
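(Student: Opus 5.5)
The plan is to read off the required symbols directly from the definition of $c^*_{i,a}$ in Lemma \ref{equation-s}, node by node. The tools are the downloaded data and, in one case, a pairing of two rows of the same index set, as in \eqref{eq-1-1}. Fix $a\in\mathcal{L}_s\cap A(u^*,v_0)\cap\Lambda_q$, so $a_{u^*}=v_0$. Recall that node $i_0$ downloads $c_{p,b}$ for every helper $p\in\mathcal{R}_{i_0}$ and every $b\in A(u^*,v_0)$. Also, for any $u\neq u^*$ and any $v$, the index $a(u,v)$ still lies in $A(u^*,v_0)$, and it still lies in $\mathcal{L}_s$ because $u\le\lfloor\frac{n}{\delta}\rfloor$ does not affect $w_{\rm suf}$.

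\emph{Step 1: the failed data \eqref{recover1-step2-eq1}.} These come only from the group $u^*$. First, since $i_0=(u^*,v_0)=(u^*,a_{u^*})$, we have $c^*_{i_0,a}=c_{i_0,a}$; this is the term $v=v_0$. Second, for $v\in\{v_1,\ldots,v_{h-1}\}$, the definition gives $c^*_{(u^*,v),a}=c_{(u^*,v),a}+f(v_0,v)c_{i_0,a(u^*,v)}$, which is exactly the required symbol sum. Third, for $v\in[0,\delta-1]\setminus\{v_0,\ldots,v_{h-1}\}$, the node $(u^*,v)$ is a surviving node of group $u^*$ and hence a helper by assumption, so $c_{(u^*,v),a}$ was downloaded. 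Subtracting it from $c^*_{(u^*,v),a}$ and dividing by $f(v_0,v)\in\{1,\tau\}$, which is nonzero, yields $c_{i_0,a(u^*,v)}$.

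\emph{Step 2: the intermediate data \eqref{recover1-step2-eq2}.} Let $p\notin\mathcal{H}$. If $p$ lies in group $u^*$, it is a helper and $c_{p,a}$ is downloaded. If $p\in[\delta\lfloor\frac{n}{\delta}\rfloor+1,n]$, or if $p=(u,a_u)$ with $u\neq u^*$, then $c_{p,a}=c^*_{p,a}$ directly. The remaining case is $p=(u,v)$ with $u\neq u^*$ and $v\neq a_u$, where $c^*_{p,a}=c_{p,a}+f(a_u,v)c_{(u,a_u),a(u,v)}$. I will split this case three ways according to membership in $I=\tilde{\mathcal{R}}_{i_0}\cap[\delta\lfloor\frac{n}{\delta}\rfloor]$.
\begin{itemize}
\item[(i)] If $p\in I$, then $c_{p,a}$ is downloaded.
\item[(ii)] If $(u,a_u)\in I$, then $c_{(u,a_u),a(u,v)}$ is downloaded, because $a(u,v)\in A(u^*,v_0)$. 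Subtracting $f(a_u,v)c_{(u,a_u),a(u,v)}$ from $c^*_{p,a}$ gives $c_{p,a}$.
\item[(iii)] If neither $(u,v)$ nor $(u,a_u)$ is in $I$, then switching the $u$-th digit from $a_u$ to $v$ leaves $w_I$ unchanged. Hence $a'=a(u,v)$ also lies in $\mathcal{L}_s\cap A(u^*,v_0)\cap\Lambda_q$, so $c^*_{(u,a_u),a'}=c_{(u,a_u),a(u,v)}+f(v,a_u)c_{(u,v),a}$ is also known. Together with $c^*_{(u,v),a}$, this gives a $2\times 2$ system in the unknowns $c_{(u,v),a}$ and $c_{(u,a_u),a(u,v)}$. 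Its determinant is $1-f(a_u,v)f(v,a_u)=1-\tau\neq0$, so both unknowns can be solved, exactly as in \eqref{eq-1-1}.
\end{itemize}

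The main obstacle is case (iii). The concern is that the partner row $a(u,v)$ might leave the class $\Lambda_q$, and only data for rows in $\mathcal{L}_s\cap A(u^*,v_0)\cap\Lambda_q$ are assumed known. The three-way split is designed precisely for this: whenever the partner row could change $w_I$, one of the two involved nodes is in $I$, and the downloaded data make the pairing unnecessary.
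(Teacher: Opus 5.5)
Your proof is correct and follows the paper's argument essentially step for step. You peel off the group-$u^*$ symbols from $c^*_{(u^*,v),a}$ using the downloaded local helpers. The remaining non-helper symbols $c_{(u,v),a}$ with $u\neq u^*$ are recovered in one of three ways: directly from $c^*$, by subtracting a downloaded $c_{(u,a_u),a(u,v)}$ when $(u,a_u)\in I$, or by pairing rows $a$ and $a(u,v)$ and solving the $2\times 2$ system with determinant $1-\tau$. Your explicit check that $a(u,v)$ stays in $\mathcal{L}_s\cap A(u^*,v_0)\cap\Lambda_q$ when neither $(u,v)$ nor $(u,a_u)$ lies in $I$ fills in a detail that the paper only asserts.
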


\begin{proof}
Let $s\in[0,m-\lfloor\frac{n}{\delta}\rfloor]$ and  $q\in[0,|U_I|]$.
Recall the definition of $c_{i,a}^*$'s in \eqref{equation-form-matrix-s} of Lemma \ref{equation-s}.
Suppose node $i_0$ knows the data $\{c_{i,a}^*: i\in[n], a\in\mathcal{L}_s\cap A(u^*,v_0)\cap\Lambda_q\}$.
We prove node $i_0$ can recover the data in \eqref{recover1-step2-eq1} and \eqref{recover1-step2-eq2}.

For every $a\in\mathcal{L}_s\cap A(u^*,v_0)\cap\Lambda_q$ , it has $a_{u^*}=v_0$, then $c_{i_0,a}^*=c_{i_0,a}$ and $c_{(u^*,v),a}^*=c_{(u^*,v),a}+c_{i_0,a(u^*,v)}f(v_0,v)$ for $v\in[0,\delta-1]\setminus\{v_0\}$.
Note that  $\{(u^*,v): v\in[0,\delta-1]\setminus\{v_0,v_1,\ldots,v_{h-1}\}\}\subseteq\mathcal{R}_{i_0}$.
Then, $\{c_{(u^*,v),a}:
~v\in[0,\delta-1]\setminus\{v_0,v_1,\ldots,v_{h-1}\}, a\in\mathcal{L}_s\cap A(u^*,v_0)\cap\Lambda_q\}$ are downloaded data.
Thus node $i_0$ can compute $c_{i_0,a(u^*,v)}$ from $c_{(u^*,v),a}^*$ for $v\in[0,\delta-1]\setminus\{v_1,v_{2},\ldots,v_{h-1}\}$, hence can recover the data in \eqref{recover1-step2-eq1}.

As for the data in \eqref{recover1-step2-eq2}, since $\{c_{p,a}: p\in\mathcal{R}_{i_0}, a\in\mathcal{L}_s\cap A(u^*,v_0)\cap\Lambda_q\}$ are downloaded data,
then it remains to prove that node $i_0$ can recover $\cup_{a\in\mathcal{L}_s\cap A(u^*,v_0)\cap\Lambda_q}\{c_{p,a}: p\in[n]\setminus(\mathcal{H}\cup\mathcal{R}_{i_0})\}$.
Note that the set
\[\begin{aligned}
&[n]\setminus(\mathcal{H}\cup\mathcal{R}_{i_0})\\
=&[n]\setminus([(u^*-1)\delta+1,u^*\delta]\cup\mathcal{\tilde{R}}_{i_0})) \\
=&[n]\setminus\big([(u^*-1)\delta+1,u^*\delta]\cup I\cup (\mathcal{\tilde{R}}_{i_0}\cap[\delta\lfloor n/\delta \rfloor+1,n])\big)   \\
=&\{(u,v):~u\in[\lfloor n/\delta\rfloor]\setminus\{u^*\}, v\in[0,\delta-1]\setminus V_u\}
\cup
([\delta\lfloor n/\delta \rfloor+1,n]\setminus\mathcal{\tilde{R}}_{i_0}).
\end{aligned}\]
At first, for each $i\in[\delta\lfloor n/\delta \rfloor+1,n]$, it has $c_{i,a}^*=c_{i,a}$.
Then node $i_0$ using the known data can directly obtain the data
$\cup_{a\in\mathcal{L}_s\cap A(u^*,v_0)\cap\Lambda_q}\{c_{p,a}:~p\in[\delta\lfloor n/\delta \rfloor+1,n]\setminus\mathcal{\tilde{R}}_{i_0}\}$.
Then, for every $u\in[\lfloor n/\delta\rfloor]\setminus\{u^*\}$ and $v\in[0,\delta-1]\setminus V_u$, we consider the recovery of $c_{(u,v),a}$.
For each $a\in\mathcal{L}_s\cap A(u^*,v_0)\cap\Lambda_q$, if $a_u=v$, then it has $c_{(u,v),a}^*=c_{(u,v),a}$, which can be recovered.
If $a_u\neq v$ and $a_u\in V_u$, this implies $(u,a_u)\in I$.
Then $c_{(u,v),a}^*=c_{(u,v),a}+c_{(u,a_u),a(u,v)}f(a_u,v)$, and node $i_0$ can compute $c_{(u,v),a}$ since the symbol $c_{(u,a_u),a(u,v)}$ is downloaded data.
If $a_u\neq v$ and $a_u\notin V_u$, then $(u,a_u)\notin I$.
Combining the known data $c_{(u,v),a}^*=c_{(u,v),a}+c_{(u,a_u),a(u,v)}f(a_u,v)$ and  $c_{(u,a_u),a(u,v)}^*=c_{(u,a_u),a(u,v)}+c_{(u,v),a}f(v,a_u)$, and observing that $\{f(a_u,v),f(v,a_u)\}=\{1,\tau\}$ and the fact that $a(u,v)\in\mathcal{L}_s\cap A(u^*,v_0)\cap\Lambda_q$, then one can solve out the symbol $c_{(u,v),a}$.
Therefore, node $i_0$ can recover the data in \eqref{recover1-step2-eq2}.
\end{proof}

Now, we are left to prove in every stage $s\in[0,m-\lfloor\frac{n}{\delta}\rfloor]$, node $i_0$ using downloaded data and previously collaborated data can compute the data $\{c_{i,a}^*: i\in[n], a\in\mathcal{L}_s\cap A(u^*,v_0)\}$.
For simplicity, we prove the initial stage $s=0$ in Lemma \ref{lem-recover1-step1}, and the stage $s>0$ is illustrated in Lemma \ref{cor-lem}.

\begin{lemma}\label{lem-recover1-step1}
In the stage $s=0$, node $i_0$ using the downloaded data, can recover the following data
\begin{equation}\label{lem-recover1-step1-eq}
\{c_{i,a}^*: i\in[n], a\in\mathcal{L}_0\cap A(u^*,v_0)\},
\end{equation}
where the $c_{i,a}^*$'s are defined in \eqref{equation-form-matrix-s} of Lemma \ref{equation-s}.
\end{lemma}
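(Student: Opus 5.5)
We need to prove that at stage $s=0$, node $i_0$ can obtain all $c^*_{i,a}$ for $a\in\mathcal{L}_0\cap A(u^*,v_0)$ from its downloaded data. The plan is to run an induction on $q\in[0,|U_I|]$ over the partition $\mathcal{L}_0\cap A(u^*,v_0)\cap\Lambda_q$ of Definition \ref{def-R}, proving at step $q$ that node $i_0$ knows $\{c^*_{i,a}: i\in[n], a\in\mathcal{L}_0\cap A(u^*,v_0)\cap\Lambda_q\}$. Once step $q$ is done, Lemma \ref{lem-recover1-step2} gives node $i_0$ the plain symbols $c_{p,a}$ for all surviving $p$ and these $a$. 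These symbols feed the next step.

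Fix $a\in\mathcal{L}_0\cap A(u^*,v_0)\cap\Lambda_q$. Since $a\in\mathcal{L}_0$, part (3) of \eqref{equation-form-s} is absent, so the $r$ equations for $t\in[r]$ read $\sum_{i\in[n]}\lambda_i^{t-1}c^*_{i,a}=0$. Thus $(c^*_{1,a},\ldots,c^*_{n,a})$ is a codeword of the $[n,k]$ GRS code with parity-check matrix ${\rm Van}_{r\times n}(\lambda_1,\ldots,\lambda_n)$. It therefore suffices to show that node $i_0$ knows $c^*_{p,a}$ for the $k$ positions $p\in\tilde{\mathcal R}_{i_0}$; the rest then follow by erasure decoding.

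For $p\in\tilde{\mathcal R}_{i_0}$ the check splits into cases.

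\textbf{Case 1:} $p\in[\delta\lfloor n/\delta\rfloor+1,n]$, or $p=(u,a_u)$. Then $c^*_{p,a}=c_{p,a}$, which is downloaded because $a\in A(u^*,v_0)$.

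\textbf{Case 2:} $p=(u,v)$ with $v\ne a_u$. Then $u\ne u^*$, because $\tilde{\mathcal R}_{i_0}$ avoids group $u^*$, and $c^*_{p,a}=c_{p,a}+f(a_u,v)c_{(u,a_u),a(u,v)}$. Here $c_{p,a}$ is downloaded, and $a(u,v)\in A(u^*,v_0)\cap\mathcal{L}_0$ still holds because $u\le\lfloor n/\delta\rfloor$ does not affect $w_{\rm suf}$. This case splits again.
\begin{itemize}
\item[(2a)] If $(u,a_u)\in I$, the helper $(u,a_u)\in\tilde{\mathcal R}_{i_0}$ was downloaded at row $a(u,v)$.
\item[(2b)] If $(u,a_u)\notin I$, then $(u,a_u)$ is either in $\mathcal{H}$ or not a helper. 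Since $u\neq u^*$ it is not in $\mathcal{H}$, so it is a non-helper surviving node. Moreover $(u,v)\in I$ while $(u,a_u)\notin I$, so $w_I(a(u,v))=w_I(a)+1$, i.e. $a(u,v)\in\Lambda_{q+1}$, not a smaller class.
\end{itemize}

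Case (2b) is the main obstacle, because the induction order is wrong. Reversing the order does not help either, since in a lower class the symbol roles swap. I would therefore run the induction downward from $q=|U_I|$ to $0$. At the top class every $u\in U_I$ has $(u,a_u)\in I$, so case (2b) never occurs. At a general step $q$, case (2b) needs $c_{(u,a_u),a(u,v)}$ with $a(u,v)\in\Lambda_{q+1}\cap\mathcal{L}_0\cap A(u^*,v_0)$. That row was already handled, and \eqref{recover1-step2-eq2} of Lemma \ref{lem-recover1-step2} supplies $c_{p,a(u,v)}$ for every surviving $p$, including $(u,a_u)$.

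This closes the descending induction and yields \eqref{lem-recover1-step1-eq}. The remaining work is to check that no $p$ falls outside these cases and that $I=\emptyset$ degenerates correctly; in that case the single class $\Lambda_0$ is handled by Case 1 alone.
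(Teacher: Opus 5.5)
Your proposal is correct and is essentially the paper's proof. You reduce to the $k$ positions of $\tilde{\mathcal R}_{i_0}$ via the $[n,k]$ GRS structure of $(c^*_{1,a},\ldots,c^*_{n,a})$, then induct downward on $q=|U_I|,\ldots,0$, handling the only nontrivial case $(u,a_u)\notin I$ (where $a(u,v)\in\Lambda_{q+1}$) through the already-treated class and \eqref{recover1-step2-eq2} of Lemma \ref{lem-recover1-step2}, exactly as the paper does; the remark that ``reversing the order does not help either'' contradicts your own correct downward ordering and should be dropped.
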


\begin{proof}
For every $a\in \mathcal{L}_0\cap A(u^*,v_0)$, it has $a_{u^*}=v_0$ and $\{u\in[\lfloor\frac{n}{\delta}\rfloor+1,m]: a_u\in[0,h-1]\}=\emptyset$.
Consider the $a$-th row of parity-check equations $(H_{t,1},H_{t,2},\ldots,H_{t,n})\cdot\bm{c}^{\top}=\bm{0}$, $t\in[r]$.
According to Lemma \ref{equation-s}, one can obtain $r$ simplified equations as illustrated in \eqref{equation-form-s} and the matrix form in \eqref{equation-form-matrix-s}, where we note that part $(3)$ of \eqref{equation-form-s} does not exist.
Thus the vector symbols $(c_{1,a}^*,\ldots,c_{n,a}^*)$ in \eqref{equation-form-matrix-s} forms an $[n,k]$ generalized Reed-Solomon (GRS) codeword.
This implies any $k$ symbols in $\{c_{i,a}^*: i\in[n]\}$ can reconstruct all the $n$ symbols.
Next, we show that node $i_0$ using the downloaded data can recover $\{c_{i,a}^*: i\in\mathcal{\tilde{R}}_{i_0}\}$ where $|\mathcal{\tilde{R}}_{i_0}|=k$, hence can recover $\{c_{i,a}^*: i\in[n]\}$ for all $a\in \mathcal{L}_0\cap A(u^*,v_0)$.

To this end, recall in Definition \ref{def-R} that $\mathcal{L}_0\cap A(u^*,v_0)\cap\Lambda_q$, $q=0,1,\ldots, |U_I|$ form a partition of $\mathcal{L}_0\cap A(u^*,v_0)$.
In the following, we prove by induction on $q$ that for each $q=|U_I|, |U_I|-1, \ldots,0$, the following set of symbols in \eqref{appendix1-eq1} can be recovered from the downloaded data of node $i_0$.
\begin{equation}\label{appendix1-eq1}
\{c_{i,a}^*, i\in\mathcal{\tilde{R}}_{i_0}, a\in \mathcal{L}_0\cap A(u^*,v_0)\cap\Lambda_q\}.
\end{equation}
Firstly, let $q=|U_I|$. Consider every $a\in\mathcal{L}_0\cap A(u^*,v_0)\cap\Lambda_{|U_I|}$, by the definition of $\Lambda_{|U_I|}$, it has for all $u\in U_I$, $(u, a_{u})\in I$.
According to the definition of $c_{i,a}^*$'s,
it has $c_{(u,a_{u}),a}^*=c_{(u,a_{u}),a}$ for
$u\in U_I$ which is the downloaded data. For each  $u\in U_I$ and $v\in V_{u}\setminus\{a_{u}\}$, the data $c_{(u,v),a}^*=c_{(u,v),a}+c_{(u,a_{u}),a(u,v)}f(a_{u},v)$, which can be computed since $c_{(u,v),a}$ and $c_{(u,a_{u}),a(u,v)}$ are both downloaded data by noticing that node $(u,v)\in\mathcal{R}_{i_0}$ and $a(u,v)\in\mathcal{L}_0\cap A(u^*,v_0)$.
That is, node $i_0$ can compute the data $\{c_{i,a}^*, i\in I\}$ where $I=\cup_{u\in U_I}\{(u,v): v\in V_u\}$.
Moreover, since $(\mathcal{\tilde{R}}_{i_0}\setminus I)\subseteq[\delta\lfloor\frac{n}{\delta}\rfloor+1,n]$, then $c_{i,a}^*=c_{i,a}$ for all $i\in \mathcal{\tilde{R}}_{i_0}\setminus I$, which are exactly the downloaded data.
Thus, node $i_0$ can compute all the data $\{c_{i,a}^*, i\in\mathcal{\tilde{R}}_{i_0}, a\in\mathcal{L}_0\cap A(u^*,v_0)\cap\Lambda_{|U_I|}\}$.

Now suppose for all $q=|U_I|, |U_I|-1, \ldots, q_0+1$, node $i_0$ has recovered the data in \eqref{appendix1-eq1}.
Next we prove the case $q=q_0$ that the data in \eqref{appendix1-eq1} can be recovered.

Consider every $a\in\mathcal{L}_0\cap A(u^*,v_0)\cap\Lambda_{q_0}$, then there are exactly $q_0$ integers in $U_I$ s.t. $(u,a_{u})\in I$, denoted by $\{u_1,\ldots,u_{q_0}\}\subseteq U_I$.
Note that $I=\cup_{u\in U_I}\{(u,v): v\in V_u\}$.
We firstly show the data recovery of $\{c_{i,a}^*, i\in I, a\in \mathcal{L}_0\cap A(u^*,v_0)\cap\Lambda_{q_0}\}$.
For each $u\in\{u_1,\ldots,u_{q_0}\}$, it has $c_{(u,a_{u}),a}^*=c_{(u,a_{u}),a}$, and $c_{(u,v),a}^*=c_{(u,v),a}+c_{(u,a_{u}),a(u,v)}f(a_{u},v)$ for $v\in V_{u}\setminus\{a_{u}\}$.
These data $c_{(u,v),a}^*$'s can be computed by node $i_0$ since the nodes $(u,v)$'s are helper nodes in $I$ and $a(u,v)\in\mathcal{L}_0\cap A(u^*,v_0)$ for all $u\in\{u_1,\ldots,u_{q_0}\}$ and $v\in V_{u}$.
Besides, for each $u\in U_I\setminus\{u_1,\ldots,u_{q_0}\}$ and $v\in V_{u}$, it has $(u, a_{u})\notin I$. This implies $a_u\notin V_u$ and  $a_u\neq v$.
Thus, it has $c_{(u,v),a}^*=c_{(u,v),a}+c_{(u,a_{u}),a(u,v)}f(a_{u},v)$.
We claim that these data $c_{(u,v),a}^*$'s can also be computed.
On the one hand, the data $c_{(u,v),a}$'s are all downloaded data.
On the other hand, for $u\in U_I\setminus\{u_1,\ldots,u_{q_0}\}$ and $v\in V_{u}$, it has $a(u,v)\in\mathcal{L}_0\cap A(u^*,v_0)\cap\Lambda_{q_0+1}$. Then by the hypothesis and Lemma \ref{lem-recover1-step2} for the case $s=0$ and $q=q_0+1$, the data $c_{(u,a_{u}),a(u,v)}$'s can be recovered.
Therefore, the data $\{c_{i,a}^*, i\in I, a\in \mathcal{L}_0\cap A(u^*,v_0)\cap\Lambda_{q_0}\}$ can be computed.
Next, we show the data recovery of $\{c_{i,a}^*, i\in\mathcal{\tilde{R}}_{i_0}\setminus I, a\in \mathcal{L}_0\cap A(u^*,v_0)\cap\Lambda_{q_0}\}$.
This is straightforward, since $(\mathcal{\tilde{R}}_{i_0}\setminus I)\subseteq[\delta\lfloor\frac{n}{\delta}\rfloor+1,n]$, and $c_{i,a}^*=c_{i,a}$ for all $i\in \mathcal{\tilde{R}}_{i_0}\setminus I$, which are exactly the downloaded data.

Thus, in the stage $s=0$ node $i_0$ can recover all the data in \eqref{lem-recover1-step1-eq}.
\end{proof}

Next, we give the following corollary to illustrate that node $i_0$ can recover $\{c_{i_0,a}: a\in\mathcal{L}_0\}$ in stage $s=0$.

\begin{corollary}\label{cor-recover1}
In the stage $s=0$, node $i_0$ using the downloaded data and collaborated data can recover the following data:
\begin{equation}\label{cor-recover1-eq}
\{c_{i_0,a}: a\in\mathcal{L}_0\}
\cup\{c_{p,a}: p\in[n]\setminus\{i_0\}, a\in\mathcal{L}_0\cap A(u^*,v_0)\}.
\end{equation}
\end{corollary}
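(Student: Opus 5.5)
Combining Lemma \ref{lem-recover1-step1} with Lemma \ref{lem-recover1-step2} (for $s=0$) gives the statement.
By Lemma \ref{lem-recover1-step1}, node $i_0$ knows $\{c_{i,a}^*: i\in[n], a\in\mathcal{L}_0\cap A(u^*,v_0)\}$. Applying Lemma \ref{lem-recover1-step2} with $s=0$ to each $q\in[0,|U_I|]$, and using that the sets $\mathcal{L}_0\cap A(u^*,v_0)\cap\Lambda_q$ partition $\mathcal{L}_0\cap A(u^*,v_0)$, node $i_0$ obtains the data in \eqref{recover1-step2-eq1} and \eqref{recover1-step2-eq2} for every $a\in\mathcal{L}_0\cap A(u^*,v_0)$. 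This gives three things:
\begin{itemize}
\item[(i)] $c_{i_0,a(u^*,v)}$ for all $v\notin\{v_1,\ldots,v_{h-1}\}$;
\item[(ii)] the sums $f(v_0,v_j)c_{i_0,a(u^*,v_j)}+c_{i_j,a}$ for $j\in[h-1]$;
\item[(iii)] $c_{p,a}$ for all $p\notin\mathcal{H}$.
\end{itemize}

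Next I invoke the collaboration phase. By Lemma \ref{stage-lem} (equivalently Theorem \ref{thm1-repair-h}), for each $j\in[h-1]$ node $i_j$ sends node $i_0$ the symbols $f(v_j,v_0)c_{i_j,a(u^*,v_0)}+c_{i_0,a'}$ for $a'\in A(u^*,v_j)\cap\mathcal{L}_0$. Since $u^*\in[\lfloor n/\delta\rfloor]$, changing the $u^*$-th digit does not affect $w_{\rm suf}$. Hence $a\mapsto a(u^*,v_j)$ is a bijection from $A(u^*,v_0)\cap\mathcal{L}_0$ onto $A(u^*,v_j)\cap\mathcal{L}_0$. Taking $a'=a(u^*,v_j)$ and noting $a'(u^*,v_0)=a$, node $i_0$ receives $f(v_j,v_0)c_{i_j,a}+c_{i_0,a(u^*,v_j)}$.

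The main step is solving the resulting $2\times 2$ systems. For fixed $a$ and $j$, the two unknowns $x=c_{i_0,a(u^*,v_j)}$ and $y=c_{i_j,a}$ satisfy $f(v_0,v_j)x+y$ known and $x+f(v_j,v_0)y$ known. The determinant is $f(v_0,v_j)f(v_j,v_0)-1=\tau-1\neq0$, because $\{f(v_0,v_j),f(v_j,v_0)\}=\{1,\tau\}$. So node $i_0$ recovers both $c_{i_0,a(u^*,v_j)}$ and $c_{i_j,a}$. This is the analogue of \eqref{eq-1-1}, and it is the one place where the hypothesis $\tau\neq0,1$ enters.

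To conclude, I combine the pieces:
\begin{itemize}
\item From (iii) and the recovered $c_{i_j,a}$, node $i_0$ obtains $c_{p,a}$ for all $p\neq i_0$ and $a\in\mathcal{L}_0\cap A(u^*,v_0)$.
\item From (i) and the recovered $c_{i_0,a(u^*,v_j)}$, it obtains $c_{i_0,a(u^*,v)}$ for every $v\in[0,\delta-1]$ and every $a\in\mathcal{L}_0\cap A(u^*,v_0)$.
\item Every $b\in\mathcal{L}_0$ can be written as $a(u^*,b_{u^*})$ with $a=b(u^*,v_0)\in\mathcal{L}_0\cap A(u^*,v_0)$.
\end{itemize}
Therefore all of $\{c_{i_0,b}:b\in\mathcal{L}_0\}$ is recovered, which together with the first item gives \eqref{cor-recover1-eq}.

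The only subtle points are the index bookkeeping in the $\mathcal{L}_0$-invariance and the nonvanishing of the $2\times2$ determinant.
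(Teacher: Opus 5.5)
Your proof is correct and follows the paper's argument. You combine Lemmas \ref{lem-recover1-step1} and \ref{lem-recover1-step2} at $s=0$ with the messages $f(v_j,v_0)c_{i_j,a(u^*,v_0)}+c_{i_0,a}$, $a\in\mathcal{L}_0\cap A(u^*,v_j)$, from each $i_j$, and you spell out the re-indexing $a\mapsto a(u^*,v_j)$ and the $2\times 2$ solve (determinant $\tau-1$) that the paper leaves implicit.

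The one thing to adjust is how you justify that node $i_j$ can form its messages. Do it as the paper does: apply Lemmas \ref{lem-recover1-step1} and \ref{lem-recover1-step2} to $i_j$ in place of $i_0$, i.e.\ use its analogue of \eqref{recover1-step2-eq1} with $v=v_0$. Citing Lemma \ref{stage-lem} or Theorem \ref{thm1-repair-h} is circular, since both are derived from this corollary.
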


\begin{proof}
In the stage $s=0$, according to Lemma \ref{lem-recover1-step1} and Lemma \ref{lem-recover1-step2} of data recovery of node $i_0$,
in a similar way, for each $j\in[0,h-1]$, node $i_j$ using the downloaded data can recover the following data:
\begin{equation}
\begin{aligned}
\bigcup_{a\in\mathcal{L}_0\cap A(u^*,v_j)}\Big(
&
\{c_{i_j,a(u^*,v)}: v\in[0,\delta-1]\setminus\{v_0,\ldots,v_{j-1},v_{j+1},\ldots,v_{h-1}\}\}        \\
&\cup\{f(v_j,v) c_{i_j,a(u^*,v)}+c_{(u^*,v),a}: v\in\{v_0,\ldots,v_{j-1},v_{j+1},\ldots,v_{h-1}\}\} \\
&\cup\{c_{p,a}:~p\in[n]\setminus\{i_0,i_1,\ldots, i_{h-1}\}\}
\Big),
\end{aligned}
\end{equation}
similar to \eqref{recover1-step2-eq1} and \eqref{recover1-step2-eq2}.
Then in the collaboration of stage $0$, for $j\in[h-1]$, node $i_j$ transmits $\{f(v_j,v_0) c_{i_j,a(u^*,v_0)}+c_{i_0,a}: a\in\mathcal{L}_0\cap A(u^*,v_j)\}$ to node $i_0$.
Then along with the recovered data in \eqref{recover1-step2-eq1} and \eqref{recover1-step2-eq2}, node $i_0$ can compute the data in \eqref{cor-recover1-eq}.
\end{proof}

In Lemma \ref {cor-lem}, we further prove the data recovery of node $i_0$ in
every stage $s\in[0,m-\lfloor\frac{n}{\delta}\rfloor]$ by induction on $s$.

\begin{lemma}\label{cor-lem}
In every stage $s\in[0,m-\lfloor\frac{n}{\delta}\rfloor]$, node $i_0$ can recover the following data by using the downloaded data and collaborated data from other failed nodes.
\begin{equation}\label{cor-recover1-eq-s}
\{c_{i_0,a}: a\in\mathcal{L}_s\}
\cup\{c_{p,a}: p\in[n]\setminus\{i_0\}, a\in\mathcal{L}_s\cap A(u^*,v_0)\}.
\end{equation}
\end{lemma}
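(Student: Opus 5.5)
We argue by induction on $s$. The base case $s=0$ is exactly Corollary~\ref{cor-recover1}. Fix $s\ge 1$ and assume that for every $s'<s$ and every $j\in[0,h-1]$, node $i_j$ has already recovered
\[
\{c_{i_j,a}: a\in\mathcal{L}_{s'}\}\cup\{c_{p,a}: p\in[n]\setminus\{i_j\},\ a\in\mathcal{L}_{s'}\cap A(u^*,v_j)\}.
\]
(The hypothesis is symmetric in $j$, since the appendix treats $j=0$ without loss of generality.) The plan is to replay the stage-$0$ argument (Lemma~\ref{lem-recover1-step1}, Lemma~\ref{lem-recover1-step2} and Corollary~\ref{cor-recover1}) with $\mathcal{L}_0$ replaced by $\mathcal{L}_s$. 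The only new ingredient is part $(3)$ of \eqref{equation-form-s}, which must be shown known to node $i_0$.

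\emph{Step 1: removing part $(3)$.} Take $a\in\mathcal{L}_s\cap A(u^*,v_0)$ with $\{u\in[\lfloor n/\delta\rfloor+1,m]: a_u\in[0,h-1]\}=\{\sigma_1,\ldots,\sigma_s\}$. Each term of part $(3)$ is $c_{j,a(\sigma_z,w)}$ with $w\in[h,\delta-1]$. For such $w$ we have $a(\sigma_z,w)\in\mathcal{L}_{s-1}$, and since $\sigma_z\le m$ and $\sigma_z\neq u^*$ (because $u^*\le\lfloor n/\delta\rfloor<\sigma_z$), also $a(\sigma_z,w)_{u^*}=v_0$. Hence $a(\sigma_z,w)\in\mathcal{L}_{s-1}\cap A(u^*,v_0)$.

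By the induction hypothesis for $j=0$, node $i_0$ knows $c_{p,a(\sigma_z,w)}$ for every $p\neq i_0$, and it knows $c_{i_0,a(\sigma_z,w)}$ because $a(\sigma_z,w)\in\mathcal{L}_{s-1}$. So every entry of the second block of \eqref{equation-form-matrix-s} is known, and the equations can be rewritten as
\[
\sum_i\lambda_i^{t-1}c^*_{i,a}=-(\text{known}),\qquad t\in[r].
\]
This says that $(c^*_{1,a},\ldots,c^*_{n,a})$ is a known translate of a GRS codeword. Any $k$ of its symbols therefore determine all $n$ of them.

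\emph{Step 2: recovering the $c^*_{i,a}$ on $\tilde{\mathcal{R}}_{i_0}$.} Next we redo the downward induction on $q=|U_I|,\ldots,0$ from Lemma~\ref{lem-recover1-step1}, working inside $\mathcal{L}_s\cap A(u^*,v_0)\cap\Lambda_q$. The required facts still hold:
\begin{itemize}
\item changing the coordinate $u\in[\lfloor n/\delta\rfloor]$ keeps $a$ in $\mathcal{L}_s\cap A(u^*,v_0)$, because $w_{\rm suf}$ depends only on coordinates $>\lfloor n/\delta\rfloor$ and $u\neq u^*$;
\item $a(u,v)$ moves from $\Lambda_{q_0}$ to $\Lambda_{q_0+1}$ exactly as before.
\end{itemize}
Lemma~\ref{lem-recover1-step2} is already stated for general $s$, so it applies verbatim to supply the symbols $c_{(u,a_u),a(u,v)}$ needed in the inductive step. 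This yields $\{c^*_{i,a}: i\in[n],\ a\in\mathcal{L}_s\cap A(u^*,v_0)\}$. Lemma~\ref{lem-recover1-step2} then gives the sets \eqref{recover1-step2-eq1} and \eqref{recover1-step2-eq2} for stage $s$.

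\emph{Step 3: collaboration.} The same argument, run for every $j$ with $u^*,v_j$ in place of $u^*,v_0$, lets node $i_j$ compute the symbols $f(v_j,v_0)c_{i_j,a(u^*,v_0)}+c_{i_0,a}$ for $a\in\mathcal{L}_s\cap A(u^*,v_j)$, and it sends these to $i_0$ as in Lemma~\ref{stage-lem}. Node $i_0$ now does the following.
\begin{itemize}
\item It reads off $c_{i_0,a(u^*,v)}$ for $v\notin\{v_1,\ldots,v_{h-1}\}$ directly from \eqref{recover1-step2-eq1}.
\item For $v=v_{j}$ it combines the downloaded-side sum $f(v_0,v_j)c_{i_0,a(u^*,v_j)}+c_{i_j,a}$ with the received sum $f(v_j,v_0)c_{i_j,a(u^*,v_0)}+c_{i_0,\cdot}$ on the matching index. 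Since $\{f(v_0,v_j),f(v_j,v_0)\}=\{1,\tau\}$ and $\tau\neq 1$, this $2\times 2$ system is invertible, which gives both $c_{i_0,a(u^*,v_j)}$ and $c_{i_j,a}$.
\end{itemize}
As $a$ ranges over $\mathcal{L}_s\cap A(u^*,v_0)$, the indices $a(u^*,v)$ cover all of $\mathcal{L}_s$, which yields \eqref{cor-recover1-eq-s}.

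The main obstacle is the bookkeeping in Step 1. One must check that $a(\sigma_z,w)$ stays inside $A(u^*,v_0)$ and lands in $\mathcal{L}_{s-1}$, and that the induction hypothesis covers \emph{all} nodes $j$ with $\chi(a,\sigma_z,j)=1$, including the failed nodes $i_{j'}$. This is exactly why the statement is strengthened to include $c_{p,a}$ for all $p\neq i_0$, not only the helper nodes.
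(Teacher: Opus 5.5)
Your proof is correct and takes essentially the paper's route: induction on $s$ from Corollary~\ref{cor-recover1}; the observation $a(\sigma_z,w)\in\mathcal{L}_{s-1}\cap A(u^*,v_0)$ makes part $(3)$ of \eqref{equation-form-s} known; then you rerun the $q$-induction of Lemma~\ref{lem-recover1-step1}, Lemma~\ref{lem-recover1-step2}, and the collaboration step of Corollary~\ref{cor-recover1}. You also make explicit two points the paper leaves implicit: the induction hypothesis must be carried for every failed node $i_j$ (so the others can form their stage-$s$ messages), and the final $2\times 2$ solve has determinant $\tau-1\neq 0$.
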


\begin{proof}
According to Corollary \ref{cor-recover1}, the case of stage $s=0$ has been proved.
Now suppose in the stages $s-1$, node $i_0$ has recovered the data $\{c_{i_0,a}: a\in\mathcal{L}_{s-1}\}
\cup\{c_{p,a}: p\in[n]\setminus\{i_0\}, a\in\mathcal{L}_{s-1}\cap A(u^*,v_0)\}$.
Next, we prove in the stage $s$, node $i_0$ can recover the data $\{c_{i_0,a}: a\in\mathcal{L}_s\}
\cup\{c_{p,a}: p\in[n]\setminus\{i_0\}, a\in\mathcal{L}_s\cap A(u^*,v_0)\}$.

For every $a\in\mathcal{L}_s\cap A(u^*,v_0)$, it has $a_{u^*}=v_0$ and $|\{u\in[\lfloor\frac{n}{\delta}\rfloor+1,m]: a_u\in[0,h-1]\}|=s$. Denote $\{u\in[\lfloor\frac{n}{\delta}\rfloor+1,m]: a_u\in[0,h-1]\}=\{\sigma_1,\ldots,\sigma_s\}$ for simplicity.
Consider the $a$-th row of parity-check equations $(H_{t,1},H_{t,2},\ldots,H_{t,n})\cdot\bm{c}^{\top}=\bm{0}$, $t\in[r]$.
According to Lemma \ref{equation-s}, one can obtain $r$ simplified equations as illustrated in \eqref{equation-form-s} and the matrix form in \eqref{equation-form-matrix-s}.
Note that for $z\in[s]$ and $v\in[h,\delta-1]$, it has $a(\sigma_z,v)\in\mathcal{L}_{s-1}\cap A(u^*,v_0)$. By the hypothesis, the data $c_{j,a(\sigma_z,v)}$ for $j\in[n]$, $z\in[s]$ and $v\in[h,\delta-1]$ have been recovered.
Therefore, part $(3)$ of \eqref{equation-form-s} is known and can be moved to the right hand side of \eqref{equation-form-s}.
That is, if node $i_0$ knowns the data $\{c_{i,a}^*: i\in\mathcal{\tilde{R}}_{i_0}, a\in\mathcal{L}_s\cap A(u^*,v_0)\}$, then it can recover $\{c_{i,a}^*: i\in[n], a\in\mathcal{L}_s\cap A(u^*,v_0)\}$.
Actually, in a similar way as in the proof of Lemma \ref{lem-recover1-step1} for the case $s=0$, node $i_0$ using the downloaded data and previously recovered data can compute $\{c_{i,a}^*: i\in\mathcal{\tilde{R}}_{i_0}, a\in\mathcal{L}_s\cap A(u^*,v_0)\}$, hence $\{c_{i,a}^*: i\in[n], a\in\mathcal{L}_s\cap A(u^*,v_0)\}$.

Furthermore, by Lemma \ref{lem-recover1-step2}, node $i_0$ using the obtained data can further compute
\begin{equation*}
\begin{aligned}
\bigcup_{a\in\mathcal{L}_s\cap A(u^*,v_0)}\Big(
&
\{c_{i_0,a(u^*,v)}: v\in[0,\delta-1]\setminus\{v_1,\ldots,v_{h-1}\}\}        \\
&\cup\{f(v_0,v) c_{i_0,a(u^*,v)}+c_{(u^*,v),a}: v\in\{v_1,\ldots,v_{h-1}\}\} \\
&\cup\{c_{p,a}:~p\in[n]\setminus\{i_0,i_1,\ldots, i_{h-1}\}\}
\Big).
\end{aligned}
\end{equation*}
Then, in a similar way as in Corollary \ref{cor-recover1}, after collaborating data with the remaining $h-1$ failed nodes at stage $s$, node $i_0$ can finally recover $\{c_{i_0,a}: a\in\mathcal{L}_s\}
\cup\{c_{p,a}: p\in[n]\setminus\{i_0\}, a\in\mathcal{L}_s\cap A(u^*,v_0)\}$, as illustrated in \eqref{cor-recover1-eq-s}.
\end{proof}

Finally, according to Lemma \ref{cor-lem}, Lemma \ref{stage-lem} is straightforward.

\section{Proof of Lemma \ref{thm2-lem}}\label{appendix-2}
In order to prove Lemma \ref{thm2-lem}, W.L.O.G., we prove the case $j=0$ that for every $s\in[0,m-\lfloor\frac{n}{\delta}\rfloor]$, node $i_0$ using the downloaded data can recover the data
$\cup_{a\in A(\rho,0)}\{c_{i_0,a}, c_{i_1,a}, \ldots, c_{i_{h-1},a}, c_{i_0,a(\rho,h)}, \ldots, c_{i_0,a(\rho,\delta-1)}\}$.
Since $\rho\in[\lfloor\frac{n}{\delta}\rfloor+1,m]$,
then $A(\rho,0)\cap\mathcal{L}_s$, for $s=1, 2, \ldots, m-\lfloor\frac{n}{\delta}\rfloor$ form a partition of $A(\rho,0)$.
We will prove by induction on $s$ that for each $s\in[m-\lfloor\frac{n}{\delta}\rfloor]$,
node $i_0$ can recover
\[
\bigcup_{a\in A(\rho,0)\cap\mathcal{L}_s}\{c_{i_0,a}, c_{i_1,a}, \ldots, c_{i_{h-1},a}, c_{i_0,a(\rho,h)}, \ldots, c_{i_0,a(\rho,\delta-1)}\}.
\]
Let $\mathcal{R}_{i_0}$ be the set of $d$ helper nodes connected by node $i_0$.
Recall Definition \ref{def-R}.
We reuse the definitions and notations in Definition \ref{def-R} by replacing $\mathcal{\tilde{R}}_{i_0}$ with $\mathcal{R}_{i_0}$.

In the following, we firstly give two lemmas to illustrate some data recovery of node $i_0$.

\begin{lemma}\label{thm2repair-lem1}
For each $s\in[m-\lfloor\frac{n}{\delta}\rfloor]$ and  $q\in[0,|U_I|]$, suppose node $i_0$ knows the data $\{c_{i,a}^*: i\in[n], a\in\mathcal{L}_s\cap A(\rho,0)\cap\Lambda_q\}$, where $c_{i,a}^*$'s are defined in \eqref{equation-form-matrix-s} of Lemma \ref{equation-s}.
Then along with the downloaded data,
node $i_0$ can compute the following symbols:
\begin{equation}\label{thm2repair-lem1-eq1}
\{c_{i,a}: i\in[n], a\in\mathcal{L}_s\cap A(\rho,0)\cap\Lambda_q\}.
\end{equation}
\end{lemma}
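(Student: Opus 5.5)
The plan mirrors the proof of Lemma~\ref{lem-recover1-step2}, but the situation is simpler: the erased nodes $i_0,\ldots,i_{h-1}$ need not lie in one group, and the target is the plain symbols $c_{i,a}$ rather than symbol sums. Fix $s$, $q$ and $a\in\mathcal{L}_s\cap A(\rho,0)\cap\Lambda_q$. Recall from Lemma~\ref{equation-s} that $c_{i,a}^*=c_{i,a}$ whenever $i\in\{(u,a_u):u\in[\lfloor n/\delta\rfloor]\}\cup[\delta\lfloor n/\delta\rfloor+1,n]$. For $i=(u,v)$ with $v\neq a_u$ we have instead
\[
c_{(u,v),a}^*=c_{(u,v),a}+f(a_u,v)\,c_{(u,a_u),a(u,v)}.
\]
The first class of indices is immediate, so I only need to recover $c_{(u,v),a}$ for $u\in[\lfloor n/\delta\rfloor]$ and $v\neq a_u$. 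I will split by the status of the partner node $(u,a_u)$.

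\emph{Case 1: $(u,a_u)\in\mathcal{R}_{i_0}$.} Note that $a(u,v)\in A(\rho,0)$, because $u\le\lfloor n/\delta\rfloor<\rho$ means the $\rho$-th digit is unchanged. Hence $c_{(u,a_u),a(u,v)}$ is a downloaded symbol, and subtracting it from $c^*_{(u,v),a}$ gives $c_{(u,v),a}$.

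\emph{Case 2: $(u,a_u)\notin\mathcal{R}_{i_0}$.} Here I pair $a$ with $a'=a(u,v)$. Changing digit $u$ leaves every digit in $[\lfloor n/\delta\rfloor+1,m]$ unchanged, so $a'\in\mathcal{L}_s\cap A(\rho,0)$. The key check is that $a'\in\Lambda_q$, i.e. $w_I(a')=w_I(a)$; here $I=\mathcal{R}_{i_0}\cap[\delta\lfloor n/\delta\rfloor]$ as in the appendix convention. This holds if $(u,v)\notin I$, since then the $u$-th contribution is $0$ for both $a$ and $a'$. In that sub-case $c^*_{(u,a_u),a'}$ is also known, and together with $c^*_{(u,v),a}$ it gives the $2\times 2$ system
\[
c_{(u,v),a}+f(a_u,v)\,c_{(u,a_u),a'},\qquad c_{(u,a_u),a'}+f(v,a_u)\,c_{(u,v),a}.
\]
Its determinant is $1-\tau\neq 0$, since $\{f(a_u,v),f(v,a_u)\}=\{1,\tau\}$. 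So both unknowns are solvable. If instead $(u,v)\in I$, then $(u,v)$ is a helper, and $c_{(u,v),a}$ is simply downloaded because $a\in A(\rho,0)$.

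The main care point is Case 2. One must verify that whenever $a'\notin\Lambda_q$, the needed symbol is already downloaded. The case split above handles this: $w_I$ changes only when exactly one of $(u,a_u)$ and $(u,v)$ lies in $I$, and each such situation falls under Case 1 or the downloaded sub-case. Erased nodes are treated uniformly here; no special handling of $\mathcal{H}$ is needed, because $c^*$ is given for all $i\in[n]$. Collecting all cases yields every $c_{i,a}$ in \eqref{thm2repair-lem1-eq1}.
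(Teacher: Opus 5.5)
Your proposal is correct and follows essentially the same route as the paper. The paper first sets aside the helper indices, whose symbols are downloaded directly, and then treats the non-helpers $(u,v)$ with $v\notin V_u$ exactly as in Lemma~\ref{lem-recover1-step2}. That is, either $c^*_{(u,v),a}=c_{(u,v),a}$ already; or the downloaded partner symbol $c_{(u,a_u),a(u,v)}$ is subtracted when $(u,a_u)\in I$; or $a$ is paired with $a(u,v)\in\Lambda_q$ and the $2\times 2$ system with determinant $1-\tau$ is solved. Organizing the cases by the status of the partner node $(u,a_u)$ rather than of $(u,v)$ is only a cosmetic reordering of these same cases.
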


\begin{proof}
Since $\{c_{i,a}: i\in\mathcal{R}_{i_0}, a\in\mathcal{L}_s\cap A(\rho,0)\cap\Lambda_q\}$ are downloaded data, then is remains to prove that $\{c_{i,a}: i\in[n]\setminus\mathcal{R}_{i_0}, a\in\mathcal{L}_s\cap A(\rho,0)\cap\Lambda_q\}$ can be computed.
Note that 
\[
[n]\setminus\mathcal{R}_{i_0}=
\{(u,v):~u\in[\lfloor n/\delta\rfloor], v\in[0,\delta-1]\setminus V_u\}
\cup
([\delta\lfloor n/\delta \rfloor+1,n]\setminus\mathcal{R}_{i_0})
\]
Then by a similar analysis as in the proof of Lemma \ref{lem-recover1-step2}, Lemma \ref{thm2repair-lem1} can be easily proved.
\end{proof}

Now, let $s=1$. We prove that node $i_0$ using the downloaded data can recover $\{c_{i,a}^*: i\in[n], a\in\mathcal{L}_1\cap A(\rho,0)\}$ and some failed data at node $i_0$, as illustrated in Lemma \ref{thm2repair-lem2}.

\begin{lemma}\label{thm2repair-lem2}
Let $s=1$. Node $i_0$ using the downloaded data can recover the following set of symbols:
\begin{equation}\label{thm2repair-lem2-eq1}
\bigcup_{a\in \mathcal{L}_1\cap A(\rho,0)}
\Big(
\{c_{i,a}^*: i\in[n]\}\cup\{c_{i_0,a(\rho,v)}: v\in[h,\delta-1]\}
\Big),
\end{equation}
where $c_{i,a}^*$'s are defined in \eqref{equation-form-matrix-s} of Lemma \ref{equation-s}.
\end{lemma}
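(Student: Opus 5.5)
The plan is to mimic the proof of Lemma \ref{lem-recover1-step1}, with the extra type-I terms handled through the structure of the erasure pattern $\mathcal{H}=(i_0,\ldots,i_{h-1})\in\mathcal{P}_0$ with $\rho=\pi(\mathcal{H})$.

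Fix $a\in\mathcal{L}_1\cap A(\rho,0)$. Since $a_\rho=0\in[0,h-1]$ and $s=1$, the unique index with $a_u\in[0,h-1]$, $u\in[\lfloor\frac{n}{\delta}\rfloor+1,m]$, is $\sigma_1=\rho$. Consider the $a$-th row of the parity-check equations via Lemma \ref{equation-s}. In part $(3)$ of \eqref{equation-form-s}, we have $\chi(a,\rho,j)=1$ iff $\rho\in\Omega_{j,0}$. Because $\pi$ is injective on $\mathcal{P}_0$, this happens iff $j$ is the first coordinate of $\pi^{-1}(\rho)=\mathcal{H}$, i.e. $j=i_0$. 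Hence part $(3)$ reduces to $\sum_{v\in[h,\delta-1]}\gamma_{v-h+1}^{t-1}c_{i_0,a(\rho,v)}$. Consequently, \eqref{equation-form-matrix-s} says that
\[
(c_{1,a}^*,\ldots,c_{n,a}^*,c_{i_0,a(\rho,h)},\ldots,c_{i_0,a(\rho,\delta-1)})
\]
is a codeword of an $[n+d-k,d]$ GRS code. So any $d$ of these coordinates determine the whole vector. It therefore suffices to show that node $i_0$ can compute $\{c_{i,a}^*: i\in\mathcal{R}_{i_0}\}$, with $|\mathcal{R}_{i_0}|=d$. The unknowns $c_{i_0,a(\rho,v)}$ and $c^*_{i,a}$ for $i\notin\mathcal{R}_{i_0}$ (which include the erased nodes) are then recovered together, giving \eqref{thm2repair-lem2-eq1}.

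To compute $c_{i,a}^*$ for $i\in\mathcal{R}_{i_0}$, I would run a downward induction on $q=|U_I|,\ldots,0$ over the partition $\mathcal{L}_1\cap A(\rho,0)\cap\Lambda_q$ of Definition \ref{def-R}, with $I=\mathcal{R}_{i_0}\cap[\delta\lfloor\frac{n}{\delta}\rfloor]$. Here $U_I$ may now meet every group, since $\mathcal{H}$ is inter-group.

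For $i\in\mathcal{R}_{i_0}\cap[\delta\lfloor\frac{n}{\delta}\rfloor+1,n]$, we have $c_{i,a}^*=c_{i,a}$, which is downloaded. For $i=(u,v)\in I$ with $a_u=v$, again $c^*_{i,a}=c_{i,a}$. For $(u,v)\in I$ with $a_u\neq v$, we need $c_{(u,a_u),a(u,v)}$, and note $a(u,v)\in\mathcal{L}_1\cap A(\rho,0)$ since $u\le\lfloor n/\delta\rfloor<\rho$. If $a_u\in V_u$, the node $(u,a_u)$ is a helper, so this symbol is downloaded. If $a_u\notin V_u$, then $a(u,v)\in\Lambda_{q+1}$, and by the induction hypothesis combined with the GRS decoding above and Lemma \ref{thm2repair-lem1} at level $q+1$, all of $\{c_{i,a(u,v)}: i\in[n]\}$ are already known. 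This includes $c_{(u,a_u),a(u,v)}$, even if $(u,a_u)\in\mathcal{H}$. The base case $q=|U_I|$ has no such terms.

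The main obstacle I expect is making this interleaving rigorous. Level $q$ requires the full GRS decoding plus Lemma \ref{thm2repair-lem1} at level $q+1$, so the induction statement must carry both $c^*$ and $c$ values for all $n$ nodes, not just helpers. I also need to check carefully that for $(u,a_u)$ an erased node, Lemma \ref{thm2repair-lem1} indeed yields its symbol at row $a(u,v)$.
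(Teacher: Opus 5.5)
Your proposal is correct and follows the paper's proof. The paper likewise notes that $(c_{1,a}^*,\ldots,c_{n,a}^*,c_{i_0,a(\rho,h)},\ldots,c_{i_0,a(\rho,\delta-1)})$ is an $[n+d-k,d]$ GRS codeword, reduces the lemma to recovering $\{c_{i,a}^*: i\in\mathcal{R}_{i_0}\}$, and obtains these by a downward induction on $q$ over $\mathcal{L}_1\cap A(\rho,0)\cap\Lambda_q$ modeled on Lemma~\ref{lem-recover1-step1}. Your version also spells out two steps the paper omits, and both check out: only node $i_0$ contributes to part $(3)$ because $\pi$ is injective on $\mathcal{P}_0$, and level $q$ uses GRS decoding plus Lemma~\ref{thm2repair-lem1} at level $q+1$, just as Lemma~\ref{lem-recover1-step1} invokes Lemma~\ref{lem-recover1-step2}.
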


\begin{proof}
For every $a\in \mathcal{L}_1\cap A(\rho,0)$, it has $a_{\rho}=0$ and $\{u\in[\lfloor\frac{n}{\delta}\rfloor+1,m]: a_u\in[0,h-1]\}=\{\rho\}$.
Consider the $a$-th row of parity-check equations $(H_{t,1},\ldots,H_{t,n})\bm{c}^{\top}=\bm{0}$, $t\in[r]$.
According to Lemma \ref{equation-s}, one can obtain $r$ simplified equations as illustrated in \eqref{equation-form-s} and the matrix form in \eqref{equation-form-matrix-s}.
Then one obtains that $(c_{1,a}^*,\ldots,c_{n,a}^*,c_{i_0,a(\rho,h)},\ldots,c_{i_0,a(\rho,\delta-1)})$ forms an $[n+d-k,d]$ GRS codeword, and any $d$ coordinates are able to recover the whole codeword.
Next, we prove that node $i_0$ using the downloaded data can recover $\{c_{i,a}^*: i\in\mathcal{R}_{i_0}, a\in \mathcal{L}_1\cap A(\rho,0)\}$ where $|\mathcal{R}_{i_0}|=d$, hence the data in \eqref{thm2repair-lem2-eq1}.

To this end, recall that $\mathcal{L}_1\cap A(\rho,0)\cap\Lambda_q$, $q=0,1,\ldots,|U_I|$ form a partition of $\mathcal{L}_1\cap A(\rho,0)$.
We prove by induction on $q=|U_I|,\ldots,0$ that $\cup_{a\in \mathcal{L}_1\cap A(\rho,0)\cap\Lambda_q}\{c_{i,a}^*: i\in\mathcal{R}_{i_0}\}$
can be recovered. This induction proof is similar as in the proof of Lemma \ref{lem-recover1-step1} and we omit it here.
\end{proof}

Based on Lemma \ref{thm2repair-lem1} and Lemma \ref{thm2repair-lem2}, we give the data recovery of node $i_0$ in the download phase.

\begin{lemma}\label{thm2repair-lem3}
For $s\in[m-\lfloor\frac{n}{\delta}\rfloor]$, node $i_0$ using the downloaded data can recover the following data
\begin{equation}\label{thm2repair-lem3-eq1}
\bigcup_{a\in \mathcal{L}_s\cap A(\rho,0)}
\Big(
\{c_{i,a}: i\in[n]\}\cup\{c_{i_0,a(\rho,v)}: v\in[h,\delta-1]\}
\Big),
\end{equation}
\end{lemma}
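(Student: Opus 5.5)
We prove Lemma \ref{thm2repair-lem3} by induction on $s\in[m-\lfloor\frac{n}{\delta}\rfloor]$, following the pattern of Lemma \ref{cor-lem}. Since $\rho\in[\lfloor\frac{n}{\delta}\rfloor+1,m]$ and $a_\rho=0\in[0,h-1]$ for every $a\in A(\rho,0)$, we have $\rho\in\{\sigma_1,\ldots,\sigma_s\}$ for every such $a$, so $s\geq1$ and the induction starts at $s=1$.

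\textbf{Base case $s=1$.} Lemma \ref{thm2repair-lem2} already gives $\{c_{i,a}^*: i\in[n]\}\cup\{c_{i_0,a(\rho,v)}: v\in[h,\delta-1]\}$ for all $a\in\mathcal{L}_1\cap A(\rho,0)$. Applying Lemma \ref{thm2repair-lem1} separately to each part $\mathcal{L}_1\cap A(\rho,0)\cap\Lambda_q$, $q\in[0,|U_I|]$, turns the starred symbols into the plain symbols $\{c_{i,a}: i\in[n]\}$. This is \eqref{thm2repair-lem3-eq1} for $s=1$.

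\textbf{Inductive step.} Assume the claim holds for $s-1$, and fix $a\in\mathcal{L}_s\cap A(\rho,0)$ with $\{\sigma_1,\ldots,\sigma_s\}\ni\rho$. Write down the $a$-th row of the parity-check equations in the form \eqref{equation-form-s}, and split part $(3)$ into the $\sigma_z=\rho$ term and the $\sigma_z\neq\rho$ terms.

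For $\sigma_z\neq\rho$ and $v\in[h,\delta-1]$, we have $a(\sigma_z,v)\in\mathcal{L}_{s-1}\cap A(\rho,0)$, because changing a digit at $\sigma_z\neq\rho$ keeps $a_\rho=0$. By the induction hypothesis, every $c_{j,a(\sigma_z,v)}$ with $j\in[n]$ is known, so these contributions can be moved to the right-hand side.

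The $\rho$ term is $\sum_j\chi(a,\rho,j)c_{j,a(\rho,v)}$. Since $a_\rho=0$, we have $\chi(a,\rho,j)=1$ exactly when $\rho\in\Omega_{j,0}$. Because $\pi$ is injective on $\mathcal{P}_0$, this happens only for $j=i_0$, so the term reduces to $c_{i_0,a(\rho,v)}$.

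After this subtraction, $(c_{1,a}^*,\ldots,c_{n,a}^*,c_{i_0,a(\rho,h)},\ldots,c_{i_0,a(\rho,\delta-1)})$ is a codeword of an $[n+d-k,d]$ GRS code, up to a known syndrome shift. Hence any $d$ of its coordinates determine all of them, exactly as in Lemma \ref{thm2repair-lem2}.

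It remains to recover $\{c^*_{i,a}: i\in\mathcal{R}_{i_0}\}$. We do this by downward induction on $q=|U_I|,\ldots,0$ over the parts $\mathcal{L}_s\cap A(\rho,0)\cap\Lambda_q$, copying the proof of Lemma \ref{lem-recover1-step1}:
\begin{itemize}
\item for $u$ with $(u,a_u)\in I$, the starred symbols are sums of downloaded data;
\item for $u\in U_I$ with $(u,a_u)\notin I$ and $v\in V_u$, the missing summand $c_{(u,a_u),a(u,v)}$ has index in $\Lambda_{q+1}$ (still in $\mathcal{L}_s\cap A(\rho,0)$, since $u\le\lfloor n/\delta\rfloor$ and so the change touches neither $\rho$ nor any $\mathcal{L}$-digit), and it is already known from the previous $q$-step via Lemma \ref{thm2repair-lem1};
\item nodes in $[\delta\lfloor\frac{n}{\delta}\rfloor+1,n]$ have $c^*=c$, which is downloaded.
\end{itemize}
Decoding the GRS codeword then gives all $c^*_{i,a}$ and the $c_{i_0,a(\rho,v)}$, and Lemma \ref{thm2repair-lem1} converts the starred symbols into $c_{i,a}$.

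\textbf{Expected obstacle.} The delicate point is the interleaving of the two inductions: the $q$-induction inside stage $s$ must invoke Lemma \ref{thm2repair-lem1} on the $\Lambda_{q+1}$ part before the $\Lambda_q$ part is treated. We also need the claim that only $j=i_0$ contributes at coordinate $\rho$, which is what makes the $\rho$-terms genuine extra GRS coordinates rather than unknown mixtures. We will verify this carefully from the definition of $\Omega_{j,e}$ and the injectivity of $\pi$ on $\mathcal{P}_0$.
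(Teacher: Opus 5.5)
Your proposal is correct and follows essentially the same route as the paper's proof. Both argue by induction on $s$ with the base case from Lemmas \ref{thm2repair-lem2} and \ref{thm2repair-lem1}, then in the inductive step move the $\sigma_z\neq\rho$ part-(3) terms (indexed in $\mathcal{L}_{s-1}\cap A(\rho,0)$) to the right-hand side, recover the $c^*_{i,a}$ on $\mathcal{R}_{i_0}$ by the downward $q$-induction, decode the $[n+d-k,d]$ GRS codeword, and finish with Lemma \ref{thm2repair-lem1}. Your explicit checks fill in details the paper leaves implicit: that only $j=i_0$ contributes to the $\rho$-term (by injectivity of $\pi$ on $\mathcal{P}_0$), and the ordering of the two inductions.
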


\begin{proof}
By Lemma \ref{thm2repair-lem2} and Lemma \ref{thm2repair-lem1}, the case $s=1$ can be proved.
Now suppose for all $s'\leq s-1$, node $i_0$ has recovered the data $\bigcup_{a\in \mathcal{L}_{s'}\cap A(\rho,0)}
\big(
\{c_{i,a}: i\in[n]\}\cup\{c_{i_0,a(\rho,v)}: v\in[h,\delta-1]\}
\big)
$.
Next, we prove the case $s'=s$ that node $i_0$ can recover the data illustrated in \eqref{thm2repair-lem3-eq1}.

For every $a\in \mathcal{L}_{s}\cap A(\rho,0)$, it has $a_{\rho}=0$ and $|\{u\in[\lfloor\frac{n}{\delta}\rfloor+1,m]: a_u\in[0,h-1]\}|=s$. Denote $\{u\in[\lfloor\frac{n}{\delta}\rfloor+1,m]: a_u\in[0,h-1]\}=\{\rho,\sigma_1,\ldots,\sigma_{s-1}\}$.
Consider the $a$-th row of parity-check equations $(H_{t,1},H_{t,2},\ldots,H_{t,n})\cdot\bm{c}^{\top}=\bm{0}$, $t\in[r]$.
According to Lemma \ref{equation-s}, one can obtain $r$ simplified equations as illustrated in \eqref{equation-form-s} and the matrix form in \eqref{equation-form-matrix-s}.
Note that for $z\in[s-1]$ and $v\in[h,\delta-1]$, it has $a(\sigma_z,v)\in\mathcal{L}_{s-1}\cap A(\rho,0)$. By the hypothesis, the data $c_{j,a(\sigma_z,v)}$ for $j\in[n]$, $z\in[s-1]$ and $v\in[h,\delta-1]$ have been recovered.
Therefore, these corresponding symbols in part $(3)$ of \eqref{equation-form-s} is known and can be moved to the right hand side of \eqref{equation-form-s}.
This implies that any $d$ symbols in $\{c_{i,a}^*: i\in[n]\}\cup\{c_{i_0,a(\rho,h)},\ldots,c_{i_0,a(\rho,\delta-1)}\}$ are able to reconstruct all the $n+d-k$ symbols.

Actually, as in the case $s=1$, node $i_0$ using the downloaded data can recover $\bigcup_{a\in \mathcal{L}_{s}\cap A(\rho,0)}\{c_{i,a}^*: i\in\mathcal{R}_{i_0}\}$, which can be similarly proved as in the proof of Lemma \ref{thm2repair-lem2}.
Then, node $i_0$ can further recover $\bigcup_{a\in \mathcal{L}_{s}\cap A(\rho,0)}\big(\{c_{i,a}^*: i\in[n]\}\cup\{c_{i_0,a(\rho,h)},\ldots,c_{i_0,a(\rho,\delta-1)}\}\big)$.
According to Lemma \ref{thm2repair-lem1}, node $i_0$ using the obtained data can recover $\bigcup_{a\in \mathcal{L}_{s}\cap A(\rho,0)}\big(\{c_{i,a}: i\in[n]\}\cup\{c_{i_0,a(\rho,h)},\ldots,c_{i_0,a(\rho,\delta-1)}\}\big)$.
Thus, Lemma \ref{thm2repair-lem3} is proved.
\end{proof}

According to Lemma \ref{thm2repair-lem3} and note the fact that $\mathcal{L}_{s}\cap A(\rho,0)$, $s\in[m-\lfloor\frac{n}{\delta}\rfloor]$ form a partition of $A(\rho,0)$, then Lemma \ref{thm2-lem} is straightforward.





%

\end{document}